\documentclass[reprint,amsfonts,amssymb,amsmath,showkeys,nofootinbib,aps,pra,superscriptaddress,twocolumn,longbibliography]{revtex4-2}

\usepackage[english]{babel}
\usepackage[normalem]{ulem} 
\usepackage[utf8]{inputenc}
\usepackage[T1]{fontenc}
\usepackage{amsthm}
\usepackage{mathtools}
\usepackage{physics}
\usepackage[svgnames]{xcolor}
\usepackage{graphicx}
\usepackage{adjustbox}
\usepackage{dsfont}

\usepackage{longtable}
\usepackage{multirow}

\colorlet{MAGENTA}{magenta}

\usepackage{bbm}
\newcommand{\unity}{\mathbbmss{1}}
\newcommand{\floor}[1]{\left\lfloor#1\right\rfloor}

\newcommand{\lie}[1]{\langle #1 \rangle_{\text{Lie}}}
\newcommand{\group}[1]{\langle #1 \rangle_{\text{group}}}
\newcommand{\alg}[1]{\langle #1 \rangle_{\C}}
\newcommand{\mult}{\mathrm{mult}}
\newcommand{\su}{\mathfrak{su}}
\newcommand{\uu}{\mathfrak{u}}
\newcommand{\so}{\mathfrak{so}}

\newcommand{\usp}{\mathfrak{sp}}
\newcommand{\had}{\mathrm{H}}

\newcommand{\cent}{\mathcal{Z}}
\newcommand{\gfree}{\mathfrak{g}_{\mathrm{free}}}
\newcommand{\gstd}{\mathfrak{g}_{\mathrm{std}}}
\newcommand{\gdot}{\mathfrak{h}_{\bullet}}
\newcommand{\gdotdot}{\mathfrak{h}_{\bullet}^{\bullet}}
\newcommand{\kdot}{\mathfrak{k}_{\bullet}}

\newcommand{\g}{\mathfrak{g}}
\newcommand{\s}{\mathfrak{s}}
\newcommand{\gh}{\mathfrak{h}}
\newcommand{\gk}{\mathfrak{k}}
\newcommand{\ga}{\mathfrak{a}}
\newcommand{\bipartite}{\mathfrak{k}_{G}^{k}}
\newcommand{\bipartiteH}{\mathfrak{k}_{\tilde{G}}^{k}}
\newcommand{\bipartiteW}{\mathfrak{k}_{G}^{W}}
\newcommand{\bipartiteWW}{\mathfrak{k}_{\tilde{G}}^{W}}

\newcommand{\com}{\mathfrak{com}}
\newcommand{\C}{\mathbb{C}}
\newcommand{\CNOT}{\mathrm{CNOT}}
\newcommand{\SWAP}{\mathrm{SWAP}}
\newcommand{\zero}{\mathit{0}}
\newcommand{\Gnat}{\mathbb{G}_{\text{nat}}}
\newcommand{\Gnatr}{\mathrm{G}_{\text{nat}}}
\newcommand{\Snat}{\mathcal{S}_{\text{nat}}}
\newcommand{\Bnat}{\mathcal{B}_{\text{nat}}}
\newcommand{\Bext}{\mathcal{B}_{\text{ext}}}
\newcommand{\Binv}{\mathcal{B}_{\text{inv}}}
\newcommand{\iso}{\cong}
\newcommand{\ZTWO}{\Z_2}

\newcommand{\pp}{\wp}
\newcommand{\PP}{\mathcal{P}}
\newcommand{\zz}{\mathbf{z}}

\usepackage{mathtools}

\newcommand\mbb[1]{\mathbb{#1}}
\newcommand{\ad}{^{\dagger}}

\usepackage[colorlinks=true,citecolor=blue,linkcolor=magenta]{hyperref}

\bibliographystyle{apsrev4-2}



\newcommand{\poly}{\operatorname{poly}}

\usepackage{xspace}
\newcommand{\yz}{\ensuremath{\left\{\!\begin{smallmatrix}\mathrm{Y}\\\mathrm{Z}\end{smallmatrix}\!\right\}}\xspace}
\newcommand{\cddot}{{\cdot\cdot}}

\newcommand{\Ubb}{\mathbb{U}}

\newcommand{\Z}{\mathbb{Z}}

\newcommand{\BC}{\mathcal{B}}
\newcommand{\CC}{\mathcal{C}}

\newcommand{\GC}{\mathcal{G}}
\newcommand{\HC}{\mathcal{H}}

\newcommand{\OC}{\mathcal{O}}

\newcommand{\Var}{{\rm Var}}

\renewcommand{\geq}{\geqslant}
\renewcommand{\leq}{\leqslant}

\DeclareMathOperator*{\argmin}{arg\,min}


\newcommand*{\id}{\openone}


\renewcommand{\th}{\theta } 
\newcommand{\vth}{\vartheta }

\newcommand{\lm}{\lambda }

\newcommand{\sg}{\sigma }

\newcommand{\thv}{\vec{\theta}}

\newcommand{\ubadm}{Departamento de Matemática, FCEN, UBA - IMAS CONICET.}

\newcommand{\losalamos}{Theoretical Division, Los Alamos National Laboratory, Los Alamos, New Mexico 87545, USA}
\newcommand{\courant}{Courant Institute of Mathematical Sciences, New York University, New York, New York 10012, USA}
\newcommand{\duke}{Department of Electrical and Computer Engineering, Duke University, Durham, NC 27708, USA}

\def\R{\mathds{R}}

\def\n{^{(n)}}
\def\tn{^{\otimes n}}

\newcommand{\spn}{\text{span}}
\newcommand{\spnC}{\spn_{\C}}
\newcommand{\spnR}{\spn_{\R}}

\theoremstyle{definition}

\makeatletter
\newtheorem*{rep@theorem}{\rep@title}
\newcommand{\newreptheorem}[2]{%
\newenvironment{rep#1}[1]{%
 \def\rep@title{#2 \ref{##1}}%
 \begin{rep@theorem}}%
 {\end{rep@theorem}}}
\makeatother

\newtheorem{theorem}{Theorem}

\newtheorem{fact}{Fact}
\newtheorem{lemma}{Lemma}
\newtheorem{corollary}{Corollary}
\newreptheorem{corollary}{Corollary}
\newreptheorem{theorem}{Theorem}

\newtheorem{conjecture}{Conjecture}
\newtheorem{proposition}[lemma]{Proposition}
\newtheorem{example}{Example}
\newtheorem{definition}{Definition}


\newcommand{\free}{\mathrm{free}}
\newcommand{\aut}{\mathrm{aut}}
\newcommand{\orb}{\mathrm{orbit}}
\newcommand{\std}{\mathrm{std}}

\newcommand{\triv}{\text{t}}

\newcommand{\signrepr}{\text{s}}

\newcommand{\gnat}{\g_{\mathrm{nat}}}
\newcommand{\unat}{\uu_{\mathrm{nat}}}
\newcommand{\taunat}{\tau_{\mathrm{nat}}}
\newcommand{\tauhat}{\hat{\tau}_{\mathrm{nat}}}
\newcommand{\tauaut}{\tau_{\aut}}
\newcommand{\gorb}{\g_{\orb}}

\newcommand{\VV}{\mathcal{V}}
\newcommand{\E}{\mathcal{E}}
\newcommand{\Cstd}{\CC_{\std}}
\newcommand{\Cfree}{\CC_{\free}}
\newcommand{\DLA}{Lie algebra\xspace}
\newcommand{\DLAs}{Lie algebras\xspace}
\newcommand{\DLAc}{Lie-algebraic\xspace}
\newcommand{\DLAm}{Lie-algebra\xspace}

\newcommand{\Aut}{\mathrm{Aut}(G)}
\newcommand{\AUT}{\mathrm{Aut}}
\newcommand{\symone}{\mathbf{1}}

\renewcommand{\Tr}{\mathrm{Tr}}

\begin{document}

\title{\texorpdfstring{Analyzing the quantum approximate optimization algorithm:\\ ansätze, symmetries, and Lie algebras}{Analyzing the quantum approximate optimization algorithm: ansätze, symmetries, and Lie algebras}}

\author{Sujay Kazi\,\href{https://orcid.org/0000-0002-9404-8376}{\includegraphics[height=5pt]{ORCID-iD_icon-64x64.png}}}
\affiliation{\courant}
\affiliation{\losalamos}
\affiliation{\duke}

\author{Mart\'{i}n Larocca\,\href{https://orcid.org/0000-0002-8700-4308}{\includegraphics[height=5pt]{ORCID-iD_icon-64x64.png}}}
\email{larocca@lanl.gov}
\affiliation{\losalamos}
\affiliation{Center for Nonlinear Studies, Los Alamos National Laboratory, Los Alamos, New Mexico 87545, USA}

\author{Marco Farinati\,\href{https://orcid.org/0000-0003-4307-8505}{\includegraphics[height=5pt]{ORCID-iD_icon-64x64.png}}}
\affiliation{\ubadm}

\author{Patrick J. Coles\,\href{https://orcid.org/0000-0001-9879-8425}{\includegraphics[height=5pt]{ORCID-iD_icon-64x64.png}}}
\affiliation{Normal Computing Corporation, New York, New York, USA}
\affiliation{\losalamos}

\author{M. Cerezo\,\href{https://orcid.org/0000-0002-2757-3170}{\includegraphics[height=5pt]{ORCID-iD_icon-64x64.png}}}
\email{cerezo@lanl.gov}
\affiliation{Information Sciences, Los Alamos National Laboratory, Los Alamos, NM 87545, USA}

\author{Robert Zeier\,\href{https://orcid.org/0000-0002-2929-612X}{\includegraphics[height=5pt]{ORCID-iD_icon-64x64.png}}}
\email{r.zeier@fz-juelich.de}
\affiliation{Forschungszentrum J\"ulich GmbH, Peter Gr\"unberg Institute, Quantum Control (PGI-8), 54245 J\"ulich, Germany}

\begin{abstract}
The Quantum Approximate Optimization Algorithm (QAOA) has been proposed as a method to obtain approximate solutions for combinatorial optimization tasks. In this work, we study the underlying algebraic properties of three QAOA ans\"atze for the maximum-cut (maxcut) problem on connected graphs, while focusing on the generated Lie algebras as well as their invariant subspaces. Specifically, we analyze the standard QAOA ansatz as well as the orbit  and the multi-angle ans\"atze. We are able to fully characterize the Lie algebras of the multi-angle ansatz across arbitrary connected graphs, finding that they only fall into one of just six families. Besides the cycle and the path graphs, the Lie dimensions for every graph are exponentially large in the system size, meaning that multi-angle ans\"atze are extremely prone to exhibiting barren plateaus. Then, a similar quasi-graph-independent Lie-algebraic characterization beyond the multi-angle ansatz is impeded as the circuit exhibits additional ``hidden'' symmetries besides those naturally arising from a certain parity-superselection operator and all automorphisms of the considered graph. Disregarding the ``hidden'' symmetries, we can upper bound the dimensions of the orbit and the standard Lie algebras, and the dimensions of the associated invariant subspaces are determined via explicit character formulas. To finish, we conjecture that (for most graphs) the standard Lie algebras have only components that are either exponential or that grow, at most, polynomially with the system size. This would imply that the QAOA is either prone to barren plateaus, or classically simulable. More generally, our work provides a symmetry framework
and tools to analyze any desired variational quantum algorithm.
\end{abstract}

\date{July 11, 2025}

\maketitle

\section{Introduction\label{sec:introduction}}

Variational models, such as the Quantum Approximate Optimization Algorithm (QAOA)~\cite{farhi2014quantum,farhi2016quantum}, hold the promise to make practical use of intermediate-scale quantum computers~\cite{cerezo2020variationalreview,bharti2021noisy,endo2021hybrid,schuld2015introduction,biamonte2017quantum}. At their core, these schemes train a parametrized quantum circuit to minimize a loss function encoding the solution to a given task of interest. For example, QAOA aims at finding approximate solutions to the maximum-cut (maxcut) problem by encoding the edges of a graph 
as interaction terms in an Ising-type Hamiltonian, and training a circuit to prepare the ground states of said Hamiltonian. 

\begin{figure*}[t]
\centering
\includegraphics[width=.9\linewidth]{New_Fig_1_Final.pdf}
\caption{\textbf{Maxcut, QAOA and ans\"atze.}  (a) Given a graph, the maxcut problem is to determine a partition of the vertices into two complementary sets, 
such that the number of edges between those sets is as large as possible. (b) The QAOA algorithm is a hybrid quantum-classical algorithm that can be used to approximately
solve the maxcut problem. The success of QAOA hinges on the ability to optimize the parametrized quantum circuit $U(\thv)$. Crucially, the trainability of QAOA can be linked to certain algebraic properties of $U(\thv)$.  (c) Here we depict two of the considered QAOA ansätze: the standard
and the multi-angle (or free) ansätze [see, e.g., Figs.~\ref{fig:ex:house} and \ref{fig:ex:house:orbit}].
In the image, a gate with $ZZ$ indicates a two-qubit entangling gate generated by a $Z_w Z_{\tilde{w}}$ interaction, while the $X$ gate indicates a single-qubit rotation around the $x$-axis.
Boxes with the same color share the same parameter. Hence,  each gate in the free ansatz 
is individually parametrized and generated by a single Pauli operator, which makes its \DLA tractable across all graphs. In the standard case,  all single-qubit gates share the same parameter, and similarly for all two-qubit gates. This means that the 
infinitesimal generators of the circuit
are given by a sum of Paulis, which limits the ability to treat these cases.     
\label{fig:schematic}}
\end{figure*}

One of the critical design choices that can make or break a variational model is the choice of an ansatz for the parametrized quantum circuit~\cite{cerezo2022challenges}. Indeed, it has been shown that certain circuit architectures can lead to trainability barriers such as exponentially suppressed loss function gradients~\cite{larocca2024review,mcclean2018barren,cerezo2020cost} (i.e., barren plateaus), or optimization landscapes plagued with local-minima~\cite{anschuetz2022quantum,anschuetz2021critical,bittel2021training,you2021exponentially,fontana2022nontrivial,kossmann2023,rajakumar2024trainability,bermejo2024improving}. Importantly, given that we do not yet posses large-scale quantum computers to heuristically test performance at scale, it is crucial to look for theoretical characterizations capable of predicting whether an ansatz  will run into issues or not. One such analysis is the study of the generated \DLA~\cite{zeier2011symmetry}, defined as the Lie closure of the 
infinitesimal generators of 
the parametrized gates. Importantly, the \DLA of the circuit captures the ultimate breadth of unitaries that can be expressed via different parameter choices, and its precise characterization is an extremely powerful tool as it enables the study of barren plateaus~\cite{larocca2021diagnosing,ragone2023unified,fontana2023theadjoint,diaz2023showcasing}, overparametrization~\cite{larocca2021theory,schatzki2022theoretical}, as well as the classical simulability of the model~\cite{goh2023lie}.

Given the critical importance of the \DLA, it comes at no surprise that its analysis has gained significant
attention~\cite{zeier2011symmetry,zimboras2015symmetry,albertini2018controllability,lloyd2018quantum,
morales2020universality,kazi2023universality,wiersema2023,DAlessandro2024,aguilar2024full,allcock2024dla,
bakalov24,mansky2023permutation,mansky2024scaling,kokcu2021fixed}. Still, a careful inspection of the literature
reveals that many \DLAm studies are restricted to two cases: circuits whose gates are generated by single Pauli
operators, or ans\"atze where generators are very structured linear combinations of Paulis, for example invariant
under some symmetry group. The first case corresponds to circuits where each rotation is individually parametrized,
while the second one contains circuits with correlated gate angles. The reason why the \DLAm analysis is restricted
to the aforementioned cases is that the computation of the \DLA  becomes extremely challenging for arbitrary
generators consisting of linear combinations of Paulis.

In this work, we contribute to the literature of symmetry and \DLAm characterization of quantum circuits by studying the algebraic properties of three QAOA ansätze for maxcut. These include the standard ansatz introduced in the original QAOA manuscript~\cite{farhi2014quantum}. Here, there are only two generators where certain Paulis are summed over all the edges and vertices of the graph, respectively. Then, we consider the orbit ansatz proposed in~\cite{sauvage2022building} where the generators are instead defined  by summing operators according to the 
automorphism group of the graph. Finally, we also study the multi-angle~\cite{herrman2022angle,shi2022multi} (or free) ansatz, where one assigns a single parameter to each gate in the standard ansatz. 

In the case of multi-angle QAOA ansatz, we give a complete characterization of the \DLA for any graph, finding that they fall within one of six families. In particular, in all graph families (except for the cycle and the path graphs) the dimension of the \DLA grows exponentially with the number of vertices in the graph. We then analyze the implications of these results, and show that multi-angle case is prone to exhibiting barren plateaus, even when using a single layer of the ansatz.

In the orbit and the standard cases, we argue that a full characterization of the \DLA across all graphs is likely quite challenging
(with some notable exceptions as the path, cycle and complete graphs, for which we characterize the \DLA, see also~\cite{Onsager44,DAlessandro2024,allcock2024dla}). To this end, we take a closer look at the symmetries of the \DLA~\cite{zeier2011symmetry}, i.e., the set of operators that commute with the parametrized unitary. Firstly, we show how the symmetries of the graph get promoted to symmetries at the quantum level, and find that while the standard and the orbit ansätze respect them, the multi-angle ansatz does not (and this is precisely why we can characterize it so well). Then, we show that the orbit and the standard ansätze exhibit additional symmetries 
beyond the ``natural'' symmetries
arising from the parity-superselection operator $X\tn$ and the automorphisms of the considered graph.
These ``hidden'' symmetries make the \DLA highly graph dependent and harder to study. 
We can associate the natural symmetries to a ``natural'' \DLA that
is semi-universal~\cite{marvian2023non,kazi2023universality} and its dimension provides an upper bound for the
dimensions of the orbit and the standard \DLAs. Moreover, the dimensions of the invariant subspaces associated to
the ``natural'' symmetries are determined via explicit character formulas.
Finally, we conjecture that for the vast majority of graphs that the dimensions of the largest invariant component for
the natural \DLA and the standard \DLA only differ by a polynomial factor. If this is the case, then our work has important implications regarding the trainability and classical simulability of QAOA.   
In summary, we develop a symmetry framework and much needed tools to systematically analyze
QAOA and general variational quantum algorithms.

We also point to related work \cite{boulebnane2021,Basso2022TQC,Basso2022TQC_arXiv,boulebnane}
applicable to large-girth $D$-regular graphs which
establishes concentration results for the loss function with respect to variations in the
QAOA angles 
(see Sec.~\ref{sec:discussion}).
These concentration results
have been obtained in the context
of smart initialization techniques \cite{larocca2024review,zhou2020quantum,FarhiQuantum2022} and
follow as the optimal angles for the problem Hamiltonian
are rescaled with $1/\sqrt{D}$ as $D$ increases. 
In contrast, we aim at developing tools to analyze general connected graphs.

\section{Framework\label{sec:frame}}

\subsection{From maxcut to QAOA\label{sect:maxcut:qaoa}}

We recall the maxcut problem 
\cite{mooremertens2011,commander2008,newman2008,garey1979guide} and
define notation that will be used throughout the manuscript. An undirected, unweighted, graph $G$ (without loops) \cite{diestel2017} is defined by its vertex set $V=\{1,\ldots,n\}$ and its edge set $E$ consisting of 
unordered pairs $\{w,\tilde{w}\}$ of vertices $w,\tilde{w} \in V$ with $w\neq\tilde{w}$. 
Given a graph $G$, the (optimization variant of the) maxcut problem is to find a  partition of its vertices into two complementary sets
such that the number of edges between those sets is as large as possible [see Fig.~\ref{fig:schematic}(a)]. 
Mathematically, this is formalized by noting that a vertex subset $W \subset V$
determines a bipartition (or cut) of the vertices of a graph into 
the disjoint sets $W$ and $V\setminus W$. A natural way of quantifying the cut value is by assigning to each vertex a value of $0$ or $1$, depending on the subset it belongs to. Therefore, we represent a cut through a bitstring $x\in \{0,1\}^n$ with cut value
$\sum_{\{w,\tilde{w}\} \in E} \abs{x_w -x_{\tilde{w}}}$. An optimal solution 
to the maxcut problem is identified by a maximum cut that 
observes the maximum cut value for the considered graph. 
Despite the seemingly simplicity of this task, finding the maximum cut value is known to be NP-hard \cite{karp1972,garey1976simplified}.

Due to the NP-hardness of maxcut, there has been significant interest in classical approximation algorithms \cite{dezalaurent1997,mayr1998,vazirani2001,williamson2011}, including randomized ones.
Let $c \in (0,1]$ denote the expected \emph{approximation ratio} of an algorithm guaranteeing that the
(expected) cut value of the computed cut is lower bounded by $c$  times the maximum cut value
of the considered graph.
Goemans and Williamson \cite{goemans1995improved} devised a randomized classical algorithm based on semidefinite programming
which guarantees that the expected approximation ratio is at least $c_{\text{GW}} \approx 0.87856$. As shown in \cite{karloff1999,feige2002},
there are graphs for which the expected approximation ratio obtained via \cite{goemans1995improved} saturates the lower bound $c_{\text{GW}}$.
Moreover, approximating maxcut with an approximation ratio of $c \geq 16/17$ is NP-hard
\cite{hastad2001}.

Quantum computers can also be used to find approximate solutions to the maxcut problem~\cite{farhi2014quantum,hadfield2019quantum,harrigan2021quantum}.
Obtaining the maximum cut value can be mapped to finding the ground state energy of the $n$-qubit  Ising Hamiltonian 
\begin{equation}\label{eq:prob-Ham}
    H_p:= \sum_{\{w,\tilde{w}\} \in E} Z_{w} Z_{\tilde{w}},
\end{equation}
where $Z_k$ denotes the Pauli-$z$ operator acting on the $k$-th qubit.
One can therefore attempt to obtain the maxcut variationally  by defining the cost function
\begin{equation}\label{eq:cost}
C(\thv)=\langle \psi(\thv)|H_p|\psi(\thv)\rangle\,,
\end{equation}
where $|\psi(\thv)\rangle$ is a quantum state parametrized by  $\thv$, and
solving the optimization task 
$\argmin_{\thv}C(\thv)$.
In QAOA, the state $|\psi(\thv)\rangle$ is obtained 
as follows [see Fig.~\ref{fig:schematic}(b)]: First, one initializes $n$ qubits to the fiduciary state
\begin{equation}\label{eq:fiduciary}
\ket{+}^{\otimes n} =\frac{1}{\sqrt{2^n}} \sum_{x\in \{0,1\}^n} \ket{x},
\end{equation}
where $\ket{x}$ denotes the computational basis state
in an $n$-qubit Hilbert space $\HC:=(\mbb{C}^2)\tn$. Henceforth, $d:=\dim(\HC)=2^n$ denotes the dimension of $\HC$.
The initial state is sent through a parametrized quantum circuit with $L$ layers of the form
\begin{equation}\label{eq:circuit}
U(\thv) = \prod_{\ell=1}^L \hspace{-1mm} \left[
\prod_{k=1}^{\abs{\mathcal{G}_{m}}}
e^{-i \theta_{m\ell k}  H_{mk} } \hspace{-0.4mm} \right] \hspace{-1.5mm}  \left[\prod_{k=1}^{\abs{\mathcal{G}_{p}}}
e^{-i \theta_{p\ell k} H_{pk} } \hspace{-0.4mm} \right]
\end{equation}
for suitable sets $\mathcal{G}_p$ and $\mathcal{G}_m$
of hermitian generators
$H_{pk} \in \mathcal{G}_p$ and $H_{mk} \in \mathcal{G}_m$.
The real parameters $\theta_{p\ell k}$ and $\theta_{m\ell k}$ are collected in the 
vector $\thv$.
Then,  $\ket*{\psi(\thv)}=U(\thv)|+\rangle^{\otimes n}$ is
measured in the computational basis, and the measurement outcomes are classically post-processed to estimate the cut value.
This is the starting point for the hybrid quantum-classical optimization as illustrated in Fig.~\ref{fig:schematic}.

\begin{figure}[t]
\includegraphics{standard-free.pdf}
\caption{\textbf{Example of house graph.} Generators for the (a) standard and (b) the free ansatz. 
\label{fig:ex:house}}
\end{figure}

In the original QAOA manuscript \cite{farhi2014quantum}, the 
\emph{standard} generator sets are
[see Fig.~\ref{fig:ex:house}(a)]
\begin{subequations}
\label{eq:std}
\begin{align}
\mathcal{G}_p^{\std} & :=\{H_p\} \;\text{ with }\; H_p=\sum_{\{w,\tilde{w}\} \in E} Z_{w} Z_{\tilde{w}}, \label{eq:std:problem}\\
\mathcal{G}_m^{\std} & :=\{H_m\} \;\text{ with }\; H_m:=\sum_{v\in V} X_v. \label{eq:std:mixer}
\end{align}
\end{subequations}
As we can see, in the standard ansatz, only two parameters are used per layer. One of the drawbacks of this approach is that it hinders the circuit's expressiveness and one usually requires a large number $L$ of layers to achieve a good approximation ratio.
We detail two ansätze with more parameters per layer. To mitigate this issue, and pack more parameters per later (thus making QAOA better suited for near-term quantum computers) several modifications to the QAOA generators have been proposed. First, we consider the  the  multi-angle~\cite{herrman2022angle,shi2022multi,gaidai2023performance,wilkie2024angle}, 
or \emph{free} ansatz, the sets of generators of which are [see Fig.~\ref{fig:ex:house}(b)]
\begin{subequations}
\label{eq:free}
\begin{align}
\mathcal{G}_p^{\free} &:=\{Z_{w}Z_{\tilde{w}} \;\text{for}\; \{w,\tilde{w}\} \in E\}, \label{eq:free:problem}\\
\mathcal{G}_m^{\free} &:=\{X_v \;\text{for}\; v \in V\}. \label{eq:free:mixer}
\end{align}
\end{subequations}
This approach assigns one parameter per layer for each edge and vertex.
Occasionally, a variant of the free ansatz with $\tilde{\mathcal{G}}_p^{\free} := \mathcal{G}_p^{\std} = \{H_p\}$
is experimentally more suitable.
Then, in the orbit ansatz of~\cite{sauvage2022building} not all parameters are independent, but rather they are correlated following orbits of the automorphism group of a graph. The orbit ansatz is then a middle ground between the standard and the free one in terms
of the number of generators
and we analyze the orbit ansatz in Sec.~\ref{sec:standard-ansatz-partial-results} as an upper bound to the 
standard ansatz.

Given the above freedom in choosing an ansatz for the QAOA, a natural question that arises is: \emph{Which one should we choose?} 
Clearly, having more trainable parameters at the same depth is an appealing idea. However,  increasing 
the expressiveness of an ansatz (for instance by adding experimentally more independent parameters) can also lead to barren plateaus and trainability
issues~\cite{holmes2021connecting,larocca2021diagnosing,larocca2024review}. 
In the following subsection, we explore symmetry methods for analyzing the free, standard, orbit ansätze and their expressiveness through
the lens of \DLAs.

\subsection{\DLAs and symmetries\label{sec:dla}}

For ease of notation, let us focus here on a  
general variational quantum algorithms with $L$ layers  described by a unitary
\begin{equation}\label{eq:circuit:general}
U(\thv)=\prod_{\ell=1}^L \prod_{k=1}^{\abs{\mathcal{G}}} e^{-i \theta_{\ell k} H_k}\,.
\end{equation}
Here, 
$\mathcal{G}$ is the set
of hermitian generators $H_k \in \mathcal{G}$, and 
$\theta_{\ell k}$ the trainable real parameters that we collect in the vector $\thv\in \R^{L{\times} \abs{\GC}}$. Note that this case recovers as a special case the unitary in Eq.~\eqref{eq:circuit} with $\mathcal{G}=\mathcal{G}_p \cup \mathcal{G}_m$. 
In what follows we now outline different symmetry methods that can be used to analyze variational quantum algorithms. In particular, we illustrate these techniques in Table~\ref{table:house:graph} for a 
low-dimensional computational example.
Further details are deferred to Appendix~\ref{appendix:symmtry:analysis}.

\begin{table}
\caption{\textbf{Symmetry analysis for the house graph.}
Based on the given generators, 
the commutant $\CC$ and its center $\cent(\CC)$ determine
the invariant subspaces, but, in general, 
\emph{not} the generated \DLA $\g$
(refer to Sec.~\ref{sec:dla} and Appendix~\ref{appendix:symmtry:analysis}).
For the standard ansatz, the center $\cent(\g)$ of $\g$ has support on all invariant subspaces
(refer also to Sec.~\ref{sec:standard-ansatz-partial-results}).
 \label{table:house:graph}}
\includegraphics{house_structure_rev.pdf}
\end{table}

As a first analysis tool,
the (real) \emph{\DLA} $\g = \lie{i\mathcal{G}}$
is the Lie closure of its generators $iH_k\in i\mathcal{G}$, i.e.,
it contains 
all real-linear combinations of repeated commutators of $iH_1, \ldots, iH_{\abs{\mathcal{G}}}$
\cite{hall2015,jacobson1979,deGraaf2000}.
Recall that the commutator of $A,B \in \C^{d\times d}$ is given by $[A,B]:=AB-BA$.
Thus $\g \subseteq \uu(d) \subseteq \C^{d\times d}$ is a subalgebra of the unitary \DLA $\uu(d)$ of skew-hermitian matrices. In our context, the \DLA is sometimes denoted as the \emph{dynamical} \DLA
\cite{dalessandro2022} in order to emphasize its relation to the unitary time evolution.
The corresponding \emph{Lie group} $\exp(\g) \subseteq \Ubb(d)$ is then contained in the unitary group $\Ubb(d)$.
The \DLA $\g$ plays a key role as it determines reachability properties
\cite{zeier2011symmetry,elliott2009bilinear,dalessandro2022,larocca2021diagnosing}, e.g., it characterizes the set of
quantum states  $\exp(\g)\ket{\psi}$ that are accessible from
a given initial state $\ket{\psi}$, when evolved by a unitary as in Eq.~\eqref{eq:circuit:general}.

As a second analysis tool, the linear symmetries of the generators $\mathcal{G}$ are given by the 
\emph{commutant}
\begin{equation}
\CC = \com(\mathcal{G})
= \{ S \in \C^{d\times d} \;\text{s.t.}\; [S,H_k]{=}0 \;\text{for all}\; H_k \in \mathcal{G}\}.
\label{eq:commutant}
\end{equation}
Note that the commutant $\com(\g)=\com(i\mathcal{G})=\com(\mathcal{G})$ of the \DLA 
$\g$ is equal to the commutant of its generators $i\mathcal{G}$. Also,
$\CC$ is closed under complex-linear combinations and matrix multiplication, 
thus constituting an associative subalgebra (under matrix multiplication) of $\mbb{C}^{d\times d}$.
The commutant $\CC$ and its center
\begin{equation}
\cent(\CC) = \{ Z \in \CC \;\text{s.t.}\; [Z,S]=0 \;\text{for all}\; S \in \CC\}
\label{eq:center:commutant}
\end{equation}
determine the \emph{invariant subspaces} of $\HC = \C^d$ under the action of $\mathcal{G}$,
i.e., $\mathcal{I} \subseteq \HC$ is an invariant subspace if $H_k  v \in \mathcal{I}$ for all $v \in \mathcal{I}$ and $H_k \in \mathcal{G}$.
And, the invariant subspaces induced by the action
of $\mathcal{G}$,  $\g$, and $\exp(\g)$ 
all agree. 
Elements of the commutant correspond to conserved quantities, 
since $\bra{\psi}e^{itH_j}Se^{-itH_j}\ket{\psi} = \bra{\psi}S\ket{\psi}$,
meaning that the expectation of $S$ is constant under the time evolution by any of the generators $H_j$.

An invariant subspace $\mathcal{J} \neq 0$ is \emph{irreducible}
if it contains no invariant subspaces
distinct from $0$ and $\mathcal{J}$ itself. 
The dimension $\dim[\cent(\CC)]$ of the center $\cent(\CC)$ specifies 
how many inequivalent irreducible subspaces occur
(see Appendix~\ref{appendix:symmtry:analysis}).
The invariant subspaces are revealed  
by a basis change so that the block decomposition
\begin{equation}\label{invariants_basis}
\CC \simeq \hspace{-3mm} \bigoplus_{\lambda=1}^{\dim[\cent(\CC)]} \hspace{-3mm}\C^{m_{\lambda}\times m_{\lambda}} \otimes \unity_{d_{\lambda}}
\;\text{ and }\;
\g \simeq  \hspace{-3mm} \bigoplus_{\lambda=1}^{\dim[\cent(\CC)]}  \hspace{-3mm} \unity_{m_{\lambda}} \otimes \g_{\lambda}\,,
\end{equation}
holds for certain \DLAs $\g_{\lambda} \subseteq \uu(d_{\lambda}) \subseteq \C^{d_{\lambda}\times d_{\lambda}}$ with
$\dim[\com(\g_{\lambda})]=1$. The commutant $\CC$ fixes
the dimensions $d_{\lambda} 
\geq 1$ and multiplicities $m_{\lambda} \geq 1$,
but does \emph{not} uniquely determine
the \DLA $\g$ (or the $\g_{\lambda}$).
For completeness, the corresponding decomposition 
\begin{equation*}
\HC\simeq \hspace{-3mm} \bigoplus_{\lambda=1}^{\dim[\cent(\CC)]}   \hspace{-3mm} \C^{m_\lm} \otimes \HC_\lm\,,
\end{equation*}
of the state space 
under the action of $\g$ and $\exp(\g)$ identifies
the subspaces $\HC_{\lambda} = \C^{d_{\lambda}}$ as invariant and irreducible.

As a third analysis tool, the \DLA $\g\subseteq \uu(d)$ observes
its \emph{reductive} decomposition \cite{BourbakiLie1989,Bourbaki2008b}
\begin{equation}\label{eq:reductive}
\g \iso \cent(\g) \oplus [\oplus_j \s_j] \iso [\com(\g) {\cap} \g] \oplus [\oplus_j \s_j]
\end{equation}
which includes its center 
$\cent(\g)$ and its semisimple part $\oplus_j \s_j$ consisting of a direct sum
of simple \DLAs $\s_j$.
The center $\cent(\g)$ can be readily obtained from
the commutant $\CC=\com(\g)$ and the generators $\mathcal{G}$.
But identifying the simple \DLAs $\s_j$ (such as $\su(r)$, $\so(r)$, and $\usp(r)$
for suitable $r \geq 2$ \cite{hall2015})
is usually more challenging, both for computations \cite{deGraaf2000,roozemond2010,zeier2011symmetry,magma1997}
and analytic approaches. The reductive decomposition of Eq.~\eqref{eq:reductive} can be either interpreted 
as an intrinsic vector-space decomposition of $\g$ with $[\mathfrak{s}_{j_1}, \mathfrak{s}_{j_2}]=0$ or
as being extrinsically embedded into $\g \subseteq \uu(d) \subseteq \C^{d\times d}$.

This embedding can be specified by an explicit representation which constitutes 
our fourth analysis tool. A \emph{representation} $\gamma$ of $\g$
is a linear map from
$\g$ to $\uu(m)\subseteq\C^{m\times m}$ with $m\geq 1$
such that $\gamma([g_1,g_2])= [\gamma(g_1),\gamma(g_2)]$ for $g_1,g_2 \in \g$ \cite{hall2015}. 
Notable examples are the standard representation $\kappa$ with $\kappa(g)=g$ for $g \in \g \subseteq \uu(d)$,
the trivial representation $\epsilon$ with $\epsilon(g)=0 \in \C$, 
and the dual $\overline{\gamma}$ of $\gamma$
with $\overline{\gamma}(g) = - [\gamma(g)]^t$.
Identifying a representation precisely characterizes the occurring symmetries but this is usually
more difficult to achieve.

Table~\ref{table:house:graph}
highlights significant differences between
the free and standard ansatz for the example of the house graph.
For the free ansatz, the commutant and its center are both two dimensional
which results into two inequivalent irreducible subspaces.
The commutant and center for the standard ansatz are, however, both six dimensional
and one obtains six inequivalent irreducible subspaces which refine the ones
for the free ansatz.
We will further explore the free ansatz in Sec.~\ref{SEC:FREE-ANSATZ}, while Sections~\ref{SEC:STD}-\ref{sec:lie:hierarchy} start the discussion of the standard ansatz.

\section{The Multi-angle or Free Ansatz\label{SEC:FREE-ANSATZ}}

We now apply the framework established in Sec.~\ref{sec:frame}  to the free ansatz of QAOA which is based on the 
problem and mixer
Hamiltonians in $\mathcal{G}_p^{\free}$ and $\mathcal{G}_m^{\free}$
from Eq.~\eqref{eq:free}. These generators define the 
\emph{free-mixer \DLA}
\begin{equation*}
\gfree := \lie{i\mathcal{G}_{\text{free}}} 
\subseteq  \C^{2^n\times 2^n}
 \text{ where } \mathcal{G}_{\text{free}} := \mathcal{G}_p^{\free} \cup \mathcal{G}_m^{\free}.
\end{equation*}
As a first observation, all generators $H_j \in \mathcal{G}_{\text{free}}$ commute
with $X^{\otimes n}$ (and trivially with $\unity_{2^n}$). Here, $X^{\otimes n}$ acts as bit-flip (or $\mathbb{Z}_2$) symmetry, i.e.,
$X\tn\ket{x}=\ket{\neg x}$ where $\neg$ is the bitwise NOT operation.
Alternatively,  $X^{\otimes n}$ can be interpreted as a parity superselection operator.
The states $\ket{x}$ and $\ket{\neg x}$ 
then have the same expectation value relative to $H_p$.
We show in Lemma~\ref{app:prop:center}(a) of
Appendix~\ref{app:free:preliminaries} that the commutant
\begin{equation}\label{eq:free-commutant}
\Cfree := \com(\mathcal{G}_{\text{free}}) = \spn_{\C}\{I\tn,X^{\otimes n}\}
\end{equation}
of all matrices simultaneously commuting with all $H_j \in \mathcal{G}_{\text{free}}$
consists solely of complex-linear combinations of $I\tn$ and $X^{\otimes n}$.
Clearly, $X^{\otimes n}$ has two distinct eigenvalues $+1$ and $-1$ which both have multiplicity $d/2=2^{n-1}$ and
it is equivalent to
\begin{equation*}
X^{\otimes n}\simeq Z_1 = \id_{d/2}\oplus(-\id_{d/2})
\end{equation*}
by switching to an eigenbasis. Thus the 
$\mathbb{Z}_2$ symmetry enforces a two-fold splitting of the Hilbert space into equally-sized invariant subspaces
and no further splitting is possible as the commutant $\Cfree$ is two dimensional for the free ansatz.
This symmetry analysis is true for any connected graph and, more generally, we completely determine 
all possible free-mixer \DLAs for any graph:

\begin{table}
\caption{\textbf{Free ansatz.} 
Six families of connected graphs with $n\geq 2$ vertices 
are identified by their generated \DLAs.
\label{tab:freemixer}}
\includegraphics{freemixer3.pdf}
\end{table}

\begin{theorem}[Free-mixer \DLAs]\label{thm:free-mixer-DLA-decomposition}
Given the generators $\GC_{\rm{free}}$ of the free \textnormal{QAOA} ansatz for any connected graph,
the generated \DLAs $\gfree$
fall into one of the six families depicted in Table~\ref{tab:freemixer}.
\end{theorem}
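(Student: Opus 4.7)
The overall plan is to first establish a universal upper bound on $\gfree$ coming from the $\mathbb{Z}_2$ parity symmetry, and then to do a case split on the structure of $G$: paths and cycles admit a free-fermion (quadratic) description that keeps $\gfree$ polynomially small, while every other connected graph contains a branching vertex that forces $\gfree$ to fill the entire commutant of $X^{\otimes n}$. The six families in Table~\ref{tab:freemixer} would then correspond to $n=2$, paths and cycles (split by a parity invariant of $n$), and the remaining ``generic'' graphs.

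For the upper bound, I would invoke Lemma~\ref{app:prop:center}(a), which gives $\com(\gfree) = \spn_{\C}\{I^{\otimes n}, X^{\otimes n}\}$, so the Hilbert space splits into the two $X^{\otimes n}$-eigenspaces $\HC_{\pm}$ of dimension $2^{n-1}$ each. Since every generator in $\GC_{\text{free}}$ is traceless and commutes with $X^{\otimes n}$, one has $\gfree \subseteq \su(2^{n-1}) \oplus \su(2^{n-1})$, and this is the natural ceiling. For each of the six candidate algebras in the table I would then check that it actually respects this decomposition.

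For paths and cycles, I would apply a Jordan-Wigner transformation. For the path $P_n$, the generators $X_v$ and $Z_w Z_{w+1}$ both become Majorana bilinears $\gamma_{j}\gamma_{k}$, so their Lie closure lies inside the quadratic (free-fermion) algebra $\so(2n)$, and a direct commutator chain shows that it realizes a specific, polynomially-sized subalgebra. For the cycle $C_n$, the extra edge $\{n,1\}$ introduces a Jordan-Wigner string, but because this string equals the fermionic parity operator times a bilinear, the closure still lives in a small ``twisted'' subalgebra of $\so(2n)$ that depends on the parity of $n$. Careful bookkeeping of this parity, and of whether $X^{\otimes n}$ itself (or $I^{\otimes n}$) lies in the closure, should yield the exact path/cycle entries of Table~\ref{tab:freemixer}.

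For any connected graph that is neither a path nor a cycle, there must exist a vertex $v^{\star}$ of degree at least three, and I would argue that this pushes $\gfree$ all the way up to its ceiling. Concretely, computing $[X_{v^{\star}}, Z_{v^{\star}}Z_{a}]$ for three distinct neighbors $a,b,c$ of $v^{\star}$ gives operators $Y_{v^{\star}}Z_{a}, Y_{v^{\star}}Z_{b}, Y_{v^{\star}}Z_{c}$, whose pairwise commutators produce terms like $X_{v^{\star}}Z_{a}Z_{b}$ and $Z_{a}Z_{c}$ that are \emph{not} quadratic in fermions. Using connectivity, one then propagates these ``non-quadratic'' Pauli monomials to all of $G$ edge by edge, and combines them with the $X_{v}$ generators to sweep out all of $\su(\HC_{+}) \oplus \su(\HC_{-})$. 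This last step is where I expect the main difficulty to lie: turning the local escape from the free-fermion algebra at a degree-$\geq 3$ vertex into a rigorous, graph-independent proof of universality on each parity sector — most likely via induction on a spanning subgraph (first handling the ``star-plus-one-edge'' subgraph around $v^{\star}$, then extending one vertex at a time) and a careful tracking of how the $\mathbb{Z}_2$ symmetry divides the remaining candidate algebras in Table~\ref{tab:freemixer}.
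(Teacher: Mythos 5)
There is a genuine gap in your case split: you claim that every connected graph other than a path or a cycle is ``generic'' and that a vertex of degree at least three forces $\gfree$ up to the ceiling $\su(2^{n-1})\oplus\su(2^{n-1})$. This is false for bipartite graphs, and three of the six families in Table~\ref{tab:freemixer} live precisely in the gap you are overlooking. For a connected bipartite graph with bipartition $V=V_1\uplus V_2$ there is an additional invariant beyond the $X^{\otimes n}$ parity: every Pauli string $P$ appearing in $\gfree$ satisfies that $\#\mathrm{X}(P)+\#\mathrm{Y}|_{V_1}(P)+\#\mathrm{Z}|_{V_1}(P)$ is odd [Lemma~\ref{app:prop:parity}(ii)], equivalently there is a bilinear form $S=\bigotimes_{v\in V_1}Z_v\bigotimes_{w\in V_2}Y_w$ with $SH_j+H_j^{t}S=0$ for all generators. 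By Proposition~\ref{prop_obata} this pins $\gfree$ inside an orthogonal or symplectic (or, for $n$ odd, a single $\su(2^{n-1})$) subalgebra of the ceiling, and the three bipartite families --- $\so(2^{n-1})\oplus\so(2^{n-1})$, $\usp(2^{n-1})\oplus\usp(2^{n-1})$, $\su(2^{n-1})$ --- are distinguished by the parities of $\abs{V_1}$ and $\abs{V_2}$ (Theorem~\ref{thm:free-mixer-bipartite}). The star $K_{1,3}$ already has a degree-three vertex yet does not reach $\su(2^{n-1})\oplus\su(2^{n-1})$; only non-bipartite graphs (which contain an odd cycle) do. Your commutant-based upper bound cannot detect this, because the obstruction is a quadratic symmetry, not a linear one.

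Two further problems with the ``local escape'' step. First, your claimed commutators are wrong: $[Y_{v^{\star}}Z_a,\,Y_{v^{\star}}Z_b]=0$ since $Y_{v^{\star}}^2=I$ and $Z_a,Z_b$ commute, so the three neighbors of $v^{\star}$ alone do not produce $Z_aZ_b$ or $X_{v^{\star}}Z_aZ_b$; generating new effective edges requires larger subgraph configurations (the paper uses five-vertex Y- and four-vertex PAW-configurations, and crucially the edges so added preserve bipartiteness, which is why a bipartite graph only completes to a complete \emph{bipartite} graph, not a complete graph). Second, even granting that new edges can be added, you still need a mechanism to identify which of the candidate algebras is attained, and ruling out proper subalgebras of $\su(2^{n-1})\oplus\su(2^{n-1})$ containing your generated elements requires either a dimension count over admissible Pauli strings or a maximal-subalgebra argument; neither appears in your sketch. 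The path/cycle part of your plan (Jordan--Wigner into $\so(2n)$, with the extra string for the cycle) does match the paper's spinor-representation treatment and is sound in outline.
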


Theorem~\ref{thm:free-mixer-DLA-decomposition} identifies only six families of connected graphs
with different \DLAs and free-mixer generators: 
path graphs; cycle graphs; bipartite graphs of even-even, odd-odd, and even-odd
type (not equal to a path or a cycle graph); and finally any other graph: 
\begin{definition}[Archetypal graph]\label{def_other_graphs} 
An \emph{archetypal graph} is
connected but neither bipartite nor a cycle graph.
\end{definition}
The classification of Theorem~\ref{thm:free-mixer-DLA-decomposition}
includes one generic class given by the archetypal graphs from the last row
of Table~\ref{tab:freemixer}. 
Thus the \DLA for archetypal graphs is given by its particularly
simple and semiuniversal reductive
decomposition $\su(2^{n-1})\oplus \su(2^{n-1})$, while a more diverse set of \DLAs is obtained for bipartite and cycle graphs.
But a path or cycle graph has
trivial solutions to the maxcut problem,
while a polynomial algorithm exist for bipartite and planar graphs
 \cite{groetschel1981}.
Archetypal graphs cover all hard instances for the
maxcut problem as well as easy instances such as planar graphs
(as for the house graph in the last row of Table~\ref{tab:freemixer}).
Thus only archetypal graphs (or suitable subsets thereof) matter in practice.

The statement of Theorem~\ref{thm:free-mixer-DLA-decomposition}
is striking in light of the large number of non-isomorphic connected graphs:
one obtains twenty-two graphs for five vertices, but already more than eleven million for ten vertices.
Their number grows at least exponentially with the number of vertices.
But one can intuitively understand that the possible classes of \DLAs collapse 
to the few cases in Theorem~\ref{thm:free-mixer-DLA-decomposition}
as the generators from the free ansatz break almost all symmetries arising from a specific
graph.

Inspecting the families of \DLAs in Table~\ref{tab:freemixer}, the cases
of $\so(2^{n-1})\oplus \so(2^{n-1})$, $\usp(2^{n-1})\oplus \usp(2^{n-1})$, and
$\su(2^{n-1})\oplus \su(2^{n-1})$ nicely respect the block structure
induced by the eigenspaces of $X^{\otimes n}$. The case of $\su(2^{n-1})$ for the even-odd bipartite
case is more intriguing, but $\su(2^{n-1})$ is represented in a suitable block-diagonal basis
as $M {\oplus} ({-}M^{t})$ for complex $2^{n-1}\times 2^{n-1}$ matrices $M$
(see Appendix~\ref{appendix:explicit:reps:not:path:cycle}).
For path graphs, $\so(2n)$ is represented using a direct sum 
$\eta_{+} {\oplus} \eta_{-}$
of spinor representations $\eta_{\pm}$ of degree $2^{n-1}$
as detailed in Appendix~\ref{appendix:explicit:reps:path:cycle}.
The spinor representation $\eta_{+}$
is also utilized to map the simple parts of $\so(2n){\oplus}\so(2n)$
to their respective blocks.

\begin{table}
    \caption{\textbf{Pauli-string bases associated to Table~\ref{tab:freemixer}.}
    For Pauli strings consisting of $\mathrm{A} \in \{\mathrm{X}, \mathrm{Y}, \mathrm{Z}, \mathrm{I}\}$,
    let $\#\mathrm{A}$ denote the number of $\mathrm{A}$ in a Pauli string.
    Also, $\#\mathrm{A}|_{V_1}$ indicates the number of $\mathrm{A}$  in $V_1$ for a vertex bipartition $V=V_1 {\uplus} V_2$
    with $\abs{V}=n\geq 2$.
    Refer also to Table~\ref{tab:free-mixer-full-table} and Appendix~\ref{appendix:free-mixer}.
    \label{tab:free-basis-table}}
\includegraphics{freebasis.pdf}
\end{table}

Table~\ref{tab:free-basis-table} highlights explicit Pauli-string bases for the associated free-mixer \DLA
of each family of connected graphs from Table~\ref{tab:freemixer}, while 
Appendix~\ref{appendix:free-mixer} and Table~\ref{tab:free-mixer-full-table} provide more details.
For archetypal graphs, all Pauli strings appear that have an even parity  $\#\mathrm{Y} + \#\mathrm{Z}$ 
and differ from $\mathrm{I}\cddot\mathrm{I}$ and $\mathrm{X}\cddot\mathrm{X}$
(see Table~\ref{tab:free-basis-table}).
The extensive proofs leading to the results in this section are detailed in Appendices~\ref{appendix:free-mixer}
and \ref{appendix:free-mixer:bases}. Appendix~\ref{appendix:free-mixer}
employs \DLAc proof techniques, while Appendix~\ref{appendix:free-mixer:bases}
emphasizes graph properties.

\section{Implications for the free ansatz\label{sec:free:implications}}

We collect relevant consequences of Thm.~\ref{thm:free-mixer-DLA-decomposition}.
First, we clarify that the free ansatz of QAOA can---in principle---reach
at least one maximum cut (i.e., an optimal solution for maxcut)
using a finite number of layers for archetypal graphs from Def.~\ref{def_other_graphs}
(see Cor.~\ref{finite-depth-reach} below),
thus constituting a strict improvement over previous known convergence results,
namely reachability in the infinite-layer limit~\cite{herrman2022multi}.
The argument relies on three ingredients: (1)~the initial state and the target maxcut state
belong to the same invariant subspace, (2)~$\exp(\gfree)$ acts transitively on this invariant subspace,
and (3)~each unitary $U\in \exp(\gfree)$ can be generated in a finite number of layers.

For the second ingredient, the free-mixer \DLA 
$\gfree=\su(2^{n-1}){\oplus}\su(2^{n-1})$
for archetypal graphs (see Theorem~\ref{thm:free-mixer-DLA-decomposition})
acts transitively
on both invariant subspaces $\HC_{+}$
and $\HC_{-}$ of $X\tn$ (see App.~\ref{app:free:preliminaries})
which constitute the $+1$ and $-1$ eigenspaces of $X^{\otimes n}$ and
are spanned by all Hadamard basis states 
$\ket{b_1}\cdots\ket{b_n}$ with $b_j \in \{+, -\}$
for respectively an even or an odd number of minus
signs (i.e.\ $b_j = -$).
For the first ingredient, $\HC_{+}$
contains both the initial state $\ket{+}^{\otimes n}$ and at least one ground state of
$H_p$ as we detail now:
Consider a maximum cut $x\in\{0,1\}^n$ (see Sec.~\ref{sect:maxcut:qaoa}) and let  $\neg x$ denote its 
bitwise negation, which is also a maximum cut. Thus the state $\ket{x_+}=(\ket{x}{+}\ket{\neg x})/\sqrt{2}$
is a ground state of $H_p$ and also belongs to $\HC_{+}$
as $X^{\otimes n}(\ket{x}{+}\ket{\neg x})/\sqrt{2}=(\ket{x}{+}\ket{\neg x})/\sqrt{2}$.
Finally, our third ingredient relies on general arguments which
establish that reachable unitaries can be obtained
in a finite number of products \cite{lowenthal1971,silvaleite1991,dalessandro2002}:
\begin{lemma}\label{lem:finite:layers}
We consider the standard ansatz with the generators
$\{H_p, H_m\}$ and the free ansatz with the generators
$\mathcal{G}_{\free} = \mathcal{G}_p^{\free} \cup \mathcal{G}_m^{\free}$.
In both cases, if a target quantum state is reachable from a given initial state, then
it can be also obtained in a finite number of layers.
\end{lemma}

\begin{proof}
Based on the assumed reachability,
the target state can be obtained for the standard ansatz
using products of the form
$\prod_{\ell =1}^{L} e^{-i \theta_{\ell m} H_m} e^{-i \theta_{\ell p} H_p}$
or
$\prod_{\ell =1}^{L} e^{-i \theta_{\ell p} H_p} e^{-i \theta_{\ell m} H_m}$
with angles $\theta_{\ell m}$, $\theta_{\ell p}$ which can be zero.
General results in control theory on the so-called order of generation
\cite{lowenthal1971,silvaleite1991,dalessandro2002} imply 
a finite upper bound on the length of products
required to reach all possible unitaries in $\exp(\gfree)$.
Clearly, we can restrict ourselves to the first form of the product by enlarging $L$ by one.
This establishes the result for the standard ansatz.

For the free ansatz, we need to consider arbitrary products of
the form $\prod_{j =1}^{r} e^{-i \theta_j H_j}$
with angles $\theta_j$ and generators $H_j \in \mathcal{G}_{\free}$, where
the length $r$ of the products is again bounded by a finite integer $\tilde{L}$.
Clearly, we can rewrite all these products in the form of Eq.~\eqref{eq:circuit}
with $L=\tilde{L}+1$, suitable angles  $\theta_{m\ell k}$ and $\theta_{p\ell k}$ 
which are possibly zero,
as well as
$H_{mk} \in \mathcal{G}_m= \mathcal{G}_m^{\free}$ and $H_{pk} \in \mathcal{G}_p = \mathcal{G}_p^{\free}$.
This completes the proof.
\end{proof}

The length of the products required to 
obtain all reachable unitaries 
has to be greater or equal to the dimension of the generated \DLA
\cite{lowenthal1971,silvaleite1991,larocca2021theory,kiani2020learning},
but we lack general upper bounds as they 
depend on the particular generators.
These arguments neglect that the target ground state could possibly be reached
with a much smaller depth. 
By combining our three ingredients for the free-mixer ansatz,
we can at least rule out an infinite depth as compared to the usual convergence argument for QAOA
\cite{herrman2022multi,farhi2014quantum}, while not contradicting results on reachability
deficits for QAOA~\cite{akshay2020reachability} which apply to a fixed number of layers.
Initial results beyond the free ansatz are discussed at the end of Sec.~\ref{sec:lie:hierarchy}.

\begin{corollary}\label{finite-depth-reach}
The free QAOA ansatz can reach a maxcut state with a finite number of layers
for any archetypal graph from Def.~\ref{def_other_graphs}
\end{corollary}

In this context, one might wonder how effective QAOA will be for the free ansatz.
Unfortunately, the following results argue that barren plateaus dominate such optimization landscapes.
Recall that the QAOA cost function
$C(\thv)=\langle \psi(\thv)|H_p|\psi(\thv)\rangle$ 
from Eq.~\eqref{eq:cost} utilizes the unitaries 
$U(\thv)$ in the circuit from Eq.~\eqref{eq:circuit:general}
and $U(\thv)$
depends on 
the parameters
$\thv$ with real entries $\theta_{\vth}$.
For simplicity, the entries $\theta_{\vth}$
of $\thv$ are now indexed by
the numbers $\vth$.
Let $\partial_{\vth} C(\thv)$ denote the partial derivative of $C(\thv)$ with respect to the $\vth$-th
parameter $\theta_{\vth}$ in $\thv$.
Sampling the parameters $\thv$ 
according to a given distribution $d\thv$ over a chosen parameter domain $\delta_{L}$
induces a distribution on the associated unitaries $U(\thv)$ of an $L$-layered circuit
\cite{holmes2021connecting}.
We define the second-order moment operators
\begin{align*}
M_{e^{\g}}&:=\int_{U \in e^{\g}} d\mu_{e^{\g}}(U)\; U^{\otimes 2 }{\otimes} \bar{U}^{\otimes 2}\; \text{ and}\\
M_{L}&:=\int_{\thv\in \delta_L}d\thv\;  [U(\thv)]^{\otimes 2 }{\otimes} [\bar{U}(\thv)]^{\otimes 2 }.
\end{align*}
Then,  $\mathcal{A}_L:=M_{L}-M_{e^{\g}}$ is a positive semi-definite operator that quantifies how much the second moments
arising from $U(\thv)$ differ from those of the Haar measure over $e^{\g}$.
We say that a distribution $d\thv$ results in an $\varepsilon$-\emph{approximate unitary $2$-design}
for the unitaries $U(\thv) \subseteq e^{\g}$ if
\begin{equation*}
\norm{\mathcal{A}}_{\infty} =
\norm{M_L - M_{e^{\g}}}_{\infty} \leq \varepsilon.
\end{equation*}
Here, $\norm{\cdot}_{\infty}$ denotes the Schatten $\infty$-norm (or operator norm) which is given by
the largest singular value of its argument. We refer to \cite{harrow2009random,low2010pseudo,brown2010random,brandao2016local,
brandao2021models,haferkamp2021improved,oszmaniec2022epsilon,haferkamp2022random,
harrow2023approximate,schuster2024random,laracuente2024approximate,deneris2024exact} (and references therein)
for many similar definitions and notions related to approximate unitary $2$-designs.
We now obtain:

\begin{corollary}\label{cor:variance}
For the free ansatz, consider
any archetypal graph from
Def.~\ref{def_other_graphs} with $n> 3$ vertices and $\abs{E}$ edges.
Recall the QAOA cost function 
$C(\thv)$
from Eq.~\eqref{eq:cost}
and its partial derivative
$\partial_{\vth} C(\thv)$ with respect to the $\vth$-th
parameter $\theta_{\vth}$ in $\thv$. Assume that the multi-angle QAOA circuit
has enough layers such that the distribution of unitaries is an $\varepsilon$-approximate unitary $2$-design.
Then, the expectation value
of the partial derivatives is $E_{\thv}[\partial_{\vth} C(\thv)]=0$ and their variance is given by (with $d=2^n$)
\begin{equation*}
    \Var_{\thv}[\partial_{\vth} C(\thv)] = {4d^2|E|}/[(d^2{-}4)(d{+}2)] \leq 4n^2/2^n.
\end{equation*}
\end{corollary}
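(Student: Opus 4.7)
The plan is to specialize to the simple block structure provided by Theorem~\ref{thm:free-mixer-DLA-decomposition}. For archetypal graphs, $\gfree=\su(D)\oplus\su(D)$ with $D:=d/2=2^{n-1}$, acting irreducibly on the two $X\tn$-eigenspaces $\HC_+$ and $\HC_-$. Every free-ansatz generator $G\in\GC_{\text{free}}$ commutes with $X\tn$ and satisfies $G^2=\unity$, so $G=G_+\oplus G_-$ with $G_{\pm}^2=P_{\pm}$, where $P_{\pm}$ projects onto $\HC_{\pm}$; likewise $H_p=H_p^+\oplus H_p^-$. Three elementary facts drive the calculation: (i) $\rho_0=\dya{+^n}$ is supported entirely in $\HC_+$; (ii) $\Tr(G_\pm)=0$ for every generator of $\GC_{\text{free}}$, by a direct eigenvalue count for $X_v$ or $Z_w Z_{\tilde{w}}$ restricted to a parity eigenspace; (iii) $\Tr(H_p^\pm)=0$ because $H_p$ is purely off-diagonal in the Hadamard basis that diagonalizes $X\tn$, while $\Tr[(H_p^\pm)^2]=\abs{E}\,d/2$ follows from $\Tr(H_p^2)=\abs{E}\,d$ by Pauli orthogonality.

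I split the circuit around the differentiated gate as $U(\thv)=U_B\,e^{-i\theta_{\vth}G}\,U_A$, which after absorbing $e^{-i\theta_{\vth}G}$ into $U_A$ yields the standard expression $\partial_{\vth}C=i\bra{+^n}U_A^{\dagger}[G,\,U_B^{\dagger}H_pU_B]U_A\ket{+^n}$. The $\varepsilon$-approximate $2$-design hypothesis lets me treat $U_A$ and $U_B$ as independent Haar samples on $\exp(\gfree)=\SU(D)\times\SU(D)$, up to corrections of order $\varepsilon$ that do not affect the leading-order scaling. The expectation $\Ebb_{\thv}[\partial_{\vth}C]=0$ is immediate, since Haar-averaging $U_B^{\dagger}H_pU_B$ block-by-block produces $(\Tr(H_p^{\pm})/D)\,P_{\pm}=0$. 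For the variance, because $\rho_0\in\HC_+$, only the $+$-block of $U_A$ acts non-trivially and the problem collapses to computing $\Ebb[\bra{+^n}V_+^{\dagger}X_{++}V_+\ket{+^n}^2]$ with $V_+$ Haar on $\SU(D)$ and $X_{++}:=i[G_+,(U_B)_+^{\dagger}H_p^+(U_B)_+]$ traceless; the symmetric-subspace identity $\Ebb_V\bigl[\bra{\psi}V^{\dagger}YV\ket{\psi}^2\bigr]=\Tr(Y^2)/[D(D+1)]$ then reduces this to $\Ebb_{(U_B)_+}[\Tr(X_{++}^2)]/[D(D+1)]$.

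Writing $B_+:=(U_B)_+^{\dagger}H_p^+(U_B)_+$ and using $\Tr(X_{++}^2)=2\Tr(G_+^2B_+^2)-2\Tr(G_+B_+G_+B_+)$, the first term evaluates to $2\Tr[(H_p^+)^2]$ via $\Ebb[B_+^2]=(\Tr[(H_p^+)^2]/D)\,P_+$ and $\Tr(G_+^2)=D$, while the second is handled by the standard two-copy Weingarten identity $\Ebb[W^{\dagger\otimes 2}Y^{\otimes 2}W^{\otimes 2}]=\alpha_1\,\unity+\alpha_2\,S$, which with $\Tr(H_p^+)=\Tr(G_+)=0$ leaves only the $\alpha_1\Tr(G_+^2)$ contribution, equal to $-\Tr[(H_p^+)^2]/(D^2-1)$. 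Combining gives
\begin{equation*}
\Var[\partial_{\vth}C]\;=\;\frac{2D\,\Tr[(H_p^+)^2]}{(D-1)(D+1)^2}\;=\;\frac{4 d^2 \abs{E}}{(d^2-4)(d+2)}
\end{equation*}
after substituting $D=d/2$ and $\Tr[(H_p^+)^2]=\abs{E}\,d/2$. The upper bound $4n^2/2^n$ then follows from $\abs{E}\leq n(n-1)/2$ together with the elementary estimate $d^3/[(d-2)(d+2)^2]\leq 1$ for $n>3$. The main conceptual obstacle is ensuring that $\exp(\gfree)$ really acts as the \emph{full} $\SU(D)\times\SU(D)$, so that the symmetric-subspace and two-copy Weingarten identities apply exactly on each simple block: this is precisely the content of Theorem~\ref{thm:free-mixer-DLA-decomposition} for archetypal graphs, which is why the corollary must be restricted to this class.
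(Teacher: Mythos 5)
Your proof is correct and lands on exactly the paper's formula, but it unpacks a step that the paper deliberately black-boxes. The paper's proof in Appendix~\ref{proof:variance} does no Weingarten calculus of its own: it invokes Fact~\ref{thm_diagnosing} (a special case of Thm.~2 of \cite{larocca2021diagnosing}), which packages the variance as $\tfrac{2d_1}{(d_1^2-1)^2}\,\Delta[H_{\vth}^{(1)}]\Delta[O^{(1)}]\Delta[\rho^{(1)}]$ with $\Delta(B)=\Tr[B^2]-\Tr[B]^2/\dim$, and then the only work is Lemma~\ref{lem:delta}, computing $\Delta[A^{(+)}]=d/2$, $\Delta[H_p^{(+)}]=\abs{E}\,d/2$, and $\Delta[\rho^{(+)}]=(d-2)/d$ (plus Lemma~\ref{lemma_stepping} for tracelessness of the reduced generators). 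Your explicit two-copy computation --- the split $U_B e^{-i\theta G}U_A$, the symmetric-subspace identity $\Tr(Y^2)/[D(D+1)]$ for traceless $Y$, and the Weingarten evaluation of $\Ebb[\Tr(G_+B_+G_+B_+)]=-\Tr[(H_p^+)^2]/(D^2-1)$ --- is precisely what sits inside that cited formula, and your intermediate result $2D\Tr[(H_p^+)^2]/[(D-1)(D+1)^2]$ with $D=d/2$ agrees with the paper's after substitution. The substantive ingredients are identical in both treatments: restriction to $\HC_+$ because $\dya{+^n}$ lives there, $\Tr(G_{\pm})=\Tr(H_p^{\pm})=0$, $\Tr[(H_p^{+})^2]=\abs{E}\,d/2$, and full controllability $\su(2^{n-1})\oplus\su(2^{n-1})$ on each block from Theorem~\ref{thm:free-mixer-DLA-decomposition}, which you correctly identify as the reason for the archetypal-graph restriction. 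The one shared idealization is treating the two circuit halves as independent Haar samples while the hypothesis only grants an $\varepsilon$-approximate $2$-design of the full circuit; you at least flag the $O(\varepsilon)$ corrections explicitly, whereas the paper states the equality without qualification. What the paper's route buys is brevity and reuse of a standard result; what yours buys is a self-contained derivation that makes visible where each factor of $(D\pm 1)$ comes from.
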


The proof of Corollary~\ref{cor:variance} is given in Appendix~\ref{proof:variance}, it
shows that, if the circuit is deep enough so that its unitaries form an approximate unitary $2$-design over $e^{\g}$,
then the partial derivatives of the cost function will, in average, vanish exponentially
with the system size. Thus the cost function will exhibit barren plateaus. This result can be further
understood by recalling that  the \DLA is a measure of expressiveness for the parametrized unitaries~\cite{larocca2021diagnosing}
and that highly expressive ansätze are known to exhibit barren plateaus~\cite{holmes2021connecting}. Indeed, for any archetypal
graph from Def.~\ref{def_other_graphs}, the \DLA is exponentially large, which implies exponentially small gradients~\cite{ragone2023unified}.

As previously mentioned, Corollary~\ref{cor:variance} holds for a multi-angle QAOA circuit when 
the generated unitaries form an $\varepsilon$-approximate unitary 2-design. This is fulfilled when
the number of layers  $L$ is bounded from below as~\cite{ragone2023unified}
\begin{equation}\label{eq:bound:L}
    L \geq \frac{\log(1/\varepsilon)}{\log({1/\norm{\mathcal{A}_L}_{\infty}})}.
\end{equation} 
Clearly, the exact value of $L$ will depend on the specific graph considered, as well as on the parameter distribution
and the chosen parameter domain. Still, empirical evidence suggests in meaningful scenarios (as for Pauli generators and parameters $\theta_{\vth}$ that are
independently and uniformly sampled from
$[-2\pi,2\pi]$) that the circuit unitaries become 
an $\varepsilon$-approximate unitary $2$-design 
if
$L\in\Omega(\poly(n))$~\cite{larocca2021diagnosing}. However, a general analytical characterization of 
$\norm{\mathcal{A}_L}_{\infty}$ is still lacking and beyond the scope of this work. We refer the reader
to~\cite{haferkamp2021improved,ODonnell2023,haah2024approximate,belkin2023approximate} for inspiration on how to tackle these and related questions.

\begin{figure}[t]
\centering
\includegraphics[width=0.95\columnwidth]{Scaling_regular_2.pdf}
\caption{\textbf{Gradient concentration for a single layer of the multi-angle QAOA applied to certain regular graphs.}
For each number of vertices $\abs{V}=n$, we pick a random $3$-regular or a complete graph.
The variance of the gradients is obtained by sampling $100$ parameter values. 
The cost function $C(\thv)$ has been renormalized so that its values lie
in $[-1,1]$. Refer also to the discussion in Sec.~\ref{sec:discussion}.
\label{fig:scaling}}
\end{figure}

While our results hint at the fact that the multi-angle QAOA will not be trainable for deep circuits, they---in principle---do not preclude the
possibility of good approximation ratios being reachable for very shallow circuits with, e.g., $L\in\OC(\log(n))$. Thus, one may wonder if shallow multi-angle QAOA are trainable, i.e., if they will be barren plateau-free. However, we
can find graphs for which the free ansatz exhibits  exponentially vanishing gradients already for a single layer. Usually, one does not
expect single layered or shallow circuits to have trainability issues when they are either composed of few local entangling gates
or have a small number of parameters~\cite{cerezo2020cost,pesah2020absence}. However, a single layer of the
free ansatz can simultaneously be highly entangling and contain up to a polynomial number of parameters in $n$.
For instance, consider the case of an $D$-regular graph. For constant $D$,
each layer of the circuit will not be highly entangling, and thus one can expect sufficiently large gradients. However, if $D$ scales
with $n$, then a single layer can be sufficiently expressive to observe barren plateaus. This intuition is exemplified in Fig.~\ref{fig:scaling}
where we compare the gradient scaling for random $3$-regular and complete graphs. Therein
the gradients for $3$-regular graphs are quite large, while $\Var_{\thv}[\partial_{\vth} C(\thv)]$ decays exponentially with $n$ for
complete graphs as highlighted by the straight line in the log-linear plot of Fig.~\ref{fig:scaling}.
A similar behavior of enlargement of barren plateaus with increasing $D$
even for a single layer was also observed in \cite[Fig.~9 of App.~C]{Vijendran2024} for the standard ansatz.

In summary, freeing the QAOA generators and assigning a single parameter to each gate can certainly help to reduce the circuit depth, but such an uncontrolled
increase in expressiveness may be undesirable. As illustrated, for example, by Corollary~\ref{cor:variance} and Fig.~\ref{fig:scaling},
this might lead to severe trainability issues. Although more expressible ansätze are generally expected to allow for reduced circuit
depths (as in the case of hardware efficient ansätze~\cite{kandala2017hardware}), oftentimes such additional
expressiveness needs to be well-directed. Following the ideas in geometric deep learning~\cite{larocca2022group}, it is usually possible to
restrict the expressiveness without sacrificing performance by building ansätze that actively exploit the symmetries in the problem.
Unfortunately, the $\rm{free}$ ansatz breaks one of the essential symmetries of the maxcut QAOA given by
the automorphism group of the considered graph.

\section{Symmetries of the standard ansatz\label{SEC:STD}}

In Sec.~\ref{SEC:FREE-ANSATZ}, we have completely characterized the possible \DLAs and their symmetries
for the free ansatz for any possible graph. We would ideally aim at a similar general result for the standard ansatz.
We now explore the standard ansatz and show how even characterizing the corresponding linear symmetries 
as given by the commutant becomes a much more intricate task.

Recall that the standard generators $\mathcal{G}_{\std} = 
\{H_p, H_m\}$ 
from Eq.~\eqref{eq:std} generate the \DLA
$\gstd := \lie{i\mathcal{G}_{\std}}$.
Clearly, the free-mixer \DLA $\gfree \supseteq \gstd$ contains the \DLA $\gstd$
for the standard ansatz. Conversely, the corresponding symmetries observe the inclusion
$\spn_{\C}\{I\tn,X^{\otimes n}\}=\Cfree \subseteq \Cstd$ of commutants.
Here, $X^{\otimes n}$ induces a $\ZTWO$ symmetry given by the bit-wise negation 
$X\tn\ket{x}=\ket{\neg x}$.
The matrix group generated by the matrices
$I\tn$ and $X\tn$ is given by
\begin{equation}\label{eq:Z2}
\ZTWO := \group{I\tn,X^{\otimes n}}.
\end{equation}
Then, $\ket{x}$ and $X\tn\ket{x}=\ket{\neg x}$
have the same expectation value with respect to $H_p$ and these states clearly correspond
to the same cut value. 
This is visualized in Fig.~\ref{fig:house_sym:new}
for the maximum cut vectors
[see Fig.~\ref{fig:decomposition:house}(e)]
for the example of
the house graph.

\begin{figure}[t]
\includegraphics{house-auto.pdf}
\caption{\textbf{Symmetries in the standard-ansatz QAOA.} The free-mixer symmetries $I\tn$ and $X^{\otimes n}$
are complemented by symmetries arising from graph automorphism
such as the permutation $(23)(45)$ for the house graph. These symmetries
naturally act on quantum states (see text).
\label{fig:house_sym:new}}
\end{figure}

One additional class of symmetries arises from the \emph{automorphism group} $\Aut \subseteq \mathcal{S}_n$  \cite{diestel2017}
of the underlying graph $G$. Here, $\Aut$ is
the subgroup of vertex permutations from the symmetric group $\mathcal{S}_n$ \cite{sagan2001}
that map the edge set $E$ of $G$ to itself, i.e.,
\begin{equation*}
    \!\!\!\Aut = \{ \sg \in \mathcal{S}_n \,|\,  \{\sigma(k),\sigma(\ell)\}\in E \text{ iff } \{k,\ell\}\in E\}.
\end{equation*}
The action $\sigma\cdot \ket{x}:=\ket{x_{\sigma^{-1}(1)},\ldots,x_{\sigma^{-1}(n)}}$ 
is naturally extended 
to any $\sigma \in \mathcal{S}_n$
via the representation map $\zeta(\sigma)\in \C^{d\times d}$ (with $d=2^n$). For example, 
the transposition $(1,2)\in \mathcal{S}_2$ operating on two qubits is given by
\begin{equation*}
\zeta[(1,2)] = \SWAP = 
\left(
\begin{smallmatrix}
1 & 0 & 0 & 0\\
0 & 0 & 1 & 0\\
0 & 1 & 0 & 0\\
0 & 0 & 0 & 1
\end{smallmatrix}
\right).
\end{equation*}
The general form of $\zeta$ is uniquely defined for any number $n$ of qubits
as any element of the symmetric group $\mathcal{S}_n$ can be written as a product of transpositions
$(j,j{+}1)$ with $j\in \{1,\ldots,n{-}1\}$ \cite{sagan2001}. Henceforth, $\symone$ denotes the identity permutation.

One verifies that  $\ket{x}$ and $\sigma\cdot \ket{x}$
have the same cut value for any $\sigma\in\Aut$
as one can check that 
$(\langle x|\zeta(\sigma)^\dagger)H_p(\zeta(\sigma)|x\rangle)=\langle x|H_p|x\rangle$.
Figure~\ref{fig:house_sym:new} shows the 
house graph, where the maximum cut vectors
from Fig.~\ref{fig:decomposition:house}(e)
are permuted by the automorphism $(2,3)(4,5)$.

As $X\tn$ commutes
with $\zeta(\sigma)$ for any permutation $\sigma \in \mathcal{S}_n$,
the \emph{group of natural symmetries} of the standard ansatz of QAOA 
is defined as the direct product 
\begin{equation}\label{eq:nat:group}
\Gnat := \ZTWO\times \zeta[\Aut]
\end{equation}
of the group $\ZTWO$ as specified in Eq.~\eqref{eq:Z2}
and the subgroup given by the image of the automorphism group $\Aut$ under the representation $\zeta$.
We emphasize 
their linear structure and define the \emph{natural symmetries} as
\begin{equation}\label{eq:nat:sym}
\Snat := \spnC\{ \Gnat \} \subseteq \Cstd.
\end{equation}
We can always choose a linear-independent basis of the natural symmetries $\Snat$ as 
\begin{equation}\label{eq:nat:bas}
\Bnat = \{ b_1, \ldots, b_{\dim(\Snat)}\},
\end{equation}
while the dimension $\dim(\Snat)\leq 2 \abs{\Aut}$
is in general not equal to $2 \abs{\Aut}$. One example with $\dim(\Snat)< 2 \abs{\Aut}$
is presented in Fig.~\ref{fig:decomposition:four}, and $\Aut = \mathcal{S}_n$ with $n>3$
provides another one as $\Gnat$ then contains too many elements
for all of them to be linear independent. 
Here,
$\abs{\Aut}$ denotes the number of elements in the automorphism group $\Aut$.
For the house graph, $\Snat$ has a basis given by 
\begin{equation}\label{eq:nat:house}
\Bnat = \{  I\tn, X\tn, \zeta[(2,3)(4,5)], X\tn \zeta[(2,3)(4,5)] \},
\end{equation}
where the corresponding action of the group of natural symmetries is visualized in 
Fig.~\ref{fig:house_sym:new}.

At this point, it is worth asking whether the natural symmetries are the only symmetries
in the commutant, i.e., whether there is an equality in Eq.~\eqref{eq:nat:sym}.
But even for the house graph, the commutant $\Cstd$ has dimension six 
(see Table~\ref{table:house:graph})
while $\Snat$ has dimension four.
Indeed, the one-dimensional projectors
$P_3$ and $P_6$ complement the natural symmetries
as detailed in Fig.~\ref{fig:decomposition:house}(b) and (c).
We are using projectors $P_j \in \C^{d\times d}$ (with $d=2^n$) that observe $P_j^2  = P_j = P_j^{\dagger}$
to specify invariant subspaces $\mathcal{H}_j$ as the image of the corresponding projector $P_j$.

\begin{figure}[t]
\includegraphics{decomposition-house.pdf}
\caption{\textbf{Symmetry decomposition of the standard-ansatz QAOA
for the house graph.} (a)~graph, automorphisms, symmetries;
(b)~invariant subspaces $\mathcal{H}_j$ 
as in Table~\ref{table:house:graph}
and projectors $P_j$ which are explicitly shown in 
Fig.~\ref{fig:matrices:house} in App.~\ref{app:projections}.
This
includes their dimension, their respective position in the $+1$ and the $-1$
eigenspace of $X^{\otimes 5}$
(denoted by $+$ or $-$), and 
explicit one-dimensional projectors.
(c)~natural and hidden symmetries;
(d)~transformation from $\Binv$ in (b) to $\Bext$ in (c),
$\Bext$ is
not unique as shown by the red, nonzero entries in the last two rows;
(e)~maximum cut vectors and their (nonzero) support in the invariant subspaces. 
\label{fig:decomposition:house}}
\end{figure}

Thus not all symmetries in the commutant $\Cstd$ are in general contained in $\Snat$.
We can extend any basis $\Bnat$ from Eq.~\eqref{eq:nat:bas} of the natural symmetries $\Snat$
to a basis 
\begin{equation*}
\Bext = \{b_1,\ldots, b_{q};\; \tilde{b}_1, \ldots, \tilde{b}_{\tilde{q}}\}
\end{equation*}
of all symmetries $\Cstd$ where 
$q=\dim(\Snat)$ and $\tilde{q} = \dim(\Cstd) - \dim(\Snat)$.
Hence, any symmetry $S \in \Cstd$ can be linearly expanded
as
\begin{equation}\label{eq:sym:expansion}
S = \Big(\sum_{j=1}^{q} c_j b_j \Big) + \Big(\sum_{k=1}^{\tilde{q}} \tilde{c}_k \tilde{b}_k \Big),
\end{equation}
where $c_j, \tilde{c}_k\in \C$.
We say a symmetry $S\in \Cstd$ is a \emph{hidden symmetry}
if $\tilde{q}\neq 0$ and there exists a $k \in \{1,\ldots,\tilde{q}\}$  in Eq.~\eqref{eq:sym:expansion} with 
$\tilde{c}_k \neq 0$. This definition is independent of the choice of
$\Bnat$ or its extension $\Bext$ to a full basis of $\Cstd$.
Formally, the hidden symmetries can also be identified with
all nonzero elements of the quotient vector space
of the commutant $\Cstd$ with respect to the natural symmetries $\Snat$.

The structure of the hidden symmetries and their relation to the invariant subspaces
(see Sec.~\ref{sec:dla}) are highlighted in Fig.~\ref{fig:decomposition:house} which details
the symmetries of standard-ansatz QAOA for the house graph. Following Table~\ref{table:house:graph},
Figure~\ref{fig:decomposition:house}(b) describes the commutant $\Cstd$ 
as the complex span of six (orthogonal)
projectors $P_j$, which define the basis $\Binv$ of $\Cstd$ and uniquely identify
the respective irreducible invariant subspaces. (In general, these invariant subspaces are not
irreducible but only isotypical as detailed in App.~\ref{appendix:symmtry:analysis}.)
For reference, Fig.~\ref{fig:matrices:house} in App.~\ref{app:projections}
provides the explicit matrix form of the projectors in Figure~\ref{fig:decomposition:house}. 
The one-dimensional projectors $P_3$ and $P_6$ are specified via their respective invariant, one-dimensional
subspaces that are respectively spanned by the vectors $\ket{\psi_3}$ and $\ket{\psi_6}$. Part (c) of Figure~\ref{fig:decomposition:house}
clarifies that the natural symmetries $\Bnat$ in Eq.~\eqref{eq:nat:house} can be extended
to a full basis $\Bext$ of the commutant $\Cstd$
by adding the one-dimensional projectors $P_3$ and $P_6$. But this choice is not unique as all projectors $P_j$ with $j \in \{2,3,5,6\}$
are hidden symmetries. This is directly implied by the red, nonzero entries in the last two rows in
Fig.~\ref{fig:decomposition:house}(d), which details
the basis change from the basis $\Binv$ of $\Cstd$ to $\Bext$.
Consequently, the description via natural and hidden symmetries 
provides an additional dimension complementing the decomposition into
invariant subspaces. We refer to Appendix~\ref{app:natural:hidden} and particularly
Fig.~\ref{fig:decomposition:four} for a more intricate example 
that is---in contrast to the house graph---not multiplicity-free
[in the sense that $m_{\lambda} \neq 1$ in Eq.~\eqref{invariants_basis}]. Both Figures~\ref{fig:decomposition:house} and
\ref{fig:decomposition:four} demonstrate the highly complex structure of the symmetries in the standard ansatz.

It is instructive to reinterpret a
vector $\ket{\psi}$ spanning a
one-dimensional invariant subspace
(as in Fig.~\ref{fig:decomposition:house})
as a simultaneous eigenvector of a set of generators $\mathcal{G}$, i.e.,
\begin{equation}\label{eq:sim:eigen}
H \ket{\psi} = \beta(H, \ket{\psi})\, \ket{\psi}\; \text{ for all }\; H \in  \mathcal{G}.
\end{equation}
The two formulations can be easily shown to be equivalent. The eigenvalues $\beta(H,  \ket{\psi})$
for the respective generators $H \in  \mathcal{G}$ and the simultaneous eigenvector $\ket{\psi}$ uniquely identify 
the corresponding one-dimensional representation up to multiplicity (see App.~\ref{app:natural:hidden}). The multiplicity
can be different from one.
Instead of relying on a complete irreducible decomposition, other methods can be---in principle---used
to identify these simultaneous eigenvectors more directly \cite{shemesh1984,pastuszak2015,malvetti2023,rozon2024}.
The condition in Eq.~\eqref{eq:sim:eigen} is a necessary condition for the existence of
quantum many-body scars~\cite{moudgalya2020,moudgalya2022hilbert, moudgalya2022exhaustive}, where $\mathcal{G}$
contains local terms (or sums of local terms) $H_j$ for a parametrized class of Hamiltonians
$H = \sum_j r_j H_j$ with parameters $r_j \in \R$. These classes of Hamiltonians are studied in the search and analysis of counterexamples
to the eigenstate thermalization hypothesis
\cite{rozon2024,turner2018,serbyn2021,pakrouski2020,pakrouski2021,sala2020,bull2020,moudgalya2020numerical}.

Combining locality in generators with the preservation of symmetries 
(such as natural symmetries in standard ansatz)
can give rise to additional symmetries \cite{marvian2022restrictions,marvian2023non,kazi2023universality},
while the absence of symmetries \cite{zeier2011symmetry,zimboras2015symmetry}
allows for universality as in the case of all one- and two-qubit gates \cite{divincenzo1995two,lloyd1995almost}.
In particular, the collection of 
$k$-local unitaries (that act on at most $k$ qubits with $k<n$) cannot in general generate all $n$-local unitaries
while respecting the same symmetries for both $k$- and $n$-local unitaries.  However, symmetries for the standard ansatz
also arise as possible generators
are restricted not only based on locality but also based on graph connectivity.
We discuss these points in more detail in Sec.~\ref{sec:discussion}.

\section{\DLAm hierarchy: from the free to the standard ansatz\label{sec:lie:hierarchy}}

We continue to study the symmetries of the standard ansatz.
In this section, we explore various \DLAs that 
lie between $\gfree$ and the standard-mixer \DLA $\gstd$
or, alternatively, between $\uu(2^n)$ and $\gstd$.
The \DLA $\gstd$ is contained in these
intermediate \DLAs which provide an additional approach for its 
characterization.
In particular, we analyze the group of natural symmetries $\Gnat$ from
Sec.~\ref{SEC:STD} and how this group
acts on the state space $\mathcal{H}=\C^d$ and the intermediate \DLAs.
For a given graph $G$,
Eq.~\eqref{eq:nat:group} defines
$\Gnat$ as a direct product of $\ZTWO = \group{I\tn,X^{\otimes n}}$
and the automorphism group $\Aut$ acting
as qubit permutations $\zeta(\sigma)\in \C^{d\times d}$ for $\sigma \in  \Aut$.

As a first step we introduce the orbit ansatz:
Given a graph $G$ with vertices $v\in V$ and edges $\{w_1,w_2\} \in E$,
we recall the vertex and the edge orbits (see Fig.~\ref{fig:ex:house:orbit})
\begin{align*}
\VV_v&:=\{\sigma(v) \;\text{ for }\; \sigma \in \Aut\} \;\text{ and }\\
\E_{\{w_1,w_2\}}&:= \{ \{\sigma(w_1),\sigma(w_2)\} \;\text{ for }\; \sigma \in \Aut\}
\end{align*}
and define
the sets of generators for the \emph{orbit ansatz} as
\begin{subequations}
\label{eq:orbit}
\begin{align}
\mathcal{G}_p^{\orb}& := \Big\{  \sum_{\{w,\tilde{w}\} \in \E_{\{v,\tilde{v}\}}} Z_{w}Z_{\tilde{w}} \;\text{ for }\; \{v,\tilde{v}\} \in E \Big\},
\label{eq:orbit:problem}
\\
\mathcal{G}_m^{\orb}& :=\Big\{  \sum_{\tilde{v} \in \VV_{v}} X_{\tilde{v}} \;\text{ for }\; v \in V \Big\}.
\label{eq:orbit:mixer}
\end{align}
\end{subequations}
We introduce the \DLA for the orbit ansatz as
\begin{equation*}
\gorb :=\lie{i\GC_{\orb}} \;\text{ where }\; \GC_{\orb} := \mathcal{G}_p^{\orb} \cup \mathcal{G}_m^{\orb}.
\end{equation*}

\begin{figure}[t]
\includegraphics{orbit-mixer.pdf}
\caption{\textbf{Orbit generators for the house graph.} (a)~automorphisms, (b)~vertex and edge orbits, (c)~generators. 
\label{fig:ex:house:orbit}}
\end{figure}

Interestingly, the generators of the orbit ansatz can also be obtained 
by suitably symmetrizing the free-mixer generators.
To this end, we first define three symmetrization operations on a given matrix $M \in \C^{d\times d}$:
\begin{subequations}
\begin{align}
\tau_{\ZTWO}(M) &:= \tfrac{1}{2} [I\tn\, M\, I\tn + X^{\otimes n}\, M\, X^{\otimes n} ],
\label{eq:sym:z}
\\
\tauaut(M) &:= \tfrac{1}{\abs{\Aut}}
\sum_{\sigma \in\Aut}\zeta[\sigma]\,M\,\zeta[\sigma^{-1}],
\label{eq:sym:aut}
\\
\taunat(M) &:= \tauaut[\tau_{\ZTWO}(M)] = \tau_{\ZTWO}[\tauaut(M)].
\label{eq:sym:nat}
\end{align}
\end{subequations}
Here, Eq.~\eqref{eq:sym:nat} follows as the actions of $\ZTWO$ and $\Aut$ commute
as detailed in Sec.~\ref{SEC:STD}.
Also, the free-mixer generators from Eq.~\eqref{eq:free} commute with $X^{\tn}$ 
[see Lemma~\ref{app:prop:center}(a)]. Thus the orbit-ansatz generators in Eq.~\eqref{eq:orbit:mixer}
are recovered up to suitable scalar factors
by symmetrizing the free-ansatz generators, i.e.,
for every $g \in \GC_{\orb}$, there exists
a positive integer $z$ and 
$\tilde{g} \in \mathcal{G}_{\text{free}}$ such that 
\begin{equation}\label{eq:sym:free:nat}
g = z\, \taunat(\tilde{g})=z\, \tauaut(\tilde{g}).
\end{equation}

\begin{figure}
\includegraphics{house-hierarchy.pdf}
\caption{\textbf{\DLAm inclusions for the house graph.} The \DLAs $\gfree$, $\unat$, $\gnat$, $\gorb$, and $\gstd$ are shown with their inclusion relations and dimensions. The decomposition $\mathfrak{s}[\uu(10){\oplus}\uu(6)]\iso \su(10){\oplus}\su(6){\oplus}\uu(1)$ has unitary blocks $\uu(10)$ and $\uu(6)$ where the $16$-dimensional block has zero trace (which details more than the reductive decomposition).
\label{fig:ex:house:hierarchy}}
\end{figure}

The symmetrization operation $\taunat$ from Eq.~\eqref{eq:sym:nat}
will be a key tool to better characterize the \DLA $\gstd$
associated to the standard ansatz. Let us define 
\begin{align}
\unat&:=\spn_{\mbb{R}} \{ \taunat(g) \,\text{ for }\, g\in \uu(d)\},
\intertext{as well as}
\gnat&:=\spn_{\mbb{R}} \{ \taunat(g) \,\text{ for }\, g\in \gfree\} \nonumber\\
&\phantom{:}= \spn_{\mbb{R}} \{ \tauaut(g) \,\text{ for }\, g\in \gfree\} \subseteq \unat.
\label{eq:gnat}
\end{align}
Both $\gnat$ and $\unat$ form \DLAs as they are closed under commutator due to (for $M_j  \in \C^{d\times d}$)
\begin{equation*}
[\taunat(M_1),\, \taunat(M_2)] = \taunat([M_1,\, M_2]).
\end{equation*}

Figure~\ref{fig:ex:house:hierarchy} highlights the corresponding \DLAm inclusion relations for the house graph.
The structure of the centers of the considered \DLAs is involved and we detail their structure for the house graph
in terms of the projectors $P_j$ from Figure~\ref{fig:decomposition:house}:
\begin{align*}
\cent(\unat) &= \spnR \{iP_1, i(P_2{+}P_3), iP_4, i(P_5{+}P_6) \},\\
\cent(\gnat) &= \spnR \{i({-}3 P_1{+}5P_2{+}5P_3),  \\
&\phantom{=\spnR\{\hspace{3pt} }
i({-}3P_4{+}5P_5{+}5P_6)\},\\
\cent(\gorb) &= \spnR \{   i({-}3 P_1{+}5P_2{+}5P_3{-}3P_4{+}5P_5{+}5P_6),\\
&\phantom{=\spnR\{\hspace{3pt} } i({-}1P_1{+}2P_2{-}1P_4{+}2P_5),\\
&\phantom{=\spnR\{\hspace{3pt} } i({+}1P_1{-}3P_2{+}5P_3{-}1P_4{+}3P_5{-}5P_6)\},\\
\cent(\gstd) &= \spnR \{  i({-}1P_1{+}2P_2{-}1P_4{+}2P_5),\\
&\phantom{=\spnR\{\hspace{3pt} } i({+}1P_1{-}3P_2{+}5P_3{-}1P_4{+}3P_5{-}5P_6)\}.
\end{align*}
On a more general note, Lemma~\ref{lemma:cent:cent} 
in Appendix~\ref{app:center}
highlights
how the center $\cent(\g)$ of a compact Lie algebra $\g$ (i.e.\ $\g \subseteq \uu(m)$ for a suitable $m$) is contained
in the center $\cent(\CC)$ of its commutant $\CC$, i.e.,
$\cent(\g) \subseteq \cent(\CC)$. But the projectors $P_j$ to (certain)
invariant subspaces 
(or more precisely to the isotypical subspaces as detailed in Appendix~\ref{app:symmetry:theory})
span the center $\cent(\CC)$ of the commutant. Thus the above decomposition
of the centers associated to the house graph is no longer that surprising.
We can readily derive a general strict hierarchy among the \DLAs and their commutants considered so far:
\begin{proposition}[\DLAm and commutant  hierarchy]\label{thm_dla_chain}
For any graph $G$, we observe $\mathcal{C}(\gnat) = \Snat$
and the following chain of inclusions:
\begin{equation*}
\begin{matrix}
&\gstd &\subseteq &\gorb &\subseteq &\gnat &\subseteq &\gfree,\\[1mm]
&\mathcal{C}(\gstd) &\supseteq &\mathcal{C}(\gorb)  &\supseteq &\mathcal{C}(\gnat)
&\supseteq
&\mathcal{C}(\gfree).
\end{matrix}
\end{equation*}
\end{proposition}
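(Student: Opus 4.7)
I would split the statement into three parts: (i) the Lie-algebra chain $\gstd \subseteq \gorb \subseteq \gnat \subseteq \gfree$, (ii) the reversed commutant chain, which follows automatically, and (iii) the identity $\com(\gnat) = \Snat$. For (i), each inclusion is handled by showing that the generators of the smaller algebra lie in the larger. For $\gstd \subseteq \gorb$, partitioning $E$ and $V$ into $\Aut$-orbits exhibits both $H_p$ and $H_m$ as sums of the orbit-ansatz Hermitian generators from Eq.~\eqref{eq:orbit}, so $iH_p, iH_m \in \gorb$ and the Lie closure $\gstd$ is contained in $\gorb$. For $\gorb \subseteq \gnat$, Eq.~\eqref{eq:sym:free:nat} writes each orbit-ansatz generator as a positive integer multiple of $\tauaut$ applied to a Hermitian free-ansatz generator; by $\C$-linearity of $\tauaut$ we obtain $i\GC_{\orb} \subseteq \gnat$, and since $\gnat = \tauaut(\gfree)$ coincides with the $\Aut$-invariant subalgebra of $\gfree$ (and is therefore closed under brackets), the Lie closure $\gorb$ lies in $\gnat$. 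For $\gnat \subseteq \gfree$, I would observe that conjugation by $\zeta(\sigma)$ with $\sigma \in \Aut$ permutes the Hermitian free generators $\{X_v\} \cup \{Z_w Z_{\tilde w}\}$ among themselves (since $\sigma$ preserves $V$ and $E$) and hence acts as a Lie-algebra automorphism of $\gfree$, so averaging gives $\tauaut(\gfree) \subseteq \gfree$; combined with $\tau_{\ZTWO}|_{\gfree} = \id$ (a consequence of $[\gfree, X\tn]=0$, cf.\ Lemma~\ref{app:prop:center}(a)), this yields $\taunat(\gfree) \subseteq \gfree$, so $\gnat \subseteq \gfree$.

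Part (ii) is immediate from (i) via the general fact $\g_1 \subseteq \g_2 \Rightarrow \com(\g_2) \subseteq \com(\g_1)$. For the easy direction of part (iii), $\Snat \subseteq \com(\gnat)$, note that $X\tn$ commutes with $\gfree \supseteq \gnat$, and for any $\sigma \in \Aut$ and $g\in\gfree$,
\[
\zeta(\sigma)\,\tauaut(g)\,\zeta(\sigma)^{-1}=\frac{1}{|\Aut|}\sum_{\tau\in\Aut}\zeta(\sigma\tau)\,g\,\zeta(\sigma\tau)^{-1}=\tauaut(g)
\]
by left-invariance of the sum over $\Aut$; hence $\Gnat \subseteq \com(\gnat)$ and taking $\C$-spans gives $\Snat \subseteq \com(\gnat)$.

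The reverse inclusion $\com(\gnat) \subseteq \Snat$ is the heart of the proof, and I would argue it one $X\tn$-eigenspace at a time, using that $\Aut$ commutes with $X\tn$ and hence preserves both $\HC_\pm$. On each $\HC_\pm$ the restriction of $\gfree$---which by Theorem~\ref{thm:free-mixer-DLA-decomposition} is $\su$, $\so$, $\usp$, or the image of a spinor representation---is $\Aut$-invariant, so decomposing $\HC_\pm \simeq \bigoplus_\mu \C^{m_\mu^\pm} \otimes V_\mu$ into $\Aut$-isotypic components and averaging identifies $\gnat|_{\HC_\pm}$ as acting only on the multiplicity factors $\C^{m_\mu^\pm}$. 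Applying the ordinary double-commutant theorem block-by-block then yields $\com(\gnat|_{\HC_\pm}) = \bigoplus_\mu \id_{m_\mu^\pm} \otimes \mathrm{End}(V_\mu)$, which coincides with the algebra generated by $\zeta(\Aut)|_{\HC_\pm}$; reassembling the two eigenspaces and folding in $X\tn$ (which acts as a sign distinguishing them) gives $\com(\gnat) = \spnC \Gnat = \Snat$. The main obstacle is precisely this last identification: verifying that the $\Aut$-fixed part of $\gfree$ really fills the expected multiplicity-block algebra on each isotypic component. For archetypal graphs this is clean because $\gfree|_{\HC_\pm} = \su(\HC_\pm)$ is maximal and $\Aut$-isotypic calculus is immediate, but for the remaining path, cycle, and bipartite families one must check that fixed-point subalgebras of $\so$, $\usp$, or the spinor representation retain the expected centralizer---either through a uniform Galois-type theorem for fixed-point $*$-subalgebras, or case-by-case using the explicit Pauli-string bases of Table~\ref{tab:free-basis-table}.
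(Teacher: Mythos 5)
Your handling of the inclusion chain $\gstd\subseteq\gorb\subseteq\gnat\subseteq\gfree$ and of the reversed commutant chain is correct and essentially coincides with the paper's argument: you exhibit the generators of each smaller algebra inside the larger one ($H_p$ and $H_m$ as sums over $\Aut$-orbits, the orbit generators as $\tauaut$-symmetrized free generators via Eq.~\eqref{eq:sym:free:nat}), and you correctly note that $\gnat$ is closed under brackets because it is the fixed-point subalgebra of $\gfree$ under $\Gnat$ (which also repairs the paper's dubious identity $[\taunat(M_1),\taunat(M_2)]=\taunat([M_1,M_2])$, false for general $M_j$). The containment $\Snat\subseteq\com(\gnat)$ by left-invariance of the averaging sum is likewise fine.

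The gap is the reverse containment $\com(\gnat)\subseteq\Snat$, and it sits exactly where you flag it. The assertion that ``the double-commutant theorem block-by-block yields $\com(\gnat|_{\HC_\pm})=\bigoplus_\mu \id_{m_\mu^\pm}\otimes\mathrm{End}(V_\mu)$, which coincides with the algebra generated by $\zeta(\Aut)|_{\HC_\pm}$'' presupposes that $\alg{\gnat}$ fills the entire commutant of $\Snat$; that is precisely the statement to be proven, and averaging over $\Aut$ only gives containment of $\gnat$ in the fixed-point algebra, not surjectivity onto the multiplicity blocks. Worse, the deferred case-by-case check cannot succeed: for the path graph $P_n$, $\gnat\iso\uu(n)$ is the type-DIII fixed-point subalgebra of $\gfree=\so(2n)$, and under the spinor representation $\C^{2^n}$ branches as $\bigoplus_{k=0}^{n}\Lambda^k\C^n$ into $n{+}1$ pairwise inequivalent irreducibles, so $\dim\com(\gnat)=n{+}1$, while $\dim\Snat\leq 2\abs{\Aut}=4$. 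Concretely, for $n=4$ the putative duality would require a $2$-dimensional irreducible representation of $\uu(4)$ inside $\C^{16}$ (the multiplicity $\mathbf{m}_{(\triv,\signrepr)}=2$ from Eq.~\eqref{eq:mults}), which does not exist. So the identity $\com(\gnat)=\Snat$ can at best be established by your route for archetypal graphs, where $\gfree$ restricted to each $X\tn$-eigenspace is all of $\su(2^{n-1})$ and the fixed-point/double-commutant argument does close. Note that the paper's printed proof addresses only the inclusion chain and offers no argument for $\com(\gnat)=\Snat$, so you are attempting strictly more than the paper proves—but the attempt as written does not go through for all graphs.
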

The power of the hierarchy in Prop.~\ref{thm_dla_chain} resides in the fact that $\gnat$ 
can be characterized efficiently and therefore provides an accessible upper bound to $\gorb$
and $\gstd$.

\begin{proof}[Proof of Proposition~\ref{thm_dla_chain}]
The inclusion relations on the commutants directly follow from the inclusion relations
on the \DLAs. The inclusion relation $\gstd \subseteq \gorb$ is a consequence of 
the form of the generators in  Eq.~\eqref{eq:orbit:mixer}, similar as for $\gstd \subseteq \gfree$.
We obtain $\gnat$ by symmetrizing the basis of $\gfree$ via Eq.~\eqref{eq:sym:nat} which implies $\gnat \subseteq \gfree$.
Moreover, $\gorb \subseteq \gnat$ as
$\gorb$ is generated by the symmetrized generators of $\gfree$ which are contained in $\gnat$.
Thus we are only left with verifying that the characterizations of $\gorb$
via Eq.~\eqref{eq:orbit} and Eq.~\eqref{eq:sym:free:nat} are equivalent, but this is detailed before
Eq.~\eqref{eq:sym:free:nat}.
\end{proof}

We continue by discussing the
structure of the symmetrized \DLAs $\gnat$ and $\unat$
for archetypal graphs following Def.~\ref{def_other_graphs}.
In this regard, we again point the reader to 
the example of the house graph in
Figure~\ref{fig:ex:house:hierarchy}
and the above discussed structure of the associated centers.
For archetypal graphs, the following
Proposition~\ref{theo:algebra-Z-Aut} shows that
$\gnat$ differs from $\unat$
by just a pair of \DLAm elements from the center $\cent(\unat)$ of $\unat$
which can be interpreted as $\gnat$ being 
semi-universal~\cite{marvian2023non,kazi2023universality}. 

\begin{proposition}\label{theo:algebra-Z-Aut}
For any archetypal graph from Def.~\ref{def_other_graphs},
\begin{equation}\label{eq:DLA-aut}
      \spnR\{\gnat \cup \{iP_{+},iP_{-}\}\}
      = \unat,
\end{equation}
where $P_{\pm}:= (I^{\otimes n} {\pm} X^{\otimes n})/2$
(see Sec.~\ref{app:free:preliminaries})
projects to the two invariant subspaces.
The reductive decomposition is
\begin{equation}
\gnat \oplus \uu(1)\oplus \uu(1)
\iso \unat
\end{equation}
and the elements $iP_{+}$ and $iP_{-}$ are chosen to respectively span each of the
two abelian subalgebras $\uu(1)$.
\end{proposition}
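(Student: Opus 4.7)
The plan is to realize both $\unat$ and $\gnat$ as $\Gnat$-invariant subspaces of $\uu(d)$ under conjugation, and then to decompose $\uu(d)$ using the $X^{\otimes n}$-eigenspace splitting that underlies Theorem~\ref{thm:free-mixer-DLA-decomposition}. Because $\taunat$ from Eq.~\eqref{eq:sym:nat} is an idempotent averaging operator over $\Gnat$, its image on any $\Gnat$-stable subspace $\mathcal{V}$ equals the fixed subspace $\mathcal{V}^{\Gnat}$. Since $\Aut$ permutes the free-ansatz generators (edges map to edges, vertices to vertices) and every element of $\gfree$ commutes with $X^{\otimes n}$ by Lemma~\ref{app:prop:center}(a) in Appendix~\ref{app:free:preliminaries}, both $\uu(d)$ and $\gfree$ are $\Gnat$-stable, so I would identify $\unat = \uu(d)^{\Gnat}$ and $\gnat = (\gfree)^{\Gnat} = (\gfree)^{\Aut}$.

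Next I would split $\uu(d)^{\ZTWO}$ using the $\pm 1$ eigenspaces of $X^{\otimes n}$. A skew-Hermitian matrix commutes with $X^{\otimes n}$ exactly when it is block diagonal in this splitting, so $\uu(d)^{\ZTWO} = \uu(2^{n-1}) \oplus \uu(2^{n-1})$. Peeling off the two trace directions gives $\uu(d)^{\ZTWO} = [\su(2^{n-1}) \oplus \su(2^{n-1})] \oplus \spnR\{iP_+,iP_-\}$, and for an archetypal graph Theorem~\ref{thm:free-mixer-DLA-decomposition} identifies the first summand with $\gfree$, yielding $\uu(d)^{\ZTWO} = \gfree \oplus \spnR\{iP_+,iP_-\}$.

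I would then take $\Aut$-invariants of this decomposition. The subspace $\gfree$ is $\Aut$-stable because $\Aut$ permutes its generators, and $\Aut$ acts pointwise trivially on $\spnR\{iP_+,iP_-\}$ because $\zeta(\sigma)$ commutes with both $I^{\otimes n}$ and $X^{\otimes n}$ for every $\sigma \in \Aut$. Taking invariants therefore yields $\unat = \gnat \oplus \spnR\{iP_+,iP_-\}$, which is Eq.~\eqref{eq:DLA-aut}. For the reductive statement, each $\spnR\{iP_\pm\}$ is one-dimensional abelian, hence $\cong \mathfrak{u}(1)$; and since $P_\pm$ are precisely the block identities of the $\uu(2^{n-1}) \oplus \uu(2^{n-1})$ decomposition, they commute with every block-diagonal matrix, in particular with all of $\gnat$ and with each other. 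Hence the vector-space direct sum is a Lie-algebra direct sum $\unat \cong \gnat \oplus \mathfrak{u}(1) \oplus \mathfrak{u}(1)$, with $iP_+$ and $iP_-$ each spanning one of the two $\mathfrak{u}(1)$ summands.

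The only real subtlety is the identification $\gnat = (\gfree)^{\Aut}$, which rests on the averaging projection $\taunat$ being surjective onto the invariants of any $\Gnat$-stable subspace; once this is granted, the rest is block-diagonal bookkeeping enabled by the particularly clean form of $\gfree$ for archetypal graphs supplied by Theorem~\ref{thm:free-mixer-DLA-decomposition}. The archetypal hypothesis enters only through that theorem: for bipartite or cycle graphs $\gfree$ can be strictly smaller than $\su(2^{n-1}) \oplus \su(2^{n-1})$, and the two $\mathfrak{u}(1)$ summands may no longer be the entire complement of $\gnat$ inside $\unat$.
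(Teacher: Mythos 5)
Your proposal is correct and follows essentially the same route as the paper: both first use Theorem~\ref{thm:free-mixer-DLA-decomposition} to identify $\tau_{\ZTWO}[\uu(2^n)] \iso \uu(2^{n-1}){\oplus}\uu(2^{n-1})$ with $\gfree \oplus \spnR\{iP_{+}\} \oplus \spnR\{iP_{-}\}$ for archetypal graphs, and then apply the $\Aut$-averaging (equivalently, pass to $\Aut$-invariants), using that $\zeta(\sigma)$ commutes with $I\tn$ and $X\tn$ so the $P_{\pm}$ directions are fixed while $\gfree$ is sent onto $\gnat$. You merely spell out a few points the paper leaves implicit, such as the $\Aut$-stability of $\gfree$ and the commutation of $iP_{\pm}$ with $\gnat$ in the reductive decomposition.
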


\begin{proof}
Recall that $\gfree\iso \su(2^{n-1})\oplus\su(2^{n-1})$ holds
for any archetypal graph (see Theorem~\ref{thm:free-mixer-DLA-decomposition}).
This implies
\begin{equation}\label{eq:prop:proof:a}
    \tau_{\ZTWO}[\uu(2^n)] \iso \uu(2^{n-1})\oplus\uu(2^{n-1}) \iso \gfree \oplus \ga_1 \oplus \ga_2,
\end{equation}
where $\ga_j$ are one-dimensional abelian \DLAs that are respectively spanned by
$iP_{+}$ or $iP_{-}$. We apply the symmetrization operation $\tauaut$ to Eq.~\eqref{eq:prop:proof:a} and obtain
\begin{align*}
\unat &= \tauaut[\tau_{\ZTWO}[\uu(2^n)]] \iso \tauaut[\gfree] \oplus \tauaut[\ga_1] \oplus \tauaut[\ga_2]\\
&\iso \tauaut[\gfree] \oplus \ga_1 \oplus \ga_2 \iso \gnat \oplus \ga_1 \oplus \ga_2,
\end{align*}
where the second isomorphism follows from the fact $I\tn$ and $X\tn$ commute with any
automorphism. The definition of $\gnat$ implies 
the third isomorphism.
\end{proof}

One can naturally extend the symmetrization from Equation~\eqref{eq:sym:nat}
to quantum states $\ket{\psi} \in \HC$ via
\begin{equation*}
\tauhat(\ket{\psi}) := 
\tfrac{1}{2\abs{\Aut}} (I\tn {+} X\tn) 
\sum_{\sigma \in\Aut}\zeta[\sigma]\,\ket{\psi}
\end{equation*}
and we introduce the related symmetrized subspace
\begin{equation*}
\HC_{\mathrm{nat}} := \{ \tauhat(\ket{\psi}) \text{ for all }\psi \in \HC \} \subseteq \HC_{+} \subseteq \HC.
\end{equation*}
For the house graph, the symmetrized subspace $\HC_{\mathrm{nat}}$
has a dimension of ten and
corresponds to the projector $P_1$ in Figure~\ref{fig:decomposition:house}.
Consider a maximum cut $x\in\{0,1\}^n$ (see Sec.~\ref{sect:maxcut:qaoa})
and the corresponding ground state $\ket{x}\in \HC$ of $H_p$ with ground state energy $E$.
The action of $(X\tn)^b \zeta(\sg)$ with $b\in \{0,1\}$ and $\sigma \in \zeta$ 
maps $\ket{x}$ to another ground state $(X\tn)^b \zeta(\sg)\ket{x}$
with
\begin{equation*}
H_p (X\tn)^b \zeta(\sg)\ket{x}
= E (X\tn)^b \zeta(\sg) \ket{x},
\end{equation*}
which follows as $H_p$ and $(X\tn)^b \zeta(\sg)$ commute.
Clearly,  $\tauhat(\ket{x}) \in \HC_{\mathrm{nat}}$ is another ground state of $H_p$.

\begin{proposition}\label{prop:H:nat}
Let $G$ denote a connected graph with at least two vertices.
(a) $\HC_{\rm nat}$ contains at least one ground state of $H_p$.
(b) If $G$ is any archetypal graph from Def.~\ref{def_other_graphs},
then $\exp(\gnat)$ acts transitively on $\HC_{\rm nat}$.
\end{proposition}

\begin{proof}
Statement (a) has been verified directly before the proposition
and (b) follows from Proposition~\ref{theo:algebra-Z-Aut}.
\end{proof}

The dimension of the space $\HC_{\rm nat}$ will be further characterized
via character computations in Sec.~\ref{sec:character:computations:formulas}.
Even though Prop.~\ref{prop:H:nat}(b) verifies the transitivity of $\exp(\gnat)$
on $\HC_{\rm nat}$ for archetypal graphs,
$\exp(\gstd)$ is in general \emph{not} transitive on $\HC_{\rm nat}$. This will be
exemplified in Sec.~\ref{sec:conjecture} for graphs $G$ with $\abs{\Aut}=1$.

\section{Partial results for
the standard ansatz and
particular graphs\label{sec:standard-ansatz-partial-results}}

In order to illustrate the results and methods of the previous sections,
we now analyze concrete cases given by path, cycle, and complete graphs.
We are able to completely determine the standard-ansatz \DLA for the path graphs and we provide upper bounds for the cycle and the complete graphs. The solvability of these cases may be
related to the well-known fact that the corresponding maxcut problems have trivial solutions.

\subsection{Standard ansatz for path graphs\label{sec:std:path}}

We determine the standard-ansatz \DLA $\gstd$ for path graphs $P_n$.
Applying Theorem~\ref{thm:free-mixer-DLA-decomposition}, $\gstd \subseteq \gnat \subseteq \gfree = \so(2n)$.
The \DLA $\gfree$ is spanned (up to factors of $i$) by the $2n^2{-}n$ Pauli strings (see Table~\ref{tab:free-basis-table})
\begin{equation*}
\mathrm{I}\cddot\mathrm{I}\mathrm{X}\mathrm{I}\cddot\mathrm{I}\;\text{ and }\;
\mathrm{I}\cddot\mathrm{I}\yz\mathrm{X}\cddot\mathrm{X}\yz\mathrm{I}\cddot\mathrm{I},
\end{equation*}
or equally, the basis consists of (with $1 \leq j < k \leq n$)
\begin{subequations}
\label{eq:gfree:path}
\begin{align}
iX_j,\; &iY_j X_{j+1} \cddot X_{k-1} Y_k,\; iY_j X_{j+1} \cddot X_{k-1} Z_k,\\
&iZ_j X_{j+1} \cddot X_{k-1} Y_k,\; iZ_j X_{j+1} \cddot X_{k-1} Z_k.
\end{align}
\end{subequations}
The automorphism group of the path graph is given by
\begin{align}
&\AUT(P_n)
= \{\symone,\, \tilde{\sigma}_n \} \subseteq \mathcal{S}_n \;\text{ where }\; \bar{n} := \floor{n/2}\;\text{ and} \nonumber\\
&\tilde{\sigma}_n := 
\begin{cases}
        (1\, n) (2\, n{-}1) \cdots (\bar{n}\, \bar{n}{+}1) & \text{for $n$ even}, \\
        (1\, n) (2\, n{-}1) \cdots (\bar{n}\, \bar{n}{+}2) & \text{for $n$ odd}.
    \end{cases} \label{eq:sigma:n}
\end{align}
Note that $\bar{n}=n/2$ for even $n$ and $\bar{n}=(n-1)/2$ otherwise.
Let us introduce the indices $o\in \{1,\ldots,\bar{n}\}$
and $p,q \in \{1,\ldots,n\}$
with $p < n+1- q$ and $p\neq q$.
Symmetrizing the Pauli strings associated to $\gfree$ following Eq.~\eqref{eq:gnat},
basis elements for $\gnat$ are given as
\begin{subequations}
\label{eq:gnat:path}
\begin{align}
&iX_{\bar{n}+1} \;\text{ if $n$ is odd},\;
iX_{o}{+}iX_{n+1-o},\label{eq:gnat:path:a} \\
&iP_{oo}^{YY},\;
iP_{oo}^{ZZ},\;
iP_{oo}^{YZ}{+} iP_{oo}^{ZY}, \label{eq:gnat:path:b}\\
&iP_{pq}^{YY}{+} iP_{qp}^{YY},\;
iP_{pq}^{ZZ}{+} iP_{qp}^{ZZ},\;
iP_{pq}^{YZ}{+} iP_{qp}^{ZY}, \label{eq:gnat:path:c}
\end{align}
\end{subequations}
where 
\begin{equation*}
P_{ab}^{AB} := A_{a}X_{a+1}\cddot X_{n-b} B_{n+1-b} \;\text{ with }\; a+b \leq n.
\end{equation*}
We count the number of possibilities for the different cases in Eq.~\eqref{eq:gnat:path} before and after the symmetrization.
Recall that there are $\binom{n}{2} = n(n{-}1)/2$ pairs $(\tilde{p},\tilde{q})$ with $\tilde{p},\tilde{q} \in  \{1,\ldots,n\}$
and $\tilde{p}+ \tilde{q} \leq  n$. Thus $2n(n{-}1)$ basis elements from Eq.~\eqref{eq:gfree:path} are symmetrized to the cases
\eqref{eq:gnat:path:b}-\eqref{eq:gnat:path:c} and $n$ different elements from Eq.~\eqref{eq:gfree:path} result in
the cases \eqref{eq:gnat:path:a}, which
agrees with the total of $2n^2{-}n$ basis elements in Eq.~\eqref{eq:gfree:path}. For $n$ even, there are $\bar{n}$ cases in \eqref{eq:gnat:path:a}, $n$ cases for $iP_{oo}^{YY}$ and $iP_{oo}^{ZZ}$,
$\bar{n}$ cases for $iP_{oo}^{YZ}{+} iP_{oo}^{ZY}$, 
and $n(n{-}2)$ cases in \eqref{eq:gnat:path:c}.
For $n$ odd, there are $\bar{n}{+}1$ cases in \eqref{eq:gnat:path:a}, $2\bar{n}$ cases
for $iP_{oo}^{YY}$ and $iP_{oo}^{ZZ}$, 
$\bar{n}$ cases for $iP_{oo}^{YZ}{+} iP_{oo}^{ZY}$,
and $(n{-}1)^2$ cases in \eqref{eq:gnat:path:c}.
Both for even and odd $n$, a total of $n^2$ basis elements are provided in Eq.~\eqref{eq:gnat:path}.
Thus we have shown

\begin{lemma}
For a path graph $P_n$ with $n$ vertices, 
$\gnat$ has dimension $n^2$ and is spanned by
the elements in Eq.~\eqref{eq:gnat:path}.
\end{lemma}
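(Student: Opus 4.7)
The plan is to compute $\taunat$ explicitly on the known basis of $\gfree=\so(2n)$ from Eq.~\eqref{eq:gfree:path}, identify the orbits under the automorphism action, and read off a spanning set for $\gnat$ whose linear independence is immediate.

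First, I observe that all basis elements in Eq.~\eqref{eq:gfree:path} commute with $X^{\otimes n}$ by Lemma~\ref{app:prop:center}(a), so $\tau_{\ZTWO}$ acts as the identity on $\gfree$ and therefore $\taunat=\tauaut$ on $\gfree$. Since $\Aut=\{\symone,\tilde{\sigma}_n\}$ with the nontrivial automorphism $\tilde{\sigma}_n$ being the full vertex reversal $m\mapsto n{+}1{-}m$, we have
\begin{equation*}
\tauaut(g)=\tfrac{1}{2}\bigl(g+\zeta(\tilde{\sigma}_n)\,g\,\zeta(\tilde{\sigma}_n)^{-1}\bigr),
\end{equation*}
and the adjoint action of $\zeta(\tilde{\sigma}_n)$ on a Pauli string simply reverses the order of its single-qubit tensor factors.

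Next, I would apply this reversal to each basis element of $\gfree$. The single-qubit generator $iX_j$ is sent to $iX_{n+1-j}$, while a two-support generator $iP_{ab}^{AB}=iA_aX_{a+1}\cddot X_{n-b}B_{n+1-b}$ (with $a{+}b\leq n$ and $A,B\in\{Y,Z\}$) is sent to $iP_{ba}^{BA}$. Hence $\tauaut$ leaves fixed exactly those elements whose data $((a,b),(A,B))$ is invariant under the swap $(a,b,A,B)\leftrightarrow(b,a,B,A)$ --- namely $A=B$ together with $a=b$, and the middle-qubit generator in the odd case --- while pairing up all remaining elements with a unique partner. Averaging these pairs yields precisely the sums listed in Eq.~\eqref{eq:gnat:path}: the single-qubit $iX_o{+}iX_{n+1-o}$ together with the fixed $iX_{\bar{n}+1}$ in odd $n$, the diagonal $iP_{oo}^{YY}$, $iP_{oo}^{ZZ}$, $iP_{oo}^{YZ}{+}iP_{oo}^{ZY}$, and the off-diagonal $iP_{pq}^{AB}{+}iP_{qp}^{BA}$.

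Finally, I would verify that this list is linearly independent and count its size. Linear independence is immediate because each listed generator is either a single Pauli string or a sum of two distinct Pauli strings: the supports $\{p,\ldots,n{+}1{-}q\}$ and $\{q,\ldots,n{+}1{-}p\}$ differ whenever $p\neq q$, and distinct entries of the list involve disjoint sets of Pauli strings. Summing the orbit counts---treating the even and odd cases separately to account for the fixed point $iX_{\bar{n}+1}$ and the range of $o$---reproduces the tally $\bar{n}+n+\bar{n}+n(n{-}2)=n^2$ for $n$ even and the analogous $n^2$ for $n$ odd. Since $\gnat$ is by definition spanned by the images $\tauaut(g)$ for $g\in\gfree$ and these images all lie in the $\mathbb{R}$-span of the listed elements, this finishes the proof. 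The only delicate part is the bookkeeping of orbits versus fixed points under the parity of $n$; everything else is routine.
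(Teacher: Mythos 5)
Your proposal is correct and follows essentially the same route as the paper: both apply $\taunat=\tauaut$ to the Pauli-string basis of $\gfree$ (using that all basis elements already commute with $X^{\otimes n}$), classify the fixed points and two-element orbits of the reversal automorphism $\tilde{\sigma}_n$, and count the resulting orbit sums separately for even and odd $n$ to arrive at $n^2$. Your explicit remark on linear independence via disjoint Pauli-string supports is left implicit in the paper but is the same underlying observation.
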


There are different strategies to identify $\gnat$ as $\uu(n)$ 
based on explicit isomorphisms, subalgebra chains, and by ruling out
all other possibilities 
(see, e.g., \cite{zeier2011symmetry,ZZKS14}).
And detailed arguments in Appendix~\ref{proof:lemma:std:path:2}
verify the following statement:

\begin{lemma}\label{lemma:std:path:2}
For a $n$-vertex path graph $P_n$, 
$\gnat\iso \uu(n)$.
\end{lemma}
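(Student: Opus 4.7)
My plan is to identify $\gnat \iso \uu(n)$ by combining the known dimension $\dim \gnat = n^2$ (from the preceding lemma) with the fact that $\gnat \subseteq \gfree \iso \so(2n)$ from Theorem~\ref{thm:free-mixer-DLA-decomposition}. Since $\uu(n)$ embeds into $\so(2n)$ as a subalgebra of dimension exactly $n^2$ (realizing the symmetric pair $(\so(2n),\uu(n))$ of type DIII), the plan is to exhibit $\gnat$ as a concrete copy of this embedded $\uu(n)$.

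First, I would pass to a more convenient fermionic picture by conjugating with the global Hadamard $H^{\otimes n}$. This sends the mixer generators $X_j \to Z_j$ and the problem generators $Z_j Z_{j+1} \to X_j X_{j+1}$, so that $\gfree$ becomes the well-known transverse-field Ising DLA, whose Jordan--Wigner image is the span of all Majorana bilinears $i\gamma_a \gamma_b$ with $1 \leq a < b \leq 2n$. This gives an explicit realization of $\gfree = \so(2n)$. In this picture the Pauli-string generators of Eq.~\eqref{eq:gnat:path} become specific combinations of Majorana bilinears that are invariant under the reflection $\tilde\sigma_n$ of Eq.~\eqref{eq:sigma:n}.

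Second, I would implement one of two equivalent routes. (a)~\emph{Explicit isomorphism.} I would exhibit an $n$-dimensional Cartan subalgebra in $\gnat$ (for instance the $n$ symmetrized single-qubit generators $iX_{o} + iX_{n+1-o}$ together with $iX_{\bar n+1}$ for odd $n$ in Eq.~\eqref{eq:gnat:path:a}, or any equivalent abelian $n$-tuple in the fermionic picture), check that the remaining generators split into $\binom{n}{2}$ root pairs acting with root structure $A_{n-1}$, and read off $\gnat \iso \su(n) \oplus \uu(1) \iso \uu(n)$. (b)~\emph{Subalgebra chain.} Build an inductive tower $\uu(1) \subset \uu(2) \subset \cdots \subset \uu(n)$ inside $\gnat$ by progressively enlarging the symmetrized support from the outermost reflected pair $\{1,n\}$ inward, verifying closure under commutators at each step using the $P_{ab}^{AB}$ presentation. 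Since each $\uu(k)$ lies inside $\gnat$ and $\dim \uu(n) = n^2 = \dim \gnat$, the inclusion $\uu(n)\subseteq \gnat$ saturates to equality.

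The main obstacle is matching the \emph{external} reflection symmetrization that defines $\gnat$ (an automorphism of the qubit chain) with the \emph{internal} involution of $\so(2n)$ whose fixed-point set is $\uu(n)$ (equivalently, identifying the complex structure $J$ on $\mathbb{R}^{2n}$ commuted with by the whole $\gnat$). Because Jordan--Wigner strings are directional, the reflection acts nontrivially on the Majoranas, and one must verify that after symmetrization the surviving bilinears preserve a common $J$ with $J^2 = -\id$. Once this complex structure is pinpointed, ruling out alternative $n^2$-dimensional reductive subalgebras of $\so(2n)$ is comparatively straightforward (one can, e.g., check that the derived algebra $[\gnat, \gnat]$ is simple of dimension $n^2 - 1$ and that the center of $\gnat$ is one dimensional, which already uniquely identifies $\uu(n)$ among compact subalgebras of $\so(2n)$ of this dimension).
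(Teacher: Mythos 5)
Your overall strategy---combine $\dim\gnat=n^2$ with $\gnat\subseteq\gfree\iso\so(2n)$ and identify $\gnat$ as the $\uu(n)$ fixed under an involution of $\so(2n)$---is reasonable in outline (the paper itself notes that explicit isomorphisms, subalgebra chains, and elimination arguments are all viable), but as written the proposal has two concrete gaps that would stop either of your routes from closing. First, your candidate Cartan subalgebra is miscounted: the symmetrized single-qubit generators of Eq.~\eqref{eq:gnat:path:a} number only $\lceil n/2\rceil$, not $n$, so they cannot span a Cartan subalgebra of $\uu(n)$. A genuine $n$-dimensional abelian subalgebra must be completed with diagonal two-body strings (e.g.\ combinations of $iP_{oo}^{YY}$ and $iP_{oo}^{ZZ}$), and verifying that the result is \emph{maximal} abelian and reading off the root structure is precisely the nontrivial work; the paper's proof exhibits such a maximal abelian subalgebra of dimension $n-1$ inside the semisimple part $[\gnat,\gnat]$ and this rank computation is the crux of the identification.

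Second, your fallback criterion---``$[\gnat,\gnat]$ simple of dimension $n^2-1$ with one-dimensional center uniquely identifies $\uu(n)$''---is false without the rank: dimension alone does not determine a compact simple Lie algebra (for instance $\dim\so(16)=120=11^2-1$, and the Pell-type equation $2m^2-m=n^2-1$ has infinitely many solutions), so one must additionally bound the rank by $n-1$ to exclude the orthogonal and symplectic series, which is exactly what the paper's maximal-abelian-subalgebra argument supplies. Moreover, the simplicity of $[\gnat,\gnat]$ is itself asserted rather than proved in your proposal; the paper establishes it by a careful inductive ideal argument (showing that the ideal generated by the $\uu(n-2)$ coming from the inner vertices exhausts the semisimple part), and your route (b) subalgebra chain would need an analogous verification of closure and of the maximality of $\uu(2)\oplus\uu(n-2)$ in $\uu(n)$ at each step. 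In short, the plan points in the right direction but the two steps you defer---pinning down the rank and proving simplicity---are the substance of the proof, and the one explicit object you do name (the Cartan) is not correct.
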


As a last step in our analysis of the
standard ansatz for the path graph, we prove in Appendix~\ref{proof:prop:std:path}
that $\gstd= \gnat \iso \uu(n)$:

\begin{proposition}[Path graphs]\label{prop:std:path}
\label{prop:std-mixer-path-graphs}
For a path graph $P_n$ with $n$ vertices, the associated 
standard-mixer \DLA $\gstd$ is equal to $\gnat$. Moreover, its basis is 
given by Eq.~\eqref{eq:gnat:path} and its
reductive decomposition is $\uu(n) \iso \su(n)\oplus\uu(1)$.
\end{proposition}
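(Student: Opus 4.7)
The plan is to leverage the chain of inclusions already established in the paper. By Proposition~\ref{thm_dla_chain} we have $\gstd \subseteq \gnat$, and by Lemma~\ref{lemma:std:path:2} we know $\gnat \iso \uu(n)$ with $\dim \gnat = n^2$ and the explicit Pauli-string basis of Eq.~\eqref{eq:gnat:path}. What remains is to prove the reverse inclusion $\gnat \subseteq \gstd$, after which the reductive decomposition $\uu(n) \iso \su(n) \oplus \uu(1)$ is the standard one for the unitary algebra. As a sanity check and a first concrete step, I would verify directly that $iH_m = i\sum_v X_v$ and $iH_p = i\sum_{j=1}^{n-1} Z_j Z_{j+1}$ are fixed by the bit-flip $X^{\otimes n}$ and by the reflection $\zeta(\tilde{\sigma}_n)$ of Eq.~\eqref{eq:sigma:n}; this re-derives $\gstd \subseteq \gnat$ at the level of generators.

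For the reverse containment, the plan is to produce every basis element of $\gnat$ listed in Eq.~\eqref{eq:gnat:path} via iterated commutators of $iH_m$ and $iH_p$, and to organize the induction by the length $\ell$ of the Pauli string $P_{ab}^{AB} = A_a X_{a+1}\cddot X_{n-b} B_{n+1-b}$, i.e.\ by $n - a - b + 1$. The base case at $\ell = 1$ is $iH_m$ itself, covering the symmetrized single-$X$ generators of Eq.~\eqref{eq:gnat:path:a}. The first commutator $[iH_m, iH_p]$ yields the symmetrized sum $i\sum_{j=1}^{n-1}(Y_j Z_{j+1} + Z_j Y_{j+1})$, which is one of the length-$2$ generators of Eq.~\eqref{eq:gnat:path:c}. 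The inductive step alternates two operations: commuting with $iH_p$ increases the length $\ell$ by inserting an interior $X$ factor (via the identities $[Z_k Z_{k+1}, Y_k(\cdots)] \propto X_{k+1}(\cdots)$ and similar with $Y \leftrightarrow Z$), while commuting with $iH_m$ flips endpoint Paulis between $Y$ and $Z$ (via $[X_j, Y_j] = 2iZ_j$, $[X_j, Z_j] = -2iY_j$).

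The main obstacle is that each commutator produces a \emph{sum} of several symmetrized Pauli strings indexed by the site where the commutator acts, rather than a single basis vector from Eq.~\eqref{eq:gnat:path}. Extracting individual basis elements therefore requires careful linear algebra: after performing the commutator with $iH_p$ on a symmetrized string at length $\ell - 1$, the result decomposes as a sum of length-$\ell$ symmetrized strings at different $(a,b)$ positions, plus contributions of length $\leq \ell - 2$ that are already known by induction to lie in $\gstd$. Subtracting the lower-length tail and exploiting the reflection symmetry of the summation, one can solve a triangular system to isolate the individual $iP_{pq}^{AB} + iP_{qp}^{BA}$ generators of Eq.~\eqref{eq:gnat:path:c}, as well as the diagonal generators $iP_{oo}^{AB}$ of Eq.~\eqref{eq:gnat:path:b}.

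Finally, since by construction every element produced sits in $\gstd$ and spans the full basis of Eq.~\eqref{eq:gnat:path} (which has cardinality $n^2$), a dimension count closes the argument: $\gstd$ contains $n^2$ linearly independent elements of $\gnat$, and since $\gstd \subseteq \gnat$ with $\dim \gnat = n^2$, equality follows. The decomposition $\uu(n) \iso \su(n) \oplus \uu(1)$ is then the standard splitting into traceless part and center, completing the proof.
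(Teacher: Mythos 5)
Your setup is right: $\gstd \subseteq \gnat$ from Proposition~\ref{thm_dla_chain}, $\gnat \iso \uu(n)$ with basis Eq.~\eqref{eq:gnat:path} from Lemma~\ref{lemma:std:path:2}, and the task reduces to the reverse inclusion. But the heart of your argument --- isolating the individual symmetrized basis elements $iP_{pq}^{AB}+iP_{qp}^{BA}$ from the commutators of the two \emph{fully summed} generators $iH_p$ and $iH_m$ --- is exactly the hard part of the proposition, and you only assert it can be done (``solve a triangular system'') without exhibiting the system or showing it is invertible. Every commutator of $iH_p$ and $iH_m$ produces a sum over all positions along the path, and there is no a priori reason the collection of such sums at a given string length spans all the basis elements of that length; indeed your own description of the first commutator already illustrates the problem: $[iH_m,iH_p]\propto i\sum_{j=1}^{n-1}(Y_jZ_{j+1}+Z_jY_{j+1})$ is \emph{not} a single generator from Eq.~\eqref{eq:gnat:path:c} but a sum of roughly $n/2$ of them, and nothing in your outline separates them. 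A length-graded induction could in principle stall in a proper subalgebra, and your proposal gives no dimension or independence count at each length to rule this out.

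The paper's proof sidesteps this entirely by a structurally different route: an induction on the number of vertices $n$ rather than on string length. Two explicit low-order commutator identities extract the boundary pieces $\tilde{g}_m = iX_1+iX_n$ and $\tilde{g}_p = iZ_1Z_2+iZ_{n-1}Z_n$ from $g_m$ and $g_p$; subtracting them leaves the standard generators of the inner path on $n-2$ vertices, which by induction generate $\gk\iso\uu(n{-}2)$. A further short computation produces a commuting $\uu(2)$, so $\gstd$ contains $\uu(2)\oplus\uu(n{-}2)$ properly (it also contains $g_p$, which is not in that subalgebra), and the maximality of $\uu(2)\oplus\uu(n{-}2)$ in $\uu(n)$ forces $\gstd=\gnat$. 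The maximal-subalgebra step is what lets the paper avoid ever producing the full $n^2$-element basis inside $\gstd$ --- precisely the step your proposal would have to carry out by hand. If you want to salvage your approach, you would need either to prove the triangular system is solvable at every length (with explicit boundary-driven identities breaking the positional degeneracy), or to import a maximality argument of the paper's kind; as written, the proposal has a genuine gap.
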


\subsection{Standard ansatz for cycle graphs\label{sec:std:cycle}}
Similarly as in Sec.~\ref{sec:std:path}, we now consider
the standard ansatz for cycle graphs $C_n$. 
The free-mixer \DLA observes $\gstd \subseteq \gnat \subseteq \gfree = \so(2n){\oplus} \so(2n)$
(see Theorem~\ref{thm:free-mixer-DLA-decomposition})
and Table~\ref{tab:free-basis-table} states an explicit basis of $\gfree$ given (up to factors of $i$) by $4n^2{-}2n$
Pauli strings
\begin{alignat*}{3}
&\mathrm{I}\cddot\mathrm{I}\mathrm{X}\mathrm{I}\cddot\mathrm{I},\quad
&&\mathrm{X}\cddot\mathrm{X}\mathrm{I}\mathrm{X}\cddot\mathrm{X},\\
&\mathrm{I}\cddot\mathrm{I}\yz\mathrm{X}\cddot\mathrm{X}\yz\mathrm{I}\cddot\mathrm{I},\quad
&&\mathrm{X}\cddot\mathrm{X}\yz\mathrm{I}\cddot\mathrm{I}\yz\mathrm{X}\cddot\mathrm{X}.
\end{alignat*}
Equivalently, we can write the basis of $\gfree$ as 
\begin{equation}
\label{eq:gfree:cycle}
iX_a,\,  iX_a X^{\otimes n},\,
iQ_{ab}^{YY},\, iQ_{ab}^{ZZ},\, iQ_{ab}^{Y\!Z},\text{ and } iQ_{ab}^{ZY}
\end{equation}
where $1\leq a \leq n$,
$1\leq b \leq n{-}1$, and
\begin{align*}
&Q_{ab}^{AB} := A_{\iota(a)}X_{\iota(a+1)}\cddot X_{\iota(a+b-1)} B_{\iota(a+b)}\;\text{ where}\\[1mm]
&\iota(c) = \iota_n(c):=
\begin{cases}
        n & \text{if $c\bmod n = 0$}, \\
        c\bmod n & \text{otherwise}.
\end{cases}
\end{align*}

The automorphism group of a cycle graph with $n$ vertices is given by the dihedral group
\cite{DF2004,adkins1992} that is generated by the permutations $(1\, 2 \cdots n)$ and
$\tilde{\sigma}_n$ where $\tilde{\sigma}_n$ is defined in Eq.~\eqref{eq:sigma:n}, i.e.,
\begin{equation*}
\AUT(C_n)
 = \group{(1\, 2 \cdots n),\, \tilde{\sigma}_n}.
\end{equation*}
In order to determine $\gnat$, we apply the symmetrization $\taunat$ from Eq.~\eqref{eq:sym:nat}
to the basis elements of $\gfree$ from Eq.~\eqref{eq:gfree:cycle}. We introduce the notation
\begin{equation*}
\tilde{Q}_0 := \sum_{\tilde{a}=1}^{n} X_{\tilde{a}},\,  \tilde{Q}_n := \sum_{\tilde{a}=1}^{n} X_{\tilde{a}} X^{\otimes n},\,
\tilde{Q}_{b}^{AB} := \sum_{\tilde{a}=1}^{n} Q_{\tilde{a}b}^{AB}.
\end{equation*}
and obtain $3(n{-}1)+2$ different basis elements 
\begin{subequations}
\label{eq:gnat:cycle}
\begin{align}
&i\tilde{Q}_0 = n\,\taunat(iX_a),\;
i\tilde{Q}_n = n\,\taunat(iX_a X^{\otimes n}), \\
&i\tilde{Q}_{b}^{YY} = n\, \taunat(iQ_{ab}^{YY}),\;
i\tilde{Q}_{b}^{ZZ} = n\, \taunat(iQ_{ab}^{ZZ}),\\
&i(\tilde{Q}_{b}^{YZ} {+} \tilde{Q}_{b}^{ZY}) = 2n\, \taunat(iQ_{ab}^{YZ}) = 2n\,
\taunat(iQ_{ab}^{ZY})
\end{align}
\end{subequations}
spanning $\gnat$. We have verified the following lemma:
\begin{lemma}
For a cycle graph $C_n$ with $n$ vertices, 
$\gnat$ has dimension $3(n{-}1)+2$ and it is spanned by
the basis elements in Eq.~\eqref{eq:gnat:cycle}.
\end{lemma}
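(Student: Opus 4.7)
The plan is to apply the symmetrization $\taunat$ defined in Eq.~\eqref{eq:sym:nat} to the explicit Pauli-string basis of $\gfree$ for the cycle graph given in Eq.~\eqref{eq:gfree:cycle}, collect the resulting spanning set for $\gnat$, and then show that exactly $3(n{-}1)+2$ of the images are linearly independent. Because every element of $\gfree$ commutes with $X^{\otimes n}$ by Lemma~\ref{app:prop:center}(a), the $\ZTWO$-symmetrization $\tau_{\ZTWO}$ acts trivially on $\gfree$, so $\taunat$ reduces to $\tauaut$ on this subspace. It therefore suffices to average the free-mixer basis over the dihedral group $\Aut=\group{(1\,2\cdots n),\,\tilde{\sigma}_n}$.

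First I would analyze the cyclic subgroup $\group{(1\,2\cdots n)}$, which sends $Q_{a,b}^{AB}$ to $Q_{a+1,b}^{AB}$ (indices understood via $\iota$), and analogously shifts $X_a$ and $X_a X^{\otimes n}$. Averaging over this subgroup collapses the $n$ elements with the same $(b,A,B)$ into the single sum $\tilde Q_b^{AB}$, and produces $\tilde Q_0$ and $\tilde Q_n$ from the single-qubit families. Next I would compute the action of the reflection $\tilde{\sigma}_n$: a direct check shows it sends $Q_{a,b}^{AB}$ to $Q_{n+1-a-b,\,b}^{BA}$, so after the cyclic averaging it fixes each of $\tilde Q_0$, $\tilde Q_n$, $\tilde Q_b^{YY}$, and $\tilde Q_b^{ZZ}$, while exchanging $\tilde Q_b^{YZ}$ with $\tilde Q_b^{ZY}$. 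Averaging over the full dihedral group therefore yields precisely the families listed in Eq.~\eqref{eq:gnat:cycle}, with counts $1$, $1$, $n{-}1$, $n{-}1$, and $n{-}1$, summing to $3(n{-}1)+2$.

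The final step is to establish linear independence, which I would do using pairwise Hilbert--Schmidt orthogonality of distinct Pauli strings together with a disjoint-support argument. The five families are distinguished by their numbers of $Y$ and $Z$ factors (none for $\tilde Q_0$ and $\tilde Q_n$, two $Y$'s for $\tilde Q_b^{YY}$, two $Z$'s for $\tilde Q_b^{ZZ}$, and one of each for $\tilde Q_b^{YZ}+\tilde Q_b^{ZY}$), and within each family distinct values of $b$ yield X-chains of different length and hence disjoint Pauli supports. For the combined family one must also check that $\tilde Q_b^{YZ}$ and $\tilde Q_b^{ZY}$ share no Pauli string; any such coincidence would force the X-arc of one term to coincide with the complementary I-arc of the other, which is impossible. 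The main obstacle, and the only real subtlety, is a careful case analysis of $\tilde{\sigma}_n$ for the parity of $n$ and for the extremal indices $b\in\{1,n{-}1\}$ where the X-chain degenerates, in order to rigorously confirm the fixed-point and exchange properties and thus the final count.
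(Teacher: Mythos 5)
Your proposal is correct and follows essentially the same route as the paper: both apply $\taunat$ to the free-mixer basis of Eq.~\eqref{eq:gfree:cycle}, observe that $\tau_{\ZTWO}$ acts trivially on $\gfree$ so only the dihedral average matters, and count the resulting orbit sums to get $3(n{-}1)+2$. The paper states this computation tersely and omits the explicit reflection analysis and the Hilbert--Schmidt linear-independence check, so your write-up is simply a more detailed version of the same argument.
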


We know that now that $\gstd \subseteq \gnat$ and that
the dimension of $\gnat$ is given by $3(n{-}1)+2$.
We refer to \cite{DAlessandro2024,allcock2024dla} for an explicit proof of the fact that
\begin{equation*}
\gstd = \gnat  
         \iso \underbrace{\su(2)\oplus\cdots\oplus\su(2)}_{n-1} \oplus\; \uu(1)\oplus\uu(1).
\end{equation*}
In addition, we point the reader to the original work of Onsager \cite{Onsager44}.

\subsection{Standard ansatz for the complete graph\label{sec:std:complete}}

We analyze the standard-ansatz \DLA $\gstd$ that
is contained in
$\gnat$ due to Proposition~\ref{thm_dla_chain}. The following Proposition~\ref{prop:std:complete}
observes an exponential
separation between the dimension of $\gstd$ and $\dim(\gfree)=2^{2n{-}1}-2$ for the complete graph.
This suggests that a standard ansatz could in fact be
overparametrized~\cite{larocca2021theory} and it would be able to solve the (trivial)
task of finding the maximum cut for the complete graph, even though the free ansatz
would very likely fail due to the presence of barren 
plateaus (see Corollary~\ref{cor:variance}). We have shown in Appendix~\ref{proof:prop:std:complete}
the following characterization:

\begin{proposition}\label{prop:std:complete}
Consider the complete graph $K_n$ with $n\geq 3$ vertices.
The dimension of $\gstd$ is bounded from above 
by a polynomial in $n$. In particular,
we have
\begin{equation*}
    \dim(\gstd) \leq \dim(\gnat) =
    \begin{cases}
        \frac{1}{2}\binom{n+3}{3} - 2 & \text{for $n$ odd}, \\
        \frac{1}{2}\binom{n+3}{3} + \frac{n}{4} - \frac{3}{2} & \text{for $n$ even}.
    \end{cases}
\end{equation*}
\end{proposition}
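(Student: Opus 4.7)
The plan is to reduce the bound to an explicit orbit count. By Proposition~\ref{thm_dla_chain}, $\gstd \subseteq \gnat$, so it suffices to determine $\dim(\gnat)$ exactly for $K_n$. For $n\geq 4$, $K_n$ is archetypal in the sense of Def.~\ref{def_other_graphs} (connected, contains an odd cycle, and not itself a cycle graph), so Theorem~\ref{thm:free-mixer-DLA-decomposition} and Table~\ref{tab:free-basis-table} give a basis of $\gfree$ consisting of $i$ times all Pauli strings with even $\#\mathrm{Y}+\#\mathrm{Z}$ parity other than $\mathrm{I}\cddot\mathrm{I}$ and $\mathrm{X}\cddot\mathrm{X}$. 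For $n=3$ one has $K_3=C_3$, which is a cycle graph; a direct dimension check against the cycle basis of Sec.~\ref{sec:std:cycle} shows both descriptions yield the same $30$ Pauli strings, so the characterization of $\gfree$ in terms of even-parity non-trivial Pauli strings applies uniformly for all $n\geq 3$.

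Next, because every basis element of $\gfree$ commutes with $X^{\otimes n}$, the symmetrization $\tau_{\ZTWO}$ acts as the identity on $\gfree$, so $\taunat=\tauaut$ on $\gfree$. The automorphism group of $K_n$ is the full symmetric group $\mathcal{S}_n$, acting by qubit permutation; two Pauli strings share an orbit iff they have identical multisets of letters, parametrized by a tuple $(n_I,n_X,n_Y,n_Z)\in\mathbb{Z}_{\geq 0}^4$ with $n_I+n_X+n_Y+n_Z=n$. Each such orbit yields one non-zero element of $\gnat$, namely a rational multiple of the sum of all Pauli strings of that type, and distinct orbits involve disjoint supports in the Pauli basis. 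Hence $\dim(\gnat)$ equals the number of $4$-tuples with $n_Y+n_Z$ even, minus $2$ for the forbidden tuples $(n,0,0,0)$ and $(0,n,0,0)$.

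To evaluate this count cleanly I would use generating functions. The unconstrained enumerator is $(1-t)^{-4}$, while weighting each tuple by $(-1)^{n_Y+n_Z}$ gives $(1-t)^{-2}(1+t)^{-2}=(1-t^2)^{-2}$. Averaging yields the even-part enumerator $E(t)=\tfrac{1}{2}\left[(1-t)^{-4}+(1-t^2)^{-2}\right]$. Extracting $[t^n]$ gives $\tfrac{1}{2}\binom{n+3}{3}$ for $n$ odd and $\tfrac{1}{2}\binom{n+3}{3}+\tfrac{1}{2}(n/2+1)$ for $n$ even; subtracting the two excluded tuples reproduces the closed-form expressions in the statement.

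The main obstacle I anticipate is the case $n=3$, where one must carefully reconcile the cycle-graph description of $\gfree$ with the archetypal description so that the uniform orbit-counting argument still applies; once this short check is done, the remaining step of confirming that the orbit sums are linearly independent in $\mathfrak{u}(2^n)$ is immediate from the linear independence of distinct Pauli strings, and the combinatorial computation becomes routine.
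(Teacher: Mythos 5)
Your proposal is correct and follows essentially the same route as the paper: both reduce the claim via $\gstd\subseteq\gnat$ (Proposition~\ref{thm_dla_chain}) to counting $\mathcal{S}_n$-orbits of the even-parity Pauli-string basis of $\gfree$, i.e.\ weak $4$-compositions $(n_I,n_X,n_Y,n_Z)$ of $n$ with $n_Y+n_Z$ even, minus the two forbidden tuples $(n,0,0,0)$ and $(0,n,0,0)$. The only divergence is in how that count is evaluated: the paper uses two explicit bijections (a letter-swap $(n_I,n_X,n_Y,n_Z)\mapsto(n_Z,n_Y,n_X,n_I)$ for odd $n$, and a shift $(n_I,n_X,n_Y,n_Z)\mapsto(n_I{+}1,n_X,n_Y,n_Z{-}1)$ for even $n$ together with bookkeeping of the leftover weak $3$-compositions), whereas you use the generating function $\tfrac{1}{2}\left[(1-t)^{-4}+(1-t^2)^{-2}\right]$, which treats both parities uniformly and is, if anything, cleaner. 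You are also more careful than the paper's write-up about $n=3$, where $K_3=C_3$ is a cycle graph so the archetypal row of Table~\ref{tab:free-basis-table} does not formally apply; your check that the cycle-graph and archetypal descriptions of $\gfree$ coincide there (both give the same $30$ even-parity nontrivial Pauli strings, consistent with $\so(6)\iso\su(4)$) closes that small gap, and your remark that the orbit sums are nonzero and linearly independent because distinct Pauli strings are linearly independent supplies a step the paper leaves implicit.
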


The result of Prop.~\ref{prop:std:complete}  for the complete graph has been independently obtained in \cite{allcock2024dla}
as an upper bound to the standard-ansatz \DLA $\gstd$. In \cite{allcock2024dla}, they have also
determined an explicit basis for $\gstd$ and thereby also an explicit formula for the dimension of $\gstd$ in this case.
Moreover, they have provided explicit bases for the semisimple part $[\gstd, \gstd]$ and
the center $\cent(\gstd)$.

\section{Invariant subspaces and representation theory\label{sec:character:computations}}

In contrast to Sec.~\ref{sec:standard-ansatz-partial-results} which emphasized specific
classes of graphs and explicit (Paul-string) bases for the associated \DLAs, this section
aims at general results for characterizing the invariant subspaces connected to the standard ansatz of
QAOA. We start in Sec.~\ref{sec:character:computations:setting} with the general 
representation-theoretic setting and first implications. This is continued in
Sec.~\ref{sec:character:computations:formulas} with explicit character computations that
lead to upper bounds on dimensions of irreducible representations
connected to the standard-ansatz \DLA $\gstd$.

\subsection{Setting and immediate implications\label{sec:character:computations:setting}}

Building on the notation and results from Sections~\ref{SEC:STD} and \ref{sec:lie:hierarchy},
we now focus even more on invariant subspaces and representation theory related to
the standard ansatz.
Set $\CC_{\std}:=\CC(\gstd)$.
The invariant subspaces of $\gstd$, $\gnat$, and $\unat$ 
(or for their semisimple parts)
are revealed  
by suitable basis changes so that the block decompositions
\begin{subequations}
\label{invariants_basis_many}
\begin{align}
\gstd & \simeq  \;\bigoplus_{\lambda=1}^{\dim[\cent(\CC_{\std})]}\;  \unity_{m_{\lambda}} \otimes \g_{\lambda},
\label{invariants_basis_gstd}
\\
\gnat & \simeq  \;\bigoplus_{\mu=1}^{\dim[\cent(\Snat)]}\; \unity_{\tilde{m}_{\mu}} \otimes \tilde{\g}_{\mu},
\;\text{ and} \label{invariants_basis_gnat}
\\
\unat & \simeq   \:\bigoplus_{\mu=1}^{\dim[\cent(\Snat)]}\;   \unity_{\tilde{m}_{\mu}} \otimes \uu(\tilde{d}_{\mu})
\label{invariants_basis_unat}
\end{align}
\end{subequations}
hold for suitably chosen
\DLAs 
\begin{equation*}
\g_{\lambda} \subseteq \uu(d_{\lambda}) \subseteq \C^{d_{\lambda}\times d_{\lambda}} \;\text{ and }\; 
\tilde{\g}_{\mu} \subseteq \uu(\tilde{d}_{\mu}) \subseteq \C^{\tilde{d}_{\mu}\times \tilde{d}_{\mu}}
\end{equation*}
with 
$\dim[\com(\g_{\lambda})]=\dim[\com(\tilde{\g}_{\mu})]  =1$. The decomposition in Eq.~\eqref{invariants_basis_gstd}
refines the ones in Eqs.~\eqref{invariants_basis_gnat}-\eqref{invariants_basis_unat}.
The action of the corresponding centers $\cent(\gstd)$ and $\cent(\gnat)$
can be quite involved and they can have support on different irreducible blocks in Eq.~\eqref{invariants_basis_many}
(see Fig.~\ref{fig:ex:house:hierarchy} and the related discussion in Sec.~\ref{sec:lie:hierarchy}).
We establish in Appendix~\ref{proof:lem:ZQ-inv-subspace} 
some basic properties of 
the invariant subspaces corresponding to
the \DLAs $\gfree$, $\gnat$, $\gorb$, $\gstd$, and $\unat$:

\begin{lemma}
\label{lem:ZQ-inv-subspace}
Consider a graph with $n$ vertices and
an invariant subspace
$W\subseteq \C^d$ of any associated \DLA $\g \in \{\gfree, \gnat, \gorb, \gstd, \unat\}$.
Then $\left(Z^{\otimes n}\right)W$ is also an invariant subspace of $\g$.
\end{lemma}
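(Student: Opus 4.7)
The plan is to reduce the lemma to a single algebraic fact: that conjugation by $Z^{\otimes n}$ is an automorphism of each of the five \DLAs $\gfree$, $\gnat$, $\gorb$, $\gstd$, and $\unat$, i.e.\ that $Z^{\otimes n}$ lies in the normalizer of each. Once this is known, the conclusion follows in one line: given $g\in\g$ and $v\in W$, write
\begin{equation*}
g\,(Z^{\otimes n} v) \;=\; Z^{\otimes n}\bigl(Z^{\otimes n} g\, Z^{\otimes n}\bigr)v \;=\; Z^{\otimes n}\, g'\,v,
\end{equation*}
with $g':=Z^{\otimes n} g\, Z^{\otimes n}\in\g$ by the normalizer property; since $W$ is $\g$-invariant, $g'v\in W$ and hence $g(Z^{\otimes n} v)\in Z^{\otimes n} W$. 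Because $Z^{\otimes n}$ is unitary (and self-inverse), $Z^{\otimes n} W$ is a linear subspace of the same dimension as $W$.

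For $\gfree$, $\gorb$, and $\gstd$ it suffices to check the normalizer property on the generators, since the Lie closure of a conjugation-stable set is itself conjugation-stable. Every ``problem'' generator is a real sum of terms $Z_w Z_{\tilde w}$, each of which commutes with $Z^{\otimes n}$, so $Z^{\otimes n}$-conjugation fixes them. Every ``mixer'' generator is a real sum of operators $X_v$, and $Z^{\otimes n} X_v Z^{\otimes n} = -X_v$ since $Z_v$ anticommutes with $X_v$ while all other factors commute with it. Thus conjugation by $Z^{\otimes n}$ maps the generating set of each of the three algebras into $\pm$itself, and the real Lie closure is preserved.

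For $\gnat$ and $\unat$, I would show that $Z^{\otimes n}$-conjugation commutes with the symmetrization map $\taunat$ of Eq.~\eqref{eq:sym:nat}. The qubit-permutation factor is immediate: $Z^{\otimes n}$ is symmetric under permutation of its tensor factors, hence commutes with every $\zeta(\sigma)$, giving $Z^{\otimes n}\tauaut(M)Z^{\otimes n}=\tauaut(Z^{\otimes n} M\,Z^{\otimes n})$ directly from Eq.~\eqref{eq:sym:aut}. For the $\ZTWO$-factor, note that $Z^{\otimes n} X^{\otimes n}=(-1)^n X^{\otimes n} Z^{\otimes n}$, but this sign enters twice in the quadratic sandwich and therefore cancels:
\begin{equation*}
Z^{\otimes n}\bigl(X^{\otimes n} M\, X^{\otimes n}\bigr)Z^{\otimes n} = X^{\otimes n}\bigl(Z^{\otimes n} M\, Z^{\otimes n}\bigr)X^{\otimes n}.
\end{equation*}
Hence $Z^{\otimes n}\tau_{\ZTWO}(M)Z^{\otimes n}=\tau_{\ZTWO}(Z^{\otimes n} M\,Z^{\otimes n})$, and composing gives $Z^{\otimes n}\taunat(M)Z^{\otimes n}=\taunat(Z^{\otimes n} M\,Z^{\otimes n})$. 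Combined with the preservation of $\gfree$ (respectively $\uu(d)$) from the previous paragraph, the defining spanning sets of $\gnat=\spn_{\mbb{R}}\{\taunat(g):g\in\gfree\}$ and $\unat=\spn_{\mbb{R}}\{\taunat(g):g\in\uu(d)\}$ are mapped into themselves, so both \DLAs are normalized by $Z^{\otimes n}$.

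The argument is largely routine; the only mildly subtle point is the parity-dependent commutation $X^{\otimes n} Z^{\otimes n}=(-1)^n Z^{\otimes n} X^{\otimes n}$ in the odd-$n$ case, which one might at first fear breaks compatibility with $\tau_{\ZTWO}$, but is immediately resolved by the fact that $\tau_{\ZTWO}$ applies $X^{\otimes n}$ twice so any global sign squares to one.
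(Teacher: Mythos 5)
Your proof is correct and follows essentially the same route as the paper's: both rest on the observation that the problem generators commute and the mixer generators anticommute with $Z^{\otimes n}$, so that conjugation by $Z^{\otimes n}$ normalizes each \DLA and hence carries invariant subspaces to invariant subspaces. If anything, your explicit check that $Z^{\otimes n}$-conjugation commutes with the symmetrization maps $\tau_{\ZTWO}$ and $\tauaut$ treats the cases $\gnat$ and $\unat$ more carefully than the paper's proof, which argues only via nested commutators of the mixer and problem Hamiltonians.
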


For an arbitrary subspace $W$, $\left(Z^{\otimes n}\right)W$ may have a nontrivial intersection with $W$. 
However, when $n$ is odd and $W$ is a irreducible, invariant subspace of one of our \DLAs of interest,
we can guarantee that $W$ and $\left(Z^{\otimes n}\right)W$ have a trivial intersection. 
Recall that a \DLA $\g \in \{\gfree, \gnat, \gorb, \gstd, \unat\}$
observes the invariant subspaces $\HC_{+}$
and $\HC_{-}$ of the same dimension
given by the $+1$ and $-1$ eigenspaces of $X^{\otimes n}$
(see Sec.~\ref{app:free:preliminaries}).
They are spanned by all Hadamard basis states  $\ket{b_1}\otimes\cdots\otimes\ket{b_n}$ with $b_j \in \{+, -\}$
with respectively an even or odd  number of minus
signs. For odd $n$, the irreducible subspaces in $\HC_{+}$
and $\HC_{-}$ match as is verified in Appendix~\ref{proof:thm:n-odd-matching-subspaces}:

\begin{proposition}[Odd number of vertices.]
\label{thm:n-odd-matching-subspaces}
Given a graph with an odd number $n$ of vertices, we consider an associated \DLA
$\g \in \{\gfree, \gnat, \gorb, \gstd, \unat\}$.
The invariant subspaces $\HC_{+}$ and $\HC_{-}$ have matching decompositions
$\HC_{+} = V_1\oplus\cdots\oplus V_k$ and $\HC_{-} = W_1\oplus\cdots\oplus W_k$
into the irreducible, invariant subspaces
$V_j$ and $W_j$ with $W_j = \left(Z^{\otimes n}\right)V_j$
and $1\leq j \leq k$.
\end{proposition}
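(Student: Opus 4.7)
The plan is to exploit the anticommutation $X^{\otimes n} Z^{\otimes n} = -Z^{\otimes n} X^{\otimes n}$, which holds precisely when $n$ is odd, together with Lemma~\ref{lem:ZQ-inv-subspace}. The anticommutation shows that $Z^{\otimes n}$ intertwines the eigenspaces $\HC_{+}$ and $\HC_{-}$ of $X^{\otimes n}$, while Lemma~\ref{lem:ZQ-inv-subspace} already ensures that $Z^{\otimes n}$ maps $\g$-invariant subspaces to $\g$-invariant subspaces for every $\g \in \{\gfree, \gnat, \gorb, \gstd, \unat\}$. Combining these two facts will lift any irreducible decomposition of $\HC_{+}$ to one of $\HC_{-}$ essentially for free.

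Concretely, I would first check that $Z^{\otimes n}\colon \HC_{+}\to \HC_{-}$ is a unitary bijection: for $\ket{\psi}\in\HC_{+}$ the anticommutation yields $X^{\otimes n}(Z^{\otimes n}\ket{\psi}) = -Z^{\otimes n}\ket{\psi}$, so the image lies in $\HC_{-}$, and $(Z^{\otimes n})^2 = I$ provides the inverse. Next, fix an irreducible decomposition $\HC_{+} = V_1\oplus\cdots\oplus V_k$ under $\g$ and set $W_j := (Z^{\otimes n})V_j$. Each $W_j$ is $\g$-invariant by Lemma~\ref{lem:ZQ-inv-subspace}, and bijectivity transports the direct sum to $\HC_{-} = W_1\oplus\cdots\oplus W_k$ (the sum exhausts $\HC_{-}$ because $Z^{\otimes n}$ is surjective onto it). Irreducibility of each $W_j$ is the only remaining point: if $U\subseteq W_j$ were a nonzero, proper invariant subspace, then one more application of Lemma~\ref{lem:ZQ-inv-subspace} would make $(Z^{\otimes n})U\subseteq V_j$ a nonzero, proper invariant subspace, contradicting irreducibility of $V_j$.

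I expect the individual steps to be essentially mechanical once Lemma~\ref{lem:ZQ-inv-subspace} is in hand; the only conceptual input specific to this proposition is the parity observation. For even $n$ the analogous argument breaks down at the very first step: $Z^{\otimes n}$ then \emph{commutes} with $X^{\otimes n}$ and preserves $\HC_{\pm}$ individually rather than exchanging them, which is consistent with the restriction of the statement to odd $n$. The substantive content, namely that conjugation by $Z^{\otimes n}$ normalizes each of the listed DLAs (it commutes with every $Z_w Z_{\tilde{w}}$, anticommutes with every $X_v$, and commutes with both the qubit-permutation symmetrization $\tauaut$ and the $\ZTWO$ symmetrization $\tau_{\ZTWO}$), is already packaged inside Lemma~\ref{lem:ZQ-inv-subspace}, so I do not foresee a genuine obstacle beyond invoking that lemma twice.
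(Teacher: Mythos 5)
Your proposal is correct and follows essentially the same route as the paper's proof: the anticommutation of $Z^{\otimes n}$ with $X^{\otimes n}$ for odd $n$ to swap $\HC_{+}$ and $\HC_{-}$, combined with Lemma~\ref{lem:ZQ-inv-subspace} applied in both directions to transport invariant subspaces and to certify irreducibility of the images. Your explicit two-sided application of the lemma to rule out a proper invariant subspace of $W_j$ is just a slightly more detailed rendering of the paper's "conversely" step.
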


\subsection{Character computations\label{sec:character:computations:formulas}}

After these preparations, we characterize
the invariant subspaces of $\gnat$
based on explicit character 
computations \cite{serre1977,Ledermann,serre1977,sengupta2012}.
This will yield upper bounds on the dimensions
of the irreducible representations of $\gstd$; recall 
from Proposition~\ref{thm_dla_chain}
that $\gstd \subseteq \gnat$.

For $\sigma_1 \in \mathcal{S}_2$, we set the element 
$\vartheta(\sigma_1) \in \ZTWO$ as
\begin{equation}\label{eq:z2:mat}
\vartheta(\sigma_1) := 
\begin{cases}
        I^{\tn} & \text{for $\sigma_1 = \symone $}, \\
        X^{\tn} & \text{for $\sigma_1 = (1\,2)$}.
    \end{cases}
\end{equation}
The permutation group 
$\Gnatr := \mathcal{S}_2 \times \Aut$ is represented 
via $\Upsilon(\sigma_1,\sigma_2):=\vartheta[\sigma_1] \zeta[\sigma_2]\in \C^{d\times d}$
as
the
group
of natural symmetries 
$
\Gnat = \ZTWO\times \zeta[\Aut]
$
from Eq.~\eqref{eq:nat:group}.
The associated character $\chi_{\text{nat}}$ is defined as 
\begin{equation}\label{eq:chi:hat}
\chi_{\text{nat}}(\sigma_1,\sigma_2)  := \Tr[\Upsilon(\sigma_1,\sigma_2)] = \Tr(\vartheta[\sigma_1]\, \zeta[\sigma_2])
\end{equation}
for $\sigma_1 \in  \mathcal{S}_2$ and $\sigma_2 \in  \Aut$. 
For a  permutation $\sigma \in  \mathcal{S}_n$,
its cycle type is defined as \cite{sagan2001}
\begin{equation}\label{eq:cycle:type}
(1^{b_1}, \ldots, n^{b_n})
\end{equation}
where $b_a$ denotes the number of cycles of length $a$ in $\sigma$.
In particular, $\sigma$ can be uniquely decomposed into a product 
\begin{equation}\label{eq:cycle:decomp}
\sigma = \prod_{a=1}^{c(\sigma)} (z_{a},\sigma[z_{a}],\ldots,\sigma^{q_a-1}[z_{a}])
\end{equation}
of disjoint cycles
of length $q_a$ such that $\sigma^{q_a}[z_{a}] = z_a$,
$z_a\in\{1,\ldots,n\}$, and the number of cycles is 
\begin{equation*}
c(\sigma):= \sum_{a=1}^{n} b_a.
\end{equation*}
For example, the identity $\symone=(1)\cdots(n) \in S_n$ can be uniquely written
as $n$ cycles of length one.
We set 
\begin{equation*}
\pp(\sg) :=
\begin{cases}
1 &\text{if all cycles of $\sg$ are of even length},\\
0 &\text{otherwise}.
\end{cases}
\end{equation*}
For odd $n$, no $\tilde{\sigma} \in \mathcal{S}_n$
has only even cycles, i.e., $\pp(\tilde{\sigma})=0$.
We efficiently compute
the character from Eq.~\eqref{eq:chi:hat}
which is shown in Appendix~\ref{proof:lem:character-formula}:
\begin{lemma}[Character formula]
\label{lem:character-formula}
We observe
\begin{equation*}
\chi_{\text{nat}}(\sigma_1,\sigma_2) =
    \begin{cases}
         2^{c(\sg_2)} & \text{if $\sigma_1=\symone$ or $\pp(\sg_2)=1$,}\\
        0 & \text{otherwise.}
    \end{cases}
\end{equation*}
\end{lemma}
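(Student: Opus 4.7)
The plan is to compute the trace $\chi_{\text{nat}}(\sigma_1,\sigma_2) = \Tr(\vartheta[\sigma_1]\,\zeta[\sigma_2])$ directly in the computational basis of $\mathcal{H} = (\C^2)^{\otimes n}$, splitting into the two cases indicated by the definition~\eqref{eq:z2:mat} of $\vartheta$.

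\textbf{Case $\sigma_1 = \symone$.} Here $\vartheta(\symone) = I^{\otimes n}$, so one needs $\Tr(\zeta[\sigma_2])$. Since $\zeta[\sigma_2]\ket{x_1,\ldots,x_n} = \ket{x_{\sigma_2^{-1}(1)},\ldots,x_{\sigma_2^{-1}(n)}}$, the trace counts the bitstrings $x\in\{0,1\}^n$ that are fixed by $\sigma_2$, i.e., constant on each cycle of $\sigma_2$. Using the cycle decomposition~\eqref{eq:cycle:decomp}, each of the $c(\sigma_2)$ cycles contributes an independent binary choice, giving $2^{c(\sigma_2)}$.

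\textbf{Case $\sigma_1 = (1,2)$.} Here $\vartheta[\sigma_1]\,\zeta[\sigma_2]\ket{x} = \ket{\neg (\sigma_2\cdot x)}$, so
\begin{equation*}
\chi_{\text{nat}}((1,2),\sigma_2) = \#\{x\in\{0,1\}^n \;:\; x_{\sigma_2(k)} = \neg\, x_k \text{ for all }k\}.
\end{equation*}
The key step is to analyze this fixed-point equation cycle by cycle. For a cycle $(z_1,z_2,\ldots,z_q)$ of $\sigma_2$ (with $\sigma_2(z_j)=z_{j+1}$ and indices mod $q$), the relation $x_{z_{j+1}} = \neg\, x_{z_j}$ fixes $x_{z_j}$ in terms of $x_{z_1}$ and, after going once around, forces $x_{z_1} = \neg^{q}\, x_{z_1}$. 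This is satisfiable iff $q$ is even, in which case there are exactly two solutions, and otherwise has no solutions.

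\textbf{Combining.} Since the constraints on different cycles are independent, the total count is zero unless every cycle of $\sigma_2$ has even length (i.e.\ $\pp(\sigma_2)=1$), in which case it equals $2^{c(\sigma_2)}$. This matches the claimed formula, and in particular reproduces the remark that for odd $n$ no permutation can have all cycles of even length, so the second case never contributes. The only subtlety to watch is the direction of the permutation action (i.e.\ distinguishing $\sigma_2$ from $\sigma_2^{-1}$), but since the conjugacy class, and hence the cycle type, is preserved by inversion, the final formula is unaffected. No serious obstacle is anticipated; the argument is a routine combinatorial count once the trace is reduced to a fixed-point enumeration.
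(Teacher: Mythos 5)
Your proposal is correct and follows essentially the same route as the paper's proof: both reduce the trace to a count of basis states fixed by the permutation action of $\Upsilon(\sigma_1,\sigma_2)$ and analyze the resulting constraints cycle by cycle, obtaining $2^{c(\sigma_2)}$ choices when the constraints are satisfiable and zero when an odd cycle obstructs the alternating condition. No gaps; the remark on $\sigma_2$ versus $\sigma_2^{-1}$ is a nice touch but, as you note, immaterial.
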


We aim to determine the multiplicities of the irreducible
components of the character $\chi_{\text{nat}}$.
In particular, for each representation index $\nu= (\nu_1,\nu_2)$,
we are computing the character $\chi_{\nu}$ of the group $\Gnatr$
as
\begin{equation*}
\chi_{\nu}(\sigma_1,\sigma_2) 
= \alpha_{\nu_1}(\sigma_1)\,  \beta_{\nu_2}(\sigma_2)
\end{equation*}
where $\alpha_{\nu_1}$ and $\beta_{\nu_2}$ are the characters
of $\mathcal{S}_2$ and $\Aut$ with associated representation indices $\nu_1$ and $\nu_2$
as well as $\sigma_1 \in \mathcal{S}_2$ and $\sigma_2 \in \Aut$.
The multiplicity of $\chi_{\nu}$ in $\chi_{\text{nat}}$ is given by the Schur orthogonality relations
\cite{serre1977,Ledermann,sengupta2012}
\begin{equation}\label{eq:mults}
\mathbf{m}_{\nu}
:= \tfrac{1}{2\abs{\Aut}} \hspace{-5mm} \sum_{(\sigma_1,\sigma_2) \,\in\, \mathcal{S}_2 {\times} \Aut}
\hspace{-5mm} \bar{\chi}_{\nu}(\sigma_1,\sigma_2)\, \chi_{\text{nat}}(\sigma_1,\sigma_2).
\end{equation}
By duality (as detailed in Sec.~\ref{appendix:symmtry:analysis}), each multiplicity $\mathbf{m}_{\nu}$ corresponds to one of the dimensions
$\tilde{d}_{\mu}$ in Eq.~\eqref{invariants_basis_gnat}, while the associated multiplicity $\tilde{m}_{\mu}$ is given by
the degree $\mathbf{d}_{\nu}:={\chi}_{\nu}(\symone,\symone)$ of ${\chi}_{\nu}$.

We consider now the example of the house graph and we recall that its
automorphism group is isomorphic to $\mathcal{S}_2$, which has only the trivial and the sign representation
with the associated characters $\chi_{\triv}$ and $\chi_{\signrepr}$.
We obtain the following multiplicities and degrees
\begin{equation*}
\begin{matrix}
&\mathbf{m}_{(\triv,\triv)} &= 10, &\mathbf{m}_{(\signrepr,\triv)} &=10, &\mathbf{m}_{(\triv,\signrepr)} &= 6, &\mathbf{m}_{(\signrepr,\signrepr)} &= 6,\\
&\mathbf{d}_{(\triv,\triv)} &=\phantom{0}1, &\mathbf{d}_{(\signrepr,\triv)} &=\phantom{0}1, &\mathbf{d}_{(\triv,\signrepr)} &=1, &\mathbf{d}_{(\signrepr,\signrepr)} &=1.
\end{matrix}
\end{equation*}

We evaluate the multiplicity formula in Eq.~\eqref{eq:mults} for $\nu = (\triv,\triv)$ 
using 
$\chi_{(\triv,\triv)}(\sigma_1,\sigma_2)=1$, while Lemma~\ref{lem:character-formula}
determines
$\chi_{\text{nat}}(\sigma_1,\sigma_2)$.
We obtain the following formula:
\begin{lemma}\label{lem:triv:mult}
For the group $\Gnatr = \mathcal{S}_2 \times \Aut$ and
following Eq.~\eqref{eq:mults},
the multiplicity of the trivial character  $\chi_{(\triv,\triv)}$
in the character $\chi_{\text{nat}}$ from Eq.~\eqref{eq:chi:hat}
is given by
\begin{equation*}
\mathbf{m}_{(\triv,\triv)} =  \tfrac{1}{2\abs{\Aut}} \sum_{\sg\in\Aut} 2^{c(\sg)}\,[1+\pp(\sg)]\,.
\end{equation*}
\end{lemma}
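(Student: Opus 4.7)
The plan is to derive this multiplicity by directly substituting into the Schur orthogonality formula of Eq.~\eqref{eq:mults} and using the character formula already established in Lemma~\ref{lem:character-formula}. The proof is essentially a bookkeeping exercise; no deep idea is required, which is consistent with the claim being flagged as a \emph{Lemma} rather than a proposition.

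\textbf{First step.} Specialize Eq.~\eqref{eq:mults} to $\nu = (\triv,\triv)$. Since $\chi_{(\triv,\triv)} = \alpha_{\triv}\,\beta_{\triv}$, where $\alpha_{\triv}$ and $\beta_{\triv}$ are the trivial characters of $\mathcal{S}_2$ and of $\Aut$, respectively, we have $\chi_{(\triv,\triv)}(\sigma_1,\sigma_2)=1$ for every $(\sigma_1,\sigma_2)\in \mathcal{S}_2\times \Aut$. In particular $\bar{\chi}_{(\triv,\triv)}(\sigma_1,\sigma_2)=1$ as well, so Eq.~\eqref{eq:mults} collapses to
\begin{equation*}
\mathbf{m}_{(\triv,\triv)} = \tfrac{1}{2\abs{\Aut}} \!\!\!\sum_{(\sigma_1,\sigma_2)\in \mathcal{S}_2\times \Aut}\!\!\! \chi_{\text{nat}}(\sigma_1,\sigma_2).
\end{equation*}

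\textbf{Second step.} Decompose the outer summation over $\mathcal{S}_2=\{\symone,(1,2)\}$ and insert Lemma~\ref{lem:character-formula}. For $\sigma_1=\symone$ the character formula gives $\chi_{\text{nat}}(\symone,\sigma_2)=2^{c(\sigma_2)}$ for every $\sigma_2 \in \Aut$, whereas for $\sigma_1=(1,2)$ the same formula gives $\chi_{\text{nat}}((1,2),\sigma_2)=2^{c(\sigma_2)}\,\pp(\sigma_2)$, since the contribution is $2^{c(\sigma_2)}$ exactly when $\pp(\sigma_2)=1$ and vanishes otherwise. Hence
\begin{equation*}
\sum_{\sigma_1 \in \mathcal{S}_2}\chi_{\text{nat}}(\sigma_1,\sigma_2) = 2^{c(\sigma_2)}\bigl[1+\pp(\sigma_2)\bigr].
\end{equation*}

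\textbf{Third step.} Inserting this identity back into the expression for $\mathbf{m}_{(\triv,\triv)}$ from the first step yields
\begin{equation*}
\mathbf{m}_{(\triv,\triv)} = \tfrac{1}{2\abs{\Aut}}\sum_{\sigma \in \Aut} 2^{c(\sigma)}\bigl[1+\pp(\sigma)\bigr],
\end{equation*}
which is the claimed formula. The only subtle point worth verifying is the vanishing rule for $\sigma_1=(1,2)$: this is already handled inside Lemma~\ref{lem:character-formula} and, for graphs with an odd number of vertices, forces the $\pp$-contribution to vanish identically, so no separate case analysis for the parity of $n$ is needed. There is no genuine obstacle; the argument is a one-line application of Lemma~\ref{lem:character-formula} after recognizing that the trivial character is the constant function $1$.
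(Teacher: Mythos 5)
Your proposal is correct and follows exactly the route the paper takes: setting $\bar{\chi}_{(\triv,\triv)}=1$ in Eq.~\eqref{eq:mults}, splitting the sum over $\mathcal{S}_2=\{\symone,(1,2)\}$, and reading off the two contributions $2^{c(\sigma_2)}$ and $2^{c(\sigma_2)}\,\pp(\sigma_2)$ from Lemma~\ref{lem:character-formula}. The paper presents this as a one-line remark preceding the lemma rather than a formal proof, and your bookkeeping matches it.
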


The summands only depend on the cycle type of the permutations in $\Aut$, and therefore the
summation over the whole automorphism group could be replaced by a summation over conjugacy classes.
It will be instructive to characterize the multiplicity $\mathbf{m}_{(\triv,\triv)}$ for the two limiting cases of large and 
small automorphism groups $\Aut$:

\begin{proposition}\label{prop:trivial:multiplicity}
Let $\Aut$ denote the
automorphism group of a graph $G$
with $n$ vertices. The multiplicity $\mathbf{m}_{(\triv,\triv)}$
of the trivial character of $\Gnatr = \mathcal{S}_2 \times \Aut$
observes the following properties:
\begin{subequations}
\label{eq:trivial:multiplicity}
\begin{alignat}{3}
& \text{(a)}\quad && \Aut = \mathcal{S}_n \quad \text{implies}  \quad \mathbf{m}_{(\triv,\triv)} = \floor{\tfrac{n}{2}}+1,
\label{eq:trivial:multiplicity:a}
\\
& \text{(b)} && \abs{\Aut} = 1 \quad \text{implies}  \quad \mathbf{m}_{(\triv,\triv)} = 2^{n-1},
\label{eq:trivial:multiplicity:b}
\\
& \text{(c)} && \mathbf{m}_{(\triv,\triv)} \geq 2^{n-1} / \abs{\Aut}.
\label{eq:trivial:multiplicity:c}
\end{alignat}
\end{subequations}
\end{proposition}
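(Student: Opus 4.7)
The plan is to work directly from the explicit formula of Lemma~\ref{lem:triv:mult},
\begin{equation*}
\mathbf{m}_{(\triv,\triv)} = \tfrac{1}{2\abs{\Aut}} \sum_{\sg\in\Aut} 2^{c(\sg)}\,[1+\pp(\sg)],
\end{equation*}
and to handle the three items in the order (c), (b), (a), from easiest to hardest. All three rely only on isolating the contribution of the identity and on two standard generating-function identities for permutation statistics on $\mathcal{S}_n$.

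For part (c), I would simply note that the identity $\symone\in\Aut$ decomposes into $n$ cycles of length one, so $c(\symone)=n$ and, since length-one cycles are odd, $\pp(\symone)=0$. The identity alone therefore contributes $2^{n}[1+0]=2^n$ to the sum, and every remaining summand is non-negative; dividing by $2\abs{\Aut}$ yields $\mathbf{m}_{(\triv,\triv)}\geq 2^{n-1}/\abs{\Aut}$. Part (b) is the special case in which no other summand exists, so the formula collapses to $2^n/2 = 2^{n-1}$.

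For part (a) with $\Aut=\mathcal{S}_n$, I would split the sum according to whether $\pp(\sg)=0$ or $\pp(\sg)=1$ and write
\begin{equation*}
\mathbf{m}_{(\triv,\triv)} = \tfrac{1}{2\,n!}\Big(\sum_{\sg\in\mathcal{S}_n} 2^{c(\sg)} \; + \hspace{-6pt} \sum_{\substack{\sg\in\mathcal{S}_n \\ \pp(\sg)=1}} \hspace{-6pt} 2^{c(\sg)}\Big).
\end{equation*}
The first sum is evaluated by specializing the classical generating function of the unsigned Stirling numbers of the first kind,
\begin{equation*}
\sum_{\sg\in\mathcal{S}_n} x^{c(\sg)} = x(x+1)(x+2)\cdots(x+n-1),
\end{equation*}
at $x=2$, giving $2\cdot 3\cdots(n+1)=(n+1)!$, and hence a contribution of $(n+1)/2$. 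For the second sum I would use the exponential generating function: cycles of even length alone have EGF $\sum_{k\geq 1}x^{2k}/(2k)=-\tfrac{1}{2}\log(1-x^2)$, so by the exponential formula permutations whose cycles all have even length, weighted by $t^{c(\sg)}$, have EGF $(1-x^2)^{-t/2}$. Setting $t=2$ gives $(1-x^2)^{-1}=\sum_{k\geq 0}x^{2k}$, from which the coefficient of $x^n/n!$ equals $n!$ when $n$ is even and $0$ when $n$ is odd. Therefore the second sum contributes $1/2$ for even $n$ and $0$ for odd $n$. Combining yields $\mathbf{m}_{(\triv,\triv)}=(n+1)/2+1/2=n/2+1$ for even $n$ and $\mathbf{m}_{(\triv,\triv)}=(n+1)/2$ for odd $n$; both expressions equal $\floor{n/2}+1$.

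The only real obstacle is the second sum in part (a): the first sum reduces to a textbook identity, but counting permutations whose cycle type consists entirely of even parts requires an EGF argument (or an equivalent bijective/recursive derivation). Once that generating function is in hand, both parities of $n$ fall out uniformly as $\floor{n/2}+1$.
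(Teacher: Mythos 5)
Your proof is correct, and its overall architecture coincides with the paper's: both start from the formula of Lemma~\ref{lem:triv:mult}, both obtain (b) and (c) by isolating the identity's contribution $2^{c(\symone)}[1+\pp(\symone)]=2^n$ and using positivity of the remaining summands, and both reduce (a) to the same two identities, $\sum_{\sg\in\mathcal{S}_n}2^{c(\sg)}=(n+1)!$ and $\sum_{\sg\in\mathcal{S}_{2m}}2^{c(\sg)}\,\pp(\sg)=(2m)!$. The one genuine divergence is how the second identity is established. The paper proves it by an explicit cycle-type computation: it parametrizes permutations with only even cycles by partitions $2p$ of $2m$ obtained by doubling partitions $p$ of $m$, uses the count $n!/\zz_{(n,p)}$ of permutations of a given cycle type, and cancels the factor $2^k$ against $\prod_a 2^{b_a}$ to reduce the sum to $\sum_p m!/\zz_{(m,p)}=m!$. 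You instead invoke the exponential formula: even cycles have EGF $-\tfrac{1}{2}\log(1-x^2)$, so the $t$-weighted-by-cycle-count EGF is $(1-x^2)^{-t/2}$, and setting $t=2$ gives $(1-x^2)^{-1}$, whose $x^n/n!$ coefficient is $n!$ for even $n$ and $0$ for odd $n$. Your route is shorter and packages the parity dichotomy (the vanishing for odd $n$) into the generating function automatically, whereas the paper's bijective manipulation is more elementary and self-contained, requiring only the standard cycle-type counting formula rather than the exponential formula. Both are complete; the numerical bookkeeping that combines the two sums into $\floor{n/2}+1$ is identical.
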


Appendix~\ref{proof:prop:trivial:multiplicity} details the proof of
Proposition~\ref{prop:trivial:multiplicity}.
Proposition~\ref{prop:trivial:multiplicity}(a) shows that the trivial multiplicity
$\mathbf{m}_{(\triv,\triv)}$ is linear in the number $n$ of vertices if the automorphism group
is maximal. Using again the duality from Sec.~\ref{appendix:symmtry:analysis}
between one of the dimensions $\tilde{d}_{\mu}$
in Eq.~\eqref{invariants_basis_gnat} and $\mathbf{m}_{(\triv,\triv)}$, this dimension
$\tilde{d}_{\mu}$ of a particular representation of $\gnat$ is identified as being equal to $\mathbf{m}_{(\triv,\triv)}$.
Similarly for a trivial automorphism group
with $\abs{\Aut}=1$, Prop.~\ref{prop:trivial:multiplicity}(b) identifies a representation
in Eq.~\eqref{invariants_basis_gnat} of dimension $2^{n-1}$. For small values of $\Aut$
that are, e.g., polynomially bounded in the number $n$ of vertices, Prop.~\ref{prop:trivial:multiplicity}(c) leads to
an exponential lower bound for the corresponding dimension in Eq.~\eqref{invariants_basis_gnat}.

What does this imply for the standard ansatz of QAOA? 
Recall from Proposition~\ref{thm_dla_chain} that $\gstd \subseteq \gnat$.
So, the results in this section and particularly Proposition~\ref{prop:trivial:multiplicity}(c)
only provide an exponential lower bound on an upper bound for the dimension of an irreducible subspace
of $\gstd$
assuming that $\abs{\Aut}$ is polynomially bounded.
But the dimension of this irreducible subspace of $\gstd$ could be smaller.
The following Section~\ref{sec:conjecture} considers
 the particular case of graphs $G$ with $\abs{\Aut}=1$ (which are known as \emph{asymmetric} graphs),
which in particular implies that $G$ is archetypal (see Def.~\ref{def_other_graphs}).

\section{Standard ansatz for connected asymmetric graphs\label{sec:conjecture}}

We now explore properties of the standard ansatz for the case of connected  asymmetric graphs.
Connected \emph{asymmetric} graphs are connected graphs such that their automorphism group
consists only of the identity permutation \cite{Erdoes1963,Godsil2001}. Almost all graphs
are asymmetric and
there are no connected asymmetric graphs for $2 \leq n \leq 5$.
All connected  asymmetric graphs are archetypal in the sense of Def.~\ref{def_other_graphs}.
Applying the results of Sec.~\ref{sec:lie:hierarchy} to connected asymmetric graphs with $n$ vertices, one obtains
\begin{equation*}
\gstd \subseteq \gorb = \gnat = \gfree = \su(2^{n-1}) \oplus \su(2^{n-1}).
\end{equation*}
Also, Proposition~\ref{prop:trivial:multiplicity}(b) shows 
that the largest dimension for the invariant subspaces of $\gnat$ is given by $2^{n-1}$.
But how much smaller is the largest invariant dimension for the action of the standard-ansatz \DLA $\gstd$?

\begin{figure}[t]
\includegraphics{graph-nine.pdf}
\caption{\textbf{Asymmetric graph with nine vertices and its gap $\Delta = 9 = 2 {\times} 2 + 1 {\times} 3 + 1 {\times} 2$.} (a) Graph. 
(b) $\gstd^+$ acts on the upper-left component $\HC_{+}$ which splits into irreducible subspaces with
the respective dimensions
$247$, $2$, $1$, $1$ and multiplicities $1$, $2$, $3$, $2$.
Thus hidden symmetries can entail
irreducible subspaces of dimension larger than one.
\label{fig:graph:nine}}
\end{figure}

Appendix~\ref{app:free:preliminaries}
introduces a basis change $\tilde{h} \circ \pi_n \circ h$ that transforms
the projectors $P_{\pm}=(X\tn {\pm} I\tn)/2$ into
the respective block-diagonal form $I^{\otimes (n-1)} \oplus \zero^{\otimes (n-1)}$
and $\zero^{\otimes (n-1)} \oplus I^{\otimes (n-1)}$.
Moreover, the generators $H_p$ and $H_m$ of $\gstd$ are also transformed
into their block-diagonal form $H_p^+ \oplus H_p^-$ and $H_m^+ \oplus H_m^-$,
where $H_j^+$ and $H_j^-$ for $j \in \{p,m\}$ are respectively the upper-left 
and lower-right components. After these preparations, the \DLA acting on the upper-left block
is defined as
\begin{equation*}
\gstd^+ := \lie{H_p^+, H_m^+}, \;\text{ where }\; \Cstd^+:= \com(\gstd^+)
\end{equation*}
is its commutant. Let $\mathbf{m}$ denote the largest dimension
of the invariant subspaces of $\gstd^+$ and the gap from the case of $\gnat$ 
for connected asymmetric graphs
is given by
\begin{equation}\label{eq:gap}
\Delta := \mathbf{m}_{(\triv,\triv)} - \mathbf{m} = 2^{n-1} - \mathbf{m}.
\end{equation}
The graph in Fig.~\ref{fig:graph:nine} with nine vertices yields
a largest dimension of $\mathbf{m}=247$ for the action of $\gstd^+$
and a gap of $\Delta = 9 = 2^{9-1} - \mathbf{m}$. In this example, one observes irreducible subspaces
with a dimension larger than one as a consequence of hidden symmetries.
Thus additional symmetries beyond natural ones appearing in the standard 
ansatz \emph{cannot}  be fully explained by quantum
many-body scars~\cite{moudgalya2020,moudgalya2022hilbert, moudgalya2022exhaustive}
connected to one-dimensional invariant subspaces
as discussed at the end of Sec.~\ref{SEC:STD}.

\begin{figure}[t]
\includegraphics{frequencies.pdf}
\caption{\textbf{Relative and absolute frequencies for the gap $\Delta$
of connected asymmetric graphs with $n$ vertices.} The gap concentrates at
small values and is tightly bounded.
\label{fig:frequencies}}
\end{figure}

Figure~\ref{fig:frequencies} presents data on values for the gap $\Delta$
of connected asymmetric graphs and $6 \leq n \leq 9$. One observes that the gap concentrates at
small values and it is tightly bounded, at least for $n\leq 9$. 
This offers only very limited data, but the difference between
the cases of $\gnat$ and $\gstd$ (as quantified by $\Delta$) is quite small.
Similarly,
the corresponding dimensions $\dim(\Cstd^+)$ and $\dim[\cent(\Cstd^+)]$ 
for the commutant and its center
are
detailed in Table~\ref{tab:asymmetric}.
Thus the dimension
of the commutant $\Cstd^+$ is much more frequently small than large.
Also, the dimension of the center $\cent(\Cstd^+)$ appears to be quite restricted.
One might suggest that $\Delta$ will stay polynomially bounded in $n$:

\begin{conjecture}
Consider the standard ansatz for connected asymmetric graphs
with $n$ vertices. The gap $\Delta$ as defined in 
Eq.~\eqref{eq:gap} is polynomially bounded in $n$.
\end{conjecture}

But, from this data, it is
equally possible that $\Delta$ will eventually approach the magnitude of $2^{n-1}$ as $n$ grows.
Still, assuming a polynomially bounded $\Delta$, several situations of interest could occur.
On the one hand, if the state fully (or partially) belong to one of the polynomially sized subspace, then the action of QAOA could likely be simulated therein~\cite{cerezo2023does,goh2023lie}. Then, if the state belongs to an exponentially large invariant subspace, and we have full control in this invariant subspace, similar arguments as in Appendix~\ref{proof:variance} would imply the existence of barren plateaus (for quantum circuits
with enough layers). We leave these questions to future work.

\begin{table}
\caption{\textbf{Frequencies for $\dim(\Cstd^+)$ 
and $\dim[\cent(\Cstd^+)]$
of connected asymmetric graphs with $n$ vertices.}
There are 
$8$, $144$, $3552$, and $131452$ 
of these graphs for $6 \leq n \leq 9$.
    \label{tab:asymmetric}}
    \begin{tabular}{@{\hspace{1mm}}
    l @{\hspace{2mm}}
    l 
    @{\hspace{-6mm}} r
    @{\hspace{2mm}} r
    @{\hspace{2mm}} r 
    @{\hspace{2mm}} r 
    @{\hspace{2mm}} r 
    @{\hspace{2mm}} r 
    @{\hspace{2mm}} r
    @{\hspace{2mm}} r
    @{\hspace{2mm}} r
    @{\hspace{1mm}}}
    \\[-1mm]
    \hline\hline
    \\[-4mm]
    $n$ &&  \multicolumn{9}{@{}l}{Frequencies for $\dim(\Cstd^+)$ and $\dim[\cent(\Cstd^+)]$}
    \\ 
    & $\dim(\Cstd^+)=$ &
    1 & 2 & 3 & 4 & 5 & 6 & 7 & 9 & 10 
    \\[0.5mm] \hline
    \\[-4.5mm]
    6 &&  8 
    \\[-1mm]
    7 && 99 & 45 
    \\[-1mm]
    8 && 2157 & 1086 & 136 &  & 130 & 13 & 1 & 1 & 15 
    \\[-1mm]
    9 && 117711 & 11163 & 927 & 708 & 422 & 229 & 42 & 1 & 69
    \\[1mm]
    & $\dim(\Cstd^+)=$ &
    11 & 12 & 13 & 14 & 17 & 18 & 20 & 21 & 26  
    \\[0.5mm] \hline
    \\[-4.5mm]
    8 && 4 &  &  & 1 & 2 & 2 &  & 1 &
    \\[-1mm]
    9 && 107 & 8 & 3 & 1 & 20 & 12 & 1 &  & 10
    \\[1mm] 
    & $\dim(\Cstd^+)=$ &
    27 & 29 & 37 & 38 & 50 & 51 & 65 & 66 & 82 
    \\[0.5mm] \hline
    \\[-4.5mm]
    8 && &  & 1 & & &  & 1 &  & 1 
    \\[-1mm]
    9 && 6 &  1 & 2 & 4 & 2 & 2 & & 1
    \\[1mm] 
    & $\dim[\cent(\Cstd^+)]=$ &
    1 & 2 & 3 & 4 & 5 
    \\[0.5mm] \hline
    \\[-4.5mm]
    6 &&  8 
    \\[-1mm]
    7 && 99 & 45 
    \\[-1mm]
    8 && 2157 & 1236 & 158 & 1
    \\[-1mm]
    9 && 117711 & 11680 & 1287 & 772 & 2    
    \\[1mm]
    \hline\hline
    \end{tabular}
\end{table}

\section{Connection to the literature\label{sec:literature}}
 Given that our work has some overlap with other recent manuscripts, we begin by first discussing how our work connects to the existing literature. First, we recall that  Ref.~\cite{aguilar2024full} recently provided a characterization of all Lie algebras generated by Pauli operators based on frustration graph techniques~\cite{chapman2020characterization}. While our Lie algebras for the multi-angle case are contained within the families derived in~\cite{aguilar2024full}, the classification based on simple properties of a graph as the one presented here is less clear from the optics of frustration graphs, therefore making our results easier to follow from a maxcut perspective. 
 
Then,  shortly before our work was finalized, the work of Ref.~\cite{bakalov24} presented a classification for Lie algebras generated
by spin interactions on undirected graphs using an approach based on interaction graphs. The results in~\cite{bakalov24} are very close to ours in the sense that the generators are composed from single-qubit operators over vertices of a graph, and of two-qubit operators defined over the vertices of a graph. As such,  the results of Theorem~\ref{thm:free-mixer-DLA-decomposition} are a special case of those in Ref.~\cite{bakalov24} for special choices of operators (i.e., $X$ on vertices and $ZZ$ on edges). Still, while Ref.~\cite{bakalov24} characterizes the \DLAs, our work goes beyond these results as we provide an in-depth description of the
symmetries and invariant subspaces, isomorphisms, as well as explicit bases for the \DLAs. 

Here, we also remark that Refs.~\cite{aguilar2024full,bakalov24} only consider Pauli generators, and not summations thereof. In this context, we instead recall that Ref.~\cite{allcock2024dla} does study Lie algebras for the standard mixer (although they do not consider the multi-angle QAOA ansatz). Therein, the authors present an upper bound on the dimension of the standard \DLA by defining a \DLA that respects the automorphisms of the considered graph. 
In this sense, our upper bound of Proposition~\ref{thm_dla_chain} in terms of the natural \DLA (which respects automorphism group and parity) is in general 
tighter than that of Ref.~\cite{allcock2024dla}. 
But our and their upper bound agree for the complete graph as
they consider further symmetries (as we do). 
The results of \cite{DAlessandro2024,allcock2024dla} for cycle graphs and 
of \cite{allcock2024dla} for
complete graphs go beyond the upper bounds
presented in our work. 
The path graphs are not considered in \cite{DAlessandro2024,allcock2024dla}.
For the cycle graphs, we also point the reader to the work of Onsager~\cite{Onsager44}.

\section{Discussion\label{sec:discussion}}

Computing the \DLA of parametrized circuits has become a central and important tool for the study and characterization of variational quantum models. In this context, our work contributes to this body of knowledge by evaluating the Lie algebras  for three QAOA ansätze. Our main result constitute a full characterization of the multi-angle \DLA for arbitrary circuits. Then, we argue that the presence of hidden symmetries for the orbit and the standard ansätze will likely make a general classification unlikely. Here, we instead define the natural \DLA---which respects only the natural symmetries---and use it to upper bound the dimension of the orbit and the standard-ansatz \DLAs. Importantly, we find evidence that the largest component of the invariant subspaces of the standard-ansatz \DLA is of very similar dimension to that of the natural \DLA (i.e., their difference may be only polynomially in the number of vertices), and conjecture that this result could hold for the vast majority of graphs. 

Our results have several important implications for QAOA, as well as for variational algorithms in general. On the one hand, we show that the multi-angle QAOA ansatz leads to exponentially large \DLAs, and therefore is extremely prone to exhibiting trainability barriers such as barren plateaus, even when the circuit contains a single layer. Then, if our conjecture is true, this would mean that the standard QAOA \DLA has only exponentially large, or polynomially small components. In this case, the prospects of QAOA become bleak as its action on the polynomial subspace could be classically simulable, while that in the exponentially large subspace could be prone to barren plateaus. As QAOA can likely only significantly outperform state-of-the-art classical methods with a deep enough circuit~\cite{crooks2018performance}, one would almost be guaranteed to have untrainable models in the regimes where they become truly useful.

A caveat to this argument is the fact that barren plateaus are an average statement that hold for randomly initialized circuits, whereas smart initialization techniques could avoid such average-case issues~\cite{larocca2024review}. Indeed, \cite{zhou2020quantum,FarhiQuantum2022,boulebnane2021,Basso2022TQC,Basso2022TQC_arXiv}
have proposed pre-optimized initial angles 
for $D$-regular graphs.
This is possible as
measured cut values
concentrate---in the infinite vertex limit---tightly around maxima both with repeated measurements
and varying graph instances \cite{brandao2018fixed,boulebnane2021,Basso2022TQC,Basso2022TQC_arXiv,
boulebnane,Basso2022,Basso2022arXiv}. 
For this analysis, 
properties of $D$-regular graphs are connected (in the infinite vertex limit) with the Sherrington-Kirkpatrick model of complete graphs with randomly
weighted edges \cite{FarhiQuantum2022,dembo2017,boulebnane2021,Basso2022TQC,Basso2022TQC_arXiv,
boulebnane} (refer to \cite{blekos2024} for a review). In particular,
the optimal angles for the problem Hamiltonian in Eq.~\eqref{eq:std:problem}
are rescaled with $1/\sqrt{D}$ as $D$ increases while 
the angles for the mixer Hamiltonian in Eq.~\eqref{eq:std:mixer} are not rescaled
\cite{boulebnane2021,Basso2022TQC,Basso2022TQC_arXiv,boulebnane}.
For proving this, the $D$-regular graph
is assumed to be similar to a tree, where its girth (which is the length of
its shortest cycle) is larger than $2L{+}1$ for $L$ layers.
The effect of rescaling the optimal angles for the problem Hamiltonian can be clearly seen, e.g., in Figure~9 of \cite{Vijendran2024}
where the size of the barren plateau is increasing with $D$. A similar effect is apparent in Figure~\ref{fig:scaling} of 
Sec.~\ref{sec:free:implications}. The rescaling can thus be interpreted as
a concentration of the cut values with respect to variations in the angles for large-girth $D$-regular graphs
in the infinite vertex limit.

All of this nicely fits with barren plateaus for the cut value where
the variance in the angles vanishes exponentially fast 
with the number of vertices. Thus our work aims
at arbitrary connected graphs (and particularly asymmetric ones)
and not only at $D$-regular ones with a large girth.
Even though barren plateaus have only been shown for the multi-angle ansatz, our work prepares the ground for a more
general symmetry analysis. Connections to the study of smart-initializations are left for future work.

More generally, there has been a string of recent work
highlighting limitations of QAOA \cite{bravyi2020obstacles,farhi2020typical,farhi2020wholegraph,chou2022,Basso2022,Wilhelm23,Scriva2024,
MunozArias2024,gerblich2024advantages}, e.g.,
as a result of clustering of solutions \cite{farhi2020typical,chou2022}, or the fact that QAOA can be outperformed by quantum walks~\cite{gerblich2024advantages}, or the detrimental effect of intrinsic shot noise in the measurements for the optimization \cite{Scriva2024},
or by quantum-inspired classical algorithms~\cite{Wilhelm23,MunozArias2024}. On the classical side, there exists highly competitive 
algorithms \cite{rendl2010,dunning2018,juenger2021,nguyen2021,rehfeldt2022,charfreitag2022}, also including
approximation algorithms
\cite{williamson2011,papadimitrou1991,goemans1995improved,arora1999,
jansen2005,gharibian2019,gaar2020}. Moreover, there has been a revolution in solving
satisfiability problems (to which maxcut can be reduced to)
\cite{fichte2023,knuth4B}. This means that QAOA is under pressure
due to its seeming limitations and its strong classical competition.

Then, another important implication of our work is the fact that it unveils the crucial importance of the encoding scheme used to embed classical problems in quantum ones. Indeed, while the standard maxcut encoding seems extremely simple and natural, as it uses local operators that respect 
the parity superselection rule and graph automorphisms, it also leads to hidden symmetries. These spurious symmetries, in turn, lead to unexpected invariant subspaces. Critically, the understanding of these symmetries, and their ensuing effects on the irreducible decompositions, is crucial, as they could avoid using initial states that are not mostly constrained to some subspace which could potentially have a small (or no) overlap with  the solution manifold.  

Indeed, our work also contributes to developing symmetry tools for a systematic analysis of variational quantum algorithms.
Building on tools from \cite{zeier2011symmetry,ZZKS14,zeier2015squares,zimboras2015symmetry,schulte2017,schulte2018},
our symmetry analysis can be seen as an initial step towards a better understanding
of variational quantum algorithms and their strengths and limitations.
For QAOA, our work connects methods based on graph automorphisms as in \cite{shaydulin2021symmetry,shaydulin2021classical}
with their impact on the symmetries of the full Hilbert space. We have modeled this using our natural symmetries, also 
incorporating the parity-superselection operator $X\tn$. The presence of hidden symmetries
is often related to one-dimensional invariant subspaces and this is curiously connected
to quantum many-body scars~\cite{moudgalya2020,moudgalya2022hilbert, moudgalya2022exhaustive},
for which one-dimensional invariant subspaces are a necessary condition.
Figure~\ref{fig:graph:nine} highlights an example
with hidden symmetries resulting in invariant subspaces of dimension larger than one
which cannot be fully explained by quantum many-body scars.

The graph properties also play an important role at the interplay between symmetry and locality and this interplay
has been extensively studied in recent years
\cite{marvian2022restrictions, marvian2022rotationally, marvian2022qudit, marvian2023non, kazi2023universality}.
Thus local unitaries that act only on a fixed number of qubits at a time and that also respect a certain symmetry group 
cannot in general generate all unitaries that respect that symmetry group, which is in contrast to
the universality of one- and two-qubit quantum gates \cite{divincenzo1995two,lloyd1995almost}.
Such locality restrictions (as studied in \cite{Kraus2007,zeier2011symmetry,ZZKS14,wiersema2023}) are also exemplified
by the parity-superselection symmetry operator $X^{\otimes n}$: it is not contained in the multi-angle \DLA,
despite obviously respecting its symmetries.
This can be understood as a limitation on the control of the
relative phase between the even-parity and odd-parity sectors \cite{marvian2022restrictions}.
Or in a more algebraic interpretation, the center of the \DLA is highly constraint which has been identified 
as a potential limitation for the ability to simulate certain quantum dynamics \cite{zimboras2015symmetry}.
Section~\ref{sec:lie:hierarchy} highlights
the intricacies
related to the role that the center of the \DLA plays in the hierarchy of QAOA ansätze.

In the grander scheme of things, we believe that our work bring a new and fresh perspective to the \DLAm analysis 
of quantum circuits
through the optics of symmetries. As such,  we hope that the insights presented here could serve as guiding principle towards the final frontier: \DLAs that arise from generators expressed as sums of Paulis.     

\begin{acknowledgments}

RZ thanks Thomas Schulte-Herbrüggen, Zoltán Zimborás, and Michael Keyl for many discussions and insights
on symmetries of controlled quantum systems. RZ also appreciates the illuminating discussions
with Armin Römer and Juhi Singh on related projects as well as
with Nikkin Devaraju on the original work of \cite{farhi2014quantum}.
Last but not least, RZ thanks Roberto Gargiulo for many closely related discussions
which have also led to the follow-up work \cite{gargiulo24}. We acknowledge computations
with the computer algebra system \textsc{Magma} \cite{magma1997}.

SK acknowledges initial support by the U.S. DOE through a quantum computing program sponsored by the LANL Information Science \& Technology Institute. 
ML was also supported by the Center for Nonlinear Studies at Los Alamos National Laboratory (LANL). ML and MC acknowledge support by the Laboratory Directed Research and Development (LDRD) program of LANL under project numbers 20230049DR and 20230527ECR. MC was initially supported by LANL's ASC Beyond Moore’s Law project.
RZ acknowledges funding
under Horizon Europe programme HORIZON-CL4-2022-QUANTUM-02-SGA via the project 
\href{https://doi.org/10.3030/101113690}{101113690} (PASQuanS2.1)
and from the European High-Performance Computing Joint Undertaking (JU) under grant agreement No 
\href{https://doi.org/10.3030/101018180}{101018180} (HPCQS). The JU receives support from the 
European Union’s Horizon 2020 research and innovation programme and Germany, France, Italy, Ireland, Austria, Spain.
\end{acknowledgments}

\appendix

\section{Methods for analyzing symmetries\label{app:symmetry:theory}}

\subsection{Symmetries and isotypical decomposition\label{appendix:symmtry:analysis}}

We extend Section~\ref{sec:dla} by further detailing how to
analyze symmetries
while pointing to the house graph and the explicit information
in Table~\ref{table:house:graph} (see also Fig.~\ref{fig:decomposition:house}).
Recall that $\g = \lie{i\mathcal{G}}$ denotes the (real) \DLA
generated from a set $\mathcal{G}$ of hermitian generators.
Note that $\g$ is contained in the \DLA $\uu(d) \subseteq \C^{d\times d}$
of skew-hermitian matrices for $d=2^n$. We obtain the Lie group $\exp(\g) \subseteq \Ubb(d)$
which is contained in the unitary group.

The linear symmetries have been specified by the 
commutant $\CC = \com(\mathcal{G})$
[see Eq.~\eqref{eq:commutant}]
which is closed under complex-linear combinations and matrix multiplication.
We also introduce the matrix algebra $\mathcal{A}= \alg{\mathcal{G}} = \alg{i\mathcal{G}}$
that contains all complex-linear combinations of 
the identity matrix $\unity_d \in \C^{d\times d}$ and
arbitrary products of generators in $\mathcal{G}$. The properties
\begin{align}
\CC & =\com(\mathcal{A}) = \com(\mathcal{G}) =\com(\g) \,\text{ and} \label{eq:dual:a}\\
\mathcal{A} & = \com(\CC) = \com(\com(\mathcal{A})) \neq \g \label{eq:dual:b}
\end{align}
establish
a duality between $\mathcal{A}$ and $\CC$. Equation~\eqref{eq:dual:b}
is the double commutant property of $\mathcal{A}$.
We point to textbooks in algebra
\cite{bresar2014,lorenz2008,jacobson1985,jacobson1989}
and representation theory 
\cite{lux2010,curtisreiner1962,curtisreiner1981}
as well as original work by Emmy Noether
\cite{noether1929,noether1933,noether1983},
which has been publicized by
Bartel Leendert van der Waerden \cite{waerden1991}
and Hermann Weyl \cite{weyl1936,weyl1953}. 

The invariant subspaces induced by the actions
of $\mathcal{G}$, $\mathcal{A}$, $\g$, and $\exp(\g)$ 
all agree
and they decompose into irreducible subspaces
as $\exp(\g)$ is contained in the unitary group.
Moreover, the centers $\cent(\mathcal{A})= \cent(\CC) = \mathcal{A} \cap \CC$
are equal, where [generalizing the definition in Eq.~\eqref{eq:center:commutant}]
\begin{equation*}
\cent(\mathcal{M}) = \{ Z \in \mathcal{M} \;\text{s.t.}\; [Z,M]=0 \;\text{for all}\; M \in \mathcal{M}\}.
\end{equation*}
This implies that
the actions of $\mathcal{A}$ and $\CC$ 
induce the same isotypical decomposition $\C^d=\oplus_{\lambda} {\mathcal{I}}_{(\lambda)}$
as described now:
An \emph{isotypical decomposition} 
consists of invariant subspaces ${\mathcal{I}}_{(\lambda)}$ that are the 
direct sum of all irreducible subspaces isomorphic to one particular irreducible subspace.
The isotypical decomposition refines in general into the irreducible decomposition.
However, for the free and the standard ansatz corresponding to the house graph (see Table~\ref{table:house:graph}
and Fig.~\ref{fig:decomposition:house}),
the isotypical and the irreducible decomposition 
coincide and one respectively obtains 
$\C^{16} \oplus \C^{16}$
and $\C^{10} \oplus \C^{5} \oplus \C \oplus \C^{10} \oplus \C^{5} \oplus \C$.
Also, the matrix algebra $\mathcal{A}$ is respectively isomorphic to
$\C^{16\times 16} \oplus \C^{16\times 16}$ and
$\C^{10\times 10} \oplus \C^{5\times 5} \oplus \C^{1\times 1} 
\oplus \C^{10\times 10} \oplus \C^{5\times 5} \oplus \C^{1\times 1}$ and
it has the respective dimension $512$ and $252$.

But the fine structure in the isotypical components ${\mathcal{I}}_{(\lambda)}$ differs for the action of $\mathcal{A}$ and $\CC$
as ${\mathcal{I}}_{(\lambda)}$ contains irreducible subspaces isomorphic to
${\mathcal{I}}_{\lambda}^{\mathcal{A}}$ and ${\mathcal{I}}_{\lambda}^{\CC}$
with 
\begin{align*}
m_{\lambda} &:= \mult({\mathcal{I}}_{\lambda}^{\mathcal{A}},\C^d) = \dim({\mathcal{I}}_{\lambda}^{\CC})\,,\\
d_{\lambda} &:= \dim({\mathcal{I}}_{\lambda}^{\mathcal{A}}) = \mult({\mathcal{I}}_{\lambda}^{\CC},\C^d).
\end{align*}
Here, the \emph{multiplicity} $\mult({\mathcal{J}},{\mathcal{I}})$
counts how many times an irreducible ${\mathcal{J}}$
is equivalent to an irreducible ${\mathcal{I}}_j$ in the decomposition
${\mathcal{I}} =\oplus_j {\mathcal{I}}_j$.
The dimensions $d_{\lambda}$ and multiplicities $m_{\lambda}$ 
are determined by the commutant $\CC$.
Complementing Eq.~\eqref{invariants_basis}, we have
\begin{equation*}
\mathcal{A} \simeq  \hspace{-3mm} \bigoplus_{\lambda=1}^{\dim[\cent(\CC)]}  \hspace{-3mm}
\unity_{m_{\lambda}} \otimes \C^{d_{\lambda}\times d_{\lambda}}.
\end{equation*}

We now discuss a few prototypical examples
in order to highlight the intricacies in identifying \DLAs and their
representations via their linear symmetries.
We first consider 
three two-qubit examples which stress that
the \DLA $\g$ is not uniquely determined even
if the generators act irreducibly, i.e., even if $\dim(\CC)=1$ \cite{zeier2011symmetry}.
Let 
\begin{align*}
\mathcal{G}_a &:= \mathcal{G}_b \cup \{Z_1Z_2\},\;
\mathcal{G}_b :=\{X_1, X_2, Z_1, Z_2\},\;\text{and}\\
\mathcal{G}_c &:=\{
\left(
\begin{smallmatrix}
0 & \sqrt{3}/2 & 0 & 0\\
\sqrt{3}/2 & 0 & 1 & 0\\
0 & 1 & 0 & \sqrt{3}/2\\
0 & 0 & \sqrt{3}/2 & 0
\end{smallmatrix}
\right),
\left(
\begin{smallmatrix}
{3}/{2} & 0 & 0 & 0\\
0 & {1}/{2} & 0 & 0\\
0 & 0 & -{1}/{2} & 0\\
0 & 0 & 0 & -{3}{2}
\end{smallmatrix}
\right)
\}
\end{align*}
denote the corresponding generators.
We obtain three different Lie
algebras of dimension $15$, $6$, and $3$ which are isomorphic 
to $\su(4)$, $\su(2) {\oplus} \su(2)$, and $\su(2)$. We detail how
these \DLAs are irreducibly embedded into $\C^{4\times 4}$.
Recall the standard representation $\kappa$ of a \DLA.
For the three examples, the representations are $\kappa$, $\kappa {\otimes} \kappa$,
and the spin-$3/2$ representation, respectively. Here, 
\begin{equation*}
(\gamma \otimes \tilde{\gamma})(g,\tilde{g}):=\gamma(g) \otimes \unity_{\dim(\tilde{\gamma})} + \unity_{\dim(\gamma)} \otimes \tilde{\gamma}(\tilde{g})
\end{equation*}
is the 
tensor product of the representations $\gamma$ and $\tilde{\gamma}$ for the \DLAs $\mathfrak{k}$ and $\tilde{\mathfrak{k}}$
with  $g \in \mathfrak{k}$ and $\tilde{g} \in \tilde{\mathfrak{k}}$. In summary, the examples
$\mathcal{G}_a$, $\mathcal{G}_b$, and 
$\mathcal{G}_c$ illustrate three prototypical variants on how generators and the generated \DLA can act irreducibly.

Extending the discussion from an irreducible action to a block-diagonal action with two blocks,
we examine three three-qubit examples with an irreducible decomposition $\C^4 {\oplus} \C^4$.
The corresponding generators are
\begin{align*}
\mathcal{G}_d &:= \{X_2, X_3, Z_2, Z_3, Z_2Z_3\} {\times} (\unity_8{\pm}Z_1),\\
\mathcal{G}_e &:= \{X_2, X_3, Z_2, Z_2Z_3, Z_1X_2X_3\},\\
\mathcal{G}_f &:=  \{X_2, X_3, Z_2, Z_3, Z_2Z_3\}.
\end{align*}
For the first two examples, $\dim(\CC)=\dim(\cent(\CC))=2$ and
the isotypical and the irreducible decompositions agree. We have $\dim(\CC)=4$
and $\dim(\cent(\CC))=1$ for the third example. This results in an isotypical
decomposition with a single isotypical component $\C^8$ which contains the irreducible component
$\C^4$ with multiplicity two. The corresponding \DLAs have dimensions $30$, $15$, and $15$
and are isomorphic to $\su(4) {\oplus} \su(4)$, $\su(4)$, and $\su(4)$. We obtain the respective representations
$[\kappa{\otimes}\epsilon]{\oplus}[\epsilon{\otimes}\kappa]$,
$\kappa{\oplus}\overline{\kappa}$, and 
$\kappa{\oplus}\kappa$. Here, $\epsilon$ denotes the trivial representation
and $\overline{\gamma}$ is the dual of a representation $\gamma$.
One observes characteristic differences and that even the isomorphic \DLAs
of $\mathcal{G}_e$ and $\mathcal{G}_f$ are represented differently.

With all these examples, it is evident that linear symmetries
and the commutant 
will be in general not sufficient to resolve the
intricate structure of possibly occurring \DLAs.
Additional techniques are required to analyze general
dynamical quantum systems.
The intuition leading to the results 
in Appendices~\ref{appendix:free-mixer} and
\ref{appendix:free-mixer:bases} for the free-mixer ansatz
is to certain degree based on so-called
quadratic symmetries \cite{zeier2011symmetry,
zeier2015squares,zimboras2015symmetry,schulte2017,schulte2018}.

\subsection{Details on natural and hidden symmetries\label{app:natural:hidden}}

In this subsection,
we further discuss the structure of natural and hidden symmetries following 
the approach of Sec.~\ref{SEC:STD}. To this end, we consider
the symmetry decomposition in a more intricate standard-ansatz QAOA example following Fig.~\ref{fig:decomposition:four}.
Here, we also rely on the isotypical
decomposition as detailed in Appendix~\ref{appendix:symmtry:analysis}
and the corresponding isotypical projectors that project onto the respective 
isotypical component. 

\begin{figure}[t]
\includegraphics{decomposition-four.pdf}
\caption{\textbf{Symmetry decomposition of a standard-ansatz QAOA example.} Notation as in Fig.~\ref{fig:decomposition:house},
but we discuss a case that is not multiplicity-free: the isotypical projector (see App.~\ref{appendix:symmtry:analysis})
$P_{2.1}{+}P_{2.2}$ splits into the projectors $P_{2.1}$ and $P_{2.2}$ and cross terms $C_2^{k|\ell}$ appear in $\Cstd$
(see also Fig.~\ref{fig:matrices:four} in App.~\ref{app:projections}).
Relevant parts are marked in blue in (a) and (b). For the natural symmetries in (c), $X^{\otimes 4}\zeta[(1,2)(3,4)]$ does not appear
in the basis of natural symmetries $\Snat$ as it is linear dependent to the other basis elements. Thus $\dim(\Snat)< 2 \abs{\Aut}$.
\label{fig:decomposition:four}}
\end{figure}

Figure~\ref{fig:decomposition:four} highlights a four-vertex graph with an automorphism group
consisting of four elements. The symmetries given by the commutant $\Cstd$
are ten dimensional, while the corresponding center $\cent(\Cstd)$ is
seven dimensional. We observe that the isotypical projector
$P_{2.1}{+}P_{2.2}$ splits into the one-dimensional projectors $P_{2.1}$ and $P_{2.2}$, which are explicitly specified
in Fig.~\ref{fig:decomposition:four}(b). This implies that the corresponding representation is not multiplicity-free, i.e., the multiplicity $m_{\lambda}$ in the decomposition of Eq.~\eqref{invariants_basis} is not equal to one.
Moreover, cross terms $C_2^{k|\ell}$ appear in $\Cstd$ as specified in Fig.~\ref{fig:decomposition:four}(b).
As for the example of the house graph in Fig.~\ref{fig:decomposition:house}, the basis change in Fig.~\ref{fig:decomposition:four}(d)
clarifies that the division into natural and hidden symmetries is not unique. The red, nonzero entries in the third-last row of
Fig.~\ref{fig:decomposition:four}(d)
show that both $P_6$ and $P_7$ are hidden symmetries, but $P_6$ and $P_7$ are---up to natural symmetries---linear-dependent.
The projectors $P_{2.1}$, $P_{2.2}$, and $P_3$ are examples of one-dimensional projectors that are part of the natural symmetries
as one can infer from the basis change in Fig.~\ref{fig:decomposition:four}(d).
Even this simple, low-dimensional example makes the intricate structure of standard-ansatz symmetries 
and their variability evident.

We finally characterize the one-dimensional projectors and the corresponding 
cross terms in Fig.~\ref{fig:decomposition:four} while utilizing
the corresponding simultaneous eigenvectors 
$\ket{\psi_j}$
for a given set of Hamiltonians $\mathcal{G}$ 
following Eq.~\eqref{eq:sim:eigen} in Section~\ref{SEC:STD}. In particular,
\begin{equation*}
H\ket{\psi_j} = \beta(H,\ket{\psi_j})\ket{\psi_j}\, \text{ holds for every }\, H\in \mathcal{G}.
\end{equation*}
Clearly, the one-dimensional subspace spanned by the vector $\ket{\psi_j}$ is invariant under the action
of every $H$ and the same holds for its orthogonal complement. Thus one
verifies
$H \ketbra{\psi_j}{\psi_j} - \ketbra{\psi_j}{\psi_j} H = 0$ and the simultaneous eigenvectors 
$\ket{\psi_j}$ can be chosen as part of an orthogonal basis. One immediately obtains
\begin{align*}
& (H \ketbra{\psi_j}{\psi_k} - \ketbra{\psi_j}{\psi_k} H)\, \ket{\psi_k}\\
=\,& H \braket{\psi_k}{\psi_k} \ket{\psi_j} - \ketbra{\psi_j}{\psi_k} \beta(H) \ket{\psi_k}\\
=\,& [\beta(H, \ket{\psi_j})- \beta(H,\ket{\psi_k}) ] \braket{\psi_k}{\psi_k} \ket{\psi_j}
\end{align*}
and it follows that $\ketbra{\psi_j}{\psi_k}$ is in the commutant of $\mathcal{G}$
if and only if the eigenvalues $\beta(H,\ket{\psi_j})$ and 
$\beta(H,\ket{\psi_k})$ are equal for each $H \in \mathcal{G}$.
In other words, the eigenvalues $\beta(H,\ket{\psi_j})$
of the simultaneous eigenvectors $\ket{\psi_j}$ can used
to uniquely identify the corresponding one-dimensional
representations up to multiplicity.

\section{Analysis of the free-mixer ansatz\label{appendix:free-mixer}}

This appendix proves Theorem~\ref{thm:free-mixer-DLA-decomposition}
and thereby determines the free-mixer \DLAs for connected graphs. 
We then establish the free-mixer \DLAs for the particular cases of 
path graphs (see Thm.~\ref{app:thm:path}), cycle graphs (see Thm.~\ref{app:thm:cycle}),
connected bipartite graphs different from path and cycle graphs (see Thm.~\ref{thm:free-mixer-bipartite}),
as well as connected non-bipartite graphs
different from cycle graphs (see Thm.~\ref{thm:free-mixer-revisited}).
The results are summarized in 
Table~\ref{tab:free-mixer-full-table},
while the general discussion of explicit Pauli string bases is deferred to Appendix~\ref{appendix:free-mixer:pauli}.
We start with some preliminaries in Appendix~\ref{app:free:preliminaries}
and summarize two techniques to characterize \DLAs
in Appendix~\ref{app:free:irred} where the first technique is applicable if the \DLA acts irreducibly and
the second one is based on maximal subalgebras.
Path and cycle graphs are then treated in Appendix~\ref{appendix:sub:pathcycle}, while 
Appendix~\ref{appendix:sub:bipartite} contains the most extensive discussion detailing the case of bipartite graphs.
The non-bipartite graphs are considered in Appendix~\ref{appendix:sub:non:bipartite}. The explicit representations connected
to the various cases of free-mixer \DLAs are determined in Appendices~\ref{appendix:explicit:reps:path:cycle}
and \ref{appendix:explicit:reps:not:path:cycle}.
The approach in this appendix is complemented in Appendix~\ref{appendix:free-mixer:bases} with a focus
on bases of Pauli strings and their relation to edges of graphs.

Some of the results and the applied techniques are reminiscent of those in 
\cite{zeier2011symmetry,ZZKS14}. In particular, the maximal \DLA $\su(2^{n-1}){\oplus}\su(2^{n-1})$
(see Lemma~\ref{app:prop:structure} and Thm.~\ref{thm:free-mixer-revisited}) is reflected
in the parity superselection rule for fermionic systems as discussed in \cite{ZZKS14,BCW2007,WWW1952}.
For path and cycle graphs,  we obtain the free-mixer
\DLAs $\so(2n)$ and $\so(2n){\oplus}\so(2n)$ 
with their particular embeddings into $\su(2^n)$, which
manifest as spinor representations of the orthogonal group \cite{BrauerWeyl1935}
and which are widely encountered as symmetries in physics 
\cite{Murnaghan1938,Kaufman1949,Boerner1969,Miller1972,SW86,FH91,Georgi1999,Zee2016}.

\begin{table*}
    \caption{Basis, isomorphism type, and dimension of the free-mixer \DLA
    for path, cycle, and connected bipartite graphs, as well as connected non-bipartite graphs different from cycle graphs.
    Here, $\#\mathrm{X}$, $\#\mathrm{Y}$, $\#\mathrm{Z}$,
    and $\#\mathrm{I}$ denote respectively the number of $\mathrm{X}$, $\mathrm{Y}$, $\mathrm{Z}$, and $\mathrm{I}$ in a Pauli string.
    For instance, $\#\mathrm{X}|_{V_1}$ indicates the number of $\mathrm{X}$ in $V_1$ for a vertex bipartition $V=V_1 {\uplus} V_2$
    with $\abs{V}=n\geq 2$.
    \label{tab:free-mixer-full-table}}
    \begin{tabular}{@{\hspace{1mm}}l@{\hspace{-14mm}}r@{\hspace{5mm}}l@{\hspace{-3mm}}r@{\hspace{3mm}}c@{\hspace{3mm}}l@{\extracolsep{1mm}}r@{\hspace{1mm}}}
    \\[-6mm]
    \\ \hline\hline
    \\[-4mm]
    & & & \multicolumn{3}{@{}c@{}}{Connected} & Connected \\[-1mm]
    & \textbf{Path graph} & \textbf{Cycle graph} & \multicolumn{3}{@{}c@{}}{\textbf{bipartite graph} (with $V=V_1 {\uplus} V_2$)} & \textbf{non-bipartite graph} \\[-1mm]
    & & & \multicolumn{3}{@{}c@{}}{$\neq$ cycle or path graph} & $\neq$ cycle graph
    \\[1mm] \hline
    \\[-3.5mm]
    \textbf{Pauli strings} &
    $\mathrm{I}\cddot\mathrm{I}\mathrm{X}\mathrm{I}\cddot\mathrm{I}$
    &
    $\mathrm{I}\cddot\mathrm{I}\mathrm{X}\mathrm{I}\cddot\mathrm{I}$,\, $\mathrm{X}\cddot\mathrm{X}\mathrm{I}\mathrm{X}\cddot\mathrm{X}$
    &
    \multicolumn{3}{@{}c@{}}{$\#\mathrm{Y} + \#\mathrm{Z}$ is even and}
    &
    $\#\mathrm{Y} + \#\mathrm{Z}$ is even and
    \\
    &
    $\mathrm{I}\cddot\mathrm{I}\yz\mathrm{X}\cddot\mathrm{X}\yz\mathrm{I}\cddot\mathrm{I}$
    &
    $\mathrm{I}\cddot\mathrm{I}\yz\mathrm{X}\cddot\mathrm{X}\yz\mathrm{I}\cddot\mathrm{I}$
    &
    \multicolumn{3}{@{}c@{}}{ $\#\mathrm{I}\neq n$ and $\#\mathrm{X}\neq n$ and}
    & 
    $\#\mathrm{I}\neq n$ and $\#\mathrm{X}\neq n$
    \\
    &
    &
    $\mathrm{X}\cddot\mathrm{X}\yz\mathrm{I}\cddot\mathrm{I}\yz\mathrm{X}\cddot\mathrm{X}$
    &
    \multicolumn{3}{@{}c@{}}{$\#\mathrm{X}+ \#\mathrm{Y}|_{V_1} + \#\mathrm{Z}|_{V_1}$ is odd}
    \\[1mm]
    & & & \textbf{even-even} & \textbf{even-odd} & \textbf{odd-odd} \\ \cline{4-6}
    \\[-4mm]
    \textbf{\DLA}   & $\so(2n)$ & $\so(2n){\oplus}\so(2n)$
    & $\so(2^{n-1}){\oplus}\so(2^{n-1})$ & $\su(2^{n-1})$ & $\usp(2^{n-1}){\oplus}\usp(2^{n-1})$ &
    $\su(2^{n-1}){\oplus}\su(2^{n-1})$\\[1mm]
    \textbf{Lie dimension} & $2n^2{-}n$ & $4n^2{-}2n$ & 
    $2^{2n-2}{-}2^{n-1}$ & $2^{2n-2}{-}1$ & $2^{2n-2}{+}2^{n-1}$ & $2^{2n-1}{-}2$
    \\[1mm]
    \hline\hline
    \\[-2mm]
    \end{tabular}
\end{table*}

\subsection{Preliminaries\label{app:free:preliminaries}}
In Appendices~\ref{appendix:free-mixer} and \ref{appendix:free-mixer:bases},
$G$ denotes a graph and $V$ its vertex set,
which is usually given by $\{1,\ldots,n\}$,
while $n:=\abs{V}$ and
$E \subset V {\times} V$  describes the edge set
(and similarly for graphs $\tilde{G}$ and $\bar{G}$).
Let $X, Y,  Z \in \C^{2\times 2}$ denote the Pauli operators
and $I\in \C^{2\times 2}$ the identity operator.
The corresponding operators acting on the $u$-th qubit
are $X_u, Y_u, Z_u, I_u \in  \C^{2^n \times 2^n}$, e.g.\
$X_u:= I^{\otimes u-1} \otimes X  \otimes I^{\otimes n-u}$.
We will usually represent a \DLA $\g \subseteq \uu(2^n)$ 
explicitly by $2^n \times 2^n$ skew-hermitian matrices
[and similarly for $\g \subseteq \su(2^n)$]. 
In this work, a Pauli string is given by a tensor-product
operator of the form
$P_j=\bigotimes_{u\in V} A_u$ with  $A_u
\in \{X, Y, Z, I\}$ and the corresponding Lie-algebra element is $i P_j$.
As in Section~\ref{SEC:FREE-ANSATZ}, the free-mixer \DLA
for a connected graph with $n\geq 2$ is given by
\begin{align}
\gfree &:= \lie{iX_u \text{ for } u\in V;\; iZ_uZ_v  \text{ for } \{u,v\} \in E} \label{freegenerators}\\
&\phantom{:}= \lie{iX_u \text{ for } u\in V;\hspace{-1mm} \sum_{\{u,v\} \in E} \hspace{-1mm} iZ_uZ_v} =: \tilde{\g}_{\rm{free}}. \label{nonfreegens}
\end{align}
The generators $iZ_uZ_v$
span $\sum_{\{u,v\} \in E} iZ_uZ_v$
and $\gfree \supseteq \tilde{\g}_{\rm{free}}$.
Given an edge $(a,b)\in E$,
$[iX_b,\allowbreak{}[iX_a,\allowbreak{}[iX_b,\allowbreak{}[iX_a,\allowbreak{}\sum_{\{u,v\}\in E}iZ_uZ_v]]]]=\allowbreak{}16 i Z_a Z_b$
and $\gfree = \tilde{\g}_{\rm{free}}$.
Having defined $\gfree$, we prove its following properties

\begin{lemma}\label{app:prop:center}
(a)~The set $\com(\gfree)$
of complex matrices
commuting with $\gfree$ is spanned by $I^{\otimes n}$ and $X^{\otimes n}$.
(b)~$iI^{\otimes n}, iX^{\otimes n} {\notin} \gfree$.
(c)~The center $\cent(\gfree):=\com(\gfree) \cap \gfree$
of $\gfree$ is trivial and $\gfree$ is semisimple.
\end{lemma}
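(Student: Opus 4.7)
The plan is to establish (a), (b), and (c) sequentially, with a restricted-trace obstruction on the $+1$ eigenspace of $X\tn$ doing the bulk of the work for (b) and (c).

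For \textbf{part (a)}, I would expand any $M \in \com(\gfree)$ in the Pauli basis as $M = \sum_P c_P P$. Since Pauli strings are linearly independent and each either commutes or anticommutes with a given single-qubit $X_u$, the condition $[M, X_u]=0$ for every $u\in V$ forces the $u$-th tensor factor of every $P$ with $c_P \neq 0$ to lie in $\{I, X\}$, so $M$ reduces to a combination of the $2^n$ strings $\prod_{u \in S} X_u$ indexed by $S \subseteq V$. Imposing next $[M, Z_uZ_v]=0$ for every edge $\{u,v\}\in E$ requires $|S \cap \{u,v\}|$ to be even, i.e., $u \in S \Leftrightarrow v \in S$, and connectedness of $G$ then forces $S = \emptyset$ or $S = V$. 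Thus $\com(\gfree) \subseteq \spnC\{I\tn, X\tn\}$, and the reverse inclusion is immediate.

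For \textbf{part (b)}, the claim $iI\tn \notin \gfree$ follows from $\gfree \subseteq \su(2^n)$, a consequence of the generators being traceless combined with the fact that Lie brackets and real-linear combinations preserve full tracelessness, while $\operatorname{Tr}(iI\tn) = i\, 2^n \neq 0$. The case $iX\tn$ needs a sharper argument because $\operatorname{Tr}(iX\tn)=0$. Since part (a) places $X\tn$ in $\com(\gfree)$, the $\pm 1$ eigenspaces $\HC_{+}$ and $\HC_{-}$ of $X\tn$ are $\gfree$-invariant and the restricted traces $\operatorname{Tr}_{\pm}$ on them are well-defined. Using
\begin{equation*}
\operatorname{Tr}_{+}(A) = \tfrac{1}{2}\bigl[\operatorname{Tr}(A) + \operatorname{Tr}(A\, X\tn)\bigr],
\end{equation*}
I would check directly on the generators that $\operatorname{Tr}_{+}(iX_u) = \operatorname{Tr}_{+}(iZ_uZ_v) = 0$: both $\operatorname{Tr}$ terms vanish because $X_u\, X\tn$ and (up to a scalar) $Y_u Y_v \prod_{w \neq u,v} X_w$ are Pauli strings containing at least one nontrivial $X$- or $Y$-factor, hence are traceless for $n\geq 2$. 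Tracelessness on $\HC_{+}$ then propagates to all of $\gfree$ by linearity and by the identity $\operatorname{Tr}_{+}([A,B]|_{\HC_{+}})=0$. Since $iX\tn|_{\HC_{+}} = i\, \unity_{\HC_{+}}$ has $\operatorname{Tr}_{+}(iX\tn) = i\, 2^{n-1} \neq 0$, we conclude $iX\tn \notin \gfree$.

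For \textbf{part (c)}, by (a) any $A \in \cent(\gfree) = \com(\gfree) \cap \gfree$ has the skew-hermitian form $A = \alpha\, iI\tn + \beta\, iX\tn$ with $\alpha,\beta \in \R$. The restricted-trace obstruction from (b) yields $\operatorname{Tr}_{+}(A) = i\, 2^{n-1}(\alpha+\beta) = 0$ and, on $\HC_{-}$ where $X\tn$ acts as $-1$, $\operatorname{Tr}_{-}(A) = i\, 2^{n-1}(\alpha-\beta) = 0$, forcing $\alpha=\beta=0$. Thus $\cent(\gfree) = \{0\}$. Since $\gfree \subseteq \uu(2^n)$ is a compact \DLA, it admits the reductive decomposition $\gfree \iso \cent(\gfree) \oplus [\gfree,\gfree]$ of Eq.~\eqref{eq:reductive}, whose bracket summand is semisimple; trivial center thus collapses this to $\gfree = [\gfree,\gfree]$, showing $\gfree$ is semisimple. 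The \textbf{main obstacle} is ruling out $iX\tn$ in (b), since $X\tn$ evades the easy full-trace obstruction. The resolution is that (a) already places $X\tn$ in the commutant, which makes the block decomposition $\HC = \HC_{+} \oplus \HC_{-}$ $\gfree$-invariant and lets a refined trace constraint bite; once this sharper obstruction is verified on the two families of generators, both (b) and (c) fall out cleanly without any appeal to the six-family classification of Table~\ref{tab:freemixer}.
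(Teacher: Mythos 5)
Your proof is correct. Part (a) follows the paper's argument essentially verbatim: expand a commutant element in the Pauli basis, use the $X_u$ to force each tensor factor into $\{I,X\}$, then use the edge generators plus connectedness to collapse to $I\tn$ and $X\tn$. For parts (b) and (c) you take a mildly different but equivalent route. The paper notes that any candidate center element lies in $\spnR\{iI\tn, iX\tn\}$, observes that the generators have zero Hilbert--Schmidt projection onto both of these directions, concludes they sit in the semisimple part $[\gfree,\gfree]$ of the reductive decomposition, and closes by noting that Lie brackets cannot escape the semisimple part. You instead package the same two vanishing conditions $\Tr(g)=0$ and $\Tr(gX\tn)=0$ as the statement that the restricted traces $\Tr_{\pm}$ on the $X\tn$-eigenspaces $\HC_{\pm}$ annihilate the generators, propagate this to all of $\gfree$ via linearity and the vanishing of traces of commutators on invariant blocks, and then read off both $iX\tn\notin\gfree$ and $\cent(\gfree)=0$ directly. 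The two arguments rest on the identical obstruction, but yours is slightly more self-contained: it does not need the orthogonality of $[\g,\g]$ and $\cent(\g)$ under the trace form (which is implicit in the paper's claim that zero projection onto the center directions places the generators in the semisimple part), invoking the reductive decomposition only at the very end to convert trivial center into semisimplicity. Your separate, easier disposal of $iI\tn$ via $\gfree\subseteq\su(2^n)$ is a harmless redundancy.
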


\begin{proof}
Clearly, $I^{\otimes n}$ and $X^{\otimes n}$
commute with all generators of $\gfree$. All elements $S=\sum_j c_j P_j$
of  $\com(\gfree)$ with $0\neq c_j \in \C$ can be expanded into
Pauli strings
$P_j=\otimes_{u\in V} A_u \text{ with } A_u
\in \{X, Y, Z, I\}$. For the Pauli strings $P_k$ and $P_{j_1}\neq P_{j_2}$,
$[P_{k},P_{j_1}]$ and $[P_{k},P_{j_2}]$ are either linearly independent or at least one commutator is zero.
We obtain $[S,P_k]=0$ iff $[P_j,P_k]=0$ for all $P_j$,
hence $[S,\gfree]=0$ iff $[P_j, \gfree]=0$ for all $P_j$
as all generators of $\gfree$ in Eq.~\eqref{freegenerators} are of tensor-product form.
Thus we can restrict
elements of $\com(\gfree)$ to the form of $S=\bigotimes_{u\in V} A_u$.
However, $A_u = Y$ or $A_u = Z$ for any $u$ implies
$[S,iX_u]\neq 0$. Thus $S=\bigotimes_{u\in V} A_u$
with $A_u  \in \{X, I\}$. Now let $A_u=I$ and $A_v=X$ for two $u\neq v$.
As the graph is connected, there is a vertex path $u = u_1$, $u_2$, \ldots, $u_m = v$ using edges
$\{u_a, u_{a+1}\}$
and there exists an index $1\le \ell \le m{-}1$ such that $A_{u_\ell}=I$ and $A_{u_{\ell+1}}=X$. This implies
$[S,iZ_{u_\ell}Z_{u_{\ell+1}}]\neq 0$ which proves (a).
Recall that $\uu(2^n)$ and all its subalgebras $\g$ (such as $\gfree$) are compact \DLAs
and decompose
as $\g = [\g,\g] \oplus \cent(g)$ into their semisimple part $[\g,\g]$ and their center $\cent(g)$ \cite{BourbakiLie1989,Bourbaki2008b}.
Also, $iI^{\otimes n}$ and  $iX^{\otimes n}$ (and any real-linear combination thereof)
can only be in $\gfree$ if they are in its center $\cent(\gfree)=\com(\gfree) {\cap} \gfree$.
As all projections of generators in Eq.~\eqref{freegenerators} onto either $iI^{\otimes n}$ or $iX^{\otimes n}$
are zero, all generators are contained in the semisimple part $\mathfrak{s}:=[\gfree,\gfree]$.
As $[\mathfrak{s}, \mathfrak{s}] \subseteq \mathfrak{s}$ \cite{BourbakiLie1989},
we cannot generate any potential nonzero element of $\cent(\gfree)$.
Alternatively, the zero dimensionality of the center $\cent(\gfree)$ can also
be verified using Lemma~\ref{lem:std:free:center} in Appendix~\ref{app:center}
which characterizes the centers of subalgebras of $\gfree$.
\end{proof}

Lemma~\ref{app:prop:center} concludes that $\com(\gfree)$
is two dimensional and commutative.
Together with representation-theoretic arguments
(see, e.g., Theorem~1.5 in \cite{Ledermann}), this implies that the representation of $\gfree$
splits  exactly into two irreducible representations (or two irreducible blocks in a suitable basis).
Indeed, we can determine these two representations by noting that up to a constant phase factor,
$Y$ and $Z$ swap the Hadamard, or $X$-basis states $\ket{+}$ and $\ket{-}$,
while $I$ and $X$ preserve them. Thus $\gfree$ has invariant subspaces $\HC_{+}$
and $\HC_{-}$ which are the span of all Hadamard basis states $\ket{b_1}\otimes\cdots\otimes\ket{b_n}$ with $b_j \in \{+, -\}$
with respectively an even or odd  number of minus
signs.
Equivalently, $\HC_{+}$ and $\HC_{-}$ are respectively the $+1$ and $-1$ eigenspaces of $X^{\otimes n}$.
Also, $\HC_{+}$ and $\HC_{-}$ have the same dimension.
It follows that $\gfree$ is isomorphic to a subalgebra of
$\uu(2^{n-1})\oplus\uu(2^{n-1}) \iso
\su(2^{n-1})\oplus\su(2^{n-1})\oplus\uu(1)\oplus\uu(1)$. As the center of $\gfree$ is trivial [see Lemma~\ref{app:prop:center}(c)],
we obtain

\begin{lemma}\label{app:prop:structure}
The free-mixer \DLA $\gfree$ is isomorphic to a subalgebra of $\su(2^{n-1}){\oplus}\su(2^{n-1})$.
\end{lemma}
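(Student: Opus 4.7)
The plan is to make rigorous the informal sketch in the paragraph immediately preceding the statement: use Lemma~\ref{app:prop:center}(a) to obtain a two-block decomposition of $\gfree$ inside $\uu(2^n)$, and then use Lemma~\ref{app:prop:center}(c) to discard the central $\uu(1) \oplus \uu(1)$ factor. No calculations beyond what is already done are needed; the argument is purely representation-theoretic.

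First I would invoke the standard double-commutant theorem for compact Lie algebras (as summarized in Appendix~\ref{appendix:symmtry:analysis}). Since $\com(\gfree) = \spnC\{I^{\otimes n}, X^{\otimes n}\}$ is two-dimensional and abelian, the two orthogonal projectors $P_{\pm} := (I^{\otimes n} \pm X^{\otimes n})/2$ lying in it are minimal central idempotents of rank $2^{n-1}$ each. Hence $\HC = \C^{2^n}$ splits into exactly two isotypical components $\HC_{\pm} = \mathrm{Im}(P_{\pm})$, each of multiplicity one (because $\com(\gfree)$ is abelian) and of dimension $2^{n-1}$ (because the Hermitian unitary $X^{\otimes n}$ has equal $\pm 1$ multiplicities). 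Choosing a basis adapted to $\HC_{+} \oplus \HC_{-}$ therefore yields an injective Lie-algebra homomorphism
\[ \gfree \hookrightarrow \uu(2^{n-1}) \oplus \uu(2^{n-1}). \]

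Second, I would decompose the target as
\[ \uu(2^{n-1}) \oplus \uu(2^{n-1}) \iso \bigl[\su(2^{n-1}) \oplus \su(2^{n-1})\bigr] \oplus \spnR\{iP_{+}, iP_{-}\}, \]
where the second summand is the (abelian) center of the ambient algebra. The projection $\pi$ along the semisimple part onto this center is a Lie-algebra homomorphism, since its kernel is the ideal $\su(2^{n-1}) \oplus \su(2^{n-1})$. By Lemma~\ref{app:prop:center}(c), $\gfree$ is semisimple, so $\gfree = [\gfree, \gfree]$, and therefore
\[ \pi(\gfree) = \pi([\gfree, \gfree]) = [\pi(\gfree), \pi(\gfree)] = 0, \]
because the image sits in an abelian Lie algebra. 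Hence $\gfree \subseteq \ker(\pi) = \su(2^{n-1}) \oplus \su(2^{n-1})$, which is the desired embedding.

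The only mildly delicate point is justifying in the first step that a two-dimensional abelian commutant forces two irreducible (not merely isotypical) blocks of equal dimension. This is standard: abelianness of $\com(\gfree)$ forces each isotypical multiplicity to equal one so that the isotypical and irreducible decompositions coincide, and the equality of block dimensions follows from the symmetric spectrum of $X^{\otimes n}$. Everything else is bookkeeping.
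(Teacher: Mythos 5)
Your proposal is correct and follows essentially the same route as the paper: both arguments use the two-dimensional commutant from Lemma~\ref{app:prop:center}(a) to obtain the block-diagonal embedding of $\gfree$ into $\uu(2^{n-1})\oplus\uu(2^{n-1})$ and then invoke Lemma~\ref{app:prop:center}(c) to discard the central $\uu(1)\oplus\uu(1)$. The only cosmetic difference is that the paper shows each block projection $P_{\pm}\,\gfree\,P_{\pm}$ has trivial center by pulling a putative central element back to $\cent(\gfree)$, whereas you project onto the ambient center and use that a semisimple algebra is perfect; these are two packagings of the same fact.
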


\begin{proof}
We project $\gfree$ onto its two irreducible components 
$\gfree^{\pm}:= P_{\pm} \gfree P_{\pm}$
where $P_{\pm}:= (I^{\otimes n} {\pm} X^{\otimes n})/2$.
We verify $P_{\pm} g_j P_{\pm} = P_{\pm} g_j$ 
for $g_j \in \gfree$
as $[X^{\otimes n},\gfree]=0$ and
then establish $[\gfree^{+}, \gfree^{-}]=\allowbreak{}[P_{+} g_1, P_{-} g_2] = \allowbreak{}P_{+} P_{-} [g_1, g_2] = 0$.
The generators from  Eq.~\eqref{freegenerators}
are projected to $i(X_u {\pm} X_u X^{\otimes n})$ and $i(Z_uZ_v {\mp} \otimes_{w\in V}  A_w)$ 
where $A_w = Y$ if $w \in \{u,v\}$ and $A_w = X$ otherwise.
And $[P_{\pm} g_1, P_{\pm} g_2] =  P_{\pm}^2 [g_1, g_2] = P_{\pm} [g_1,g_2]$
implies that
$\gfree^{\pm}$ both form a \DLA and
they are isomorphic
as every element $P_+ g P_+  \in \gfree^{+}$ for $g\in \gfree$ is mapped to
$P_- g P_- \in \gfree^{-}$.
For a center element $0 \neq P_+ c P_+ \in \gfree^{+}$ with $c \in \gfree$, $P_{-} c P_{-}$
is in the center of $\gfree^{-}$. This implies that $c$ is in $\cent(\gfree)$ which is impossible.
Consequently, both $\gfree^{\pm}$ have trivial centers (and are semisimple).
\end{proof}

To finish this section, we will present a series of maps that will be useful in the proofs of our main result.
To begin, consider the map $h(M):=\had^{\otimes n} M \had^{\otimes n}$
on $2^n \times 2^n$ complex matrices $M$
uses the Hadamard operator
$$\had := \tfrac{1}{\sqrt{2}}
\left(\begin{smallmatrix}
1 & 1\\
1 & -1
\end{smallmatrix}\right)\,,
$$
and
maps $iX_j$, $iY_j$, $iZ_j$ to 
$iZ_j$, $-iY_j$, $iX_j$. It is
a Lie-algebra automorphism of $\su(2^n)$.
The generators from Eq.~\eqref{freegenerators} are mapped 
by $h$
to
$iZ_u$ for $u\in V$ and $iX_uX_v$ for $\{u,v\}\in E$. 
We apply 
the map $\pi_n(g) :=\Pi_n g \Pi_n$ where 
\begin{align*}
& \Pi_n := I^{\otimes n}\, {\textstyle \prod_{k=2}^n \CNOT(k,1,n)}   \\   
& =\tfrac{1}{2} [I^{\otimes n} + Z_2\cdots Z_n + X_1 - X_1 Z_2\cdots Z_n] \; \text{ and }\\
& \CNOT(c,t,n) := \tfrac{1}{2} [I^{\otimes n} + Z_c + X_t - Z_c X_t].
\end{align*}
That is, $\CNOT(c,t,n)$ simply denotes a CNOT gate on an $n$-qubit system where $c$ denotes the control qubit and $t$ the target qubits. 
The combined transformation $\pi_n \circ h$ maps the symmetry $X^{\otimes n}$ to $Z_1$ 
and the projections $P_{+}$ and $P_{-}$ are 
mapped respectively to $I^{\otimes (n-1)} \oplus \zero^{\otimes (n-1)}$
and  $\zero^{\otimes (n-1)} \oplus I^{\otimes (n-1)}$, where $\zero$ is the $2\times 2$ zero matrix.
The Hamiltonians have been block diagonalized and $\pi_n \circ h$ maps
the basis elements $X_u$, $Z_u Z_v$, $Z_1 Z_v$, $X_1$ for $u,v \geq 2$ to
$Z_u$, $X_u X_v$, $X_v$, $Z^{\otimes n}$.
A final basis change $\tilde{h}(g):= \mathrm{H}_{n-1} g \mathrm{H}_{n-1}$
with  $\mathrm{H}_{n-1}:= I  \otimes \mathrm{H}^{\otimes n-1}$
leads to ($u,v \geq 2$)
\begin{equation}\label{generators_sym}
\text{(a) } X_u,\, \text{(b) } Z_u Z_v,\,
\text{(c) } Z_v,\, \text{(d) } Z_1X_2 \!\cdot\cdot X_n.
\end{equation}
For later reference, $Z_j$ is transformed by $\tilde{h} \circ \pi_n \circ h$ to
\begin{equation}\label{generators_sym_add}
\text{(e) } X_1 \text{ if } j=1 \text{ and } X_1Z_j \text{ otherwise},
\end{equation}
and the combined basis change leading to 
Eqs.~\eqref{generators_sym}-\eqref{generators_sym_add} is given for $n\geq 2$ by the unitary block matrix
$$\mathrm{H}_{n-1} \Pi_n \mathrm{H}_{n}
= \tfrac{1}{\sqrt{2}}
\left(
\begin{smallmatrix}
I^{\otimes (n{-}1)} & \phantom{-}X^{\otimes (n{-}1)}\\
I^{\otimes (n{-}1)} & -X^{\otimes (n{-}1)}\\
\end{smallmatrix}
\right).$$
In the following, we denote the corresponding map by
\begin{align}
&\Lambda(M):=
\tilde{h}(\pi_n(h(M)))
=(\mathrm{H}_{n-1} \Pi_n \mathrm{H}_{n})\, M 
(\mathrm{H}_{n} \Pi_n \mathrm{H}_{n-1}) \nonumber\\
&= \tfrac{1}{2}
\left(
\begin{smallmatrix}
I^{\otimes (n{-}1)} & \phantom{-}X^{\otimes (n{-}1)}\\
I^{\otimes (n{-}1)} & -X^{\otimes (n{-}1)}\\
\end{smallmatrix}
\right) M 
\left(
\begin{smallmatrix}
I^{\otimes (n{-}1)} & \phantom{-}I^{\otimes (n{-}1)}\\
X^{\otimes (n{-}1)} & -X^{\otimes (n{-}1)}\\
\end{smallmatrix}
\right). \label{basis_change}
\end{align}
For a Pauli string 
$P = \bigotimes_{u=1}^n A_u$ with $iP \in \gfree$,
Lemma~\ref{app:prop:center}(a) implies that 
the number of $A_u$ with $A_u \in \{Y,Z\}$ is even and 
$\Lambda(P) = \pm Q$ for a Pauli string $Q$, i.e.,
\begin{equation}\label{basis_change_Pauli_string}
\Lambda(P)
=
 \begin{cases}
\bigotimes_{u=2}^n A_u & \text{for $A_1 \in \{I, Z\}$}, \\
Z_1 \bigotimes_{u=2}^n (X A_u) & \text{for $A_1 = X$}, \\
i Z_1 \bigotimes_{u=2}^n (X A_u) & \text{for $A_1 = Y$}.
\end{cases}
\end{equation}
The Hamiltonians under (a) and (b)
recover the initial ones from Eq.~\eqref{freegenerators}
without the first vertex. Removing the first vertex will lend itself for an induction
step in $n$.

\subsection{Irreducibility and maximal subalgebras\label{app:free:irred}}

We recall two tools for characterizing
free-mixer \DLAs. The first one relies on the assumption
that a \DLA $\g \subseteq \su(d)$ acts irreducibly [as on the last $n{-}1$ qubits for (a)-(c) in Eq.~\eqref{generators_sym}].
In this case, $\g$ can be categorized depending whether or not it is conjugate to 
a subalgebra of either $\so(d)$ or $\usp(d)$. Based on
Sec.~3.11 of \cite{Samelson1999}, \cite{Obata1958}, and Sec.~7 of \cite{zeier2011symmetry},
we summarize this in the following proposition. 
\begin{proposition}\label{prop_obata}
Assume that a subalgebra $\g \subseteq \su(d)$ generated by a set of $iH_j$ 
is irreducibly embedded into $\su(d)$.
It is conjugate to a subalgebra of either (i) $\so(d)$ and (ii) $\usp(d)$
iff there exists a nonzero
$d {\times} d$ complex matrix $S$ such that $S H_j {+} H_j^{t} S = 0$ for all $H_j$.
Every matrix $S$ is either symmetric or skew-symmetric.
If $S\neq 0$ exists, it is
unique up to a scalar and one observes either (i) if $S$ is symmetric or (ii) if $S$ is skew-symmetric.
Also, $S\bar{S} = \pm \alpha \unity_d$  for $S$ symmetric ($+$) or skew-symmetric ($-$)
where $\bar{S}$ is the complex-conjugated $S$ and $0<\alpha \in \R$. 
\end{proposition}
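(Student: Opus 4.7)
The plan is to prove this classical result on self-dual irreducible representations by two applications of Schur's lemma, proceeding in four logical stages.

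First, I will reformulate the hypothesis. The relation $SH_j + H_j^t S = 0$ on generators propagates through commutators and real-linear combinations (if it holds for $H_j$ and $H_k$, then $S[iH_j,iH_k] + [iH_j,iH_k]^t S = 0$ by a direct Leibniz-type expansion), so it is equivalent to $Sg + g^tS = 0$ for every $g \in \g$. When $S$ is invertible, this reads $SgS^{-1} = -g^t$, meaning that $S$ intertwines the defining inclusion $\g \hookrightarrow \su(d)$ with its dual representation $g \mapsto -g^t$.

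Second, I will establish invertibility and uniqueness. The kernel $\ker(S)$ is $\g$-invariant because $SH_j v = -H_j^t Sv = 0$ whenever $Sv = 0$, and similarly $\mathrm{im}(S)$ is invariant; irreducibility therefore forces $\ker(S)=0$, so any nonzero $S$ is invertible. If $S_1,S_2$ both solve the equation, then $H_j^t = -S_i H_j S_i^{-1}$ for both $i$, so $S_2^{-1}S_1$ commutes with every $H_j$; by Schur's lemma applied to the irreducible $\g$, it is a scalar, whence $S_2 = \lambda S_1$.

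Third, I will derive the symmetry type and the $S\bar S$ identity. Transposing the defining equation gives $H_j^t S^t + S^t H_j = 0$, so $S^t$ also satisfies it; uniqueness forces $S^t = cS$, and a second transpose gives $c^2 = 1$, hence $c = \pm 1$. Using that each $H_j$ is Hermitian (so $\bar H_j = H_j^t$), complex conjugation of $SH_j S^{-1} = -H_j^t$ yields $\bar S H_j^t \bar S^{-1} = -H_j$; combining these two relations shows that $S\bar S$ commutes with every $H_j^t$, hence with the irreducible algebra $\g^t$, and Schur gives $S\bar S = \alpha \unity_d$. Comparing with its complex conjugate $\bar S S = \bar\alpha \unity_d$ forces $\alpha\in\R$. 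For symmetric $S$ one has $S^\dagger = \bar S$, so $SS^\dagger = S\bar S = \alpha\unity_d$ with $SS^\dagger$ positive definite; hence $\alpha > 0$. For skew-symmetric $S$, $S^\dagger = -\bar S$, so $SS^\dagger = -S\bar S = -\alpha\unity_d$ is positive definite, hence $\alpha < 0$. Writing the magnitude as $\alpha > 0$ in the proposition's convention then yields $S\bar S = +\alpha\unity_d$ in the symmetric case and $S\bar S = -\alpha\unity_d$ in the skew-symmetric case.

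Fourth, I will close the equivalence with the conjugacy statement. The forward direction is automatic: if $U\g U^{-1}\subseteq \so(d)$, then $S = U^t U$ is symmetric and intertwines; if $U\g U^{-1}\subseteq\usp(d)$ with invariant symplectic form $J$, then $S = U^t J U$ is skew-symmetric and intertwines. For the converse with $S = S^t$, after normalizing so that $S\bar S =\unity_d$ I invoke the Autonne--Takagi factorization to write $S = U^t U$ with $U$ unitary; a direct substitution shows that $H'_j := U H_j U^{-1}$ is Hermitian with $H'_j + (H'_j)^t = 0$, so $H'_j$ is purely imaginary and $iH'_j \in \so(d)$. For $S = -S^t$ (forcing $d$ even), a unitary congruence brings $S$ to the standard symplectic form $J$; the transformed generators then satisfy $JH'_j + (H'_j)^t J = 0$, which is precisely the defining condition for $iH'_j\in\usp(d)$. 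The main obstacle I anticipate is the careful sign-tracking and use of Hermiticity in step three, together with invoking the correct normal-form theorem (Autonne--Takagi and symplectic normalization) in step four; the conceptual core is merely two applications of Schur's lemma, but these canonical-form facts are what convert the abstract intertwiner $S$ into the concrete conjugation placing $\g$ inside $\so(d)$ or $\usp(d)$.
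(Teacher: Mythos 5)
Your argument is correct. Note that the paper itself does not prove Proposition~\ref{prop_obata}; it explicitly defers to the literature (Sec.~3.11 of \cite{Samelson1999}, \cite{Obata1958}, and Sec.~7 of \cite{zeier2011symmetry}), so there is no in-paper proof to compare against line by line. What you have written is the standard argument from those sources, and the details check out: the intertwining relation propagates from generators to the whole \DLA, the kernel/image argument plus Schur's lemma gives invertibility and uniqueness up to scalar, transposition forces $S^t=\pm S$, and Hermiticity of the $H_j$ yields $S\bar S=\alpha\unity_d$ with the sign of $\alpha$ determined by positive definiteness of $SS^\dagger$. Two small points worth making explicit if you write this up: (i) when you conclude that $S\bar S$ is scalar you apply Schur's lemma to the transposed family $\{H_j^t\}$, so you should note that irreducibility transfers under transposition (invariant subspaces of $\{H_j^t\}$ correspond to annihilators of invariant subspaces of $\{H_j\}$), or alternatively compose the two intertwining relations to show $\bar S S$ commutes with $\g$ itself; (ii) in the converse direction the normalization $S\bar S=\pm\unity_d$ makes $S$ unitary, which is exactly what lets Autonne--Takagi (respectively the skew-symmetric congruence normal form, forcing $d$ even) produce $S=U^tU$ or $S=U^tJU$ with $U$ unitary rather than merely invertible; this is needed so that the conjugation $H_j\mapsto UH_jU^\dagger$ preserves Hermiticity and lands in the compact real forms $\so(d)$ and $\usp(d)$ rather than their complexifications.
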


\noindent We refer the reader to~\cite{zeier2011symmetry} for a proof of Proposition~\ref{prop_obata}. 

Our second tool 
often enables us to
simplify 
and streamline the identification of \DLAs
and it relies on maximality relations between
them (\cite{Dynkin57a,Dynkin57b,GOV94,AFG12}):

\begin{definition}\label{def_max}
A \emph{maximal subalgebra} $\gh$ of a \DLA $\g$ is a proper subalgebra
$\gh \subsetneq \g$ such that any subalgebra
$\gk$ of $\g$ containing $\gh$ ($\gh \subseteq \gk \subseteq \g$) 
is equal to $\gh$ or $\g$. In this case,
$\gh$ and any element $g\in \g$ with $g \notin \gh$ generate $\g$.
\end{definition}

\begin{table}
\caption{Example maximal subalgebras $\g^{+}$ of compact $\g$.
    \label{tab:maximal}}
    \begin{tabular}{@{\hspace{1mm}}l@{\hspace{-5mm}}r@{\hspace{1mm}}}
    \hline\hline
    \\[-4mm]
    $\g$ & $\g^{+}$
    \\[1mm] \hline
    \\[-3.5mm]
    $\su(2^n)$ & $\so(2^n);\; \usp(2^n), n{\geq}2;\; \su(2^{n-1}){\oplus}\su(2^{n-1}){\oplus}\uu(1)$
    \\
    $\so(2^n)$ &
    $\su(2^{n-1}){\oplus}\uu(1), n{\geq}3;\;
    \so(2^{n-1}){\oplus}\so(2^{n-1}), n{\geq}4$
    \\
    $\usp(2^n)$ &
    $\su(2^{n-1}){\oplus}\uu(1), n{\geq}2;\;
    \usp(2^{n-1}){\oplus}\usp(2^{n-1})$
    \\
    $\su(2^{n}){\oplus}\su(2^{n})$ & $\su(2^{n})$
    \\
    $\so(2m){\oplus}\so(2m)$ & $\so(2m), m{\geq}3$
    \\
    $\so(m)$ & $\so(m{-}1), m{\geq}5$
    \\
    $\usp(2m){\oplus}\usp(2m)$ & $\usp(2m)$
    \\[1mm]
    \hline\hline
    \end{tabular}
\end{table}

Fortunately, many relevant cases can also be more easily
explained (see \cite[Chap.~IX, \S 1, Ex.~7]{Bourbaki2008b} and
\cite{Borel1998,Helgason1978}) using
an \emph{involution} $s$ of a \DLA $\g$, i.e.\ an 
automorphism $s$ of $\g$ where $s^2$ is the identity map.
Given the eigenspaces
$\g^{\pm}$ of $s$ for the respective eigenvalues $\pm1$, 
$\g^{+}$ is a subalgebra of $\g$ ($[\g^{+},\g^{+}] \subseteq \g^{+}$),
$\g^{+}$ acts on $\g^{-}$ via the commutator with $[\g^{+},\g^{-}] \subseteq \g^{-}$,
and $[\g^{-},\g^{-}] \subseteq \g^{+}$. The commutator action of 
$\g^{+}$ on $\g^{-}$ is irreducible if and only if $\g^{+}$ is a maximal subalgebra of $\g$.
If in addition, $\g^{+}$ does not contain any nonzero ideal of $\g$, then the pair $(\g,s)$
is called \emph{irreducible}. For compact semisimple \DLAs, $(\g,s)$
is irreducible iff $\g$ is either simple or a sum of two simple ideals exchanged by $s$.
Hence, the maximality of $\g^{+}$ is often immediately implied and all possible cases
for $(\g,s)$ relate to particular symmetric spaces \cite{Borel1998,Helgason1978}:

\begin{proposition}\label{lem:max}
Table~\ref{tab:maximal} lists maximal subalgebras $\g^{+}$ of 
compact \DLAs $\g$
for (a) $\g$ simple and for
(b)~$\g = \gh {\oplus} \gh$ with $\gh$ simple,
$\g^{+} = \{(h,h) \text{ for } h \in \gh\} \iso \gh$, and
$\g^{-} = \{(h,-h) \text{ for } h \in \gh\}$.
\end{proposition}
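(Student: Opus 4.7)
The plan is to establish Proposition~\ref{lem:max} by producing, for each row of Table~\ref{tab:maximal}, an explicit involution $s$ of $\g$ whose $+1$-eigenspace coincides with the claimed $\g^{+}$, and then invoking the criterion recalled just before the proposition: $\g^{+}$ is maximal in $\g$ if and only if the commutator action of $\g^{+}$ on the $-1$-eigenspace $\g^{-}$ is irreducible. The involutive pairs $(\g,s)$ constructed below are precisely the irreducible compact symmetric pairs of Cartan's classification, so Proposition~\ref{lem:max} amounts to extracting, in the specific dimensions $2^{n}$ relevant to QAOA, the corresponding entries of that classification (see Helgason, \textit{Differential Geometry, Lie Groups and Symmetric Spaces}, Ch.~X, or Borel, \textit{Semisimple Groups and Riemannian Symmetric Spaces}).

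First I would dispatch case (b), which covers the last three rows of Table~\ref{tab:maximal} in which $\g=\gh\oplus\gh$. Set $s(h_{1},h_{2}):=(h_{2},h_{1})$; then $\g^{+}=\{(h,h):h\in\gh\}\iso\gh$ and $\g^{-}=\{(h,-h):h\in\gh\}$, with commutator $[(h,h),(h',-h')]=([h,h'],-[h,h'])$. This identifies $\g^{-}$ as a $\g^{+}$-module with the adjoint representation of $\gh$, which is irreducible because $\gh$ is simple (for $\su(2^{n})$, $\so(2m)$ with $m\geq 3$, and $\usp(2m)$); the criterion then gives maximality. The only ideals of $\gh\oplus\gh$ are $\{0\}$, each simple summand, and $\g$ itself, and only $\{0\}$ intersects the diagonal trivially, so $\g^{+}$ contains no nonzero ideal of $\g$ and $(\g,s)$ is irreducible.

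For case (a), I would supply an explicit involution for each remaining row using the standard constructions of classical compact symmetric pairs. Concretely: for $\g=\su(d)\supset\so(d)$ take $s(X):=\overline{X}$; for $\g=\su(2m)\supset\usp(2m)$ take $s(X):=-J\overline{X}J^{-1}$ with $J$ the standard symplectic form; for the block-diagonal cases $\su(2d)\supset\su(d)\oplus\su(d)\oplus\uu(1)$, $\so(2d)\supset\so(d)\oplus\so(d)$, $\usp(2d)\supset\usp(d)\oplus\usp(d)$, and $\so(m)\supset\so(m{-}1)$, take $s(X):=KXK^{-1}$ with $K$ a diagonal matrix of signature $(d,d)$ or $(m{-}1,1)$ on the defining module; and for $\so(2d)\supset\su(d)\oplus\uu(1)$ and $\usp(2d)\supset\su(d)\oplus\uu(1)$, take $s(X):=JXJ^{-1}$ with $J$ a compatible complex structure. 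Each $s$ satisfies $s^{2}=\mathrm{id}$ and is a Lie algebra automorphism, so the eigenspace calculus $[\g^{\pm},\g^{\pm}]\subseteq\g^{+}$ and $[\g^{+},\g^{-}]\subseteq\g^{-}$ holds automatically, and $\g^{+}$ is read off by inspection.

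The main obstacle is proving the irreducibility of the $\g^{+}$-action on $\g^{-}$ in case (a), which is what converts each involution above into a maximal-subalgebra statement. My route is to recognize each pair as one of Cartan's irreducible compact symmetric pairs---types AI, AII, AIII for the three $\su$ cases, BDI and DIII for the $\so$ cases, and CI and CII for the $\usp$ cases---for which irreducibility is a classical fact. A self-contained verification proceeds via restricted root systems: choosing a Cartan subalgebra of $\g$ adapted to $\g^{+}$, the module $\g^{-}$ decomposes into restricted root spaces, and one checks that the nonzero restricted roots form a single orbit under the restricted Weyl group, so no proper $\g^{+}$-invariant subspace exists. Finally, in case (a) the simplicity of $\g$ forces any ideal of $\g$ contained in $\g^{+}\subsetneq\g$ to be zero, so $(\g,s)$ is irreducible as required.
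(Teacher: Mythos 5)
Your proof is correct and follows essentially the same route as the paper: the paper sets up the involution/irreducibility criterion immediately before the proposition and then, for case~(a), cites the classification of compact symmetric spaces (Helgason, Borel, Bourbaki) together with the classification of maximal subalgebras, while for case~(b) it invokes the same diagonal-involution analysis or Dynkin's Theorem~15.1 --- you simply make the involutions and the case-(b) adjoint-module argument explicit. One caveat on your optional ``self-contained'' check: the claim that the nonzero restricted roots form a single orbit of the restricted Weyl group fails for pairs whose restricted root system is not simply laced (e.g.\ the pair $\su(d){\oplus}\su(d){\oplus}\uu(1)\subset\su(2d)$ of type AIII with equal blocks has restricted system $C_d$), so irreducibility of $\g^{-}$ should be taken from the classical theory of irreducible symmetric pairs rather than from that orbit argument.
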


\begin{proof}
For (a), we refer to the classification 
of symmetric spaces \cite{Helgason1978} 
and the maximality of $\g^{+}$ follows
for compact simple $\g$ via
\cite[Chap.~IX, \S 1, Ex.~7]{Bourbaki2008b} and
\cite{Borel1998,Helgason1978}.
We also point to the classification of maximal subalgebras \cite{Dynkin57a,Dynkin57b,GOV94,AFG12}.
For (b), we apply
the preceding analysis or 
Theorem~15.1 in \cite{Dynkin57b} on maximal subalgebras 
of semisimple $\g$.
\end{proof}

At this point it is worth noting that to ease the notation, unless otherwise stated, $\{A,\!B\}$ will denote the set containing $A$ and $B$ (and not their anticommutator).
We shortly discuss cases which will be used later:
\begin{example}\label{def_ga_gb}
The \DLAs
\begin{align*}
&\g_{a}=\spn_{\mbb{R}} \{i\, I \otimes \CC_{n-1}\} \iso \su(2^{n-1}) \text{ and } \\
&\g_{b}=\spn_{\mbb{R}} \{i\, \{I,\!Z\} \otimes \CC_{n-1} \} 
\iso \su(2^{n-1}) {\oplus} \su(2^{n-1}) \\
& \text{with } \CC_{n-1}:=\{I,X,Y,Z\}^{\otimes n-1} \setminus \{I^{\otimes n-1}\}\,
\end{align*}
are block diagonal with
equal ($iB_k {\oplus} B_k$) and  
independent ($iB_k {\oplus} C_\ell$)
blocks where $B_k,C_{\ell} \in \CC_{n-1}$. Also, 
$\g_a$ is maximal in $\g_b$. 
Adding $iZ_1$ to $\g_b$ generates
$\g_{c}\iso \su(2^{n-1}){\oplus}\allowbreak{}\su(2^{n-1}){\oplus}\allowbreak{}\uu(1)$
which is maximal in $\su(2^n)$. 
\end{example}

Clearly, $iZ_1X_2 \!\cdot\cdot X_n\notin\g_{a}$
but it is contained in $\g_{b}$. Proposition~\ref{lem:max} and Example~\ref{def_ga_gb} directly imply
the first statement in the following lemma:

\begin{lemma}\label{lem_ga_gb}
(a) The $n$-qubit \DLA $\g_a$ together with the element $iZ_1X_2 \!\cdot\cdot X_n$ 
generates the \DLA
$\g_{b}$. 
(b) We now assume that $n\geq 3$.
(b1) Adding $X_1Z_j$ with $j\geq 2$ 
to $\g_{b}$
generates also $X_1$.
(b2) Adding $X_1$
to $\g_{b}$
generates also $Z_1$.
(b3) Adding $X_1$ or $X_1Z_j$ with $j\geq 2$ 
to $\g_{b}$ generates $\su(2^n)$.
\end{lemma}

\begin{proof}
Statement~(a) follows as has been detailed before the lemma.
Note the commutator chains
\begin{align}
-\tfrac{i}{2}X_1 &{=}\,
[\tfrac{i}{2}Z_1Z_2Z_3,\![\tfrac{i}{2}Z_1Z_2,\!\tfrac{i}{2}X_1Z_3]], \label{eq_chain_one}\\
-\tfrac{i}{2}Z_1 &{=}\,
[\tfrac{i}{2}X_1,\![\tfrac{i}{2}Z_1X_2,\![\tfrac{i}{2}Z_1X_2X_3,\![
\tfrac{i}{2}Z_1X_3,\!\tfrac{i}{2}X_1]]]].
\label{eq_chain_two}
\end{align}
Equation~\eqref{eq_chain_one} implies (b1) in general and
(b2) follows from Eq.~\eqref{eq_chain_two}.
Statements (b1) and (b2) combined with Example~\ref{def_ga_gb}
show that $\g_c$ is generated. As $\g_c$ is maximal in $\su(2^n)$,
we obtain (b3) via Prop.~\ref{lem:max} as the additional element $X_1$ or $X_1Z_j$
is not contained in $\g_c$.
\end{proof}

\subsection{Path and cycle graphs\label{appendix:sub:pathcycle}}

\begin{figure*}
\begin{equation*}
N= -\tfrac{i}{2}
\left(\!\!
\begin{smallmatrix}
\zero^{\otimes 4} &
Z_1 & X_1 Z_2 & X_1 X_2 Z_3 & X_1 X_2 X_3 Z_4 &
Y_1 & X_1 Y_2 & X_1 X_2 Y_3 & X_1 X_2 X_3 Y_4 &
X^{\otimes 4}\\
-Z_1 & \zero^{\otimes 4} & 
-Y_1Z_2 & -Y_1 X_2 Z_3 & Y_1 X_2 X_3 Z_4 &
X_1 & -Y_1 Y_2 & -Y_1 X_2 Y_3 & -Y_1 X_2 X_3 Y_4 &
-Y_1 X_2 X_3 X_4\\
-X_1 Z_2 & Y_1 Z_2 & \zero^{\otimes 4} & -Y_2 Z_3 &-Y_2 X_3 Z_4 & 
-Z_1 Z_2 & X_2 & -Y_2 Y_3 & -Y_2 X_3 Y_4 & -Y_2 X_3 X_4\\
-X_1 X_2 Z_3 & Y_1 X_2 Z_3 & Y_2 Z_3 & \zero^{\otimes 4} & -Y_3 Z_4 &
-Z_1 X_2 Z_3 & -Z_2 Z_3 & X_3 & -Y_3 Y_4 & -Y_3 X_4\\
-X_1 X_2 X_3 Z_4 & Y_1 X_2 X_3 Z_4 & Y_2 X_3 Z_4 & Y_3 Z_4 &
\zero^{\otimes 4} & -Z_1 X_2 X_3 Z_4 & -Z_2 X_3 Z_4 & -Z_3 Z_4 & X_4 & -Y_4\\
-Y_1 & -X_1 & Z_1 Z_2 & Z_1 X_2 Z_3 & Z_1 X_2 X_3 Z_4 & \zero^{\otimes 4} &
Z_1 Y_2 & Z_2 X_2 Y_3 & Z_1 X_2 X_3 Y_4 & Z_1 X_2 X_3 X_4\\
-X_1 Y_2 & Y_1 Y_2 & -X_2 & Z_2 Z_3 & Z_2 X_3 Z_4 & - Z_1 Y_2 &
\zero^{\otimes 4} & Z_2 Y_3 & Z_2 X_3 Y_4 & Z_2 X_3 X_4\\
- X_1 X_2 Y_3 & Y_1 X_2 Y_3 & Y_2 Y_3 & -X_3 & Z_3 Z_4 &
-Z_1 X_2 Y_3 & -Z_2 Y_3 & \zero^{\otimes 4} & Z_3 Y_4 & Z_3 X_4\\
-X_1 X_2 X_3 Y_4 & Y_1 X_2 X_3 Y_4 & Y_2 X_3 Y_4 & Y_3 Y_4 & -X_4 &
-Z_1 X_2 X_3 Y_4 & -Z_2 X_3 Y_4 & -Z_3 Y_4 & \zero^{\otimes 4} & Z_4\\
-X^{\otimes 4} & Y_1 X_2 X_3 X_4 & Y_2 X_3 X_4 & Y_3 X_4 & Y_4 &
-Z_1 X_2 X_3 X_4 & -Z_2 X_3 X_4 & -Z_3 X_4 & - Z_4 & \zero^{\otimes 4} 
\end{smallmatrix}
\!\!\right)\! 
\end{equation*}
\caption{\textbf{Example for the matrix $N$ from Eq.~\eqref{eq_N}.} Case of $n=4$ which contains $16 {\times} 16$ matrices as its elements.
\label{fig:N}}
\end{figure*}

Our analysis of the free-mixer ansatz starts with
the path and the cycle graphs with $n$ vertices. An important aspect
relates to a Lie-algebra isomorphism from $\so(2n{+}2)$
to a subalgebra of $\su(2^n)$. This isomorphism is detailed
by adapting and slightly extending the
classical work of \cite{BrauerWeyl1935}
(see, e.g., \cite{Murnaghan1938,Kaufman1949,Boerner1969,Miller1972,SW86,FH91,Georgi1999,Zee2016})
on spinor representations of the orthogonal group.
To this end, we provide a matrix $N$ with $(2n{+}2) {\times} (2n{+}2)$ entries which are themselves 
elements from $\su(2^n)$ as well as a matrix $M$ with $(2n{+}2) {\times} (2n{+}2)$ entries from $\so(2n{+}2)$.
The intended Lie isomorphism is then defined
by mapping the entries $N_{jk}$ to the entries $M_{jk}$.
Let us introduce the two vectors
\begingroup
\allowdisplaybreaks
\begin{align}
\alpha := -\tfrac{i}{2} & [Z_1, X_1 Z_2, \ldots, X_1\!\cdot\cdot X_{n-1} Z_n,\nonumber \\
& \hspace{1.4mm} Y_1, X_1 Y_2, \ldots, X_1 \!\cdot\cdot X_{n-1} Y_n], \label{eq:alpha}\\
\tilde{\alpha} := -\tfrac{i}{2} & [-Y_1 X_2 \!\cdot\cdot X_n, \ldots,
-Y_{n-1} X_n, -Y_n, \nonumber \\
& \hspace{0.7mm} \phantom{-}Z_1 X_2 \!\cdot\cdot X_n, \ldots, Z_{n-1} X_n, Z_n], \label{eq:tilde:alpha}
\end{align}
\endgroup
where the entries $\alpha_j$ and $\tilde{\alpha}_j$ are $2^n{\times} 2^n$ matrices and $1\leq j \leq 2n$.
The $(2n{+}2){\times} (2n{+}2)$ matrix $N$ has entries
\begin{align}\label{eq_N}
N_{jk} = - N_{kj} :=
 \begin{cases}
\zero^{\otimes n} & \text{if $j=k$}, \\
-\tfrac{i}{2} X^{\otimes n} & \text{if $j=0$, $k=2n{+}1$},\\
[\alpha_k,\alpha_j] & \text{if $1\leq j,k \leq 2n$},\\
\alpha_{k} & \text{if $0=j<  k \leq 2n$},\\
\tilde{\alpha}_{j} & \text{if $1\leq j < k =2n{+}1$}
\end{cases}
\end{align}
which are $2^n{\times} 2^n$ matrices for $j,k \in \{0,\ldots,2n{+}1\}$.
For $n=4$, we obtain the $10{\times} 10$ matrix 
in Figure~\ref{fig:N} which contains $16 {\times} 16$ matrices as its elements.

The matrix $M$ is defined by its $(2n{+}2){\times} (2n{+}2)$ entries given by the
$(2n{+}2){\times} (2n{+}2)$ matrices
$M_{jk}:=e_{jk}-e_{kj}$
where the 
matrices $e_{jk}$ have entries
$(e_{jk})_{ab} := \delta_{ja} \delta_{kb}$.
We map the $(2n{+}2) {\times} (2n{+}2)$ matrices $M_{jk}$
to the $2^n {\times} 2^n$ matrices $N_{jk}$ for $0 \leq j < k \leq 2n{+}1$ which induces
the desired Lie-algebra isomorphism as one can easily check.
This isomorphism is applied to characterize relevant subalgebras of
$\su(2^n)$ by limiting us to suitable submatrices of
the matrix $N$. In particular, we will prove below the following result. 
\begin{proposition}\label{prop:spinor}
Let us consider the vector-space bases
\begingroup
\allowdisplaybreaks
\begin{align*}
\mathcal{B}_a & := \{iX^{\otimes n}\},\quad \mathcal{B}_b :=  \{iX_j \text{ for } 1{\leq} j {\leq} n\},\\
\mathcal{B}_c & := \{i \{Y_j,\!Z_j\} X_{j+1} \!\cdots\! X_{k-1}\{Y_k,\!Z_k\} \text{ for } 1 {\leq} j {<} k {\leq} n \}, \\
\mathcal{B}_d & := \{i X_1 \!\cdots\! X_{k-1} \{Y_k,\!Z_k\} \text{ for } 1 {\leq} k {\leq} n\},\\
\mathcal{B}_e & := \{i \{Y_j,\!Z_j\} X_{j+1} \!\cdots\! X_n \text{ for } 1 {\leq} j {\leq} n \}.
\end{align*}
\endgroup
Using the notation $\BC_{j\ldots k}=\BC_j\cup \dots\cup \BC_k$, we have
\begingroup
\allowdisplaybreaks
\begin{align*}
&\spn_{\mbb{R}}\mathcal{B}_{bc} =:  \gh_1  \iso\so(2n),\;
\spn_{\mbb{R}} \mathcal{B}_{bcd} =: \gh_2 \iso\so(2n{+}1),\\
& \spn_{\mbb{R}}\mathcal{B}_{bce} =: \gh_3 \iso \so(2n{+}1),\\
& \spn_{\mbb{R}}\mathcal{B}_{abcde} =: \gh_4 \iso \so(2n{+}2).
\end{align*}
\endgroup
\end{proposition}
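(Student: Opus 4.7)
My plan is to establish Proposition~\ref{prop:spinor} by showing that the linear map $\Phi: M_{jk} \mapsto N_{jk}$, defined on the standard basis $\{M_{jk}: 0 \leq j < k \leq 2n{+}1\}$ of $\so(2n{+}2)$ with values given by Eq.~\eqref{eq_N}, is a Lie-algebra isomorphism onto its image. Granting this, the four claimed isomorphisms will follow by restricting $\Phi$ to the subalgebras of $\so(2n{+}2)$ spanned by the $M_{jk}$ whose indices lie, respectively, in $\{1,\dots,2n\}$, in $\{0,\dots,2n\}$, in $\{1,\dots,2n{+}1\}$, and in $\{0,\dots,2n{+}1\}$, and by matching the images $N_{jk}$ with the Pauli strings in $\BC_a\cup\BC_b\cup\BC_c\cup\BC_d\cup\BC_e$.

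The computational heart is the bracket identity $[N_{jk}, N_{\ell m}] = \delta_{k\ell} N_{jm} - \delta_{km} N_{j\ell} - \delta_{j\ell} N_{km} + \delta_{jm} N_{k\ell}$. For indices $1 \leq j, k, \ell, m \leq 2n$, the key input is the Clifford anticommutation relation $\{\alpha_j, \alpha_k\} = -\tfrac{1}{2}\delta_{jk}\, I^{\otimes n}$, which I verify by direct Pauli-algebra inspection of Eq.~\eqref{eq:alpha}: each $\alpha_j$ squares to $-\tfrac{1}{4}\, I^{\otimes n}$, and any two distinct $\alpha_j, \alpha_k$ anticommute because their Jordan-Wigner strings differ at exactly one qubit through an anticommuting Pauli pair. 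A standard bilinear Clifford computation then yields $[N_{jk}, N_{\ell m}] = 4[\alpha_k\alpha_j, \alpha_m\alpha_\ell]$ in exactly the required form, establishing the $\so(2n)$ piece of the homomorphism.

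For the auxiliary indices $0$ and $2n{+}1$, I exploit two further observations that reduce everything to the same Clifford computation. First, a direct Pauli check gives $\tilde\alpha_j = i\, \alpha_j\, X^{\otimes n}$ for $1 \leq j \leq 2n$, while $X^{\otimes n}$ anticommutes with every $\alpha_j$; this ties the column $N_{j,2n{+}1}$ and the corner $N_{0,2n{+}1}$ back to the Clifford algebra of the $\alpha$'s up to a central factor of $X^{\otimes n}$. Second, mixed brackets such as $[N_{0k}, N_{\ell m}] = [\alpha_k, [\alpha_m,\alpha_\ell]]$ collapse via Jacobi and Clifford to the required linear combination of $\alpha$'s, and analogous manipulations handle $[N_{j,2n{+}1}, N_{\ell m}]$, $[N_{0k}, N_{\ell,2n{+}1}]$, and the corner bracket $[N_{0,2n{+}1}, N_{k\ell}]$. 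Injectivity of $\Phi$ is then immediate, since each $N_{jk}$ with $j<k$ is a nonzero scalar multiple of a distinct Pauli string (by the explicit formulas for $\alpha_k\alpha_j$ and $\alpha_j\, X^{\otimes n}$), making the family linearly independent.

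The last step is to match the bases. The paired entries $N_{k,n+k}$ with $1\leq k\leq n$ are each proportional to $iX_k$ and exhaust $\BC_b$; the remaining $N_{jk}$ with $1 \leq j < k \leq 2n$ and $k\neq n+j$ yield Pauli strings of the form $\{Y_a,Z_a\}X_{a+1}\cdots X_{b-1}\{Y_b,Z_b\}$ and exhaust $\BC_c$; the entries $N_{0k}=\alpha_k$ give $\BC_d$; the entries $N_{j,2n{+}1}=\tilde\alpha_j$ give $\BC_e$; and $N_{0,2n{+}1}$ is proportional to the unique element of $\BC_a$. The dimension count $(n{+}1)(2n{+}1) = 1 + n + 2n(n{-}1) + 2n + 2n$ confirms that each $\BC$-family exhausts the corresponding index block, and combining these correspondences with the restrictions of $\Phi$ yields all four isomorphisms. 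The main obstacle I anticipate is not any single deep step but the bookkeeping: a systematic case-split of the bracket identity and tight control of signs and factors of $i$ is essential, since a small arithmetic slip easily produces a spurious nonzero commutator that masks the underlying $\so(2n{+}2)$ structure.
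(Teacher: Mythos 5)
Your proposal is correct and follows essentially the same route as the paper: the paper introduces the matrix $N$ from Eq.~\eqref{eq_N} precisely so that the assignment $M_{jk}\mapsto N_{jk}$ yields the spinor-type isomorphism, and then obtains $\gh_1,\ldots,\gh_4$ by restricting to the index blocks $\{1,\ldots,2n\}$, $\{0,\ldots,2n\}$, $\{1,\ldots,2n{+}1\}$, and $\{0,\ldots,2n{+}1\}$, exactly as you do. Your explicit verification via the Clifford relations $\alpha_j\alpha_k+\alpha_k\alpha_j=-\tfrac{1}{2}\delta_{jk}I^{\otimes n}$ and the identity $\tilde\alpha_j=i\alpha_jX^{\otimes n}$ simply fills in the step the paper leaves as ``one can easily check,'' in the spirit of the cited Brauer--Weyl construction.
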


Note that $\gh_2,\gh_3\supset \gh_1$ and $\gh_4 \supset \gh_3,\gh_2,\gh_1$. 
We present the first result for a free-mixer \DLA: 
\begin{theorem}[Path graphs]\label{app:thm:path}
The free-mixer \DLA for 
a path graph
with $n {\geq} 2$ and edges $\{v,v{+}1\}$ for $1 {\leq} v {<} n$ is
$\gfree=\gh_1 \iso \so(2n)$ where
$\dim(\gfree)= 2n^2{-}n$.
\end{theorem}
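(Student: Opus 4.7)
The plan is to establish both inclusions $\gfree \subseteq \gh_1$ and $\gh_1 \subseteq \gfree$, and then invoke Proposition~\ref{prop:spinor} to identify $\gh_1 \iso \so(2n)$ and extract the dimension.

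The inclusion $\gfree \subseteq \gh_1$ is immediate upon inspecting the basis $\mathcal{B}_b \cup \mathcal{B}_c$: each mixer generator $iX_v$ belongs to $\mathcal{B}_b$, and each edge generator $iZ_vZ_{v+1}$ is the element of $\mathcal{B}_c$ corresponding to $j=v$, $k=v{+}1$, with $A=B=Z$ and no intermediate $X$ factors. Hence $\gfree \subseteq \spn_{\R}(\mathcal{B}_b \cup \mathcal{B}_c) = \gh_1$.

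For the reverse inclusion I would induct on the \emph{range} $r := k-j$ of the Pauli strings $iO^{AB}_{j,k}$, where $O^{AB}_{j,k} := A_j X_{j+1}\cdots X_{k-1} B_k$ and $A,B \in \{Y,Z\}$, which together with $\mathcal{B}_b$ span $\gh_1$. In the base case $r=1$, starting from $iZ_v Z_{v+1} \in \gfree$, successive commutators with $iX_v$ and $iX_{v+1}$ (using $[X,Z]=-2iY$ and $[X,Y]=2iZ$) generate $iY_vZ_{v+1}$, $iZ_vY_{v+1}$, and $iY_vY_{v+1}$, covering all four range-$1$ basis elements. In the inductive step, assuming all range-$r$ operators lie in $\gfree$, the key computation is
\begin{equation*}
[iZ_{j}Z_{j+1},\; iY_{j+1}X_{j+2}\cdots X_{k-1}B_k] \;=\; 2i\, Z_{j}X_{j+1}\cdots X_{k-1}B_k,
\end{equation*}
which, upon bracketing a range-$r$ operator having leading letter $Y$ at position $j{+}1$ with the edge generator $iZ_jZ_{j+1}$, yields the range-$(r{+}1)$ operator $iO^{ZB}_{j,k}$. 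Commuting this outcome further with $iX_j$ toggles the leading $Z$ into $Y$, producing $iO^{YB}_{j,k}$. Iterating over all admissible $(j,k)$ with $k-j=r{+}1$ and every $B\in\{Y,Z\}$ closes the induction.

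Combining the two inclusions gives $\gfree = \gh_1 \iso \so(2n)$ by Proposition~\ref{prop:spinor}, and a direct count of basis vectors yields $\dim(\gfree) = |\mathcal{B}_b| + |\mathcal{B}_c| = n + 4\binom{n}{2} = 2n^2-n$. The main obstacle I anticipate is the bookkeeping in the inductive step: the range-$r$ ``seed'' must carry a $Y$ at the endpoint being extended, since $[iZ_jZ_{j+1},iZ_{j+1}X_{j+2}\cdots B_k]=0$ identically; one therefore has to first produce the $Y$-endpoint variant (by toggling with $iX_{j+1}$ when needed) before applying the extension move. Once the case analysis is organized this way, the verification is routine Pauli algebra.
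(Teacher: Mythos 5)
Your proposal is correct and follows essentially the same route as the paper's (very terse) proof: both establish $\gfree\subseteq\gh_1$ by inspection, generate the nearest-neighbor elements by bracketing with the mixers, and then extend the range via commutators of the form $[iZ_jZ_{j+1},iY_{j+1}X_{j+2}\cdots B_k]=2i\,Z_jX_{j+1}\cdots B_k$ before invoking Proposition~\ref{prop:spinor}. Your explicit induction on the range and the remark about needing a $Y$ at the endpoint being extended simply spell out the bookkeeping that the paper compresses into ``and similar relations.''
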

\begin{proof}
The generators are $iX_u$ for $1{\leq} u{\leq} n$
and $iZ_{v}Z_{{v+1}}$ for $v{<}n$, i.e., 
$\gfree\subseteq \gh_1$.
Clearly, $iY_vY_{v+1},\allowbreak{}iY_vZ_{v+1},\allowbreak{}iZ_{v}Y_{v+1} \in \gfree$ for $v{<}n$.
We apply $[iZ_1Z_2,iY_2Z_3] = 2 i Z_1X_2Z_3$ and similar relations to
obtain all of $\mathcal{B}_{bc}$. 
\end{proof}

\begin{figure}[b]
\includegraphics{sporadic.pdf}
\caption{\textbf{Path graphs with black end vertices.}
Generators are $iX_u$ for all $n$ vertices $u$,
$iZ_vZ_{v+1}$ for $1{\leq} v {<}n$, and $iZ_w$ for 
black vertices $w$.
(a) General form with \DLAs (a1) $\so(2n{+}1)$ and (a2) $\so(2n{+}2)$.
(b) Low-dimensional cases with (b1) $\so(2{\times}2{+}1) \iso \usp(2^2)$
and
(b2) $\so(2{\times}3{+}2)=\so(2^3)$.
\label{fig:sporadic}}
\end{figure}

Theorem~\ref{app:thm:path} leads together with
Propositions~\ref{lem:max} and \ref{prop:spinor} to
\begin{corollary}(Fig.~\ref{fig:sporadic})\label{coro:spinor}
For a path graph with $n{\geq} 2$,
the \DLAs $\gdot$ and $\gdotdot$ are generated by
$\mathcal{G}:=\{iX_u \text{ for } 1{\leq} u {\leq} n,\allowbreak{} iZ_vZ_{v+1} \allowbreak{} \text{ for }  1 {\leq} v {<} n\}$
and $iZ_1$ and by 
$\mathcal{G}$, $iZ_1$, and $iZ_n$, respectively.
(i)~They are irreducibly embedded into $\su(2^n)$.
(ii)~$\gdot = \gh_2\iso \so(2n{+}1)$;
$\so(5) \iso \usp(4)$.
(iii)~$\gdotdot =\gh_4 \iso \so(2n{+}2)$;
$\so(6) \iso \su(4)$.
\end{corollary}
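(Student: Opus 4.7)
The plan is to build on the algebra chain $\gfree = \gh_1 \subset \gh_2 \subset \gh_4$ from Theorem~\ref{app:thm:path} and Proposition~\ref{prop:spinor}, and to identify $\gdot$ with $\gh_2$ and $\gdotdot$ with $\gh_4$ by invoking the maximal-subalgebra method of Proposition~\ref{lem:max} together with Definition~\ref{def_max}. Part (i) will then follow by combining the block structure of $\gh_1$ with the parity-flipping behavior of $iZ_1$.

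For part (i), I would argue that $\gdot$, and hence $\gdotdot\supseteq\gdot$, acts irreducibly on $\C^{2^n}$. By Theorem~\ref{app:thm:path} the subalgebra $\gfree = \gh_1 \iso \so(2n)$ preserves the two eigenspaces $\HC_{\pm}$ of $X^{\otimes n}$ and, in accordance with the $\eta_{+}\oplus\eta_{-}$ half-spin structure recorded for path graphs in Table~\ref{tab:freemixer} (and made explicit in Appendix~\ref{appendix:explicit:reps:path:cycle}), acts irreducibly on each. Since $Z_1 X^{\otimes n} = - X^{\otimes n} Z_1$, the extra generator $iZ_1$ isomorphically exchanges $\HC_{+}$ and $\HC_{-}$. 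Any $\gdot$-invariant subspace $W\subseteq \C^{2^n}$ then intersects $\HC_{\pm}$ in $\gh_1$-invariant subspaces, which by irreducibility are $\{0\}$ or $\HC_{\pm}$; the intertwining by $iZ_1$ forces the two choices to coincide, so $W\in\{0,\C^{2^n}\}$.

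For part (ii), I invoke Proposition~\ref{lem:max} and Table~\ref{tab:maximal}, which list $\so(m-1)$ as maximal in $\so(m)$ for $m\geq 5$. Applied with $m=2n+1\geq 5$ (valid for $n\geq 2$), this yields $\gh_1\iso\so(2n)$ maximal in $\gh_2\iso\so(2n+1)$. The element $iZ_1$ lies in $\BC_d\subseteq \gh_2$ (take $k=1$ in the definition of $\BC_d$), but $iZ_1\notin\gh_1$ because $\BC_b = \{iX_j\}$ contains only single-qubit $X$'s while every element of $\BC_c$ is a Pauli string supported on at least two qubits with $\{Y,Z\}$ endpoints; linear independence of Pauli strings then rules out any real combination of $\BC_{bc}$ equal to $iZ_1$. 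Definition~\ref{def_max} accordingly gives $\gdot = \gh_2$, and $\so(5)\iso\usp(4)$ at $n=2$ is the classical exceptional isomorphism.

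For part (iii), I iterate the argument: with $m=2n+2\geq 5$, Proposition~\ref{lem:max} makes $\gh_2\iso\so(2n+1)$ maximal in $\gh_4\iso\so(2n+2)$, and $iZ_n$ lies in $\BC_e\subseteq \gh_4$ (take $j=n$). For $n\geq 2$ one has $iZ_n \notin \gh_2 = \spn_{\mbb{R}}\BC_{bcd}$, since every element of $\BC_d$ is either $i\{Y_1, Z_1\}$ (for $k=1$) or begins with $X_1$ (for $k\geq 2$), and $\BC_{bc}$ was already excluded; hence $\gdotdot = \gh_4$, with $\so(6)\iso\su(4)$ settling $n=2$. The main obstacle I expect is the irreducibility step, since one must reliably invoke the irreducibility of the two half-spin representations of $\so(2n)$ appearing in $\gh_1$; once that is in hand, the maximality-based identifications in (ii) and (iii) are uniform in $n\geq 2$ and reduce to the two elementary non-membership checks $iZ_1\notin\gh_1$ and $iZ_n\notin\gh_2$.
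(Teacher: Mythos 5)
Your parts (ii) and (iii) follow essentially the same route as the paper: establish $\gh_1\subsetneq\gdot\subseteq\gh_2$ and $\gh_2\subsetneq\gdotdot\subseteq\gh_4$ from Theorem~\ref{app:thm:path} and Proposition~\ref{prop:spinor}, then conclude equality from the maximality of $\so(m{-}1)$ in $\so(m)$ for $m\geq 5$ in Proposition~\ref{lem:max}; your explicit Pauli-string checks that $iZ_1\notin\spn_{\mbb{R}}\mathcal{B}_{bc}$ and $iZ_n\notin\spn_{\mbb{R}}\mathcal{B}_{bcd}$ merely make concrete the strictness that the paper leaves implicit. Part (i) is where you genuinely diverge, and it is also where your argument has a gap. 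The paper's proof is a one-liner: since $[iZ_1,X\tn]\neq 0$, Lemma~\ref{app:prop:center}(a) forces $\com(\gdot)$ (hence $\com(\gdotdot)$) to be one dimensional, which is equivalent to irreducibility. Your geometric alternative---$\gh_1$ acts irreducibly on each of $\HC_{\pm}$ and $iZ_1$ swaps the two blocks---correctly handles the case where $W\cap\HC_{+}$ or $W\cap\HC_{-}$ is nonzero, but in the remaining case $W\cap\HC_{+}=W\cap\HC_{-}=0$ you cannot conclude $W=0$ without more: a nonzero invariant subspace could be the graph of a $\gh_1$-intertwiner $\HC_{+}\to\HC_{-}$ and meet both blocks trivially. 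Ruling this out requires the inequivalence of the two spinor representations $\eta_{\pm}$ (equivalently, that $\dim\com(\gfree)=2$ rather than $4$), which the paper supplies in Lemma~\ref{app:prop:center}(a) and Appendix~\ref{appendix:explicit:reps:path:cycle} but which you must invoke explicitly, e.g.\ via Schur's lemma applied to the projection of $W$ onto $\HC_{+}$. With that sentence added your route is valid, though the commutant argument is shorter and avoids the representation theory entirely; note also that the paper verifies $n=2$ separately, which is prudent in view of the low-rank coincidences $\so(5)\iso\usp(4)$ and $\so(6)\iso\su(4)$ that you already record.
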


\begin{proof}
As $[iZ_1,X^{\otimes n}]\neq 0$, 
Lemma~\ref{app:prop:center}(a) implies that $\com(\gdot)$
and $\com(\gdotdot)$  have dimension one which proves (i).
The case $n=2$ is verified directly.
Theorem~\ref{app:thm:path} implies that
$\mathcal{G}$ generates a \DLA that is isomorphic to $\so(2n)$ and
spanned by $\mathcal{B}_{bc}$.
Proposition~\ref{prop:spinor} shows that $\gh_1 \subsetneq\gdot\subseteq \gh_2$.
The maximality of $\gh_1$ in $\gh_2$ from Prop.~\ref{lem:max} implies (ii). 
The case (iii) is similar.
\end{proof}

The \DLA $\gfree\iso \so(2n)$ in Thm.~\ref{app:thm:path}
splits into two irreducible blocks of dimension $2^{n-1}$ [see 
Lemma~\ref{app:prop:center}(a)].
The basis change leading to Eq.~\eqref{generators_sym} is applied
to the generators of $\gfree$
and we
get (with $2 {\leq} u {\leq} n$, $2 {\leq} v {<} n$)
\begin{equation}\label{generators_sym_path}
\text{(a') } iX_u, \text{(b') } iZ_v Z_{v+1},
\text{(c') } iZ_2, \text{(d') } iZ_1 X_2 \!\cdot\cdot X_n.
\end{equation}
Using Corollary~\ref{coro:spinor}(ii),
(a')-(c') generate block-diagonal $L_k {\oplus} L_k$ where 
the matrices $L_k$ span $\gh_2 {\iso} \so(2n{-}1)$. The generator in (d') is equal to
$iX^{\otimes (n-1)} {\oplus} [-iX^{\otimes (n-1)}]$
and a $\so(2n)$ is generated in each block
using the analysis from Eq.~\eqref{eq_N}. But the \DLA is not
spanned by block-diagonal $\tilde{L}_k {\oplus} \tilde{L}_k$ as the blocks are partially
intertwined.

\begin{theorem}[Cycle graphs]\label{app:thm:cycle}
The free-mixer \DLA of a cycle graph
with edges $\{1,n\}$, $\{v,v{+}1\}$ for $1 {\leq} v {<} n {\geq} 3$ is
$\gfree \iso \so(2n){\oplus}\so(2n)$ where
$\dim(\gfree)= 4n^2{-}2n$.
\end{theorem}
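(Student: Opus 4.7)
The plan is to build on Theorem~\ref{app:thm:path} by adjoining the single extra generator $iZ_1Z_n$ and then analyzing the block structure under the basis change $\Lambda$ of Eq.~\eqref{basis_change}. Under $\Lambda$, the path-graph generators transform to the four families (a')--(d') of Eq.~\eqref{generators_sym_path}. Applying Eq.~\eqref{basis_change_Pauli_string} to $iZ_1Z_n$ (with $A_1=A_n=Z$ and $A_u=I$ in between) yields $\Lambda(iZ_1Z_n)=iZ_n$ as an operator on the $(n-1)$-qubit subsystem, acting identically in the two block-diagonal sectors cut out by $X^{\otimes n}$.

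The core observation is that the block-identical generators (a'), (b'), (c'), together with the newly produced $iZ_n$, are precisely the generators of a path graph on the $n-1$ vertices $\{2,\dots,n\}$ with both end vertices~$2$ and~$n$ marked ``black'' in the sense of Corollary~\ref{coro:spinor}(iii). That corollary identifies the Lie algebra they generate on $\C^{2^{n-1}}$ as $\gh_4\iso\so(2(n-1)+2)=\so(2n)$ in its half-spin representation. Because these elements act identically in both blocks, they span the diagonal copy $\{L\oplus L:L\in\so(2n)\}$ inside $\so(2n)\oplus\so(2n)$. The only remaining generator, (d')$\,=iZ_1X_2\cdots X_n=iX^{\otimes(n-1)}\oplus(-iX^{\otimes(n-1)})$, has opposite signs across the two blocks, yet $iX^{\otimes(n-1)}$ is the unique element of $\mathcal{B}_a$ from Proposition~\ref{prop:spinor} and therefore already belongs to the diagonal $\so(2n)$. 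Hence every transformed generator lies in $\so(2n)\oplus\so(2n)$, proving $\gfree\subseteq\so(2n)\oplus\so(2n)$.

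For the matching lower bound, the decoupling is straightforward: combining (d') with the diagonal element $iX^{\otimes(n-1)}\oplus iX^{\otimes(n-1)}$ by addition and subtraction produces the one-sided elements $iX^{\otimes(n-1)}\oplus 0$ and $0\oplus iX^{\otimes(n-1)}$. For $n\geq 3$, $\so(2n)$ is simple, so iterated commutators of $iX^{\otimes(n-1)}\oplus 0$ with the diagonal elements $L\oplus L$ sweep out the whole ideal $\so(2n)\oplus 0$, and symmetrically $0\oplus\so(2n)$ arises from the other one-sided element. Together with the diagonal copy this forces $\gfree\supseteq\so(2n)\oplus\so(2n)$, giving equality and $\dim(\gfree)=2\cdot n(2n{-}1)=4n^2{-}2n$. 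The main technical hurdle is the bookkeeping associated with the basis change $\Lambda$---particularly the verification that the extra edge $\{1,n\}$ supplies exactly the second ``black end'' needed to invoke Corollary~\ref{coro:spinor}(iii) rather than Corollary~\ref{coro:spinor}(ii); once this identification is made, the decoupling step rests solely on the simplicity of $\so(2n)$ for $n\geq 3$.
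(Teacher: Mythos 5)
Your proof is correct and follows essentially the same route as the paper's: the same basis change $\Lambda$, the same identification of the new edge $\{1,n\}$ with the generator $iZ_n$ supplying the second ``black end'' so that Corollary~\ref{coro:spinor}(iii) yields a diagonally embedded $\so(2n)$, and the same observation that (d') is of the form $K\oplus(-K)$ with $K=iX^{\otimes(n-1)}\in\gh_4$. The only difference is in the final step: the paper invokes the maximality of the diagonal $\so(2n)$ inside $\so(2n)\oplus\so(2n)$ (Proposition~\ref{lem:max}), whereas you verify the same conclusion directly by decoupling the two blocks and using simplicity of $\so(2n)$ --- a valid, slightly more self-contained justification of the same fact.
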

\begin{proof}
The generators from (a')-(c') in Eq.~\eqref{generators_sym_path}
are extended by
$iZ_n$ (which is $iZ_1Z_n$ in the original 
basis) and get basis matrices $K_j {\oplus} K_j$ where the
$K_j$ span $\so(2n)$ [see Cor.~\ref{coro:spinor}(iii)]. With
Prop.~\ref{lem:max}, the span $\so(2n)$ of $K_j {\oplus} K_j$
is maximal in the span $\so(2n){\oplus} \so(2n)$ of $K_j {\oplus} K_j$ and $K_j {\oplus} [-K_j]$.
Including (d') completes the proof.
\end{proof}

Note that $\gfree \iso \so(2n){\oplus}\so(2n)$ in Thm.~\ref{app:thm:cycle} is spanned by $\mathcal{B}_{bc}$ and
$\tilde{\mathcal{B}}_{bc}:=\tilde{\mathcal{B}}_b {\cup} \tilde{\mathcal{B}}_c$
where $\tilde{\mathcal{B}}_r := \{A X^{\otimes n} \text{ for } A\in B_r\}$.
This is clear from the form of $\gfree$ in the basis of Eq.~\eqref{generators_sym_path} as detailed
in the proof of Thm.~\ref{app:thm:cycle} (here we recall that $X^{\otimes n}$ corresponds to $Z_1$ in the basis of
Eq.~\eqref{generators_sym_path}).

\subsection{Explicit representations: path and cycle graphs \label{appendix:explicit:reps:path:cycle}}

We complement the results in Section~\ref{appendix:sub:pathcycle}
by determining the explicit
form of the corresponding representations. Our presentation influenced by
the related discussions in \cite{BrauerWeyl1935}, \cite{Boerner1969} (see Chapter VIII and
particularly Section~\S4) and \cite{Miller1972} (see Sections~9.5 and 9.6).
To enable a more concise presentation, this subsection 
assumes some familiarity with the highest weight
theory for representations of (complex) \DLAs
\cite{hall2015,FH91,Bourbaki2008a,Bourbaki2008b}.

Recall from Theorem~\ref{app:thm:path} that the free-mixer \DLA
$\gfree$
for the path graph is spanned by 
\begin{align*}
&iX_j \text{ for } 2{\leq} j {\leq} n,\, iX_1,\\
&i \{Y_j,\!Z_j\} X_{j+1} \!\cdots\! X_{k-1}\{Y_k,\!Z_k\}
\text{ for } 2 {\leq} j {<} k {\leq} n, \text{ and }\\
&i Z_1 X_{2} \!\cdots\! X_{k-1} \{Y_k,\!Z_k\},\,
iY_1 X_{2} \!\cdots\! X_{k-1} Z_k,\\
&iY_1 X_{2} \!\cdots\! X_{k-1} Y_k \text{ for } 2 {\leq} k {\leq} n.
\end{align*}
Applying the basis change from Eqs.~\eqref{basis_change}-\eqref{basis_change_Pauli_string} leads to
\begin{align}
&iX_j \text{ for } 2{\leq} j {\leq} n,\, iZ_1 X_2 \!\cdots\!X_n,\nonumber \\
&i \{Y_j,\!Z_j\} X_{j+1} \!\cdots\! X_{k-1}\{Y_k,\!Z_k\}
\text{ for } 2 {\leq} j {<} k {\leq} n, \text{ and } \nonumber \\
&i X_{2} \!\cdots\! X_{k-1}\{Y_k,\!Z_k\},\,
iZ_1 Y_k  X_{k+1} \!\cdots\! X_{n}, \nonumber \\
&(-1)\times iZ_1 Z_k X_{k+1} \!\cdots\! X_{n} \text{ for } 2 {\leq} k {\leq} n.
\label{eq:path:basis}
\end{align}
Clearly, $iX_j$ for $2{\leq} j {\leq} n$ together with $iZ_1 X_2 \!\cdots\!X_n$
span a maximal abelian subalgebra $\mathfrak{t}$ of $\gfree$. The
complexification \cite{hall2015} of $\gfree$ 
is denoted by $\g_{\C}:=\gfree \otimes \C$.
We now assume $n\geq 4$, while all cases below with $n<4$
can be directly verified.
We choose a particularly suitable basis
\begin{subequations}
\label{eq:chevalley:path}
\begin{align}
&\tfrac{1}{2}(- X_2 {+} Z_1 X_2 \!\cdots\!X_n),\,
\tfrac{1}{2}(X_3 {-} Z_1 X_2 \!\cdots\!X_n),\\
&\tfrac{1}{2}(X_4 {-} X_3), \ldots,\,
\tfrac{1}{2}(X_{n-1} {-} X_{n-2}), \label{potentially_missing}\\
&\tfrac{1}{2}(-X_{n} {-} X_{n-1}),\,
\tfrac{1}{2}(X_{n} {-} X_{n-1})\,,
\end{align}
\end{subequations}
in the Cartan subalgebra \cite{hall2015} $i\mathfrak{t}$ of $\g_{\C}$,
where the parts in Eq.~\eqref{potentially_missing} are missing for $n=4$.
The basis elements are denoted by $H_{\alpha}$ 
(as in \cite{hall2015,Bourbaki2008b})
and they are ordered according to the simple roots $\alpha=\alpha_j$ 
of $\g_{\C} \iso \so(2n,\C)$ with $1 \leq j \leq n$ \cite{hall2015}.
Restricting to the irreducible $2^{n-1}\times 2^{n-1}$ block in the upper-left corner,
the $H_{\alpha}$
are represented as matrices $H_{\alpha}^{+}$ given by
\begin{subequations}
\label{eq:chevalley:path:upper}
\begin{align}
&\tfrac{1}{2}(- X_1 {+} X_1 \!\cdots\!X_{n-1}),\,
\tfrac{1}{2}(X_2 {-} X_1 \!\cdots\!X_{n-1}),\\
&\tfrac{1}{2}(X_3 {-} X_2), \ldots,\,
\tfrac{1}{2}(X_{n-2} {-} X_{n-3}),\\
&\tfrac{1}{2}(-X_{n-1} {-} X_{n-2}),\,
\tfrac{1}{2}(X_{n-1} {-} X_{n-2}).
\end{align}
\end{subequations}
One can identify
a particular one-dimensional eigenspace among
the common eigenspaces of the $H_{\alpha}^{+}$ which is known
as the highest weight space.
In our case, it is spanned by the highest weight vector \cite{hall2015}
(for $n\geq 4$)
\begin{equation}\label{eq:highest:weight:vector}
v^{+} :=
 \begin{cases}
\left(
\begin{smallmatrix}
+1\\
-1
\end{smallmatrix}\right)^{\otimes (n-1)}
 & \text{if $n$ is even}, \\
\left(
\begin{smallmatrix}
+1\\
-1
\end{smallmatrix}\right)^{\otimes (n-2)}
\otimes 
\left(
\begin{smallmatrix}
1\\
1
\end{smallmatrix}\right)
 & \text{if $n$ is odd}.
\end{cases}
\end{equation}
The eigenvalues $\omega_j^{+}$ for
$H_{\alpha_j}^{+} v^{+} = \omega_j^{+} v^{+}$ are collected in the highest weight
$\omega^{+} := (\omega_1^{+}\!,\ldots,\omega_n^{+})$
\cite{hall2015} with
\begin{equation}\label{eq:highest:weight}
\omega^{+} =
 \begin{cases}
(0,\ldots,0,1,0)
& \text{for even $n$}, \\
(0,\ldots,0,0,1)
& \text{for odd $n$}.
\end{cases}
\end{equation}
We identify the irreducible representation in the upper-left block as
one of the spinor representations of $\so(2n)$ 
[or equivalently for $\so(2n,\C)$] which we denote by $\eta_+$.
Similarly, the irreducible representation in
the lower-right block is the other spinor representation $\eta_-$ with
\begin{equation}\label{eq:highest:weight:b}
\omega^{-} =
 \begin{cases}
(0,\ldots,0,0,1)
& \text{for even $n$}, \\
(0,\ldots,0,1,0)
& \text{for odd $n$}.
\end{cases}
\end{equation}
Note that $\eta_{\pm}$ is conjugate
to the dual of $\eta_{\mp}$ for odd $n$ (i.e.\
$\eta_{\pm}  \simeq \bar{\eta}_{\mp}$); the
$\eta_{\pm}$ are self-dual
for even $n$ (i.e., $\eta_{\pm}  \simeq \bar{\eta}_{\pm}$).

We continue with the free-mixer \DLA $\gfree$ for the cycle graph.
Theorem~\ref{app:thm:cycle} implies that $\gfree\iso \g_{+} \oplus \g_{-}$
with $\g_{+} \iso \g_{-} \iso \so(2n)$. Using the same frame as in
Eq.~\ref{eq:path:basis}, a basis of $\g_{\pm}$ is given by
\begin{align*}
&\tfrac{i}{2}(I{\pm}Z) {\otimes} b_j \text{ with }
b_j \in [ X_j \text{ for } 1\leq j \leq n{-}1, X_1\cdots X_{n-1},\\
&\{Y_j,\!Z_j\} X_{j+1} \!\cdots\! X_{n},
X_{1} \!\cdots\! X_{j-1}\{Y_j,\!Z_j\} \text{ for } 1\leq j \leq n{-}1,\\
&\{Y_j,\!Z_j\} X_{j+1} \!\cdots\! X_{k-1}\{Y_k,\!Z_k\}
\text{ for } 1 {\leq} j {<} k {\leq} n{-}1].
\end{align*}
Thus a maximal abelian subalgebra is spanned by
$i (X_j \pm Z_1 X_j)/2$ for $2\leq j \leq n$ and
$i (X_2\cdots X_n \pm Z_1 X_2\cdots X_n)/2$. 
In the Cartan subalgebra of the complexification
of $\g_{+}$, the basis elements corresponding 
to Eq.~\eqref{eq:chevalley:path} are denoted
by $\tilde{H}_{\alpha}$ and are given by
\begin{subequations}
\label{eq:chevalley:cycle}
\begin{align}
&\tfrac{1}{4}(I{+}Z) {\otimes} \tilde{b}_j \text{ with } \tilde{b}_j \in [
{-} X_1 {+} X_1 \!\cdots\!X_{n-1},\\
&X_2 {-} X_1 \!\cdots\!X_{n-1},\,
X_3 {-} X_2, \ldots,\, X_{n-2} {-} X_{n-3},\\
&{-}X_{n-1} {-} X_{n-2},\, X_{n-1} {-} X_{n-2}].
\end{align}
\end{subequations}
But the matrices $H_{\alpha}^{+}$
from Eq.~\eqref{eq:chevalley:path:upper}
are again recovered
when the basis elements $\tilde{H}_{\alpha}$ are restricted
to the irreducible $2^{n-1}\times 2^{n-1}$ block in the upper-left corner.
Also the highest weight space is spanned by the highest weight
vector $v^{+}$ from Eq.~\eqref{eq:highest:weight:vector}. Consequently,
the irreducible representation of $\g_{+}$ in the upper-left block is given
by $\eta_{+}$. Note that $\g_{+}$ acts on the lower-right block via the
trivial representation $\epsilon$.
Similarly, the
irreducible representation of $\g_{-}$ in the lower-right block is also given
by $\eta_{+}$; $\g_{-}$ acts on the upper-left block via $\epsilon$.
Putting together the previous, we have thus shown that the following result  holds
\begin{proposition}[Free-mixer representations for path and cycle graphs]\label{prop:reps:path:cycle}
The free-mixer \DLAs $\gfree$ for the (i) path and (ii) cycle graph
are embedded 
into $\su(2^n)$ via the respective representations
(i)~$\eta_+ {\oplus} \eta_-$ and
(ii)~$[\eta_+ {\otimes} \epsilon] \oplus [\epsilon {\otimes} \eta_+]$
(up to automorphisms of $\gfree$).
\end{proposition}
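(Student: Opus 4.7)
The plan is to exploit highest-weight theory to pin down each irreducible block. By Lemma~\ref{app:prop:center} and the basis change in Eqs.~\eqref{basis_change}--\eqref{basis_change_Pauli_string}, the free-mixer \DLA $\gfree$ is block-diagonalized according to the eigenspaces of $X^{\otimes n}$, producing two blocks of dimension $2^{n-1}$. For the path graph (Thm.~\ref{app:thm:path}), $\gfree\iso\so(2n)$ acts irreducibly on each block, and for the cycle graph (Thm.~\ref{app:thm:cycle}), $\gfree\iso\so(2n){\oplus}\so(2n)$ decomposes so that one summand acts on one block and the other summand on the other block. In both cases, identifying an irreducible representation of $\so(2n)$ of degree $2^{n-1}$ amounts to computing its highest weight and matching it against the fundamental weights of $D_n$.

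For the path graph, I would first write down a basis of $\gfree$ in the transformed frame of Eq.~\eqref{eq:path:basis}, then isolate the maximal abelian subalgebra $\mathfrak{t}$ spanned by $iX_j$ ($2\leq j\leq n$) and $iZ_1 X_2\cdots X_n$. After complexifying to $\g_{\C}\iso\so(2n,\C)$, I would take the explicit Chevalley-style basis $H_{\alpha_1},\ldots,H_{\alpha_n}$ of $i\mathfrak{t}$ in Eq.~\eqref{eq:chevalley:path}, ordered to correspond to the simple roots of $D_n$. Restricting to the upper-left block gives the matrices $H_{\alpha_j}^{+}$ of Eq.~\eqref{eq:chevalley:path:upper}, which are simultaneously diagonal in the Hadamard basis. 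A direct computation shows that the vector $v^{+}$ in Eq.~\eqref{eq:highest:weight:vector} is a simultaneous eigenvector, and checking that all positive-root vectors annihilate $v^+$ (this follows from the explicit form of the raising operators built from the $i\{Y_j,Z_j\}X_{j+1}\cdots X_{k-1}\{Y_k,Z_k\}$ basis elements) confirms it is the highest weight vector. Reading off eigenvalues gives $\omega^+=(0,\ldots,0,1,0)$ for even $n$ and $(0,\ldots,0,0,1)$ for odd $n$, which are precisely the fundamental weights of the two spinor representations $\eta_\pm$ of $\so(2n)$. The analogous argument on the lower-right block produces the other spinor weight in Eq.~\eqref{eq:highest:weight:b}, so the representation is $\eta_{+}{\oplus}\eta_{-}$ up to the choice of automorphism exchanging $\eta_{+}$ and $\eta_{-}$.

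For the cycle graph, the same strategy applies but with the additional step of first exhibiting the summand structure $\gfree\iso\g_+\oplus\g_-$ explicitly. I would use the projectors $(I\pm Z_1)/2$ in the transformed frame to write $\g_\pm$ in the form described between Thm.~\ref{app:thm:cycle} and Eq.~\eqref{eq:chevalley:cycle}. Then the Cartan subalgebra basis $\tilde{H}_{\alpha_j}$ of $\g_+$ restricted to the upper-left $2^{n-1}\times 2^{n-1}$ block coincides exactly with $H_{\alpha_j}^{+}$, so $v^{+}$ is again the highest weight vector and $\g_+$ acts on that block as $\eta_+$. By construction $\g_+$ annihilates the lower-right block, so this action is $\epsilon$ there. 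Symmetry under the swap $\g_+\leftrightarrow\g_-$ gives the analogous statement for $\g_-$, yielding the representation $[\eta_+{\otimes}\epsilon]\oplus[\epsilon{\otimes}\eta_+]$.

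The main obstacle is the bookkeeping of verifying that the chosen basis of $i\mathfrak{t}$ really corresponds to a system of simple coroots of $D_n$ in the standard presentation, and that the proposed $v^+$ is annihilated by every positive-root vector of $\g_\C$ (not merely a simultaneous eigenvector of the Cartan). This is essentially a calculation with Pauli strings, but it must be done carefully to match the $D_n$ Dynkin diagram conventions used to label $\eta_+$ versus $\eta_-$. The small cases $n<4$ (where the branch of Eq.~\eqref{potentially_missing} is empty and accidental isomorphisms such as $\so(6)\iso\su(4)$ come into play) would be treated separately by direct inspection, matching the table of low-dimensional identifications from Cor.~\ref{coro:spinor}.
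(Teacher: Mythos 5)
Your proposal follows essentially the same route as the paper's proof in Appendix~\ref{appendix:explicit:reps:path:cycle}: block-diagonalize via the $X^{\otimes n}$ eigenspaces, pass to the Cartan subalgebra basis of Eqs.~\eqref{eq:chevalley:path}--\eqref{eq:chevalley:path:upper} (respectively Eq.~\eqref{eq:chevalley:cycle}), and identify the highest weight vector $v^{+}$ of Eq.~\eqref{eq:highest:weight:vector} to read off the spinor weights, with the cycle case reduced to the path case via the summand structure $\g_{+}\oplus\g_{-}$. Your added care about verifying annihilation by the positive-root vectors and about matching $D_n$ conventions is the right place to be careful, and the separate treatment of $n<4$ matches the paper.
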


\subsection{Bipartite graphs\label{appendix:sub:bipartite}}

We continue with connected
bipartite
graphs $G$ different from path and cycle graphs.
Their free-mixer \DLAs $\gfree$ are determined
in Theorem~\ref{thm:free-mixer-bipartite} below.
To this end, $\gfree$
is augmented with generators $iZ_w$ 
for vertices 
$w\in W$ 
from a nonempty vertex subset $W$
such that two of its vertices are never connected by an edge (vide infra).
The resulting \DLA $\bipartiteW$ is,
surprisingly, isomorphic to
either $\so(2^n)$ or $\usp(2^n)$ 
(see Proposition~\ref{prop-bipartite-add-Z}), 
except for most path graphs from Corollary~\ref{coro:spinor}.
This enables us to prove Theorem~\ref{thm:free-mixer-bipartite}.

Working towards this goal, we first establish some basic notation
to streamline our discussion. As before, $G$ denotes a graph
with $n:=\abs{V}$ vertices $V$ and edges $E$.
If $G$ is connected and bipartite,
its vertex bipartition $V=V_1 {\uplus} V_2$ is unique
and every edge $\{u,v\}$ connects $u\in V_1$ and $v \in V_2$
(or vice versa). Here, $V_1 {\uplus} V_2$ denotes the disjoint union
such that $V = V_1 {\cup} V_2$ and $V_1 {\cap} V_2 = \emptyset$.
The parts $V_k$ and $V_{\bar{k}}$ are identified by 
$k,\bar{k}\in \{1,2\}$ with $k\neq \bar{k}$. Let 
$\emptyset \neq W\subseteq V_k$ denote a nonempty subset of $V_k$.
Similar notation will be used for further graphs $\tilde{G}$, $\bar{G}$.
We formally introduce the \DLA $\bipartiteW$:

\begin{definition}\label{def:algebra:W}
For a connected bipartite graph $G$,
the \DLA $\bipartiteW$ is generated by
$iX_u$ for $u\in V$, $iZ_uZ_v$ for $\{u,v\} \in E$,
and $iZ_w$ for $w\in W$ with $\emptyset \neq W\subseteq V_k$.
\end{definition}

Our objective is to determine $\bipartiteW$. But as this will require 
multiple steps, we introduce the target \DLA $\bipartite$
with the aim of proving that $\bipartiteW=\bipartite$ except for
most path graphs from Corollary~\ref{coro:spinor}:

\begin{definition}\label{def:so:sp}
For a connected bipartite graph $G$,
$\bipartite$ is 
generated by
$i\bigotimes_{v\in V} A_v$ with $A_v \in \{I,X,Y,Z\}$ such that the number of $A_v=X$ for
$v \in V$ has the opposite parity of  the number of $A_w\in \{Y,Z\}$ for $w \in V_k$.
\end{definition}

We can now separately
establish properties of $\bipartite$ and we initially only show that $\bipartiteW \subseteq \bipartite$:

\begin{lemma}[Properties of $\bipartite$]\label{prop:so:sp}
(i)~$\bipartite$ is spanned by its generators.
(ii)~$\bipartite \iso \so(2^n)$
iff $\abs{V_k}$ is even.
(iii)~$\bipartite\iso \usp(2^n)$
iff $\abs{V_k}$ is odd. 
(iv)~$iZ_v \in  \bipartite$ iff $v \in V_k$.
(v)~$iX^{\otimes n} \in \bipartite$
iff $n$ is odd.
(vi)~$\bipartiteW \subseteq \bipartite$.
\end{lemma}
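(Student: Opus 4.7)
The plan is to encode each Pauli string $P = \bigotimes_{v \in V} A_v$ by a pair $(a,b) \in \mathbb{F}_2^n \times \mathbb{F}_2^n$ via $I \leftrightarrow (0,0)$, $X \leftrightarrow (1,0)$, $Z \leftrightarrow (0,1)$, $Y \leftrightarrow (1,1)$, and to recognise that the verbal parity condition of Def.~\ref{def:so:sp} is exactly $\phi(a,b)\equiv 1\pmod 2$ for
\[
\phi(a,b) := \sum_{v \in V} a_v \,+\, \sum_{v \in V} a_v b_v \,+\, \sum_{v \in V_k} b_v .
\]
A direct expansion yields the key identity
\[
\phi(a{+}a',\, b{+}b') = \phi(a,b) + \phi(a',b') + \omega(P,P') \pmod 2,
\]
where $\omega(P,P') := \sum_v(a_v b'_v + a'_v b_v)$ is the usual symplectic form and equals $1$ precisely when $P,P'$ anticommute. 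Since $[iP,iP']$ is nonzero iff $P,P'$ anticommute---and then is a nonzero real scalar times $iPP'$, whose Pauli coordinates are $(a{+}a',\,b{+}b')$---the identity forces $\phi(PP') = 1+1+1 = 1$. Hence the real span of $\{iP:\phi(P)=1\}$ is closed under commutators, giving (i).

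Parts (iv)--(vi) then follow by direct evaluation of $\phi$: $\phi(Z_v)=\mathbf{1}_{V_k}(v)$; $\phi(X^{\otimes n}) = n \bmod 2$; and each generator of $\bipartiteW$ satisfies $\phi=1$---for $iZ_uZ_v$ one uses that a bipartite edge has exactly one endpoint in $V_k$, while $iX_u$ and $iZ_w$ (with $w\in W \subseteq V_k$) are immediate.

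For (ii) and (iii) I plan to invoke Proposition~\ref{prop_obata}. First, $\bipartite$ acts irreducibly on $\mathbb{C}^{2^n}$: any Pauli string $Q$ in its commutant must commute with every $iX_u$ (forcing $Q_u \in \{I,X\}$) and with every $iZ_v$ for $v\in V_k$ (forcing $Q_v \in \{I,Z\}$), so the intersection yields $Q_v = I$ on $V_k$; commutation with each edge generator $iZ_uZ_v$ (with $u\in V_k$, $v\in V_{\bar k}$) then forces $Q_v = I$ on all of $V_{\bar k}$ by connectedness. Second, the explicit intertwiner
\[
S := Y^{\otimes V_k} \otimes Z^{\otimes V_{\bar k}}
\]
satisfies $\omega(S,P) = \sum_{v\in V} a_v + \sum_{v\in V_k} b_v$ while $n_Y(P) = \sum_v a_v b_v$, so
\[
SP + P^T S = \big[(-1)^{\omega(S,P)} + (-1)^{n_Y(P)}\big]\, PS
\]
vanishes iff $\omega(S,P) + n_Y(P) \equiv 1 \pmod 2$, which is precisely $\phi(P)\equiv 1$. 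Because $S^T = (-1)^{\abs{V_k}} S$, the matrix $S$ is symmetric for even $\abs{V_k}$ and antisymmetric for odd $\abs{V_k}$, so Proposition~\ref{prop_obata} embeds $\bipartite$ conjugately into $\so(2^n)$ or $\usp(2^n)$ accordingly.

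The embeddings will promote to equalities by a dimension count: each vertex $v\in V_{\bar k}$ contributes a generating-function factor $3+x$ and each $v\in V_k$ contributes $1+3x$ (tracking the per-site contribution of $A_v$ to $\phi$), so the number of Pauli strings with $\phi=1$ is $\tfrac{1}{2}[4^n - (-1)^{\abs{V_k}}\, 2^n] = 2^{2n-1} - (-1)^{\abs{V_k}}\, 2^{n-1}$, which matches $\dim \so(2^n) = 2^{n-1}(2^n{-}1)$ for $\abs{V_k}$ even and $\dim \usp(2^n) = 2^{n-1}(2^n{+}1)$ for $\abs{V_k}$ odd. The main technical hurdle is to isolate the correct quadratic form $\phi$ from the verbal parity condition in Def.~\ref{def:so:sp}; once $\phi$ is in hand, the Lie-closure identity, the irreducibility argument, and the existence of a clean Pauli-string intertwiner all reduce to routine $\mathbb{F}_2$-linear algebra.
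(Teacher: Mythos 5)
Your proof is correct, and its core for (ii)--(iii) --- irreducibility of the action, Obata's criterion (Proposition~\ref{prop_obata}) applied to an explicit Pauli-string intertwiner, and a dimension count matching $\dim\so(2^n)$ resp.\ $\dim\usp(2^n)$ --- is exactly the route the paper takes. The genuine difference is in how (i) is obtained: the paper gets it as a byproduct of the counting step (the linearly independent generators already exhaust the ambient $\so(2^n)$ or $\usp(2^n)$ once the Obata containment is in place), whereas you establish it independently and more structurally via the $\mathbb{F}_2$ cocycle identity $\phi(PP')=\phi(P)+\phi(P')+\omega(P,P')$, which shows the span of the generators is already closed under commutators; this buys you a proof of (i) that does not rely on knowing the target dimension, and it makes (iv)--(vi) genuinely immediate rather than consequences of (i)--(iii) as in the paper. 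Your symplectic bookkeeping also pays a small dividend on the intertwiner: $S$ must be adapted to the distinguished part $V_k$, and your choice $S=Y^{\otimes V_k}\otimes Z^{\otimes V_{\bar k}}$ together with the verification $\omega(S,P)+n_Y(P)=\phi(P)$ does this correctly for either value of $k$, whereas the paper's stated $S=\bigotimes_{v\in V_1}Z_v\bigotimes_{v'\in V_2}Y_{v'}$ only annihilates $SH_j+H_j^tS$ for all generators when $k=2$ (for $k=1$ it fails already on $iZ_v$ with $v\in V_1$). Your irreducibility argument likewise redoes, self-containedly, what the paper imports from Lemma~\ref{app:prop:center}(a). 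In short: same skeleton, with a cleaner and slightly more robust packaging of the parity arguments.
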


\begin{proof}
The statements $iZ_v \in  \bipartite$ if $v \in V_k$ and
$iX^{\otimes n} \in \bipartite$
if $n$ is odd follow from Definitions~\ref{def:algebra:W}-\ref{def:so:sp}.
As $[iZ_v,X^{\otimes n}]\neq 0$ for $v \in V_k$,
Lemma~\ref{app:prop:center}(a) shows that $\com(\bipartite)$
is one dimensional. 
Following Prop.~\ref{prop_obata}, 
we choose $S=\bigotimes_{v\in V_1} Z_v \bigotimes_{v'\in V_2} Y_{v'}$ such that $S H_j {+} H_j^t S = 0$ for every generator $iH_j$ of $\bipartite$.
$S\bar{S}=\pm \id_{2}\tn$ for $\abs{V_k}$ even ($+$) or odd ($-$). 
Proposition~\ref{prop_obata} proves the containment
$\bipartite \subseteq \so(2^n)$ or $\bipartite \subseteq \usp(2^n)$
depending on the parity of $\abs{V_k}$.
Counting the generators proves (i)-(iii) which implies (iv)-(v).
The generators of $\bipartiteW$ are contained in $\bipartite$ due to
Definitions~\ref{def:algebra:W}-\ref{def:so:sp}, hence
(vi).
\end{proof}

The \DLAs 
$\gdot$ or $\gdotdot$ from Corollary~\ref{coro:spinor}
are particular examples for $\bipartiteW$ which are contained in either 
$\so(2^n)$ or $\usp(2^n)$, except if $n$ is even for which
$\gdotdot\not\subseteq \so(2^n)$ and $\gdotdot\not\subseteq \usp(2^n)$ hold
as $W \subseteq V_k$ is not satisfied.
In particular, we have $\gdot\subseteq\usp(2^n)$ for $n \bmod 4 \in \{1,2\}$
and $\gdot\subseteq\so(2^n)$ for $n \bmod 4 \in \{3,0\}$
as well as $\gdotdot\subseteq\usp(2^n)$ for $n \bmod 4 = 1$
and $\gdotdot\subseteq\so(2^n)$ for $n \bmod 4 = 3$.
Clearly, these examples will not observe $\bipartiteW = \bipartite$,
except for the
two low-dimensional cases in Fig.~\ref{fig:sporadic}(b).

Assuming that the property $\bipartiteW = \bipartite$ holds for some specific connected bipartite graph $G$, we ask the following question: \textit{
can we extend this property with the same $W$
to a larger connected bipartite graphs $\tilde{G}$ that contains $G$ as a subgraph?} We provide
an induction argument for an induction step from $n$ to $n{+}1$
assuming that $\tilde{V}{-}1 =V=n\geq 4$:

\begin{lemma}[bipartite induction]\label{lem:induction}
Given a connected bipartite graph $G$,
we add the vertex $0$ and the edge $\{0,j\}$ with $j\in V$ and obtain the connected graph $\tilde{G}$
with
bipartition 
$\tilde{V}=\allowbreak{}\tilde{V}_1 {\uplus} \allowbreak{}\tilde{V}_2=\allowbreak{}V{\cup}\{0\}$.
This adds
$iX_0$ and $iZ_0Z_j$ and generates $\bipartiteWW {\subseteq} \bipartiteH$.
Let $n\geq 4$ and $\bipartiteW = \bipartite$.
(i)~Either (a)~$j \in \tilde{V}_k = V_k$,
$\tilde{V}_{\bar{k}} = V_{\bar{k}} {\cup} \{0\}$,
$Z_j\in \bipartiteW$, and $Z_0\not\in \bipartiteWW$
or (b)~$\tilde{V}_k = V_k {\cup} \{0\}$,
$j \in \tilde{V}_{\bar{k}} = V_{\bar{k}}$, and $Z_j\not\in \bipartiteW$ holds.
(ii)~We obtain that $\bipartiteWW = \bipartiteH$.
\end{lemma}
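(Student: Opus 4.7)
Part~(i) is essentially bookkeeping from the bipartite structure of $\tilde G$. Since vertex~$0$ is only adjacent to $j$, bipartiteness of $\tilde G$ forces $0$ to lie in the part of $\tilde V$ opposite $j$. If $j\in V_k$ this gives case~(a) with $\tilde V_k=V_k$ and $0\in\tilde V_{\bar k}$; if $j\in V_{\bar k}$ this gives case~(b) with $\tilde V_k=V_k\cup\{0\}$. The membership claims for $iZ_j$ follow directly from Lemma~\ref{prop:so:sp}(iv) together with the hypothesis $\bipartiteW=\bipartite$, and the claim $iZ_0\notin\bipartiteWW$ in case~(a) follows from Lemma~\ref{prop:so:sp}(iv) applied to $\tilde G$ (with $0\notin\tilde V_k$) together with the inclusion $\bipartiteWW\subseteq\bipartiteH$ of Lemma~\ref{prop:so:sp}(vi).

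For part~(ii), since Lemma~\ref{prop:so:sp}(vi) gives $\bipartiteWW\subseteq\bipartiteH$, it remains to show the reverse inclusion. The plan is to work in the block decomposition $\C^{2^{n+1}}=\C^{2^n}\oplus\C^{2^n}$ induced by the $Z_0$-eigenspaces and argue by maximal subalgebras via Proposition~\ref{lem:max}. In this basis, $\bipartiteW\otimes I_0$ embeds diagonally as $\{g\oplus g:g\in\bipartite\}$, the generator $iZ_0Z_j$ is block-diagonal of the form $iZ_j\oplus(-iZ_j)$, and $iX_0$ is block-antidiagonal.

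In case~(a), Lemma~\ref{prop:so:sp}(iv) gives $iZ_j\in\bipartite$, so combining $iZ_j\otimes I_0=iZ_j\oplus iZ_j$ with $iZ_0Z_j=iZ_j\oplus(-iZ_j)$ produces $iZ_j\oplus 0$ and $0\oplus iZ_j$ in $\bipartiteWW$. Since $\bipartite$ is simple, the $\R$-span of iterated adjoint brackets with $\bipartite$ applied to any nonzero element equals $\bipartite$; taking commutators of $iZ_j\oplus 0$ with $\bipartiteW\otimes I_0=\{g\oplus g\}$ therefore produces $\bipartite\oplus 0\subseteq\bipartiteWW$, and analogously $0\oplus\bipartite\subseteq\bipartiteWW$, so $\bipartite\oplus\bipartite\subseteq\bipartiteWW$. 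Since $|\tilde V_k|=|V_k|$ in case~(a), Lemma~\ref{prop:so:sp}(ii)--(iii) shows $\bipartiteH$ is of the same type ($\so$ or $\usp$) as $\bipartite$, and Proposition~\ref{lem:max} (using $n\geq 4$) identifies $\bipartite\oplus\bipartite$ as a maximal subalgebra of $\bipartiteH$. Because $iX_0\in\bipartiteWW$ is block-antidiagonal and thus not in $\bipartite\oplus\bipartite$, maximality forces $\bipartiteWW=\bipartiteH$.

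Case~(b) follows an analogous plan but with a different maximal subalgebra, since now $iZ_j\notin\bipartite$ and $\bipartite,\bipartiteH$ are of opposite types. One argues that the block-diagonal part of $\bipartiteWW$ contains $\bipartite$ and $iZ_j$, which by maximality of $\bipartite$ inside $\su(2^n)$ (Proposition~\ref{lem:max}) generates all of $\su(2^n)$; together with commutators involving $iX_0$, the resulting subalgebra exceeds every maximal subalgebra of $\bipartiteH$: $\su(2^n)\oplus\uu(1)$ is ruled out by the block-antidiagonal $iX_0$, and $\usp(2^n)\oplus\usp(2^n)$ or $\so(2^n)\oplus\so(2^n)$ by the irreducible action of $\bipartiteWW$ on $\C^{2^{n+1}}$, which follows from a direct extension of Lemma~\ref{app:prop:center}(a) using that $iZ_w$ for $w\in W\neq\emptyset$ anticommutes with $X^{\otimes(n+1)}$. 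The principal obstacle I anticipate is the technical bookkeeping in case~(b): the form $\tilde S=S\otimes Y_0$ defining $\bipartiteH$ differs from the $\tilde S=S\otimes Z_0$ used in case~(a), so verifying that the generated block-diagonal part equals $\su(2^n)$ requires a careful analysis of the involution $A\mapsto S^{-1}A^tS$, whose $\pm 1$ eigenspaces split $\su(2^n)$ into $\bipartite$ and its orthogonal complement.
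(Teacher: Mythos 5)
Your part~(i) and your treatment of case~(a) of part~(ii) are essentially the paper's own argument: the bipartition bookkeeping, the appeal to Lemma~\ref{prop:so:sp}(iv) and (vi) for the membership claims, and the case-(a) chain ``split $iZ_0Z_j=iZ_j\oplus(-iZ_j)$ against $iZ_j\oplus iZ_j$, use simplicity to fill each block, invoke maximality of $\bipartite\oplus\bipartite$ in $\bipartiteH$, and finish with the block-antidiagonal $iX_0$'' all match (the paper reaches $\bipartite\oplus\bipartite$ via maximality of the diagonal rather than via simplicity, but this is cosmetic).

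Case~(b) of part~(ii) has a genuine gap. First, a small but consequential imprecision: the block-diagonal algebra you generate is not ``$\bipartite$ and $iZ_j$ inside $\su(2^n)$'' but the twisted diagonal $\{g\oplus g: g\in\bipartite\}\oplus\{h\oplus(-h): ih \text{ in the complement of } \bipartite\}$, i.e.\ a copy of $\su(2^n)$ embedded as $g\oplus\theta(g)$ for the involution $\theta$ fixing $\bipartite$. More seriously, your concluding step --- that the algebra generated by this twisted $\su(2^n)$ together with $iX_0$ ``exceeds every maximal subalgebra of $\bipartiteH$'' --- only checks the candidates listed in Table~\ref{tab:maximal}, and that table is explicitly a table of \emph{example} maximal subalgebras, not a classification. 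Irreducibility of the action rules out all reducibly-acting proper subalgebras, but $\so(2^{n+1})$ and $\usp(2^{n+1})$ also possess irreducibly-acting maximal subalgebras (tensor-product types such as $\so(p)\otimes\so(q)$ and irreducibly embedded simple subalgebras in Dynkin's classification), and you have not excluded that your generated algebra sits inside one of them. The paper avoids this entirely by first proving that $iZ_0\in\bipartiteWW$ in case~(b) --- this is a separate, nontrivial computation carried out on the four-vertex subgraphs of Fig.~\ref{fig:ygraph}(a) --- so that the generated algebra contains the \emph{specific} maximal subalgebra $\su(2^n)\oplus\uu(1)$ of $\so(2^{n+1})$ resp.\ $\usp(2^{n+1})$ from Proposition~\ref{lem:max}, after which Definition~\ref{def_max} and the block-antidiagonal $iX_0$ finish the proof. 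Your route omits the generation of $iZ_0$, and that omission is exactly what forces you into the incomplete maximality elimination; either supply the $iZ_0$ computation or complete the elimination over the full list of irreducibly-acting maximal subalgebras.
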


\begin{proof}
To prove (i), note that the edge $(0,j)$ needs to connect $\tilde{V}_{1}$ and $\tilde{V}_2$ 
as $\tilde{G}$
is bipartite. Thus $0$ and $j$ cannot be both in $\tilde{V}_{1}$ or $\tilde{V}_{2}$.
We apply Lemma~\ref{prop:so:sp}(iv) to decide whether $Z_j\in \bipartite = \bipartiteW$ or not.
In the case (a), $Z_0\in \bipartiteWW$
would imply $\bipartiteW \not\subset \bipartite$ which is impossible.
(ii)~$\bipartite$ is block-diagonally spanned by
$i I {\otimes} (\bigotimes_{v\in V} A_v)$ with $A_v \in \{I,X,Y,Z\}$ where the number of $A_v=X$ with 
$v \in V$ has the opposite parity of  the number of $A_w\in \{Y,Z\}$ with $w \in V_k$.
In case (a), $iZ_j\in \bipartiteW$ and adding $iZ_0Z_j$
results via Prop.~\ref{lem:max} in $\bipartite {\oplus} \bipartite$ which is maximal in $\so(2^{n+1})$
if $\bipartite \iso \so(2^{n})$ or in $\usp(2^{n+1})$ if $\bipartite \iso \usp(2^{n})$
[see Prop.~\ref{lem:max}].
Adding $iX_0$, we obtain that $\bipartiteWW = \bipartiteH$. In case (b),
$iZ_j\not\in \bipartiteW$ and adding $iZ_0Z_j$ results in $\su(2^{n})$
via Prop.~\ref{lem:max}. Its basis is spanned by all
$i I {\otimes} (\bigotimes_{v\in V} A_v)$ and $i Z {\otimes} B_j$ for 
Pauli strings $iB_j \not\in \bipartite$.
Assuming that $iZ_0 \in \bipartiteWW$, we generate $\su(2^{n}){\oplus}\uu(1)$
which is maximal in $\so(2^n)$ and $\usp(2^n)$ [see Prop.~\ref{lem:max}].
Adding $iX_0$ implies $\bipartiteWW = \bipartiteH$ where either
$\bipartite \iso \so(2^n)$ and $\bipartiteH \iso \usp(2^{n+1})$ or
$\bipartite \iso \usp(2^n)$ and $\bipartiteH \iso \so(2^{n+1})$.
Deciding if $iZ_0 \in \bipartiteWW$ for (b),
one of the two connected, four-vertex subgraphs of $\tilde{G}$ with the vertex $q=0$
in Fig.~\ref{fig:ygraph}(a) has to be a possibility and the generation of $iZ_0$
has been checked.
\end{proof}

To apply the induction step, we need to account for all base cases associated to connected bipartite graphs.
As $\bipartiteW \subseteq \bipartite$ [see Lemma~\ref{prop:so:sp}(vi)], $\bipartiteW = \bipartite$ holds for all $W\subseteq V_k$
if it is valid for $\abs{W}=1$.
We establish $\bipartiteW = \bipartite$ for path graphs with  $iZ_w$ where at least one vertex
$w\in W$ has degree two, i.e., is not at one end
[see Fig.~\ref{fig:ygraph}(b)]:

\begin{lemma}[Figure~\ref{fig:ygraph}(b)]\label{path-one}
For a path graph with
edges $E$ 
and bipartition 
$V=\allowbreak{}
V_1 {\uplus} V_2$ where
$V_1:=\{v {\in} V\allowbreak{} \text{ with}\allowbreak{} v \text{ odd}\}$
and $V_2:=\{v {\in} V\allowbreak{} \text{ with } \allowbreak{} v \text{ even}\}$,
$\kdot$ is generated by
$iX_u$ for $u\in V$, $iZ_uZ_v$ for $\{u,v\} \in E$,
$iZ_w$ for $w \in W\subseteq V_k$.
For cases not in Fig.~\ref{fig:sporadic}(b),
$W$ is assumed to contain a degree-two vertex.
We obtain $\kdot = \bipartite$ and Table~\ref{tab:path:one}.
\end{lemma}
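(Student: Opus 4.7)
My plan is to prove the equality $\kdot = \bipartite$ by induction on $n=\abs{V}$, anchored by explicit small-case verifications, and using Lemma~\ref{lem:induction} as the engine of the inductive step. The containment $\kdot \subseteq \bipartite$ is already given by Lemma~\ref{prop:so:sp}(vi), so throughout I would focus on the reverse inclusion $\bipartite \subseteq \kdot$.

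First I would dispose of the small base cases. For $P_3$ with $W=\{2\}$, and for $P_4$ with $W$ containing one degree-two vertex (i.e.\ $W \cap \{2,3\}\neq \emptyset$), I would verify $\kdot = \bipartite$ directly by exhibiting enough commutators to span $\bipartite$. The crucial starting observations are that, from $iZ_w$ and $iX_w$, one immediately obtains $iY_w=-\tfrac{1}{2}[iZ_w,iX_w]$, and then brackets such as $[iY_w,iZ_{w-1}Z_w] = -2i Z_{w-1}X_w$ and $[iY_w,iZ_wZ_{w+1}] = 2iX_wZ_{w+1}$ break the even-parity $\gfree \cong \so(2n)$ of Thm.~\ref{app:thm:path} by introducing Pauli strings with a single $Z$. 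Iterating such brackets against the Pauli-string basis $\mathcal{B}_b \cup \mathcal{B}_c$ of $\gfree$ from Prop.~\ref{prop:spinor}, one generates strings of all the parities allowed by Def.~\ref{def:so:sp}, and a dimension count confirms $\kdot = \bipartite$. These small-$n$ computations can also be verified by direct computer-algebra calculation (cf.\ the use of \textsc{Magma} acknowledged in the paper).

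Second, I would run the induction on $n \geq 5$. Suppose the statement holds for all path graphs with fewer vertices. Given $P_n$ with $W \subseteq V_k$ containing some degree-two vertex $w$, I would choose an endpoint of $P_n$ (either vertex $1$ or vertex $n$) whose removal leaves $w$ still at degree two in $P_{n-1}$; since $w\in\{2,\ldots,n-1\}$ and $n\geq 5$, the endpoint farther from $w$ always works. Writing $\tilde{G}=P_n$ and $G=P_{n-1}$ as in Lemma~\ref{lem:induction}, and taking the same $W$ (which is contained in $V(G)$ after the removal), the inductive hypothesis gives $\bipartiteW = \bipartite$ for $G$. Lemma~\ref{lem:induction} then upgrades this to $\bipartiteWW = \bipartiteH$ for $\tilde{G}=P_n$, which is exactly the desired conclusion. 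The book-keeping of which part $V_k$ receives the new endpoint and how $\abs{V_k}$ changes parity is exactly what Lemma~\ref{lem:induction}(i)-(ii) handles, so no separate case analysis for $\so$ vs.\ $\usp$ is required at the inductive step.

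Third, I would reconcile the statement with the edge cases. If $W$ consists only of endpoint vertices of $P_n$, the hypothesis of the lemma excludes this regime except for the sporadic low-dimensional graphs depicted in Fig.~\ref{fig:sporadic}(b), which are already settled by Corollary~\ref{coro:spinor}(ii)-(iii) together with the exceptional isomorphisms $\so(5)\iso\usp(4)$ and $\so(8)\iso\so(8)$. When $W$ contains both interior and endpoint vertices, the interior generator alone drives the argument above, and the extra $iZ_w$'s for endpoint $w$'s are subsumed since $\bipartite$ already contains every $iZ_v$ with $v\in V_k$ by Lemma~\ref{prop:so:sp}(iv). Finally, the contents of Table~\ref{tab:path:one} follow from Lemma~\ref{prop:so:sp}(ii)-(iii) by reading off the parity of $\abs{V_k}$ in each bipartite type of path graph.

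The main obstacle I foresee is the base-case verification: one must show that the particular bracket identities generated from a single interior $iZ_w$ really do exhaust $\bipartite$, rather than leaving some proper subalgebra that still strictly contains $\so(2n)$. The danger is that $\so(2n)$ is \emph{not} maximal inside $\so(2^n)$ or $\usp(2^n)$ via Prop.~\ref{lem:max}, so a purely maximality-based shortcut is unavailable, and the base case genuinely requires producing explicit Pauli strings whose parities cover the support of $\bipartite$ described in Def.~\ref{def:so:sp}. Once this is done for $n\in\{3,4\}$, the inductive step via Lemma~\ref{lem:induction} is straightforward.
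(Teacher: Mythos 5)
Your proposal is correct and follows essentially the same route as the paper: the containment $\kdot \subseteq \bipartite$ from Lemma~\ref{prop:so:sp}, direct (computational) verification of the base cases $n\leq 4$, reduction to a single degree-two vertex in $W$, and the inductive step from $P_{n-1}$ to $P_n$ via Lemma~\ref{lem:induction}. Your additional remarks on why the base case cannot be dispatched by a maximality argument and on reading Table~\ref{tab:path:one} off Lemma~\ref{prop:so:sp}(ii)-(iii) are consistent with, and usefully expand on, the paper's very terse proof.
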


\begin{figure}
\includegraphics{ygraph.pdf}
\caption{\textbf{Lemmas~\ref{lem:induction}-\ref{lem:y:shape}.} 
Notation as in Fig.~\ref{fig:sporadic}.
(a)~$iZ_q$ is generated for $?$-marked
vertices $q$. 
(b)~Path graph with one black vertex not at the ends.
(c)~Cycle graph with one black vertex.
(d)~Y-shaped graphs with one black vertex at the long end
with general form (d2).
\label{fig:ygraph}}
\end{figure}

\begin{proof}
One can computationally verify the statements for $n\leq 4$.
The containment of $\kdot$ in $\so(2^n)$ or $\usp(2^n)$
follows from Lemma~\ref{prop:so:sp}. It is enough to prove the statements
for $\abs{W}=1$. Lemma~\ref{lem:induction} completes the proof.
\end{proof}

Moving from path graphs in Figure~\ref{fig:ygraph}(b) to  cycle graphs in
Figure~\ref{fig:ygraph}(c), we obtain $\bipartiteW = \bipartite$ for
cycle graphs:

\begin{lemma}[Figure~\ref{fig:ygraph}(c)]\label{odd-cycle-one}
The \DLA generated by
$iX_u$ for $1\leq u \leq n \geq 3$, $iZ_vZ_{v+1}$ for $1\leq v < n$, $iZ_1Z_{n}$,
and $iZ_1$ is isomorphic
to $\su(2^n)$ if $n$ is odd, 
$\so(2^n)$ if $n$ is divisible by four, and $\usp(2^n)$ otherwise.
\end{lemma}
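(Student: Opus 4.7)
The plan is to first establish that the generated DLA acts irreducibly and then apply Proposition~\ref{prop_obata} together with the bipartite-framework machinery of Appendix~\ref{appendix:sub:bipartite}. For irreducibility, I would observe that by Lemma~\ref{app:prop:center}(a) the commutant of $\gfree$ for the cycle graph is $\spn_{\C}\{I^{\otimes n}, X^{\otimes n}\}$; since $iZ_1$ anticommutes with $X^{\otimes n}$, augmenting the generators by $iZ_1$ collapses the commutant to scalar multiples of the identity, so the generated DLA is irreducibly embedded into $\su(2^n)$.

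Next I would apply Proposition~\ref{prop_obata} by searching for a nonzero complex matrix $S$ with $SH + H^{t} S = 0$ for every hermitian generator $H$. Writing $S = \bigotimes_{u=1}^{n} A_u$ as a Pauli tensor: anticommutation with each $X_u$ forces $A_u \in \{Y, Z\}$ for all $u$; anticommutation with $Z_1$ forces $A_1 = Y$; and anticommutation with each edge operator $Z_v Z_{v+1}$ (including the closing $Z_1 Z_n$) requires that exactly one of $A_v, A_{v+1}$ equals $Y$. Collectively these conditions amount to a proper two-coloring of the cycle with colors $\{Y, Z\}$, feasible if and only if $n$ is even. For even $n$ I choose $S = Y_1 Z_2 Y_3 Z_4 \cdots Y_{n-1} Z_n$ and observe $S^{t} = (-1)^{n/2} S$; Proposition~\ref{prop_obata} then places the DLA inside $\so(2^n)$ when $n \equiv 0 \pmod{4}$ and inside $\usp(2^n)$ when $n \equiv 2 \pmod{4}$. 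For odd $n$, no $S$ exists and the DLA is not conjugate to any subalgebra of $\so(2^n)$ or $\usp(2^n)$.

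For even $n$, to convert containment into equality, I would place the cycle into the bipartite framework of Definition~\ref{def:algebra:W} with bipartition $V_1, V_2$ given by the odd- and even-indexed vertices and $W = \{1\} \subseteq V_1$. Lemma~\ref{prop:so:sp}(ii)--(iii) identifies $\bipartite \iso \so(2^n)$ for $|V_1| = n/2$ even and $\bipartite \iso \usp(2^n)$ for $|V_1| = n/2$ odd, and the equality $\bipartiteW = \bipartite$ follows from the bipartite induction of Lemma~\ref{lem:induction}, starting from the path-graph base case of Lemma~\ref{path-one} with an interior black vertex (available after a cyclic relabeling, which is an automorphism of $C_n$) and closing the cycle by adjoining the missing edge via the same maximality argument.

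For odd $n$, I would combine irreducibility, the inclusion of $\gfree \iso \so(2n) \oplus \so(2n)$, and the fact that $iZ_1$ acts off-block-diagonally on the $X^{\otimes n}$-eigenspace decomposition, with the maximal-subalgebra list for $\su(2^n)$ from Proposition~\ref{lem:max}: only $\so(2^n)$, $\usp(2^n)$, and $\su(2^{n-1}) \oplus \su(2^{n-1}) \oplus \uu(1)$ are maximal. The first two are ruled out by the bilinear-form analysis just above, and the third is ruled out because $iZ_1$ swaps the two spinor blocks rather than leaving them invariant. Hence the DLA must coincide with $\su(2^n)$. The main obstacle I anticipate is precisely this last step: Proposition~\ref{prop_obata} alone excludes $\so(2^n)$ and $\usp(2^n)$ but not every other proper irreducible subalgebra of $\su(2^n)$, so the maximal-subalgebra chain argument together with the off-diagonal character of $iZ_1$ must be executed with care to exclude every intermediate possibility.
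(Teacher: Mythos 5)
Your even-$n$ analysis is essentially sound and close in spirit to the paper's: the search for a Pauli-string $S$ with $SH+H^{t}S=0$ correctly reduces to a proper two-coloring of the cycle, yields $S=Y_1Z_2\cdots Y_{n-1}Z_n$ with $S^{t}=(-1)^{n/2}S$, and the lower bound from the path subgraph (Lemma~\ref{path-one} with the black vertex at a degree-two position) sandwiches the \DLA against the Proposition~\ref{prop_obata} upper bound. The paper argues the even case slightly more directly---the closing-edge generator $iZ_2Z_3$ already satisfies the parity condition of Def.~\ref{def:so:sp} for the path's bipartition, so adjoining it changes nothing---but your version reaches the same conclusion.

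The odd-$n$ case has a genuine gap, and it is exactly the one you flag at the end without resolving. Your plan is to rule out every maximal subalgebra of $\su(2^n)$ containing the \DLA, citing Proposition~\ref{lem:max}; but Table~\ref{tab:maximal} is explicitly a list of \emph{example} maximal subalgebras, not a classification. Irreducibility plus exclusion of $\so(2^n)$, $\usp(2^n)$, and $\su(2^{n-1})\oplus\su(2^{n-1})\oplus\uu(1)$ does not force the \DLA to be all of $\su(2^n)$: there are many other irreducible maximal subalgebras (Dynkin's classification), and the only lower bound you carry into the odd case is $\gfree\iso\so(2n)\oplus\so(2n)$, which is polynomially small. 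The paper closes this differently: delete the edge $\{2,3\}$ so that the remaining path graph has vertex $1$ (carrying $iZ_1$) at a degree-two position; Lemma~\ref{path-one} then shows the \DLA already contains a \emph{specific} copy of $\so(2^n)$ or $\usp(2^n)$ of full dimension. For odd $n$ the two endpoints of the deleted edge land in the same part of the path's bipartition, so $iZ_2Z_3$ violates the parity condition of Def.~\ref{def:so:sp} and lies outside that copy; maximality of $\so(2^n)$ and $\usp(2^n)$ in $\su(2^n)$ (which \emph{is} one of the genuine entries of Table~\ref{tab:maximal}) then yields $\su(2^n)$ in a single step. In short, the fix is to use maximality of a subalgebra you have already placed \emph{inside} the \DLA, rather than to enumerate all subalgebras that could sit \emph{above} it.
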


\begin{proof}
Removing the edge $(2,3)$ and its generator $iZ_2Z_3$
so that vertex $1$ and $iZ_1$ are not at the end of the resulting path graph, we obtain either
$\so(2^n)$ or $\usp(2^n)$ via Lemma~\ref{path-one}. For $n$ even, 
$iZ_2Z_3$ is already contained in the generated \DLA and we stay with
$\so(2^n)$ or $\usp(2^n)$. For $n$ odd, $iZ_2Z_3$ is not contained in $\so(2^n)$ or $\usp(2^n)$
and Prop.~\ref{lem:max} shows that we generate $\su(2^n)$.
\end{proof}

One challenge arises from the path graphs in Corollary~\ref{coro:spinor} which 
cannot be used as base cases as either $n<4$ or they do not observe $\bipartiteW = \bipartite$.
To provide a set of base cases consisting of cycle graphs and 
path graphs extended 
with one additional vertex, the following Y-shaped graphs play a pivotal role:

\begin{lemma}[Figure~\ref{fig:ygraph}(d)]\label{lem:y:shape}
Given a graph 
with $n{\geq} 4$
and edges $E=\{\{1,3\};\allowbreak{} \{j,j{+}1\}\allowbreak{} \text{ for } \allowbreak{} 2{\leq} j {<}n\}$,
the \DLA
generated by
$iX_v$ for $v\in V$, $iZ_uZ_v$ for $\{u,v\} \in E$, and $iZ_n$
is isomorphic to $\usp(2^n)$ for $n \bmod 4 \in \{0,3\}$
and $\so(2^n)$ for $n \bmod 4 \in \{1,2\}$.
\end{lemma}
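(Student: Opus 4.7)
The plan is to obtain the upper bound from Lemma~\ref{prop:so:sp}(vi), and then to establish the reverse containment by a subgraph-removal argument in the spirit of the proof of Lemma~\ref{odd-cycle-one}. For the upper bound, the algebra in question is $\bipartiteW$ with $W = \{n\}$, which is contained in $\bipartite$ by Lemma~\ref{prop:so:sp}(vi). A direct count of the bipartition of $G_n$ places vertex $n$ in a class $V_k$ whose size has parity odd precisely for $n \bmod 4 \in \{0,3\}$ and even otherwise, matching the stated isomorphism types via Lemma~\ref{prop:so:sp}(ii)--(iii).

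For the lower bound, I would remove the edge $\{2,3\}$ from $G_n$. This disconnects the Y-shape into an isolated vertex $2$ (carrying only the generator $iX_2$, which acts as $\uu(1)$ on qubit $2$) and the path on vertices $\{1, 3, 4, \ldots, n\}$, i.e., $n{-}1$ vertices, with $iZ_n$ at its degree-one endpoint. By Corollary~\ref{coro:spinor}(ii), applied after reflecting the path, the path generators together with $iZ_n$ produce an algebra isomorphic to $\so(2(n{-}1)+1) = \so(2n{-}1)$, acting irreducibly on the $n{-}1$ qubits outside position $2$. Thus, before the edge is added back, the generated subalgebra equals $\so(2n{-}1)\oplus\uu(1)$, embedded accordingly in $\su(2^n)$.

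Adding back the generator $iZ_2 Z_3$ then yields the full algebra $\bipartiteW$. I would first check that $\bipartiteW$ is irreducibly embedded in $\su(2^n)$: its commutant sits inside the free-mixer commutant $\spnC\{I^{\otimes n}, X^{\otimes n}\}$ of Lemma~\ref{app:prop:center}, and the presence of $iZ_n$ (which does not commute with $X^{\otimes n}$) reduces the commutant to $\spnC\{I^{\otimes n}\}$. Hence $\bipartiteW$ is an irreducible subalgebra of the simple Lie algebra $\bipartite \iso \so(2^n)$ or $\usp(2^n)$. The equality $\bipartiteW = \bipartite$ would then follow from Proposition~\ref{lem:max} by ruling out containment of $\bipartiteW$ in each maximal subalgebra of $\bipartite$: the new generator $iZ_2 Z_3$, together with $\so(2n{-}1)\oplus\uu(1)$, produces elements outside any such candidate.

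The hardest part will be this last step, which requires a case-by-case analysis depending on $n \bmod 4$. The most delicate situation arises when $n$ is odd, as the unitary subalgebra $\su(2^{n-1})\oplus\uu(1)$ is maximal in both $\so(2^n)$ and $\usp(2^n)$ by Proposition~\ref{lem:max} and naturally contains the spinor-type algebra $\so(2n{-}1)$ via Proposition~\ref{prop:spinor}; one must explicitly verify that the edge generator $iZ_2 Z_3$ does not preserve the complex structure associated with this embedding. For the even cases the exclusion is more direct, as the relevant maximal subalgebras already have a block structure inherited from $X^{\otimes n}$ that the coupling $iZ_2 Z_3$ visibly breaks.
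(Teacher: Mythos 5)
Your upper bound (Lemma~\ref{prop:so:sp}(vi) plus the parity count of the bipartition class containing vertex $n$) and your irreducibility argument (the commutant collapses to $\spnC\{I^{\otimes n}\}$ because $iZ_n$ anticommutes with $X^{\otimes n}$) are both correct, and your starting point --- a spinor-embedded $\so(2n{-}1)$ on a sub-path obtained by deleting one edge, via Corollary~\ref{coro:spinor} --- is essentially the same as the paper's. The gap is in the final step. You propose to conclude $\bipartiteW = \bipartite$ by ``ruling out containment of $\bipartiteW$ in each maximal subalgebra of $\bipartite$,'' but Proposition~\ref{lem:max} and Table~\ref{tab:maximal} only supply \emph{example} maximal subalgebras, not a classification, so this exclusion cannot be executed with the paper's toolkit. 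Worse, the algebra you hold at that point, $\so(2n{-}1)\oplus\uu(1)$ together with the single element $iZ_2Z_3$, has dimension $O(n^2)$ while the target has dimension $\Theta(4^n)$; there are many proper irreducible subalgebras of $\so(2^n)$ or $\usp(2^n)$ that contain the spinor $\so(2n{-}1)$ --- for instance the spinor-embedded $\so(2n)$, $\so(2n{+}1)$, and $\so(2n{+}2)$ of Proposition~\ref{prop:spinor} and Corollary~\ref{coro:spinor} --- so a single maximality step cannot bridge the gap, and one would have to control an entire chain of intermediate subalgebras with no available classification.

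The paper's proof avoids this by bootstrapping locality: starting from $\mathfrak{f}\iso\so(2n{-}1)$ on the tail path $2,\ldots,n$, it uses explicit nested commutators with $iZ_1Z_3$ and specific elements of $\mathfrak{f}$ to step the operator $iZ_2X_3\cdots X_n$ down to $iZ_2X_3$ (for $n$ odd) or $iZ_2X_3X_4$ (for $n$ even), and hence to produce the single-qubit generator $iZ_3$ or $iZ_4$ --- a $Z$ at a \emph{degree-two} vertex of the path on $\{2,\ldots,n\}$. This is exactly the hypothesis of Lemma~\ref{path-one}, which then yields the full $\so(2^{n-1})$ or $\usp(2^{n-1})$ on $n{-}1$ qubits, and the one-vertex induction step of Lemma~\ref{lem:induction} (whose maximality arguments are confined to the specific, listed cases of Table~\ref{tab:maximal}) adds vertex $1$ back to reach $\bipartite$. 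Generating that interior $iZ$ is the missing idea in your proposal; without it, the exclusion argument you sketch is not a proof.
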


\begin{table}
\caption{Cases in Lemma~\ref{path-one} for $n=\abs{V}$ and $W\subseteq V_k$.
    \label{tab:path:one}}
    \begin{tabular}{@{\hspace{1mm}}l@{\hspace{3mm}}r@{\hspace{1mm}}}
    \hline\hline
    \\[-4mm]
    \begin{tabular}{@{\hspace{1mm}}l@{\hspace{4mm}}c@{\hspace{0mm}}c@{\hspace{0mm}}c@{\hspace{0mm}}c@{\hspace{0mm}}c@{\hspace{0mm}}c@{\hspace{0mm}}c@{\hspace{0mm}}c@{\hspace{1mm}}}
    $n \bmod 4$ & 0 & 0 & 1 & 1 & 2 & 2 & 3 & 3\\
    $k \bmod 2$ & 1 & 0 & 1 & 0 & 1 & 0 & 1 & 0
        \\[0.5mm] \hline
    \\[-4mm]
    $\kdot \iso \so(2^n)$ & $\times$ & $\times$ & & $\times$ & & & $\times$\\
    $\kdot \iso \usp(2^n)$ & & & $\times$ & & $\times$ & $\times$ & & $\times$\\[1mm]
    \end{tabular}
    &
    \begin{tabular}{@{\hspace{1mm}}l@{\hspace{4mm}}c@{\hspace{0mm}}c@{\hspace{0mm}}c@{\hspace{0mm}}c@{\hspace{0mm}}c@{\hspace{0mm}}c@{\hspace{0mm}}c@{\hspace{0mm}}c@{\hspace{1mm}}}
    $n \bmod 4$ & 0 & 0 & 1 & 1 & 2 & 2 & 3 & 3\\
    $k \bmod 2$ & 1 & 0 & 1 & 0 & 1 & 0 & 1 & 0
        \\[0.5mm] \hline
    \\[-4mm]
    $iZ_1 \in \kdot$ & $\times$ && $\times$ & \phantom{$\times$} & $\times$ && $\times$ & \phantom{$\times$}\\
    $iZ_n \in \kdot$ && $\times$ & $\times$ &&& $\times$ & $\times$\\[1mm]
    \end{tabular}
    \\[2mm]
    \hline\hline
    \end{tabular}
\end{table}

\begin{proof}
We first prove that we generate $iZ_3$ for $n$ odd and $iZ_4$ for $n$ even.
Using Corollary~\ref{coro:spinor}(ii), $iX_2,\ldots,\allowbreak{}iX_n$; $iZ_2Z_3,\ldots,\allowbreak{}iZ_{n-1}Z_n$; and $iZ_n$
generate the \DLA $\mathfrak{f}$ isomorphic to $\so(2n{-}1)$ and the corresponding basis elements have been
detailed in Section~\ref{appendix:sub:pathcycle}. We have 
\begin{align*}
& [f_1,[f_2,
[\tfrac{i}{2}Z_1Z_3,
[f_3,[f_4,
[\tfrac{i}{2}Z_1Z_3,\tfrac{i}{2}Z_2X_3\!\cdot\cdot{}X_{n-2d}]]]]]]\\
&=\tfrac{i}{2}Z_2X_3\!\cdot\cdot{}X_{n-2d-2},
\end{align*}
where  
$f_1:=iY_2X_3\!\allowbreak\cdot\cdot\allowbreak{}X_{n-2d-1}\allowbreak{}Y_{n-2d}/2$,
$f_2:=iY_3X_4\!\allowbreak\cdot\cdot\allowbreak{}X_{n-2d-1}\allowbreak{}Z_{n-2d}/2$,
$f_3:=iY_3X_4\!\allowbreak\cdot\cdot\allowbreak{}X_{n-2d-2}\allowbreak{}Y_{n-2d-1}/2$,
$f_4:=iY_2X_3\!\allowbreak\cdot\cdot\allowbreak{}X_{n-2d-2}\allowbreak{}Z_{n-2d-1}/2$
with $f_j\in\mathfrak{f}$ as well as
$0\leq 2d \leq n{-}3$ for $n$ odd and $0\leq 2d \leq n{-}4$ for $n$ even.
Thus we can generate $iZ_2X_3$ for $n$ odd and
$iZ_2X_3X_4$ for $n$ even starting from $iZ_2X_3\!\cdot\cdot{}X_{n}\in\mathfrak{f}$
and using $iZ_1Z_3$. As $iZ_2Y_3, iZ_2X_3Y_4 \in\mathfrak{f}$,
we
obtain $iZ_3/2=[iZ_2Y_3/2,iZ_2X_3/2]$ and $iZ_4/2=[iZ_2X_3Y_4/2,iZ_2X_3X_4/2]$
for $n$ odd and even, respectively. Thus
$iX_2,\ldots,\allowbreak{}iX_n$; $iZ_2Z_3,\ldots,\allowbreak{}iZ_{n-1}Z_n$; and 
$iZ_3$ for $n$ odd and $iZ_4$ for $n$ even are used to generate $\so(2^{n-1})$ and 
$\usp(2^{n-1})$ depending on $n$ [see Lemma~\ref{path-one}].
With $iZ_1Z_3$ and
$iX_1$,
Lemma~\ref{lem:induction} completes the proof.
\end{proof}

We formalize the notion of a path graph
extended with one additional vertex and the respective 
free-mixer \DLA augmented by one $iZ_w$ with $w\in V_k$
is determined:

\begin{lemma}\label{lem:extended:path}
Given a path graph $G$,
an \emph{extended path graph} $\tilde{G}$
is obtained by adding one vertex $0$ and one edge $(0,v)$ with $v \in V$ to $G$
such that $\tilde{G}$ is not a path graph. Then $\tilde{G}$
is bipartite with bipartition $\tilde{V}=\tilde{V}_1 {\uplus} \tilde{V}_2=V{\cup}\{0\}$
and $\tilde{V}_j \supseteq V_j$.
For $W=\{w\}\subseteq \tilde{V}_k$,
$\bipartiteWW = \bipartiteH$.
\end{lemma}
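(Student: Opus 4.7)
My approach first establishes the structural claims and then reduces the algebraic statement to previously proven lemmas by a case analysis on the location of $w$.

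Bipartiteness is immediate: the path $G$ admits the bipartition $V_1 \uplus V_2$ by parity of vertex index, and since $\{0,v\}$ is the only new edge, the new vertex $0$ is forced into the part opposite that of $v$. This yields a bipartition of $\tilde{G}$ with $\tilde{V}_j \supseteq V_j$. Because $\tilde{G}$ is not a path, $v$ cannot be an endpoint of $G$, so $v$ has degree three in $\tilde{G}$ and $\tilde{G}$ is a Y-shape with arms of lengths $1$, $v-1$, $n-v$ meeting at $v$; its three leaves are $0$, $1$, and $n$.

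For the algebraic statement I would split on whether $w \in \tilde{V}_k$ is an internal or leaf vertex of $\tilde{G}$. The easier case is $w \in V\setminus\{1,n\}$, i.e., $w$ is an internal (degree-two) vertex of the path $G$. Lemma~\ref{path-one} applied to $G$ with the marker $iZ_w$ yields $\bipartiteW = \bipartite$ as a subalgebra of $\bipartiteWW$, after which Lemma~\ref{lem:induction}, adding vertex $0$ and edge $\{0,v\}$ back, produces $\bipartiteWW = \bipartiteH$.

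The harder case is $w \in \{0,1,n\}$, where $w$ is a leaf of $\tilde{G}$. Here I would induct on $\min(v-1,n-v)$, the length of the shorter of the two longer arms. In the base case $v=2$ or $v=n-1$, the three arms have lengths $(1,1,b)$ and $\tilde{G}$ matches the Y-shape of Lemma~\ref{lem:y:shape} after relabeling: if $w$ is the long-arm leaf, Lemma~\ref{lem:y:shape} gives $\bipartiteWW = \bipartiteH$ directly; if $w$ is one of the two short-arm leaves, those two leaves are exchanged by a bipartition-preserving graph automorphism of $\tilde{G}$, so it suffices to treat a single representative, which is handled by a commutator computation inside $\bipartiteWW$ that promotes $iZ_w$ to an $iZ$ generator at some internal vertex of the long arm, reducing back to the easier case. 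For the inductive step where both longer arms have length $\geq 2$, choose a leaf $u \neq w$ at the end of one of those arms; deleting $u$ yields a smaller extended path graph $\tilde{G}'$ in which $w$ is still a leaf and the induction hypothesis applies, and Lemma~\ref{lem:induction} then adds $u$ back to finish.

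The principal obstacle is the short-arm-leaf subcase of the base case. No path subgraph of $\tilde{G}$ contains $w$ as an internal vertex, so Lemma~\ref{path-one} cannot be invoked, and Lemma~\ref{lem:y:shape} as stated only handles $iZ$ at the long-arm leaf. Bridging this gap requires an explicit Lie-bracket computation analogous to the ``$iZ_3$ for odd $n$ and $iZ_4$ for even $n$'' manipulation appearing in the proof of Lemma~\ref{lem:y:shape}, in order to promote the short-arm $iZ_w$ to an interior $iZ$ generator on the long arm; that step is where most of the technical work is concentrated.
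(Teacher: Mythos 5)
There is a genuine gap, and you have flagged it yourself: the subcase where $w$ is a short-arm leaf of $\tilde{G}$ (e.g.\ $w=0$, or $w=1$ when $v=2$) is left unproven, deferred to an unspecified ``explicit Lie-bracket computation'' that would promote $iZ_w$ to an interior generator. A proof that ends by announcing where the technical work would be concentrated is not a proof of the lemma. Moreover, the premise that such a new computation is needed is mistaken, and this is exactly where your argument diverges from the paper's. The paper's proof does not take the whole graph with arms $(1,1,b)$ as the base case of an induction; it chooses a \emph{smaller} base subgraph $\bar{G}\subseteq\tilde{G}$ containing $w$ that is already covered by Lemma~\ref{path-one} or Lemma~\ref{lem:y:shape}, and then grows $\tilde{G}$ from $\bar{G}$ by repeated application of Lemma~\ref{lem:induction}. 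For a leaf $w$, the right base subgraph is the four-vertex star on $\{w\}\cup N(v)\cup\{v\}$ centered at the degree-three vertex $v$: this is precisely the $n=4$ instance of Lemma~\ref{lem:y:shape}, and since $K_{1,3}$ is symmetric under permuting its leaves (a qubit relabeling conjugates the generator set with $iZ_w$ for one leaf into that for any other), the lemma applies with \emph{any} leaf marked --- including your problematic short-arm leaf. One then adds the remaining path vertices one at a time as pendant edges, each step being an application of Lemma~\ref{lem:induction} (the bipartition and the condition $W\subseteq\bar{V}_k$ are preserved throughout). Your first case ($w$ a degree-two vertex of $G$, via Lemma~\ref{path-one} plus one application of Lemma~\ref{lem:induction}) matches the paper and is fine, as is the structural discussion of bipartiteness; but the leaf case needs to be restructured around the symmetric star base rather than patched with a new commutator identity. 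You should also note, as the paper does, that the small cases $n\leq 3$ (where Lemma~\ref{lem:induction}'s hypothesis $n\geq 4$ fails) must be checked directly.
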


\begin{proof}
The result is verified for $\abs{V}\leq 3$ and let $\abs{V} \geq 4$.
Any extended path graph $\tilde{G}$ with $W=\{w\}\subseteq \tilde{V}_k$
is either obtained from Lemma~\ref{lem:y:shape} or
by adding vertices and edges to a graph $\bar{G}$
from Lemma~\ref{path-one} or \ref{lem:y:shape} such that 
the bipartition of $\bar{G}$ is given by $\bar{V}=\bar{V}_1 {\uplus} \bar{V}_2$
with $\bar{V}_j \subseteq V_j$ and
$W\subseteq \bar{V}_k$.
We then apply Lemma~\ref{lem:induction} multiple times.
\end{proof}

With the notion of an extended path graph, we can establish
this critical statement about $\gfree$
augmented with generators $iZ_w$ for $w\in W \subseteq V_k$:

\begin{proposition}\label{prop-bipartite-add-Z}
For a connected bipartite graph $G$
with $\abs{V}=n$,
$\bipartiteW$ for 
$\emptyset \neq W\subseteq V_k$
is generated by 
$iX_v$ for  $v\in V$, $iZ_uZ_v$ for edges  $\{u,v\}$,
and $iZ_w$ for $w\in W$.
For path graphs not in
Fig.~\ref{fig:sporadic}(b),
$W$ is assumed to contain a degree-two vertex.
(i)~$\bipartiteW = \bipartite$.
(ii)~$\bipartiteW \iso\allowbreak{} \so(2^n)$
for even $\abs{V_k}$ and
$\bipartiteW \iso \usp(2^n)$ for
odd $\abs{V_k}$. 
\end{proposition}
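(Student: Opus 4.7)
The plan is to establish (i), from which (ii) follows immediately via Lemma~\ref{prop:so:sp}(ii)--(iii). Since $\bipartiteW \subseteq \bipartite$ always holds by Lemma~\ref{prop:so:sp}(vi), only the reverse inclusion requires work.

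I would first reduce to $|W|=1$: adding generators can only enlarge the \DLA, so once $\mathfrak{k}^{\{w\}}_G = \bipartite$ is established for any single $w\in V_k$, the analogous equality for larger $W$ follows by sandwiching between $\mathfrak{k}^{\{w\}}_G$ and $\bipartite$. Writing $W=\{w\}$, I would next reduce to the case where $G$ is a spanning tree. Indeed, by Def.~\ref{def:so:sp}, $\bipartite$ depends only on the vertex set $V$ and on the bipartition side $V_k$, not on the edge set; hence if $G'$ is a connected spanning subgraph of $G$ inheriting the bipartition, then $\bipartite[G'] = \bipartite$, while $\mathfrak{k}^{\{w\}}_{G'}\subseteq \mathfrak{k}^{\{w\}}_{G}\subseteq \bipartite$, so proving the equality for a spanning tree $G'$ yields it for $G$.

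I would then proceed by induction on $n=|V|$, using the small-$n$ cases together with Lemma~\ref{path-one}, Lemma~\ref{lem:y:shape}, and Lemma~\ref{lem:extended:path} as the base. For the inductive step, I would split on the shape of the tree $G$. If $G$ is a path graph, then by hypothesis $w$ has degree two and Lemma~\ref{path-one} applies directly. If $G$ is an extended path graph in the sense of Lemma~\ref{lem:extended:path}, that lemma applies directly. Otherwise, I would select a leaf $\ell\neq w$ of $G$ such that $G-\ell$ remains a connected bipartite graph still satisfying the hypotheses of the proposition; the inductive hypothesis then gives $\mathfrak{k}^{\{w\}}_{G-\ell}=\bipartite[G-\ell]$, and Lemma~\ref{lem:induction} (applied with $G-\ell$ and $G$ playing the roles of $G$ and $\tilde{G}$ there, the added vertex being $\ell$) extends the equality to $\mathfrak{k}^{\{w\}}_G=\bipartite$.

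The main obstacle will be the combinatorial selection of the leaf $\ell$. When $G$ has several vertices of degree $\geq 3$, almost any leaf can be removed without losing the branching structure; the delicate case is when $G$ has a single branch vertex with short side-arms, so that removing any leaf $\neq w$ collapses $G$ to a path with $w$ at one of its endpoints, violating the hypothesis. I would argue that such problematic configurations are themselves extended path graphs covered by Lemma~\ref{lem:extended:path}, and are therefore handled as base cases rather than through the induction. A case analysis on $\deg(w)$ and on the distance from $w$ to the nearest vertex of degree $\geq 3$ will cover the remaining tree shapes, completing the induction.
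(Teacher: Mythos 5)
Your strategy is essentially the paper's: an induction on the number of vertices that peels off a leaf, restores it via Lemma~\ref{lem:induction}, and rests on Lemmas~\ref{path-one}, \ref{lem:y:shape}, and \ref{lem:extended:path} as base cases; the paper likewise removes an end vertex $q$ of a spanning tree and splits on whether $q\in W$, which your reduction to $\abs{W}=1$ together with the requirement $\ell\neq w$ replaces. The one genuine divergence is your preliminary reduction to a spanning tree, and that is exactly where a patch is needed. The inclusion $\mathfrak{k}_{G'}^{W}\subseteq\bipartiteW\subseteq\bipartite$ for a connected spanning subgraph $G'$ is fine, but the proposition's hypothesis is not automatically inherited by $G'$: if $G$ is an even cycle, every spanning tree is a path, and more generally the chosen tree may be a path with $w$ at an end, in which case $\mathfrak{k}_{G'}^{\{w\}}$ is genuinely smaller than $\mathfrak{k}_{G'}^{k}$ --- it is one of the algebras $\so(2n{+}1)$ or $\so(2n{+}2)$ of Corollary~\ref{coro:spinor} rather than $\so(2^n)$ or $\usp(2^n)$, outside the sporadic cases of Fig.~\ref{fig:sporadic}(b). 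You must therefore either choose the tree with care (for a cycle, delete an edge not incident to $w$ so that $w$ has degree two and Lemma~\ref{path-one} applies; for any $G$ with a vertex of degree at least three, a breadth-first tree rooted there retains that degree and so is not a path) or keep even cycles as a separate base case via Lemma~\ref{odd-cycle-one}, which is in effect what the paper does by never discarding edges. Your treatment of the delicate tree case is sound and can be stated more simply than you suggest: if $G$ is a tree that is neither a path nor an extended path graph, then $G-\ell$ is never a path for any leaf $\ell$, so every leaf $\ell\neq w$ serves for the inductive step and the announced case analysis on $\deg(w)$ and distances is unnecessary.
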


\begin{proof}
Clearly, (ii) follows from (i) and Lemma~\ref{prop:so:sp}.
We have checked all cases with $n\leq 4$, and suitable path and even-cycle graphs
are treated in Lemmas~\ref{path-one} and \ref{odd-cycle-one}
and extended path graphs are considered in Lemma~\ref{lem:extended:path}.
We proceed by induction from
a connected bipartite graph $\tilde{G}$ with one vertex $q$ (and its edges) removed from $G$. 
From now on, we assume
$n\geq 5$ and that $G$ is neither a path graph, a cycle graph,
nor an extended path graph. This also implies that $\tilde{G}$ cannot be a path graph.
By induction, we assume that Prop.~\ref{prop-bipartite-add-Z} holds for
all connected, bipartite graphs with less than $n$ vertices 
that are neither a path graph, a cycle graph,
nor an extended path graph.
We pick $q$ as one end vertex of a chosen minimal spanning $T$ tree of $G$,
then $\tilde{G}$ is connected (and clearly bipartite).
We divide the proof into three cases: (1)~$q\not\in W$ and $(V\setminus \{q\}) \cap W\neq \emptyset$,
(2)~$q\in W$ and $(V\setminus \{q\}) \cap W = \emptyset$, and
(3)~$q\in W$ and $(V\setminus \{q\}) \cap W \neq \emptyset$.
For (1) and (3), Prop.~\ref{prop-bipartite-add-Z} holds for $\tilde{G}$ 
as it is either a even-cycle graph [see Lemma~\ref{odd-cycle-one}]
or an extended path graph [see Lemma~\ref{lem:extended:path}]
or it holds for $\tilde{G}$ by induction.
The induction step follows by applying Lemma~\ref{lem:induction}.
For (2), we pick a different end vertex of the chosen
minimal spanning tree $T$ of $G$.
Then the conditions for (1) apply and we proceed as in (1).
\end{proof}

The free-mixer \DLA for connected bipartite graphs
can now be readily determined:

\begin{theorem}[bipartite graphs]\label{thm:free-mixer-bipartite}
Consider a connected bipartite graph $G$ 
which is neither
a path graph nor a cycle graph. Let $V=V_1 {\uplus} V_2$ 
be its vertex bipartition and
$\abs{V}=n\geq 4$.
The free-mixer \DLA for $G$ is given by 
(i)~$\gfree \iso \su(2^{n-1})$ if $n$ is odd (or, equivalently,
if $\abs{V_1}$ and $\abs{V_2}$ have opposite parities),
(ii)~$\gfree \iso \so(2^{n-1}){\oplus}\so(2^{n-1})$
if $\abs{V_1}$ and $\abs{V_2}$ are both even,
and (iii)~$\gfree \iso \usp(2^{n-1}){\oplus}\usp(2^{n-1})$
if $\abs{V_1}$ and $\abs{V_2}$ are both odd.
Note that $\dim[\su(2^{n-1})]=2^{2n-2}{-}1$,
$\dim[\so(2^{n-1}){\oplus}\so(2^{n-1})]=2^{2n-2}{-}2^{n-1}$, and
$\dim[\usp(2^{n-1}){\oplus}\usp(2^{n-1})]=2^{2n-2}{+}2^{n-1}$.
\end{theorem}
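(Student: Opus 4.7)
The plan is to deduce Theorem~\ref{thm:free-mixer-bipartite} from Proposition~\ref{prop-bipartite-add-Z} by identifying $\gfree$ as the fixed subalgebra of the involution $\mathrm{Ad}_{X^{\otimes n}}$ acting on $\bipartiteW$ for any single-vertex set $W=\{w\}$ with $w\in V_k$, and then invoking standard facts about symmetric pairs of classical \DLAs. The first step is to observe that all generators of $\gfree$ commute with $X^{\otimes n}$ by Lemma~\ref{app:prop:center}(a), while the extra generator $iZ_w$ of $\bipartiteW$ anticommutes with it. So $\mathrm{Ad}_{X^{\otimes n}}$ is a nontrivial involution of $\bipartiteW$ with $\gfree\subseteq\bipartiteW^{+}$, and the inclusion must be an equality: the element $iZ_w\in\bipartiteW^{-}$ generates together with $\gfree$ all of $\bipartiteW$ by Prop.~\ref{prop-bipartite-add-Z}, and since the bracket closes as $[\bipartiteW^{+},\bipartiteW^{\pm}]\subseteq\bipartiteW^{\pm}$ and $[\bipartiteW^{-},\bipartiteW^{-}]\subseteq\bipartiteW^{+}$, the repeated brackets of $iZ_w$ with $\gfree$ cannot enlarge $\gfree$ itself.

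Next I would compute the interaction of $X^{\otimes n}$ with the bilinear form $S=\bigotimes_{v\in V_1}Z_v\bigotimes_{v'\in V_2}Y_{v'}$ used in the proof of Lemma~\ref{prop:so:sp} to identify $\bipartiteW$ as orthogonal or symplectic. A direct commutator computation using $\{X,Y\}=\{X,Z\}=0$ yields $X^{\otimes n}\,S=(-1)^n\,S\,X^{\otimes n}$. For cases (ii) and (iii), where $|V_1|$ and $|V_2|$ share their parity and therefore $n$ is even, $X^{\otimes n}$ preserves $S$ and hence lies in the corresponding orthogonal or symplectic group. A short calculation then shows that the $\pm 1$ eigenspaces $\HC_{\pm}$ of $X^{\otimes n}$ are mutually $S$-orthogonal of equal dimension $2^{n-1}$, with $S$ restricting to a nondegenerate form of the same type on each. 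Consequently $\bipartiteW^{+}$ splits as a direct sum of the symmetry algebras of these restrictions, giving $\so(2^{n-1})\oplus\so(2^{n-1})$ in case (ii) and $\usp(2^{n-1})\oplus\usp(2^{n-1})$ in case (iii), as required.

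For case (i) with $n$ odd, the relation $X^{\otimes n}SX^{\otimes n}=-S$ means $X^{\otimes n}$ no longer lies in the group preserving $S$, yet $\mathrm{Ad}_{X^{\otimes n}}$ still sends $\bipartiteW$ to itself (verified by the same $[\,\cdot\,{-}\,\cdot^t S\,]$ identity, with two sign flips cancelling). The corresponding symmetric pair is of ``complex'' type $DIII$ or $CII$, whose fixed subalgebra is $\mathfrak{u}(2^{n-1})$ embedded in $\so(2^n)$ or $\usp(2^n)$. Since $\gfree$ has trivial center by Lemma~\ref{app:prop:center}(c), the central $\uu(1)$ summand of this $\mathfrak{u}(2^{n-1})$ cannot be realized, so $\gfree\cong\su(2^{n-1})$. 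A dimension check against $\dim\su(2^{n-1})=2^{2n-2}-1$ confirms that nothing is missing, and the explicit block-diagonal embedding $M\oplus(-M^t)$ on $\HC_{+}\oplus\HC_{-}$ is forced by the sign reversal $X^{\otimes n}SX^{\otimes n}=-S$, which ties the two blocks together via dualization rather than leaving them independent.

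The main obstacle will be case (i): unlike cases (ii) and (iii), the involution is not conjugation by an element of the classical group preserving $S$, so one cannot simply decompose $\bipartiteW^{+}$ into stabilizers of two pieces of $S$. I would circumvent this either by directly invoking Cartan's classification of symmetric pairs, or, in a more self-contained manner, by applying Prop.~\ref{prop_obata} to a single irreducible block $\gfree^{+}$ acting on $\HC_{+}$: showing that no nonzero $S'$ satisfies $S'H+H^tS'=0$ for all block-diagonal components $H$ of the generators rules out both $\so$ and $\usp$ structure and pins the block down as $\su(2^{n-1})$. The pairing of the two blocks as $M\oplus(-M^t)$ then follows because the commutant is still only two-dimensional (Lemma~\ref{app:prop:center}(a)), so the two $\su(2^{n-1})$ summands of $\su(2^{n-1})\oplus\su(2^{n-1})$ cannot both be present independently.
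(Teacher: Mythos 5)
Your strategy---realizing $\gfree$ as the fixed subalgebra of the involution $\mathrm{Ad}_{X^{\otimes n}}$ on $\bipartiteW$ and reading off the answer from the classification of symmetric pairs---is genuinely different from the paper's route (which block-diagonalizes via the map $\Lambda$ of Eq.~\eqref{basis_change}, reduces to $\bipartiteWW$ on $n{-}1$ qubits via Prop.~\ref{prop-bipartite-add-Z}, and climbs up by the maximal-subalgebra arguments of Prop.~\ref{lem:max}). Unfortunately it has a gap at its very first step. The $\mathbb{Z}_2$-grading only gives $\gfree\subseteq\bipartiteW^{+}$: writing $\mathfrak{m}:=\bipartiteW^{-}$, the even part of the algebra generated by $\gfree$ and $iZ_w$ is $\gfree+[\mathfrak{m},\mathfrak{m}]+\cdots$, and brackets of two odd elements can (and do) produce even elements outside $\gfree$. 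Your claim that ``repeated brackets of $iZ_w$ with $\gfree$ cannot enlarge $\gfree$ itself'' is therefore not a valid deduction, and in case~(i) the asserted equality $\gfree=\bipartiteW^{+}$ is actually \emph{false}: the fixed subalgebra is $\uu(2^{n-1})$ (dimension $2^{2n-2}$) while $\gfree\iso\su(2^{n-1})$ has dimension $2^{2n-2}-1$, so $[\mathfrak{m},\mathfrak{m}]$ generates precisely the central element you later have to discard. You cannot both assert the equality and then argue the center ``cannot be realized.''

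The deeper problem is that your argument, even where the fixed-subalgebra computation is correct, only ever produces \emph{upper} bounds on $\gfree$. The substantive content of Theorem~\ref{thm:free-mixer-bipartite} is the matching lower bound---that $\gfree$ exhausts $\so(2^{n-1})\oplus\so(2^{n-1})$, $\usp(2^{n-1})\oplus\usp(2^{n-1})$, or $\su(2^{n-1})$ rather than being a proper subalgebra thereof---and nothing in your proposal establishes it. The paper obtains it either by the induction of Lemma~\ref{lem:induction} combined with maximality (Prop.~\ref{lem:max}), or by the explicit Pauli-string count of Lemma~\ref{lem:bipartite:dimensions:combinatorics} matched against the parity constraints of Lemma~\ref{app:prop:parity}. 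Your fallback for case~(i) via Prop.~\ref{prop_obata} does not close this gap either: the nonexistence of an invariant bilinear form $S'$ rules out conjugacy into $\so(2^{n-1})$ or $\usp(2^{n-1})$, but an irreducibly embedded subalgebra of $\su(2^{n-1})$ that is neither orthogonal nor symplectic need not be all of $\su(2^{n-1})$ (the examples $\mathcal{G}_a,\mathcal{G}_b,\mathcal{G}_c$ in Appendix~\ref{appendix:symmtry:analysis} illustrate exactly this failure mode). To repair the proof you would need to supplement the involution picture with an independent generation or dimension-counting argument.
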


\begin{proof}
Let $\tilde{G}$ be
a connected bipartite graph with one vertex $q$ (and its edges) removed from $G$.
Its vertex bipartition $\tilde{V} = \tilde{V}_1 {\uplus} \tilde{V}_2 = V\setminus \{q\}$ 
observes $\tilde{V}_j \subseteq V_j$.
We recall from Def.~\ref{def:so:sp} the notation $k,\bar{k} \in \{1,2\}$ with $k\neq \bar{k}$.
We pick $q$ as one end vertex of a chosen minimal spanning tree of $G$,
then $\tilde{G}$ is connected (and clearly bipartite). We permute the numbers of the vertices so that $q$ becomes the first vertex.
We transform the generators $iX_v$ for $v \in V$ and  $iZ_uZ_v$ for edges $\{u,v\}$ of $G$ into the basis 
from Eq.~\eqref{generators_sym} and obtain
(for $u, v \in \tilde{V}$)
(a)~$iX_u$, (b)~$iZ_uZ_v$, (c)~$iZ_w$ for 
$w\in W \subseteq \tilde{V}_k \subseteq \tilde{V}$ where $W\neq \emptyset$ denotes the set of neighbors of $1$,
and
(d)~$iZ_1 \prod_{w\in \tilde{V}} X_w$.
As $G$ is neither a path graph nor a cycle graph,
$W$ contains at least one vertex of $\tilde{G}$ with degree two
if $\tilde{G}$ is a path graph. In the notation of Def.~\ref{def:so:sp},
it follows from Prop.~\ref{prop-bipartite-add-Z} that $\bipartiteWW$ is
equal to $\bipartiteH$ which is isomorphic to 
$\so(2^{n-1})$ if $|\tilde{V}_k|$ is even 
and to $\usp(2^{n-1})$ if $|\tilde{V}_k|$ is odd.
Here, $\bipartiteH$ is block-diagonally spanned by
$i I {\otimes} (\bigotimes_{v\in \tilde{V}} A_v)$ with $A_v \in \{I,X,Y,Z\}$ where the number of $A_v=X$ with 
$v \in \tilde{V}$ has the opposite parity of the number of $A_u\in \{Y,Z\}$ with $u \in \tilde{V}_k$.
We consider the cases (1)~$i\prod_{w\in \tilde{V}} X_w \in \bipartiteH$
[i.e.\ $n{-}1$ is odd due to
Lemma~\ref{prop:so:sp}(v)]
and (2)~$i\prod_{w\in \tilde{V}} X_w \not\in \bipartiteH$.
For (1), we add $iZ_1 \prod_{w\in \tilde{V}} X_w$
and  apply Prop.~\ref{lem:max} to obtain
$\bipartiteH{\oplus}\bipartiteH$.
For (2), adding the generator $iZ_1 \prod_{w\in \tilde{V}} X_w$
results in $\su(2^{n-1})$ via Prop.~\ref{lem:max}.
Its basis is spanned by all
$i I {\otimes} (\bigotimes_{v\in \tilde{V}} A_v)$ 
as defined before and $i Z {\otimes} B_j$ for 
Pauli strings $iB_j \not\in \bipartiteH$.
\end{proof}

\subsection{Non-bipartite graphs\label{appendix:sub:non:bipartite}}

Finally, we treat connected non-bipartite graphs also by adding a
generator of the form $iZ_j$ and by applying the basis change
leading to Eqs.~\eqref{generators_sym}-\eqref{generators_sym_add}:

\begin{proposition}\label{prop-non-bipartite}
Consider a connected, non-bipartite graph with $n$ vertices $V$ and edges $E$
and choose a vertex $j\in V$.
Then the generators $iX_v$ for $v\in V$, $iZ_uZ_v$ for $\{u,v\} \in E$, and $iZ_j$
always generate together the \DLA $\su(2^n)$.
\end{proposition}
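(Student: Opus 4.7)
The plan is a case analysis driven by whether $G$ is an odd cycle. For the odd-cycle case I would exploit the cyclic symmetry of $C_n$ to relabel the distinguished vertex as $j=1$; the resulting generators are then precisely those of Lemma~\ref{odd-cycle-one}, whose conclusion for odd $n$ is exactly $\su(2^n)$. For all other connected non-bipartite graphs (archetypal in the sense of Def.~\ref{def_other_graphs}), I would proceed by induction on $n=|V|$, with base case the triangle $K_3$ just settled.

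For the inductive step ($n\geq 4$), I will pick a non-cut vertex $q\neq j$ of $G$, which exists because every connected graph on $\geq 2$ vertices has at least two non-cut vertices. After relabeling $q$ as vertex $1$ and applying the basis change $\Lambda$ of Eq.~\eqref{basis_change}, the generators transform (per Eqs.~\eqref{generators_sym}--\eqref{generators_sym_add}) into (a)~$iX_u$ for $u\geq 2$, (b)~$iZ_uZ_v$ for edges $\{u,v\}$ with $u,v\geq 2$, (c)~$iZ_v$ for each neighbor $v$ of $q$ in $G$, (d)~$iZ_1 X_2\cdots X_n$, and (e)~$iX_1 Z_j$ (since $q\neq j$ means $j\geq 2$ after relabeling). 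The target $\su(2^n)$ is invariant under $\Lambda$, so it suffices to show these generate $\su(2^n)$.

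The crux is to show that (a)--(c) generate $\g_a\iso I\otimes \su(2^{n-1})$ on qubits $2,\ldots,n$. Write $\tilde G := G-q$. If $\tilde G$ is non-bipartite, the inductive hypothesis applied to $\tilde G$ with any one $v\in N(q)$ playing the distinguished role delivers $\g_a$ directly; the extra $iZ_v$ generators in (c) are then automatically absorbed. If $\tilde G$ is bipartite with parts $\tilde V_1 \uplus \tilde V_2$, then the non-bipartiteness of $G$ forces $N(q)$ to meet both parts (otherwise a $2$-coloring of $\tilde G$ extends to $G$). Proposition~\ref{prop-bipartite-add-Z} applied to $\tilde G$ with $W = N(q)\cap \tilde V_1$ then yields $\bipartiteWW$ isomorphic to $\so(2^{n-1})$ or $\usp(2^{n-1})$; adjoining any $iZ_v$ with $v\in N(q)\cap \tilde V_2$, which lies outside this algebra by Lemma~\ref{prop:so:sp}(iv), extends it to $\su(2^{n-1})$ via the maximality of $\so$ and $\usp$ in $\su$ recorded in Prop.~\ref{lem:max}.

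With $\g_a$ in hand, Lemma~\ref{lem_ga_gb} combined with generator (d) produces $\g_b\iso \su(2^{n-1}){\oplus}\su(2^{n-1})$ of Example~\ref{def_ga_gb}. Since $\g_b$ is codimension-one (hence maximal) in $\g_c := \g_b \oplus \mathbb{R}\, iZ_1$, which is itself maximal in $\su(2^n)$, and since generator (e) has an $X$ in its first tensor factor while every element of $\g_c$ has first tensor factor in $\{I,Z\}$, adjoining (e) to $\g_b$ must produce $\su(2^n)$. The main obstacle I anticipate is the bookkeeping required to ensure the vertex removal always lands in a configuration compatible with Prop.~\ref{prop-bipartite-add-Z}: in particular, when $\tilde G$ is a path graph one needs $W$ to contain a degree-two vertex of $\tilde G$, which I would argue follows automatically from $G$ being neither a path nor a cycle, since otherwise reattaching $q$ to $\tilde G$ at its endpoints would reproduce a path or cycle structure for $G$.
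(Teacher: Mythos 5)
The key problem is your final step. From ``$\g_b$ is maximal in $\g_c$, $\g_c$ is maximal in $\su(2^n)$, and $(e)\notin\g_c$'' it does \emph{not} follow that $\lie{\g_b,\,(e)}=\su(2^n)$: a chain of maximal inclusions says nothing about subalgebras that contain the bottom term but not the middle one. As a sanity check, take $\uu(1)_{\mathrm{diag}}\subset\su(2)_{\mathrm{diag}}\subset\su(2)\oplus\su(2)$, each maximal in the next, and adjoin the anti-diagonal Cartan element $(h_0,-h_0)\notin\su(2)_{\mathrm{diag}}$ to $\uu(1)_{\mathrm{diag}}$: you obtain only the abelian algebra $\uu(1)\oplus\uu(1)$, not the whole thing. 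To close your argument you must actually show that $\lie{\g_b,\,(e)}$ contains $\g_c$, i.e.\ contains $iZ_1$ — this is precisely what the paper does, by an explicit nested-commutator computation producing $iZ_1$ from $iX_1$ (or $iX_1Z_j$) and elements of $\g_b$ — or else show that $\g_b$, $\g_c$, and $\su(2^n)$ are the \emph{only} subalgebras of $\su(2^n)$ containing $\g_b$ (which one can do by checking that the block-off-diagonal complement of $\g_c$ is an irreducible real $\g_b$-module that does not close with $\g_b$ into a subalgebra). As written, the last sentence is a non sequitur even though its conclusion is true.

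The rest of your architecture is sound but genuinely different from the paper's. The paper sidesteps your bipartite/non-bipartite case split entirely: it chooses the removed vertex away from a shortest odd cycle of $G$ (re-choosing it as a spanning-tree leaf of a suitable component if connectivity fails), so that $G-q$ is always connected \emph{and} non-bipartite and the induction never leaves the non-bipartite world, never touching Prop.~\ref{prop-bipartite-add-Z}. Your version, which lets $G-q$ be bipartite and uses that $N(q)$ must meet both parts, also works and is arguably more self-contained graph-theoretically, but it inherits the path-graph caveat of Prop.~\ref{prop-bipartite-add-Z}: your specific choice $W=N(q)\cap\tilde V_1$ is only safe if $\tilde V_1$ is taken to be the part containing a degree-two neighbour of $q$ — such a neighbour exists exactly because $G$ is neither a path nor a cycle, and the other part still supplies the $iZ_v$ needed for the maximality step — so state the choice of labelling explicitly rather than fixing $\tilde V_1$ in advance.
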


\begin{proof}
For an odd-cycle graph, the result follows from Lemma~\ref{odd-cycle-one}. 
We can assume that $n\geq 4$
and that the shortest odd cycle in the non-bipartite graph is shorter than $n$.
We re-order the vertices such that one shortest odd cycle does not contain
the first vertex. Now the subgraph without the first vertex is not bipartite.
If this subgraph is not connected, 
pick a connected component not containing the chosen shortest odd cycle.
In the subgraph given by this connected component and the first vertex,
choose a minimal spanning tree and 
pick one of its end vertices different
from the first vertex as the new first vertex. Then the
subgraph not containing the new first vertex is connected.
Following the analysis leading to Eqs.~\eqref{generators_sym}-\eqref{generators_sym_add},
the generators
from the groups (a), (b), and (c)
in Eq.~\eqref{generators_sym}
are of tensor-product form $i(\otimes_{j} A_j)$ with $A_1 = I$ and they generate by induction 
the \DLA $\g_{a}\iso\su(2^{n-1})$ (Ex.~\ref{def_ga_gb}).
Using the generator $iZ_1X_2 \!\cdot\cdot X_n$
from group (d) in Eq.~\eqref{generators_sym},
Lemma~\ref{lem_ga_gb}(a) shows that we get $\g_{b}\iso \su(2^{n-1}) {\oplus} \su(2^{n-1})$
(Ex.~\ref{def_ga_gb}).
Depending on $j$, we add one of the generators $iX_1$ or $iX_1Z_j$
from the group~(e) in  Eq.~\eqref{generators_sym_add}.
Lemma~\ref{lem_ga_gb}(b) shows
that we first generate $iZ_1$ and thus 
$\g_c \iso \su(2^{n-1}) {\oplus} \su(2^{n-1}) {\oplus} \uu(1)$ (Ex.~\ref{def_ga_gb}),
which is maximal
in $\su(2^n)$ [see Prop.~\ref{lem:max}].
By the argument in Lemma~\ref{lem_ga_gb}(b3), we obtain the
\DLA $\su(2^n)$ by adding one generator from (e) in  Eq.~\eqref{generators_sym_add}.
\end{proof}

Determining the free-mixer \DLA in the non-bipartite case is now immediate:

\begin{theorem}[non-bipartite graphs]\label{thm:free-mixer-revisited}
The free-mixer \DLA $\gfree$ for a connected non-bipartite graph different from
a cycle graph with $n\geq 2$ vertices
is given by 
$\gfree\iso \su(2^{n-1}){\oplus}\su(2^{n-1})$ with $\dim(\gfree) = 2^{2n-1} - 2$.
\end{theorem}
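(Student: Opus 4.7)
The plan is to prove equality in the containment $\gfree \subseteq \su(2^{n-1}){\oplus}\su(2^{n-1})$ given by Lemma~\ref{app:prop:structure}. I would first pick a vertex $q$ so that $\tilde{G}:=G\setminus\{q\}$ remains connected (e.g., by taking $q$ to be an end vertex of a minimum spanning tree of $G$), permute indices so that $q=1$, and let $W \subseteq \tilde{V}:=V\setminus\{1\}$ denote the neighbors of vertex $1$ in $G$, which is nonempty since $G$ is connected. Next, I would apply the basis change $\Lambda$ of Eq.~\eqref{basis_change}, after which the free-mixer generators of $G$ become, up to signs, the four groups listed in Eqs.~\eqref{generators_sym}--\eqref{generators_sym_add}: (a) $iX_u$ for $u\in\tilde{V}$, (b) $iZ_uZ_v$ for edges of $\tilde{G}$, (c) $iZ_w$ for $w\in W$, and (d) $iZ_1 X_2\cdots X_n$. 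Since (a)--(c) all carry identity on the first qubit, they generate a subalgebra of the form $I\otimes \mathfrak{k}$ with $\mathfrak{k}\subseteq \su(2^{n-1})$.

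The crucial step is to show that $\mathfrak{k}=\su(2^{n-1})$ via a case analysis on $\tilde{G}$. If $\tilde{G}$ is connected non-bipartite, I would invoke Proposition~\ref{prop-non-bipartite} applied to $\tilde{G}$ together with any single $iZ_w$ for $w\in W$, since generators (a)--(c) restricted to the last $n{-}1$ qubits comprise exactly these Hamiltonians; this immediately yields $\mathfrak{k}=\su(2^{n-1})$. If instead $\tilde{G}$ is bipartite with parts $\tilde{V}_1\uplus\tilde{V}_2$, then $W$ must intersect both parts, for otherwise placing vertex $1$ in the opposite part would bipartition $G$, contradicting the non-bipartite hypothesis. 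I would then choose $k\in\{1,2\}$ so that $W\cap \tilde{V}_k$ satisfies the hypothesis of Proposition~\ref{prop-bipartite-add-Z} applied to $\tilde{G}$; this generates $\so(2^{n-1})$ or $\usp(2^{n-1})$ on the last $n{-}1$ qubits. By Lemma~\ref{prop:so:sp}(iv), any $iZ_{w'}$ with $w'\in W\cap\tilde{V}_{\bar{k}}$ lies outside that subalgebra, and since $\so(2^{n-1})$ and $\usp(2^{n-1})$ are both maximal in $\su(2^{n-1})$ (Table~\ref{tab:maximal} and Proposition~\ref{lem:max}), adjoining such a $w'$ promotes the algebra to all of $\su(2^{n-1})$.

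The main obstacle will be the path-graph hypothesis in Proposition~\ref{prop-bipartite-add-Z}: when $\tilde{G}$ happens to be a path graph, $W\cap\tilde{V}_k$ must contain a degree-two vertex of $\tilde{G}$. This is precisely where the exclusion of cycle graphs enters: were $W$ contained in the end vertices of the path $\tilde{G}$, the graph $G$ would consist of $\tilde{G}$ together with vertex $1$ joined to both endpoints, i.e., $G$ would itself be the $n$-cycle, contradicting the hypothesis. Hence at least one $w\in W$ has degree two in $\tilde{G}$, and one picks $k$ so that $\tilde{V}_k$ contains this $w$. Once $\mathfrak{k}=\su(2^{n-1})$ is established, generators (a)--(c) yield the algebra $\g_a = I\otimes\su(2^{n-1})$ of Example~\ref{def_ga_gb}, and adjoining generator (d) $= iZ_1 X_2\cdots X_n$ produces $\g_b \iso \su(2^{n-1}){\oplus}\su(2^{n-1})$ by Lemma~\ref{lem_ga_gb}. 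Combined with Lemma~\ref{app:prop:structure}, this yields $\gfree \iso \su(2^{n-1}){\oplus}\su(2^{n-1})$ with the advertised dimension $2(2^{2n-2}-1) = 2^{2n-1}-2$.
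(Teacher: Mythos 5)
Your proof is correct, and its skeleton (remove a vertex keeping the rest connected, apply the basis change $\Lambda$ to reach the generators (a)--(d) of Eqs.~\eqref{generators_sym}--\eqref{generators_sym_add}, generate $\su(2^{n-1})$ on the last $n-1$ qubits, then adjoin $iZ_1X_2\cdots X_n$ and invoke Lemma~\ref{lem_ga_gb}) coincides with the paper's. The difference lies in how the middle step is discharged. The paper chooses the removed vertex more carefully: since $G$ is non-bipartite and not a cycle, its shortest odd cycle has fewer than $n$ vertices, so one can delete a vertex off that cycle (an end vertex of a spanning tree of a suitable component) and guarantee that $\tilde{G}$ is connected \emph{and non-bipartite}; then Proposition~\ref{prop-non-bipartite} applies directly and the bipartite sub-case never arises. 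You instead allow $\tilde{G}$ to be bipartite and handle that case by observing that the neighbor set $W$ of the deleted vertex must meet both parts $\tilde{V}_1$ and $\tilde{V}_2$ (else $G$ would be bipartite), applying Proposition~\ref{prop-bipartite-add-Z} to $W\cap\tilde{V}_k$ to reach $\so(2^{n-1})$ or $\usp(2^{n-1})$, and then escaping to $\su(2^{n-1})$ via Lemma~\ref{prop:so:sp}(iv) and the maximality of these subalgebras (Proposition~\ref{lem:max}); your treatment of the path-graph caveat (a $W$ consisting only of endpoints would force $G=C_n$ or $G=P_n$) is exactly right. The paper's route is shorter because the graph-theoretic trick eliminates a case; yours costs an extra case and a maximality step but is arguably more self-contained, as it does not require the odd-cycle-preservation argument. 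Both are valid; you should just note explicitly that connected non-bipartite non-cycle graphs force $n\geq 4$, so the low-dimensional exceptions of Fig.~\ref{fig:sporadic}(b) and the $n\leq 4$ base cases of the cited propositions never cause trouble.
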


\begin{proof}
Similarly as in the proof of Prop.~\ref{prop-non-bipartite},
we can assume $n\geq 4$ and we re-order the vertices such
that the subgraph without the first vertex is connected
and not bipartite.
We apply the analysis leading to Eq.~\eqref{generators_sym}.
It follows from Proposition~\ref{prop-non-bipartite} that the generators
from the groups (a), (b), and (c) in Eq.~\eqref{generators_sym} generate 
the \DLA $\g_{a}\iso\su(2^{n-1})$ (Ex.~\ref{def_ga_gb}).
Adding
$iZ_1X_2 \!\cdot\cdot X_n$
from group (d) in Eq.~\eqref{generators_sym},
Lemma~\ref{lem_ga_gb}(a) shows that we generate $\g_{b}\iso \su(2^{n-1}) {\oplus} \su(2^{n-1})$
(Ex.~\ref{def_ga_gb}).
\end{proof}

\subsection{Explicit representations: neither path nor cycle \label{appendix:explicit:reps:not:path:cycle}}
We complement the results in Sections~\ref{appendix:sub:bipartite}
and \ref{appendix:sub:non:bipartite} with the explicit
form of the corresponding representations. This determines
the embedding of $\gfree$
into $\su(2^n)$.
The standard representation
of a \DLA is denoted by $\kappa$ and $\bar{\kappa}$ identifies its dual;
$\epsilon$ is the trivial representation. 

\begin{proposition}[Free-mixer representations for connected graphs different from path and cycle]\label{prop:reps:not:path:cycle}
Consider
the free-mixer \DLAs $\gfree$ for the bipartite cases (i), (ii), 
and (iii) from Theorem~\ref{thm:free-mixer-bipartite}
as well as for the non-bipartite case from Theorem~\ref{thm:free-mixer-revisited}.
The respective representations are 
(up to an automorphism of $\gfree$) 
given by $\kappa {\oplus} \bar{\kappa}$ for the first case and by
$[\kappa {\otimes} \epsilon] {\oplus} [\epsilon {\otimes} \kappa]$
otherwise.
\end{proposition}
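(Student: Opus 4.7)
The plan is to analyze the action of $\gfree$ on its two invariant subspaces $\HC_{+}$ and $\HC_{-}$ separately for each case. By Lemma~\ref{app:prop:center}(a), the commutant $\com(\gfree)$ has dimension two and $\HC_{+}$, $\HC_{-}$ are the unique irreducible invariant subspaces of $\gfree$, both of dimension $m:=2^{n-1}$. Let $\rho_{\pm}\colon\gfree\to\uu(m)$ be the representations induced via the projectors $P_{\pm}$, so that the defining embedding $\gfree\hookrightarrow\su(2^n)$ coincides with $\rho_{+}\oplus\rho_{-}$ in a block-diagonal basis, and each $\rho_{\pm}$ is an irreducible $m$-dimensional (possibly non-faithful) representation.

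I would first treat the cases where $\gfree\iso\gh\oplus\gh$ with $\gh$ simple, covering (ii), (iii) of Theorem~\ref{thm:free-mixer-bipartite} and Theorem~\ref{thm:free-mixer-revisited}. Write $\gfree=\gh_{1}\oplus\gh_{2}$ with $\gh_{1}\iso\gh_{2}\iso\gh$. Every ideal of a direct sum of two simple ideals is a sum of some of them, so $\ker\rho_{\pm}\in\{0,\gh_{1},\gh_{2},\gh_{1}\oplus\gh_{2}\}$. The case $\ker\rho_{\pm}=\gfree$ is excluded because $\rho_{\pm}$ acts irreducibly on a space of dimension $m\geq 8$, hence non-trivially. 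The case $\ker\rho_{\pm}=0$ is excluded by a dimension count: any faithful irreducible of $\gh_{1}\oplus\gh_{2}$ has the form $\alpha\otimes\beta$ with $\alpha$, $\beta$ non-trivial irreducibles of $\gh$; but for $\gh\in\{\so(m),\usp(m),\su(m)\}$ with $m\geq 8$, every non-trivial irreducible of $\gh$ has dimension at least $m$, giving $\dim(\alpha\otimes\beta)\geq m^{2}>m$, a contradiction. Therefore $\ker\rho_{+},\ker\rho_{-}\in\{\gh_{1},\gh_{2}\}$, and these kernels must differ, for otherwise the common kernel would map to zero in $\su(2^n)$, contradicting the fact that $\gfree\hookrightarrow\su(2^n)$ is the defining (injective) embedding. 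After relabeling, $\ker\rho_{+}=\gh_{2}$ and $\ker\rho_{-}=\gh_{1}$: then $\gh_{1}$ acts irreducibly and faithfully on $\HC_{+}$ via an $m$-dimensional irreducible and trivially on $\HC_{-}$, and symmetrically for $\gh_{2}$. Any faithful $m$-dimensional irreducible of $\gh$ is, up to an automorphism of $\gh$, the standard representation $\kappa$ (using triality for $\so(8)$ and complex conjugation for $\su(m)$). This yields the representation $[\kappa\otimes\epsilon]\oplus[\epsilon\otimes\kappa]$ up to an automorphism of $\gfree$.

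For case (i) of Theorem~\ref{thm:free-mixer-bipartite}, $\gfree\iso\su(m)$ is simple and both $\rho_{\pm}$ are $m$-dimensional irreducibles of $\su(m)$. By highest-weight theory the only $m$-dimensional irreducibles of $\su(m)$ with $m\geq 3$ are the standard $\kappa$ and its dual $\bar\kappa$, so $(\rho_{+},\rho_{-})$ is equivalent to one of $(\kappa,\kappa)$, $(\bar\kappa,\bar\kappa)$, or $(\kappa,\bar\kappa)$. The first two force $\rho_{+}\sim\rho_{-}$, which by Schur's lemma would make $\dim\com(\rho_{+}\oplus\rho_{-})=4$, contradicting $\dim\com(\gfree)=2$. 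Hence $\rho_{+}\not\sim\rho_{-}$, and the representation is $\kappa\oplus\bar\kappa$ up to the outer automorphism of $\su(m)$ exchanging $\kappa$ and $\bar\kappa$.

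The main obstacle is the $\ker\rho_{\pm}=0$ exclusion step in the $\gh\oplus\gh$ cases: while the dimension bound on the smallest non-trivial irreducible of $\gh$ is standard, it requires case-by-case verification for $\so(m)$, $\usp(m)$, and $\su(m)$, with particular care for $\so(8)$ whose spin representations also have dimension $8$ (all three $8$-dimensional irreducibles are identified under triality, so the argument still goes through). Everything else—the ideal analysis of the kernels and the Schur-type commutant count in case (i)—is routine once the correct kernel structure has been established.
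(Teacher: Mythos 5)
Your proof is correct and follows essentially the same route as the paper's: both arguments start from the two-dimensional commutant forcing a split into two inequivalent irreducible blocks of degree $2^{n-1}$, and both then invoke the enumeration of the lowest-dimensional irreducible representations of $\so(2^{n-1})$, $\usp(2^{n-1})$, and $\su(2^{n-1})$ to pin down $\kappa$ (and $\bar{\kappa}$) up to automorphism. The only differences are cosmetic: you spell out the kernel/ideal analysis for the $\gh\oplus\gh$ cases and handle $\so(8)$ via triality, where the paper compresses this into ``the results follow'' and instead disposes of $n\leq 4$ by explicit computation.
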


\begin{proof}
Lemma~\ref{app:prop:center} and the discussion in 
Appendix~\ref{app:free:preliminaries} imply that the commutant
$\com(\gfree)$ is two-dimensional and that the corresponding
representation $\xi_1 {\oplus} \xi_2$ of $\gfree$ splits into
two inequivalent irreducible representations $\xi_1$ and $\xi_2$ 
of degree $2^{n-1}$. The reductive decomposition
of $\gfree$ for the considered cases are known
from Theorems~\ref{thm:free-mixer-bipartite}
and \ref{thm:free-mixer-revisited}.
We assume $n\geq 5$ as
the cases with $n \leq 4$ can be verified by explicit computations.
For $n\geq 5$, all irreducible representations of 
degree $2^{n-1}$ for
$\so(2^{n-1})$, $\usp(2^{n-1})$, and $\su(2^{n-1})$
are respectively given 
(up to equivalence) by
the standard representation $\kappa$,
the standard representation $\kappa$, 
and the standard representation $\kappa$
and its dual $\bar{\kappa}$. Referring, e.g., to \cite{Bourbaki2008b},
the listed irreducible representations have the required
degree and they are the only possibilities up to equivalence,
which follows from the well-known
enumeration of the respective lowest-dimensional irreducible representations
(see, e.g., Chapter VIII, \S 13, Exercise 2 in \cite{Bourbaki2008b}).
This reduces the possibilities significantly and the results follow
as automorphisms of $\su(2^{n-1})$ switch between
$\kappa$ and $\bar{\kappa}$.
\end{proof}

\section{The free-mixer ansatz via Pauli strings\label{appendix:free-mixer:bases}}
This appendix complements the proof techniques in Appendix~\ref{appendix:free-mixer}
with an approach focused on Pauli-string bases.
For a particular graph, the original generators from Eq.~\eqref{freegenerators}
yield further Lie-algebra elements which in some instances can be
interpreted as adding edges to the graph. This enables us to highlight
and employ properties of the considered graph.
We start with general properties of bases of Pauli strings in Appendix~\ref{appendix:free-mixer:pauli}.
The complete and complete bipartite graphs are considered in Appendix~\ref{appendix:pauli:complete}.
Appendix~\ref{app:adding:edge} details the technique of adding edges to a graph.
This is then used to treat the bipartite and non-bipartite cases (different from path and cycle graphs).
In Appendix~\ref{app:pauli:bipartite:lie}, more work is needed to identify
the \DLAs for the bipartite graphs that are not path and cycle graphs.
We refer to Appendix~\ref{appendix:sub:pathcycle} for the cases of path and cycle graphs.

\subsection{Pauli-string bases\label{appendix:free-mixer:pauli}}

For a given Pauli string
\begin{equation}\label{tensorproductoperator}
P_j=\bigotimes_{u\in V} A_u \text{ with } A_u
\in \{X, Y, Z, I\},
\end{equation}
or its corresponding Lie-algebra element $i P_j$, let
$\#\mathrm{X}$, $\#\mathrm{Y}$, $\#\mathrm{Z}$,
and $\#\mathrm{I}$ denote the respective number of $\mathrm{X}$, $\mathrm{Y}$, $\mathrm{Z}$, and $\mathrm{I}$.
We also use the notation $\#\mathrm{X}(P_j)$ to identify a particular $P_j$.
And for instance,
$\#\mathrm{X}|_{V_1}$ indicates the restriction of $\#\mathrm{X}$ to a subset $V_1$ of $V$.
We introduce the parity $p(P_j):=\#\mathrm{Y}(P_j) + \#\mathrm{Z}(P_j)$ (or $p$ for short) of a Pauli string $P_j$
(or similar elements).

\begin{lemma}\label{app:prop:parity}
Given a connected graph and its free-mixer \DLA $\gfree$,
consider
$\sum_j r_j i P_j \in \gfree$ for Pauli strings $P_j$ as in Eq.~\eqref{tensorproductoperator} with
$0\neq r_j \in \R$. (i)~The parity $p(P_j):=\#\mathrm{Y}(P_j) + \#\mathrm{Z}(P_j)$
of each $P_j$ is even. (ii) For bipartite graphs with bipartition $V = V_1{\uplus}V_2$,
$\tilde{p}(P_j):=\#\mathrm{X}(P_j)+ \#\mathrm{Y}|_{V_1}(P_j) + \#\mathrm{Z}|_{V_1}(P_j)$ is odd.
\end{lemma}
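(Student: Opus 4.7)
The plan is to prove both parts by exhibiting, for each parity constraint, a real subspace of $\uu(2^n)$ that is closed under $[\cdot,\cdot]$ and contains all generators in Eq.~\eqref{freegenerators}; by definition of $\gfree$ as a Lie closure, $\gfree$ will then sit inside that subspace, which immediately yields the desired parity condition on every Pauli string $P_j$ appearing with nonzero coefficient.

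For (i), I would appeal to the commutant. Lemma~\ref{app:prop:center}(a) gives that $X^{\otimes n}$ commutes with every element of $\gfree$. Writing $g = \sum_j r_j\, iP_j \in \gfree$ with distinct $P_j$ and $r_j \neq 0$, one has
\[
[X^{\otimes n},\, g] = \sum_{j \in A} 2 i r_j\, X^{\otimes n} P_j,
\]
where $A$ indexes those $P_j$ that anticommute with $X^{\otimes n}$. The products $X^{\otimes n} P_j$ for distinct $P_j$ are distinct Pauli strings (up to sign), hence linearly independent, so the vanishing of this sum forces $A = \emptyset$. Since a single $X$ commutes with $\{I,X\}$ and anticommutes with $\{Y,Z\}$, the string $P_j$ commutes with $X^{\otimes n}$ exactly when $p(P_j) = \#\mathrm{Y}(P_j) + \#\mathrm{Z}(P_j)$ is even.

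For (ii), the commutant argument is unavailable, since $\com(\gfree)$ detects only $p$, not $\tilde{p}$. I would instead define $\mathfrak{u}^{+} := \spnR\{iP : P \text{ is a Pauli string with } \tilde{p}(P) \text{ odd}\}$ and check directly that it is a Lie subalgebra of $\uu(2^n)$ containing the generators. Each $iX_u$ plainly has $\tilde{p} = 1$; for $iZ_uZ_v$ with $\{u,v\} \in E$, the bipartition places exactly one endpoint (say $u$) in $V_1$, giving $\tilde{p} = 0 + 0 + 1 = 1$. For Lie-bracket closure, take $P_1,P_2$ with $\tilde{p}$ odd and assume $[iP_1, iP_2] \neq 0$, which forces $P_1,P_2$ to anticommute; then the bracket equals $\pm 2iP_3$ for a Pauli string $P_3$, since $P_1P_2$ is anti-Hermitian in that case. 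A short case check over all unordered pairs $(A,B)$ of local Pauli letters establishes that at each site
\[
\tilde{p}_u(A_u^{(3)}) \equiv \tilde{p}_u(A_u^{(1)}) + \tilde{p}_u(A_u^{(2)}) + \delta_u \pmod 2,
\]
where $\delta_u = 1$ exactly when $A_u^{(1)}, A_u^{(2)}$ are both non-identity and distinct---the same condition that makes them locally anticommute. Summing over $u$ yields $\tilde{p}(P_3) \equiv \tilde{p}(P_1) + \tilde{p}(P_2) + k \pmod 2$, where $k$ is the number of anticommuting sites; the global anticommutation of $P_1$ and $P_2$ forces $k$ to be odd, so $\tilde{p}(P_3) \equiv 1 + 1 + 1 \equiv 1 \pmod 2$.

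The main obstacle will be the local tabulation of $\tilde{p}_u$ in (ii): unlike $p_u$, which is additive mod $2$ under multiplication at every site, $\tilde{p}_u$ depends on whether $u \in V_1$ (every non-identity letter contributes) or $u \in V_2$ (only $X$ contributes), and the defect $\delta_u$ must be verified case-by-case in both regimes to coincide with the local anticommutation indicator. Once that identification is secured, the oddness of $k$ supplies the crucial extra $+1$ that keeps $\tilde{p}$ odd under every nonzero bracket, and induction on Lie-bracket depth then closes the argument.
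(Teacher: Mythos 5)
Your proposal is correct, but it reaches the two parity constraints by a somewhat different route than the paper. The paper's proof is a two-line induction: it verifies (i) and (ii) on the generators of Eq.~\eqref{freegenerators} and then invokes the commutator table of Fig.~\ref{fig:commutator} to see that $p \bmod 2$ and $\tilde{p} \bmod 2$ are preserved under commutators \emph{with generators}, which suffices because $\gfree$ is spanned by nested commutators of generators. For part (i) you instead use the commutant: since $X^{\otimes n}\in\com(\gfree)$ by Lemma~\ref{app:prop:center}(a), every Pauli string in the support of an element of $\gfree$ must individually commute with $X^{\otimes n}$ (by linear independence of the anticommuting contributions), which is exactly evenness of $\#\mathrm{Y}+\#\mathrm{Z}$. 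This is a clean, genuinely different argument that reuses an already-established lemma rather than re-deriving the parity by induction; note it only needs $X^{\otimes n}$ to lie in the commutant, which holds for any graph, so connectivity is not actually used here. For part (ii) your argument is in substance the same parity-tracking as the paper's, but packaged more robustly: you exhibit $\spnR\{iP:\tilde{p}(P)\text{ odd}\}$ as a Lie subalgebra of $\uu(2^n)$ by checking closure under the bracket of \emph{arbitrary} pairs of odd-$\tilde{p}$ Pauli strings (via the site-wise identity $\tilde{p}_u(A_u^{(3)})\equiv\tilde{p}_u(A_u^{(1)})+\tilde{p}_u(A_u^{(2)})+\delta_u$ together with oddness of the number of anticommuting sites), whereas the paper only checks brackets against generators. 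Your version is more self-contained --- it does not lean on the pictorial commutator graph and makes explicit why the extra $+1$ from global anticommutation is what keeps $\tilde{p}$ odd --- at the cost of a slightly longer case analysis; I verified the local tabulation in both the $u\in V_1$ and $u\in V_2$ regimes and it is correct, as is the observation that every edge of a bipartite graph contributes exactly one $\mathrm{Z}$ on $V_1$ so that the generators indeed have $\tilde{p}=1$.
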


\begin{proof}
We observe (i) and (ii)
for all generators from Eq.~\eqref{freegenerators}. Following Fig.~\ref{fig:commutator}, $p(P_j) \bmod 2$ 
and $\tilde{p}(P_j) \bmod 2$
are invariant
for each $P_j$ under commutators with generators from  Eq.~\eqref{freegenerators}.
This proves (i) and (ii) for Pauli strings $P_j$. Linear
combinations are similar.
\end{proof}

\begin{figure}
\includegraphics{commutator.pdf}
\caption{\textbf{Commutator graph.}
A directed edge with label $C$ connects a vertex $A$ with a vertex $B$ iff
$[iC/2,iA/2]=(-1)^s iB/2$ for $s\in \{0,1\}$.
\label{fig:commutator}}
\end{figure}

Adding Lemma~\ref{app:prop:center}(ii) to the conditions in Lemma~\ref{app:prop:parity},
we obtain the following set of conditions
\begin{subequations}
\begin{align}
& \#\mathrm{I}(P_j)\neq n,\; \#\mathrm{X}(P_j)\neq n, \label{eq:cond:a}\\
& p(P_j)= \#\mathrm{Y}(P_j) + \#\mathrm{Z}(P_j) \text{ is even}, \text{ and} \label{eq:cond:b}\\
& \tilde{p}(P_j)=\#\mathrm{X}(P_j) + \#\mathrm{Y}|_{V_1}(P_j) + \#\mathrm{Z}|_{V_1}(P_j) \text{ is odd}\,, \label{eq:cond:c}
\end{align}
\end{subequations}
for Pauli strings $P_j$ where
$V=V_1{\uplus}V_2$ denotes the bipartition of a bipartite graph. 
We now count
the number of $P_j$ satisfying Eqs.~\eqref{eq:cond:a}-\eqref{eq:cond:b}
or Eqs.~\eqref{eq:cond:a}-\eqref{eq:cond:c} depending on the number $n$ of vertices
as well as the parity of $\abs{V_1}$ and $\abs{V_2}$:

\begin{lemma}[Counting Pauli strings]\label{lem:pauli}
(i)~The number of $P_j$ satisfying Eqs.~\eqref{eq:cond:a}-\eqref{eq:cond:b}
is equal to $2^{2n-1} {-} 2$.
(ii)~The number of $P_j$ observing Eqs.~\eqref{eq:cond:a}-\eqref{eq:cond:c} 
for a complete bipartite graph with bipartition $V=V_1{\uplus}V_2$
is (a)~$2^{2n-2} {-} 1$ for $n$ odd,
(b)~$2^{2n-2} {-} 2^{n-1}$ for $\abs{V_1}=\abs{V_2}$ even,
and 
(c)~$2^{2n-2} {+} 2^{n-1}$ for 
$\abs{V_1}=\abs{V_2}$ odd.
\end{lemma}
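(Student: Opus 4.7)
The claim is a purely combinatorial counting exercise about Pauli strings satisfying prescribed parity constraints on the multiplicities $\#\mathrm{X}$, $\#\mathrm{Y}$, $\#\mathrm{Z}$, $\#\mathrm{I}$ (possibly restricted to $V_1$). The plan is to count by a generating-function / character-sum argument, then adjust for the two exceptional Pauli strings $\mathrm{I}\cddot\mathrm{I}$ and $\mathrm{X}\cddot\mathrm{X}$.

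For part (i), I would first drop the exception conditions and count all Pauli strings with $p(P_j) = \#\mathrm{Y} + \#\mathrm{Z}$ even. For each vertex, the two letters $\{\mathrm{I},\mathrm{X}\}$ contribute $0$ to $p$ and the two letters $\{\mathrm{Y},\mathrm{Z}\}$ contribute $1$, so the number of Pauli strings with exactly $k$ entries from $\{\mathrm{Y},\mathrm{Z}\}$ equals $\binom{n}{k}2^n$, and summing over even $k$ yields $2^n \cdot 2^{n-1} = 2^{2n-1}$. Both excluded strings ($\mathrm{I}\cddot\mathrm{I}$ and $\mathrm{X}\cddot\mathrm{X}$) lie in this count since $p=0$, giving $2^{2n-1}-2$.

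For part (ii), I set up the bivariate generating polynomial $F(a,b)=\prod_{v\in V}F_v(a,b)$ where $F_v$ tracks the contribution of vertex $v$ to $(\tilde{p},p)$ through the exponent of $a$ and $b$:
\[
F_v(a,b)=
\begin{cases}
1+a+2ab & \text{if } v\in V_1,\\
1+a+2b & \text{if } v\in V_2,
\end{cases}
\]
so that $F(a,b)=(1+a+2ab)^{|V_1|}(1+a+2b)^{|V_2|}$. The desired count (before removing exceptions) is the sum of coefficients with $a$-exponent odd and $b$-exponent even, i.e.\
\[
N := \tfrac{1}{4}\bigl[F(1,1)-F(-1,1)+F(1,-1)-F(-1,-1)\bigr].
\]
I would then evaluate: $F(1,1)=4^n$, $F(1,-1)=0^{|V_1|}\cdot 0^{|V_2|}=0$ (the crucial vanishing), $F(-1,1)=(-2)^{|V_1|}2^{|V_2|}$, and $F(-1,-1)=2^{|V_1|}(-2)^{|V_2|}$. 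Hence $N=\tfrac14\bigl[4^n-2^n((-1)^{|V_1|}+(-1)^{|V_2|})\bigr]$.

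Finally, I would split on the parity cases and correct for the exceptions. When $n$ is odd, $|V_1|$ and $|V_2|$ have opposite parity so the bracket vanishes, giving $N=2^{2n-2}$; here $\mathrm{X}\cddot\mathrm{X}$ has $\tilde{p}=n$ odd and $p=0$ even, so it is counted and must be subtracted, yielding $2^{2n-2}-1$. When $|V_1|,|V_2|$ are both even, $N=2^{2n-2}-2^{n-1}$; and when both are odd, $N=2^{2n-2}+2^{n-1}$. In both even-$n$ cases $\mathrm{X}\cddot\mathrm{X}$ has $\tilde{p}=n$ even and so is automatically excluded (and $\mathrm{I}\cddot\mathrm{I}$ is always excluded by $\tilde{p}=0$), so no further correction is needed. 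No step here looks genuinely hard; the only subtle point is to keep track of which exceptional strings already fail $\tilde{p}$ odd in each parity regime, so that one does not double-correct.
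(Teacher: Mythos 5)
Your proof is correct, and both parts give the right counts, but your route through part (ii) differs from the paper's. The paper's proof of Lemma~\ref{lem:pauli}(ii) constructs an explicit parity-switching involution $b$ that flips the first $\mathrm{I}$-or-$\mathrm{X}$ entry of a Pauli string to the other of the two, thereby toggling the parity of $\tilde{p}$ while preserving $p$; this shows that exactly half of the strings satisfying Eq.~\eqref{eq:cond:b} also satisfy Eq.~\eqref{eq:cond:c}, \emph{except} that the involution is undefined on the $2^n$ strings with $p(P_j)=n$ (no $\mathrm{I}$ or $\mathrm{X}$ at all), which forces a separate case analysis for even $n$ and is precisely where the $\pm 2^{n-1}$ correction terms and the dependence on the parities of $\abs{V_1}$ and $\abs{V_2}$ emerge. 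Your roots-of-unity filter on the bivariate generating polynomial $F(a,b)=(1+a+2ab)^{\abs{V_1}}(1+a+2b)^{\abs{V_2}}$ is the algebraic packaging of the same sign-reversing idea, but it handles all strings uniformly: the four evaluations at $a,b\in\{\pm1\}$ produce the main term $4^n/4$ and the correction $-2^n[(-1)^{\abs{V_1}}+(-1)^{\abs{V_2}}]/4$ in one stroke, with no exceptional set to treat by hand. Your bookkeeping of the two excluded strings $\mathrm{I}\cddot\mathrm{I}$ and $\mathrm{X}\cddot\mathrm{X}$ across the three parity regimes (only $\mathrm{X}\cddot\mathrm{X}$ with $n$ odd survives the filter and must be subtracted) is also correct and matches the paper's conclusions. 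The trade-off is that the paper's bijection is more elementary and reuses the map $b$ as a combinatorial object elsewhere in spirit, whereas your computation is shorter, less error-prone on the case split, and makes transparent why the answer depends only on the parities of $\abs{V_1}$ and $\abs{V_2}$ and not on the edge set.
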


\begin{proof}
The number of $P_j$
that satisfy Eq.~\eqref{eq:cond:b} is with $2^{2n-1}$ exactly half
of all possible ones. Both $I^{\otimes n}$ and $X^{\otimes n}$
observe Eq.~\eqref{eq:cond:b} and (i) follows.
For $P_j=\otimes_{u\in V} A_u$ with
$A_u \in \{X, Y, Z, I\}$ and
$p(P_j)\neq n$, 
let us introduce a map $b$ from $P_j$ to 
$b(P_j):=\otimes_{u\in V} B_u$ with 
$B_u \in\{I,X\}\setminus \{A_u\}$ for the first vertex $u\in V$ with $A_u\in\{I,X\}$
and $B_u=A_u$ for $u\neq v$. The map $b$ induces a bijection on $P_j$
with $p(P_j)\neq n$ which switches the parity, i.e.\ $p(P_j)+p(b(P_j))$ is odd.
For $n$ odd, $p(P_j)\neq n$ and there is at least one $u\in V$ with $A_u \in \{I,X\}$.
Hence $b$ shows that half of $P_j$ satisfying Eq.~\eqref{eq:cond:b}
also observe Eq.~\eqref{eq:cond:c}. The cases $P_j \in \{I^{\otimes n}, X^{\otimes n}\}$
are ruled out by Lemma~\ref{app:prop:center}(ii), but
Eq.~\eqref{eq:cond:c} fails for $I^{\otimes n}$,
while Eqs.~\eqref{eq:cond:b} and \eqref{eq:cond:c} hold for $X^{\otimes n}$. This proves (iia).
If $n$ is even, there are $2^n$ operators $P_j$ with $p(P_j)=n$, i.e.\ $\#X(P_j)=0$.
For Eq.~\eqref{eq:cond:c} to be fulfilled for these $2^n$ operators, both $\#\mathrm{Y}|_{V_1}(P_j) + \#\mathrm{Z}|_{V_1}(P_j)$
and $\#\mathrm{Y}|_{V_2}(P_j) + \#\mathrm{Z}|_{V_2}(P_j)$ need to be odd, i.e., $\abs{V_1}=\abs{V_2}$ is odd.
The bijection $b$ is applied to $2^{2n-1} {-} 2^n$ operators,
half of which
satisfy Eq.~\eqref{eq:cond:c}.
Both $I^{\otimes n}$ and $X^{\otimes n}$ do not observe Eq.~\eqref{eq:cond:c}.
We obtain (iib) and (iic).
\end{proof}

We know from Lemma~\ref{app:prop:parity} that $\gfree$ is contained
in the span of the $iP_j$ with $P_j$
satisfying Eqs.~\eqref{eq:cond:a}-\eqref{eq:cond:b} for connected graphs.
For connected bipartite graphs, the additional condition in Eq.~\eqref{eq:cond:c}
has to be observed.
The number of $P_j$
agrees
with the respective dimensions in 
Thm.~\ref{thm:free-mixer-revisited} for non-bipartite graphs different from
a cycle graph or in 
Thm.~\ref{thm:free-mixer-bipartite}
for  bipartite graphs which are neither
a path graph nor a cycle graph.
Thus $\gfree$ is spanned in these cases 
by the respective sets of $iP_j$. Combined with 
Appendix~\ref{appendix:free-mixer}, this completes the derivation 
of Table~\ref{tab:free-mixer-full-table}.

\subsection{Complete and complete bipartite graphs\label{appendix:pauli:complete}}

\begin{lemma}\label{lem:free-mixer-complete}
Consider the free-mixer \DLA $\gfree$ for a complete graph.
(i)~$\gfree$ is spanned by $iP_j$ for Pauli strings $P_j$
satisfying Eqs.~\eqref{eq:cond:a}-\eqref{eq:cond:b}.
(ii)~$\dim(\gfree) = 2^{2n-1} {-} 2$.
(iii)~$\gfree \iso\su(2^{n-1}){\oplus}\su(2^{n-1})$.
\end{lemma}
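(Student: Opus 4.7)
The plan is to leverage results already established earlier in this appendix: parts (ii) and (iii) follow directly from Theorem~\ref{thm:free-mixer-revisited} (with low-$n$ exceptions handled by Theorems~\ref{app:thm:path} and \ref{app:thm:cycle}), while part (i) then reduces to a clean dimension-matching argument.

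First I would dispose of parts (ii) and (iii). For $n \geq 4$, the complete graph $K_n$ is connected, non-bipartite (it contains the triangle $K_3$ as a subgraph, hence an odd cycle), and distinct from any cycle graph (it has $\binom{n}{2} > n$ edges). Thus Theorem~\ref{thm:free-mixer-revisited} applies verbatim and yields $\gfree \iso \su(2^{n-1}) \oplus \su(2^{n-1})$ with $\dim(\gfree) = 2^{2n-1} - 2$. For the low-dimensional cases, $K_2$ coincides with the path graph $P_2$ and Theorem~\ref{app:thm:path} gives $\gfree \iso \so(4)$ of dimension $6$, while $K_3$ coincides with the cycle graph $C_3$ and Theorem~\ref{app:thm:cycle} gives $\gfree \iso \so(6) \oplus \so(6)$ of dimension $30$. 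The exceptional Lie algebra isomorphisms $\so(4) \iso \su(2) \oplus \su(2)$ and $\so(6) \iso \su(4)$ recover the form stated in (iii), and the dimensions $6$ and $30$ match $2^{2n-1} - 2$ for $n=2$ and $n=3$ respectively.

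For part (i), let $S$ denote the real span of all $iP_j$ where $P_j$ ranges over Pauli strings satisfying both Eq.~\eqref{eq:cond:a} and Eq.~\eqref{eq:cond:b}. By Lemma~\ref{app:prop:parity}(i), every element of $\gfree$ expands as a real linear combination of $iP_j$ with even parity $p(P_j) = \#\mathrm{Y}(P_j) + \#\mathrm{Z}(P_j)$; by Lemma~\ref{app:prop:center}(b), neither $iI^{\otimes n}$ nor $iX^{\otimes n}$ lies in $\gfree$, so the two Pauli strings excluded by Eq.~\eqref{eq:cond:a} cannot contribute either. Hence $\gfree \subseteq S$. Lemma~\ref{lem:pauli}(i) supplies $\dim(S) = 2^{2n-1} - 2$, which coincides with $\dim(\gfree)$ from part (ii). A subspace of equal dimension forces $\gfree = S$, establishing (i).

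There is essentially no hard step: the conceptual work has already been carried out in Theorems~\ref{thm:free-mixer-revisited}, \ref{app:thm:path}, and \ref{app:thm:cycle}, and the present lemma is a clean corollary obtained by matching dimensions. The only mild subtlety worth flagging is the treatment of the low-$n$ cases $n \in \{2,3\}$ through the exceptional isomorphisms of classical Lie algebras, since the generic non-bipartite classification in Appendix~\ref{appendix:sub:non:bipartite} explicitly excludes cycle graphs (and, a fortiori, path graphs).
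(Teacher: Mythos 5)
Your proof is correct, but it inverts the logical order of the paper's argument and rests on different machinery. The paper establishes (i) first, by a self-contained induction on $k = n - \#\mathrm{I}$ that explicitly generates every Pauli string obeying Eqs.~\eqref{eq:cond:a}--\eqref{eq:cond:b} from the free-mixer generators via the commutator rules of Fig.~\ref{fig:commutator} (base cases $k\in\{1,2\}$, then separate treatment of the even-parity $p>0$ and $p=0$ cases); it then reads off (ii) from the count in Lemma~\ref{lem:pauli}(i) and deduces (iii) from the containment of $\gfree$ in $\su(2^{n-1})\oplus\su(2^{n-1})$ given by Lemma~\ref{app:prop:structure} together with the dimension match. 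You instead import (ii) and (iii) wholesale from Theorem~\ref{thm:free-mixer-revisited}, settle the exceptional cases $n\in\{2,3\}$ via Theorems~\ref{app:thm:path} and \ref{app:thm:cycle} and the isomorphisms $\so(4)\iso\su(2)\oplus\su(2)$ and $\so(6)\iso\su(4)$ (a point the paper glosses over with ``readily established for $n=2$''), and then obtain (i) by squeezing $\gfree$ between the upper bound $\gfree\subseteq S$ (from Lemmas~\ref{app:prop:center} and \ref{app:prop:parity}, exactly as the paper sets up conditions \eqref{eq:cond:a}--\eqref{eq:cond:b}) and the matching dimension from Lemma~\ref{lem:pauli}(i). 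There is no circularity, since Theorem~\ref{thm:free-mixer-revisited} is proven in Appendix~\ref{appendix:sub:non:bipartite} without reference to this lemma, so your argument is sound and considerably shorter. What it gives up is the independence that Appendix~\ref{appendix:free-mixer:bases} is designed to provide: the paper proves this lemma entirely within the Pauli-string/graph-combinatorial framework precisely so that it can drive Proposition~\ref{prop:triangle:paw}, which re-derives the non-bipartite classification by edge-adding; by leaning on Theorem~\ref{thm:free-mixer-revisited} you make that second, complementary derivation logically dependent on the first.
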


\begin{proof}
Lemmas~\ref{app:prop:center} and \ref{app:prop:parity} show that the elements
of $\gfree$ have no support on odd-parity Pauli strings and that $\#\mathrm{I}\neq n$ and $\#\mathrm{X}\neq n$.
The results can readily established for $n=2$ and we now assume $n\geq 3$.
Applying Fig.~\ref{fig:commutator}, $iX_u$, $iX_v$, $iY_uY_v$, $iY_uZ_v$, and $iZ_uZ_v$
are contained in $\gfree$ for all vertices $u,v \in V$ with $u\neq v$. We also obtain $iX_u X_v \in \gfree$ as
$[[iY_uY_v,iZ_uZ_w],iZ_vZ_w]=4 iX_u X_v$ for three different vertices $u,v,w \in V$. We prove by induction
over $k:=n-\#\mathrm{I}$ that the stated basis elements are contained in $\gfree$, and we have just
verified the base cases of $k\in \{1,2\}$. We treat the two cases of
(a) $p  > 0$ even and (b) $p = 0$ separately. 
For (a), we have a
$iP_j$ such that $p(P_j)=p \geq 2$, $\#\mathrm{I}(P_j)=n{-}k$, and $\#\mathrm{X}(P_j)=k{-}p$.
If $k>p>0$, we can choose $v, w \in V$
such that $A_v = X$ and $A_w \in \{Y,Z\}$. Using Fig.~\ref{fig:commutator},
$X_vZ_w$ and $X_vY_w$ are reached from $Y_vI_w$ with a smaller $k$.
If $k=p>0$, we can choose $v_1,v_2,v_3 \in V$ such that $A_{v_j} \in \{Y,Z\}$.
Using Fig.~\ref{fig:commutator}, for example, $Y_{v_1}Y_{v_2}Y_{v_3}$ is reached from $Y_{v_1}I_{v_2}I_{v_3}$ with a smaller $k$.
For (b), $P_j$ has only
$X$ and $I$ in its tensor product with $\#\mathrm{X}(P_j)=k\geq 1$ and $\#\mathrm{I}(P_j)=n{-}k\geq 1$.
For $k\geq 2$, we choose $v_1,v_2,v_3 \in V$ such that $A_{v_1} = X$, $A_{v_2} = X$, and $A_{v_3} = I$.
Using Fig.~\ref{fig:commutator}, $X_{v_1}X_{v_2}I_{v_3}$ is reached from $X_{v_1}I_{v_3}I_{v_3}$ with a smaller $k$.
In all three cases, a smaller $k$ is obtained which shows (i) by induction.
Lemma~\ref{lem:pauli}(i) implies (ii). 
By Lemma~\ref{app:prop:structure}, $\gfree$ is isomorphic to a subalgebra of $\su(2^{n-1}){\oplus}\su(2^{n-1})$.
Hence (ii) implies (iii).
\end{proof}

\begin{lemma}\label{lem:bipartite:dimensions:combinatorics}
Consider the free-mixer \DLA $\gfree$ for a complete bipartite graph with $n\geq 4$
and its bipartition $V=V_1{\uplus}V_2$.
(i)~$\gfree$ is spanned by $iP_j$ for Pauli strings $P_j$
satisfying Eqs.~\eqref{eq:cond:a}-\eqref{eq:cond:c}.
The number of $P_j$ is
(ii)~$2^{2n-2} {-} 1$ for $n$ odd,
(iii)~$2^{2n-2} {-} 2^{n-1}$ for $\abs{V_1}=\abs{V_2}$ even,
and 
(iv)~$2^{2n-2} {+} 2^{n-1}$ for 
$\abs{V_1}=\abs{V_2}$ odd.
\end{lemma}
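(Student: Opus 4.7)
The plan is to split the lemma into its combinatorial and structural components. Parts (ii)--(iv) are immediate consequences of Lemma~\ref{lem:pauli}(ii), which already enumerates exactly those Pauli strings obeying Eqs.~\eqref{eq:cond:a}--\eqref{eq:cond:c} in each of the three parity regimes (odd $n$; both $|V_1|,|V_2|$ even; both odd), so no further work is needed once (i) is established. Therefore the entire substance of the lemma lies in part (i), the spanning statement.

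For (i), the easy containment $\gfree \subseteq \spn\{iP_j : P_j \text{ admissible}\}$ follows directly: Lemma~\ref{app:prop:center}(b) excludes $iI^{\otimes n}$ and $iX^{\otimes n}$, while Lemma~\ref{app:prop:parity}(i)--(ii) enforces the two parity conditions, which are preserved under the commutator action of the bipartite generators via the commutator graph in Fig.~\ref{fig:commutator}. The reverse containment would be proved by induction on $k := n - \#\mathrm{I}(P_j)$, in the spirit of the complete-graph argument in Lemma~\ref{lem:free-mixer-complete}, but restricted to the bipartite edge set $\{u,v\} \subset V_1 \times V_2$.

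The base cases $k=1,2$ are direct: for $k=1$ only $iX_u \in \mathcal{G}_{\mathrm{free}}$ arises; for $k=2$ the admissibility conditions force the two non-$\mathrm{I}$ slots to straddle the bipartition with entries in $\{Y,Z\}$, giving $iY_uY_v, iY_uZ_v, iZ_uY_v, iZ_uZ_v$ with $u\in V_1, v\in V_2$, each of which is reached from the generator $iZ_uZ_v$ by at most two commutators with $iX_u,iX_v$. For the induction step I would split into two subcases. In subcase (a), $P_j$ carries at least one $Y$ or $Z$ at some vertex $u$; then the completeness of the bipartite graph supplies an edge from $u$ to \emph{every} vertex of the opposite part, so a commutator of the form $[iZ_uZ_v, i\widetilde{P}]$ for an appropriate lower-$k$ admissible string $i\widetilde{P}$ (provided by the induction hypothesis) can be arranged to produce $\pm iP_j$, possibly after sandwiching with $iX_w$ to swap an $\mathrm{X}$ and an $\mathrm{I}$ per Fig.~\ref{fig:commutator}. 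In subcase (b), $P_j$ is supported only on $\{X,\mathrm{I}\}$, so by Eq.~\eqref{eq:cond:c} the support has odd size with an odd number of $X$'s in $V_1$; these are reached by iterating commutators of the type $[iY_uY_v,\,iZ_uZ_{v'}] = \pm 2\,iX_u Y_v Z_{v'}$ that move an $X$ across the bipartition, followed by a further commutator that contracts the remaining $Y$ and $Z$ into an $X$.

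The main obstacle will be case~(b) together with the bookkeeping needed to guarantee that every intermediate commutator in the induction lands inside the admissible span, so that one never inadvertently exits $\gfree$ while building up $iP_j$. The density of the bipartite graph is indispensable here: for sparser bipartite graphs one gets strictly smaller \DLAs (as in Thm.~\ref{app:thm:path} for path graphs, where $\dim(\gfree)=2n^2-n$ is much smaller than the admissible count), and any putative inductive step that does not use at least a spanning supply of cross edges will fail. Once (i) is settled, (ii)--(iv) follow by counting via Lemma~\ref{lem:pauli}(ii), and the claimed dimensions match precisely those of $\su(2^{n-1})$, $\so(2^{n-1}){\oplus}\so(2^{n-1})$, and $\usp(2^{n-1}){\oplus}\usp(2^{n-1})$ as in Thm.~\ref{thm:free-mixer-bipartite}.
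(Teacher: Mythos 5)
Your overall strategy coincides with the paper's: the counting in (ii)--(iv) is delegated to Lemma~\ref{lem:pauli}(ii), the containment $\gfree\subseteq\spn\{iP_j\}$ follows from Lemmas~\ref{app:prop:center} and \ref{app:prop:parity}, and the reverse inclusion is proved by induction on $k=n-\#\mathrm{I}$ with the same base cases $k\in\{1,2\}$. The gap lies in your induction step, whose two-case split is too coarse. Your case~(a) asserts that any admissible $P_j$ carrying a $Y$ or $Z$ at some $u$ can be written (up to sandwiching with mixer terms) as $[iZ_uZ_v,i\widetilde{P}]$ for a strictly lower-$k$ admissible $\widetilde{P}$. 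This fails whenever all $Y/Z$ entries of $P_j$ sit on one side of the bipartition and face only identities: for $P_j=Y_{u_1}Y_{u_2}$ with $u_1,u_2\in V_1$ and $V_2$ all-identity, the unique $\widetilde{P}$ with $Z_uZ_v\widetilde{P}\propto P_j$ is $X_{u_1}Y_{u_2}Z_v$, which has \emph{larger} $k$. The paper isolates exactly this configuration as a separate case and resolves it by trading $P_j$ for an all-$\{X,I\}$ string of the \emph{same} $k$ (e.g.\ $YY|I$ from $XX|I$), so the induction is not on $k$ alone but requires this secondary reduction at fixed $k$; your sketch has no mechanism for it. Relatedly, commuting with $iX_w$ swaps $Y_w\leftrightarrow Z_w$, not $X_w\leftrightarrow I_w$ as you claim, so the ``sandwiching'' step cannot repair this.

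Your case~(b) (strings supported on $\{X,I\}$) also needs repair. The condition \eqref{eq:cond:c} forces the \emph{total} number of $X$'s to be odd, not the number in $V_1$. More importantly, after producing $iX_uY_vZ_{v'}$ via $[iY_uY_v,iZ_uZ_{v'}]$, the residual $Y_vZ_{v'}$ pair lies entirely in $V_2$, where a bipartite graph has no edges, so there is no single edge-generator commutator that ``contracts'' it into an $X$; the reduction must route through a $V_1$ vertex, as in the paper's treatment of the subcases $\abs{V_1},\abs{V_2}\geq 2$ versus $\min(\abs{V_1},\abs{V_2})=1$ (the latter requiring yet another reduction, e.g.\ $I|XXX$ from $Y|YII$). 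These are fixable defects within your framework, but as written the induction does not close.
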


\begin{proof}
Lemmas~\ref{app:prop:center} and \ref{app:prop:parity} show that no other Pauli strings $iP_j$ can be generated.
The proof proceeds by induction on $k:=n-\#\mathrm{I}$. For $k=1$, only $iX_u$ for $u\in V$ are possible and these are contained 
in Eq.~\eqref{freegenerators}. For $k=2$, we obtain $iZ_uZ_v$, $iY_uZ_v$, $iZ_uY_v$, and $iY_uY_v$ for an edge $\{u,v\}$
following Fig.~\ref{fig:commutator} and all other possibilities in Fig.~\ref{fig:commutator} violate Eqs.~\eqref{eq:cond:a}-\eqref{eq:cond:c}.
We now assume $3\leq k \leq n$ and consider a Pauli string $P_j=\otimes_{u\in V} A_u$ with $A_u
\in \{X, Y, Z, I\}$ as in Eq.~\eqref{tensorproductoperator}.
We consider the cases such that either (a) $A_u \in \{Y,Z\}$ and $A_v\in\{X,Y,Z\}$ hold for $u\in V_1$ and $v\in V_2$
(and for $V_1$ and $V_2$ exchanged),
(b) $A_u \in \{Y,Z\}$ and $A_v=I$ hold for $u\in V_1$ and $v\in V_2$ (and for $V_1$ and $V_2$ exchanged), 
or (c) $p(P_j)= \#\mathrm{Y}(P_j) + \#\mathrm{Z}(P_j) =0$.
Let $B_{b_1}\ldots B_{b_s}\vert C_{c_1}\ldots C_{c_t}$ 
denote a Pauli string consisting of a subset of $P_j$,
where $b_j \in V_1$, $c_j \in V_2$, $B_{b_j}:=A_{b_j}$, $C_{c_j}:=A_{c_j}$
$s+t\leq n$, $s\leq \abs{V_1}$, and $t\leq \abs{V_2}$.
For (a), we have, e.g., the possibilities $Z\vert X$ which can be generated from $I\vert Y$ with a smaller $k$
and $Z\vert Z$ which can be generated from $X\vert I$ with a smaller $k$ (see Fig.~\ref{fig:commutator}) and all other cases
are similar. This resolves (a) by induction.
For (b), we have, e.g., the possibility $YY\vert I$ which can be generated from $XX\vert I$ and this case is treated in (c).
For (c), $\#X(P_j)$ is odd and thus $\#X(P_j)\geq 3$. The case (c) 
is further divided into the cases (c1) $\abs{V_1},\abs{V_2}\geq 2$ and (c2)  $\abs{V_1}=1$ or $\abs{V_2}=1$.
For (c1), we have $IX\vert XX$ which can be generated from $IY\vert IY$ with a smaller $k$.
For (c2), we have either $I\vert XXX$ or $X\vert IXX$ which can be both generated from $Y\vert YII$ with a smaller $k$.
We resolve (c) by induction, which proves (i).
Lemma~\ref{lem:pauli}(ii) implies (ii)-(iv).
\end{proof}

\subsection{Adding edges to a connected graph\label{app:adding:edge}}

Recall from Eqs.~\eqref{freegenerators}-\eqref{nonfreegens} that the problem
Hamiltonian $\sum_{\{u,v\} \in E} iZ_uZ_v$
can be split for the free-mixer ansatz 
into separate $iZ_uZ_v$ for each edge $\{u,v\}\in E$. Thus generating
$iZ_jZ_k$ for $\{j,k\}\not\in E$ is equivalent to adding $\{j,k\}$ to the edges without
changing the free-mixer \DLA. More and more edges can be added 
for graphs different from
path and cycle graphs
until either a complete or complete bipartite graphs is reached. 
This explains why only so few classes of free-mixer \DLAs occur.
Two particular configurations that allow us to add edges 
are Y-configurations and PAW-configurations
[see Figs.~\ref{fig:yconfiguration}(a)-(b)], 
which can be directly verified (see Fig.~\ref{fig:commutator}):

\begin{lemma}\label{lem:y:paw}
Consider a connected graph and either the (a) five- 
or (b) four-vertex subgraph
depicted in Fig.~\ref{fig:yconfiguration}(a)-(b) with the generators $iX_u$ for vertices $v$ and
$iZ_uZ_v$ for edges $\{u,v\}$. The respective elements (a) $iZ_1Z_5$ and $iZ_2Z_5$
or (b) $iZ_1Z_4$ and $iZ_2Z_4$ are generated.
\end{lemma}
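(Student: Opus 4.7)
Both parts of the lemma reduce to explicit nested commutator computations among the generators $iX_u$ (for the listed vertices) and $iZ_uZ_v$ (for the listed edges), using the local commutation rules encoded in Figure~\ref{fig:commutator}. Recall that every commutator of two Pauli strings is again (up to a factor of $\pm 2i$) a single Pauli string whose support and type are read off position by position: at each qubit $k$ the pair $(A_k,B_k)$ either commutes (in which case it contributes $A_k$ to the result) or anticommutes (in which case it contributes the unique third Pauli at $k$, with a fixed phase). Global anticommutation of $A$ and $B$ occurs exactly when the number of anticommuting sites is odd.

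The strategy, for both configurations, is to introduce a ``propagation template'' that walks a $Z$ label from one endpoint along a path of existing edges to the target vertex. Concretely, starting from $iZ_aZ_b$ on an edge $\{a,b\}$ and having $iX_b$ and $iZ_bZ_c$ in the generating set, the chain
\[
iZ_aZ_b \xrightarrow{[iX_b,\,\cdot\,]} iZ_aY_b \xrightarrow{[iZ_bZ_c,\,\cdot\,]} iZ_aX_bZ_c
\]
advances the support one vertex further while leaving a residual $X_b$ in the interior. Iterating this produces an element of the form $iZ_\alpha X_{v_1}\cdots X_{v_{k-1}}Z_\beta$ supported along any chosen edge-path from $\alpha$ to $\beta$ in the subgraph. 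The second ingredient is the removal of each interior $X_{v_j}$: bracketing with a suitable edge generator $iZ_{v_j}Z_w$ converts $X_{v_j}$ into $Y_{v_j}Z_w$, and then bracketing with $iX_{v_j}$ converts this into $Z_{v_j}Z_w$, etc. The special features of the two configurations guarantee that such cancellations are available.

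For the PAW in Fig.~\ref{fig:yconfiguration}(b), the triangle provides two distinct edges incident to the branching vertex, so each interior $X$ arising when propagating along the pendant can be eliminated using the extra triangle edge while preserving the endpoints $1$ (or $2$) and $4$. Symmetry between vertices $1$ and $2$ inside the triangle yields both $iZ_1Z_4$ and $iZ_2Z_4$ from a single template by swapping their roles. For the Y-configuration in Fig.~\ref{fig:yconfiguration}(a), the propagation is carried out along the two arms meeting at the junction vertex, and the branching is used in exactly the same way to cancel the interior $X$'s introduced by the walk, leaving $iZ_1Z_5$ and (by relabeling) $iZ_2Z_5$.

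The only real obstacle is bookkeeping: every individual commutator is a one-line application of Fig.~\ref{fig:commutator}, but the sequence is long enough that one must track supports and phases carefully. Since the lemma is stated as directly verifiable, the proof is simply the display of the resulting explicit nested brackets; no new conceptual input is required beyond the observation that the structural features of each configuration (the triangle for PAW, the branching for Y) suffice to cancel all intermediate $X$-factors produced by the propagation template.
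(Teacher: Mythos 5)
Your overall plan (reduce everything to explicit nested commutators, propagate a $Z$ along a path, and use the triangle or the branching to finish) points in the right direction, and the paper itself offers no more than ``can be directly verified'' for this lemma. But the one step that carries all the content --- eliminating the interior $X$ factors --- is described incorrectly, and this is a genuine gap. A commutator of two Pauli strings $P$ and $Q$ produces (up to phase) the product $PQ$, so the resulting string carries $P_kQ_k$ at site $k$; the only way a site in the support of $Q$ can become the identity is if $P$ carries the \emph{same} Pauli there. Your proposed removal move --- bracket $iZ_aX_b\cdots$ with $iZ_bZ_w$ and then with $iX_b$ --- therefore never deletes the site $b$: it turns $X_b$ into $Y_bZ_w$ and then $Z_bZ_w$, i.e.\ it enlarges the support. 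In particular, starting from the propagated element $iZ_1X_3X_4Z_5$ in the Y-configuration, no single bracket with an edge generator or an $iX_u$ can yield $iZ_1Z_5$; the only candidate $A$ with $[A,Z_1X_3X_4Z_5]\propto Z_1Z_5$ is $A=X_3X_4$, which commutes with it (and, for the bipartite Y-graph, is not even in $\gfree$ by Lemma~\ref{app:prop:parity}).

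The correct mechanism is to build a \emph{second} string that agrees with the propagated one on its interior and anticommutes with it at exactly one endpoint, so that the interior cancels in the product. For the PAW this second string is $iX_1Y_3Z_4=[iX_1X_3,iZ_3Z_4]/(-2)$, where $iX_1X_3$ is produced by the triangle as in the proof of Lemma~\ref{lem:free-mixer-complete}; then $[iY_1Y_3,\,iX_1Y_3Z_4]\propto iZ_1Z_4$. For the Y-configuration one needs the branching elements $iY_1Y_2X_3$ (from $[[iY_1Y_3,iZ_2Z_3],iX_2]$ up to constants) and, after further propagation and a collapse at vertex $3$, $iY_1Y_2X_4$; then $[iY_1Y_2X_3,\,iZ_1X_3X_4Z_5]\propto iX_1Y_2X_4Z_5$ and finally $[iY_1Y_2X_4,\,iX_1Y_2X_4Z_5]\propto iZ_1Z_5$, with $iZ_2Z_5$ following by the symmetry $1\leftrightarrow 2$. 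None of these auxiliary elements is produced by your template, and since the lemma's entire content is this verification, the proposal as written does not establish the claim.
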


For connected graphs that differ from a cycle graph but contain a cycle 
as depicted in Fig.~\ref{fig:yconfiguration}(c), we can generate an additional edge and
shorten the cycle by applying Lemma~\ref{lem:y:paw}(a). Repeating this argument 
leads to the following two propositions. 

\begin{proposition}\label{prop:triangle:paw}
Consider a connected non-bipartite graph different from a cycle graph.
Without changing its free-mixer
 \DLA, one can add edges
such that the resulting graph 
(i) has a PAW-configuration (and a triangle) as a subgraph
and (ii) is equal to the complete graph.
(iii)~$\gfree \iso \su(2^{n-1}){\oplus}\su(2^{n-1})$.
\end{proposition}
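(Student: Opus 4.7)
The overall strategy is to keep adding edges to the graph until it becomes the complete graph $K_n$, while leaving $\gfree$ unchanged, and then invoke Lemma~\ref{lem:free-mixer-complete}(iii) to conclude (iii). The validity of edge-addition rests on Lemma~\ref{lem:y:paw}: a Y-configuration yields the extra elements $iZ_1Z_5$ and $iZ_2Z_5$, and a PAW-configuration yields $iZ_1Z_4$ and $iZ_2Z_4$. Since the free-mixer problem Hamiltonian splits as a sum of independent $iZ_uZ_v$ generators [see Eqs.~\eqref{freegenerators}--\eqref{nonfreegens}], generating a new $iZ_uZ_v$ is operationally equivalent to adding $\{u,v\}$ to $E$ without altering $\gfree$.

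For (i), non-bipartiteness guarantees an odd cycle of length $2k{+}1\geq 3$. If $k\geq 2$, five consecutive vertices of this cycle form a Y-configuration matching Fig.~\ref{fig:yconfiguration}(a), so Lemma~\ref{lem:y:paw}(a) produces a chord that shortens the odd cycle by two; iterating reduces it to a triangle. Since the graph is not a cycle graph, $n\geq 4$ and $G$ is not just a triangle. Connectedness therefore supplies a vertex $v$ adjacent to one of the three triangle vertices, and this vertex together with the triangle realises a PAW-configuration as in Fig.~\ref{fig:yconfiguration}(b), proving (i).

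For (ii), I would proceed by induction on the number of missing edges $\binom{n}{2}-|E|$, the base case being the complete graph itself. Given any nonedge $\{a,b\}$, the goal is to exhibit a Y- or PAW-configuration in the current graph producing $iZ_aZ_b$. Thanks to (i) the graph always contains a triangle, and connectedness provides paths from this triangle to $a$ and to $b$. By choosing a shortest such path and splicing it with the triangle, one constructs either a Y-configuration with $\{a,b\}$ playing the role of $\{1,5\}$ (or $\{2,5\}$) in Fig.~\ref{fig:yconfiguration}(a), or a PAW-configuration playing the analogous role in Fig.~\ref{fig:yconfiguration}(b). Applying Lemma~\ref{lem:y:paw} then adds $\{a,b\}$, so the inductive hypothesis applies. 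Since the graph remains connected and still contains a triangle after each step, the procedure terminates at $K_n$, and (iii) follows from Lemma~\ref{lem:free-mixer-complete}(iii).

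The main obstacle, in my view, is the combinatorial step in (ii): it is not completely obvious that for every nonedge $\{a,b\}$ there is a Y- or PAW-configuration of the exact topological type required by Fig.~\ref{fig:yconfiguration}, and one might need to first add several auxiliary edges before the desired configuration appears. A cleaner route would be to first densify: use the initial triangle and PAW from (i) to add enough edges so that every vertex lies in some triangle, then handle arbitrary nonedges through local case analysis. The algebraic content of the argument is light (everything reduces to Lemma~\ref{lem:y:paw} plus Lemma~\ref{lem:free-mixer-complete}); the bookkeeping of graph configurations is where the care is needed.
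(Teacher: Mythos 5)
Your overall strategy (add edges via Lemma~\ref{lem:y:paw} until the graph is complete, then invoke Lemma~\ref{lem:free-mixer-complete}) is the same as the paper's, but there are two genuine gaps. First, in part (i) you claim that five consecutive vertices of the odd cycle form a Y-configuration. They do not: five consecutive cycle vertices induce (among the cycle edges) a path, whereas the Y-configuration of Fig.~\ref{fig:yconfiguration}(a) requires a branching vertex of degree three, and within a bare cycle every vertex has degree two. The paper obtains the needed branching from the hypothesis that the graph is connected and \emph{not} a cycle graph: there is an additional vertex attached to some vertex of the odd cycle, and that pendant vertex together with four cycle vertices supplies the Y-configuration whose chord shortens the odd cycle [Fig.~\ref{fig:yconfiguration}(c)]. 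Without that extra vertex your shortening step fails.

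Second, your part (ii) is a plan rather than a proof, and you acknowledge the gap yourself: for a distant nonedge $\{a,b\}$ there need not be any single Y- or PAW-configuration whose application adds $\{a,b\}$, so an induction on the number of missing edges does not go through as stated. The paper closes this with a concrete two-stage densification anchored at a fixed triangle $\{1,2,3\}$ obtained from (i): first, by repeatedly applying the PAW-configuration of Lemma~\ref{lem:y:paw}(b) along a path from the triangle to any vertex $v$, every vertex is shown to become adjacent to two of $\{1,2,3\}$, and then to all three (using the PAW given by the edge $\{1,3\}$ and the triangle $\{1,2,v\}$); second, for arbitrary $v\neq w$ outside the triangle, the PAW-configuration given by the edge $\{2,w\}$ and the triangle $\{1,2,v\}$ adds $\{v,w\}$. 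This is exactly the ``densify first'' route you gesture at but do not carry out; it is the substantive combinatorial content of the proof.
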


\begin{proof}
A non-bipartite graph contains an odd-cycle. As the graph is also connected and different
from a cycle graph, an
additional vertex exists that is connected to one of the vertices in the odd-cycle. As 
the vertices $1$ to $5$ in Fig.~\ref{fig:yconfiguration}(c) form a Y-configuration, we apply Lemma~\ref{lem:y:paw}(a)
to shorten the odd-cycle by adding the edge $\{2,5\}$. This proves (i) by induction.
Applying (i), we have a triangle subgraph which is formed by some vertices $1$, $2$, and $3$. 
Every vertex that is not directly connected to this triangle (say vertex $5$) is
indirectly connected to one vertex in this triangle (say vertex $3$) via some vertex
(say vertex $4$) as shown in Fig.~\ref{fig:yconfiguration}(d). We add further edges without changing $\gfree$
by repeatedly applying Lemma~\ref{lem:y:paw}(b) to PAW-configurations, starting
with the edges $\{1,4\}$ and $\{2,4\}$ in Fig.~\ref{fig:yconfiguration}(d). As $1$, $2$, and $4$ form a new triangle,
we can add the edges $\{1,5\}$ and $\{2,5\}$ by induction. 
This argument shows that any vertex $v$ is directly connected to two vertices in $\{1,2,3\}$, say the vertices $1$ and $2$. 
Using the PAW-configuration given by the edge $\{1,3\}$ and the triangle
$\{1, 2, v\}$, we add the edge $\{3,v\}$. By the previous arguments, any two vertices $v,w\in V\setminus \{1,2,3\}$ with $v\neq w$ are directly connected
to $1$, $2$, and $3$. Using the PAW-configuration given by the edge $\{2,w\}$ and the triangle $\{1, 2, v\}$, we add the edge $\{v,w\}$. This proves (ii). 
Lemma~\ref{lem:free-mixer-complete}(iii) implies (iii).
\end{proof}

\begin{figure}
\includegraphics{yconfiguration.pdf}
\caption{\textbf{Add red edges, keep \DLA.}
(a)~Add the edges $\{1,5\}$ and $\{2,5\}$ to the Y-configuration.
(b)~Add the edges $\{1,4\}$ and $\{2,4\}$ to the PAW-configuration.
(c)-(h)~are similar.
Generators are $iX_u$ for all vertices $u$ and
$iZ_uZ_{v}$ for edges $\{u,v\}$ (see Fig.~\ref{fig:sporadic}).
\label{fig:yconfiguration}}
\end{figure}

\begin{proposition}\label{prop:pauli:bipartite}
Consider a connected bipartite graph different from a path or cycle graph.
Without changing its free-mixer
 \DLA, one can add edges
such that the resulting graph 
is equal to a complete bipartite graph.
\end{proposition}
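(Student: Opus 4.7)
The plan is to adapt the strategy of Proposition~\ref{prop:triangle:paw} to the bipartite setting. Since the PAW configuration of Lemma~\ref{lem:y:paw}(b) contains a triangle, it cannot occur as a subgraph of a bipartite graph, so the proof must rely exclusively on Lemma~\ref{lem:y:paw}(a). Fortunately, the Y-configuration of Fig.~\ref{fig:yconfiguration}(a) is bipartite-compatible: its vertices admit a 2-coloring in which $\{1,2,4\}$ and $\{3,5\}$ lie on opposite sides, so the edges $\{1,5\}$ and $\{2,5\}$ produced by the lemma cross the bipartition and preserve it.

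I would proceed by induction on the number of missing cross-partition edges in the current graph $G'$, maintaining the invariant that $G'$ is connected, bipartite with the original bipartition $V=V_1 \uplus V_2$, and possesses at least one vertex of degree $\geq 3$ (unless already equal to the complete bipartite graph $K_{|V_1|,|V_2|}$). The invariant holds initially because any connected graph with all degrees at most $2$ is a path or a cycle, which $G$ is not; and it is preserved at each step since adding an edge only raises degrees and cannot disconnect. For the inductive step, fix a vertex $w$ of degree $\geq 3$, say $w \in V_2$ with neighbors $a_1,\ldots,a_k \in V_1$ and $k \geq 3$. For any $v \in V_2 \setminus \{w\}$ at distance $2$ from $w$ via a path $w - x - v$ with $x \in V_1$, and any two distinct $a_i, a_j \in N(w) \setminus \{x\}$, the $5$-tuple $(a_i, a_j, w, x, v)$ together with the edges $\{a_i, w\}, \{a_j, w\}, \{w, x\}, \{x, v\}$ instantiates the Y-configuration, and Lemma~\ref{lem:y:paw}(a) yields the new cross-partition edges $\{a_i, v\}$ and $\{a_j, v\}$.

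The main obstacle is showing that this iterative procedure reaches every missing edge, not merely some. The key is that each round extends the set of vertices in $V_2$ adjacent to all of $N(w)$; once this set is saturated, each $a_i \in N(w)$ itself has degree $\geq 3$ and can take the role of the new reference vertex, propagating the construction outward via a BFS from $w$. Combined with connectedness (so that every vertex in $V_2$ sits at some even distance from $w$) and the fact that each added edge only shortens distances between subsequent missing endpoints, induction on the BFS-distance allows us to reach every missing cross-partition edge. The procedure terminates because the number of missing edges strictly decreases, and the final graph is necessarily $K_{|V_1|,|V_2|}$.
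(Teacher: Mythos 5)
Your proposal follows essentially the same route as the paper's proof: both fix a vertex of degree at least three (which exists because a connected graph that is neither a path nor a cycle cannot have all degrees $\leq 2$), observe that only the Y-configuration of Lemma~\ref{lem:y:paw}(a) is available in the bipartite setting, and add cross-partition edges by induction on the distance from that hub. One step of your sketch does not hold as stated: after the distance-two vertices of $V_2$ have been joined to all of $N(w)$, a vertex $a_i \in N(w)$ need not have degree $\geq 3$ (if only one vertex of $V_2$ lies at distance two from $w$, then $a_i$ is adjacent only to $w$ and that one vertex), so the $a_i$ cannot in general serve as the next reference vertices. The repair is to promote the newly saturated $V_2$-vertices themselves to hubs---each is now adjacent to all of $N(w)$ and hence has degree $\geq 3$---and iterate through the even BFS layers until every vertex of $V_2$ is adjacent to all of $N(w)$; after that, any missing edge $\{u,w'\}$ with $u\in V_1$ and $w'\in V_2$ is produced by a Y-configuration centered at a $V_2$-neighbor $v$ of $u$, with arms $u$ and some $a\in N(w)$ and stem running from $v$ through a common neighbor $x\in N(w)$ to $w'$. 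This last stage is in substance what the paper establishes through its intermediate claim (connect all even vertices to $W$ and all odd vertices to the hub) and the configurations of Fig.~\ref{fig:yconfiguration}(f)--(h), so with that one correction your argument matches the paper's.
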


\begin{proof}
Clearly, one vertex (say vertex $3$) has a degree of at least three and
it has three neighbors (say the vertices in $W=\{1, 2, 4\}$)
as shown in Fig.~\ref{fig:yconfiguration}(e). None of the neighbors of
$3$ can be directly connected as this would imply a triangle
in a bipartite graph which is impossible. If the neighbors
of $3$ have no additional neighbors except $3$ than the graph is
a the complete bipartite graph with bipartition $V=V_1{\uplus}V_2$ where
$\abs{V_1}=1$ and we have proven the result. So we now assume that
$\abs{V_1}, \abs{V_2}>1$ and one vertex in $W$ (say vertex $4$)
has a neighbor (say vertex $5$) different from $3$
as depicted in Fig.~\ref{fig:yconfiguration}(a).
Let $d(v)$ denote the (minimal) distance of a
vertex $v\in V$ from the vertex $3$ and the vertex $v$
is called even or odd depending whether $d(v)$ is even or odd
(which is well-defined as the graph is bipartite).
Our objective is to add all edges that
connect even vertices with odd ones
without changing the free-mixer \DLA (which is tacitly assumed in the following), 
which would prove the desired result.

First, we show that (*) one can add edges to connect all even vertices to the vertices in $W$ and all odd vertices
to $3$. We proceed by induction on $d(v)$. For $d(v)=0$, $v=3$ and it is already directly connected to the vertices in $W$.
For $d(v)=1$, $v$ is already a neighbor of $3$. For $d(v)=2$, there is a path from $v$ (say vertex $5$) to $3$
via one of its neighbors (say vertex $4$) as shown in Fig.~\ref{fig:yconfiguration}(a). 
Lemma~\ref{lem:y:paw}(a) implies
that we can add the edges $(1,5)$ and $(2,5)$ and the vertex $5$ is connected to all vertices in $W$.
For $d(v)\geq 3$, there is a path from $v$ (say vertex $6$) to $3$ via a neighbor (say vertex $5$)
of a vertex in $W$ (say vertex $4$) as shown in Fig.~\ref{fig:yconfiguration}(f).
We are free to assume that this path does not contain the other two vertices $1$ and $2$ in $W$
(as we could shorten the path).
If $d(6)\geq 3$ is even, we can apply Lemma~\ref{lem:y:paw}(a) by induction to add the edges $(1,6)$ and $(2,6)$.
Adding the edge $(1,6)$ to the star in Fig.~\ref{fig:yconfiguration}(e), Lemma~\ref{lem:y:paw}(a) implies
that edge $(4,6)$ can be added. If $d(6)\geq 3$ is odd, let $w$ be the neighbor of $6$ on the path from
$3$ to $6$ where $d(w)=d(6){-}1$ is even and $w \not\in W$. By induction, we can add the edges $\{w,1\}$,
$\{w,2\}$, and $\{w,4\}$. We obtain the situation in Fig.~\ref{fig:yconfiguration}(g) and can add $(3,6)$.
This proves (*) by combining all cases.
With the help of Fig.~\ref{fig:yconfiguration}(h),
it remains to show that any even vertex $e\neq 3$ can be directly connected to any odd vertex $o\not\in W$.
\end{proof}

\subsection{Identifying the \DLAs for bipartite graphs\label{app:pauli:bipartite:lie}}
We have already identified the free-mixer \DLA for non-bipartite graphs different from cycle graphs
as $\gfree \iso \su(2^{n-1}){\oplus}\su(2^{n-1})$ and we refer to the direct isomorphisms
in Section~\ref{appendix:sub:pathcycle}
for the cases of path and cycle graphs. It remains to tackle all other bipartite graphs (different from path and cycle graphs).
Lemma~\ref{lem:bipartite:dimensions:combinatorics} already states the respective dimension
for the relevant cases and Proposition~\ref{prop:pauli:bipartite} reduces the determination of the free-mixer \DLA
to the case of complete bipartite graphs. We first consider graphs with an even number of vertices while providing a different argument
as in Theorem~\ref{thm:free-mixer-bipartite}:

\begin{proposition}[bipartite graphs, $n$ even]\label{thm:free-mixer-bipartite-even-combinatorics}
Given a connected bipartite graph $G$ 
that is neither
a path graph nor a cycle graph, let $V=V_1 {\uplus} V_2$ 
denote its vertex bipartition and assume that
$\abs{V}=\abs{V_1}+\abs{V_2}=n\geq 4$ is even.
The free-mixer \DLA for $G$ is given by 
(i)~$\gfree \iso \so(2^{n-1}){\oplus}\so(2^{n-1})$
if $\abs{V_1}$ and $\abs{V_2}$ are both even,
and (ii)~$\gfree \iso \usp(2^{n-1}){\oplus}\usp(2^{n-1})$
if $\abs{V_1}$ and $\abs{V_2}$ are both odd.
\end{proposition}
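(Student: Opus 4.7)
The plan is to reduce the problem to the complete bipartite case and then identify each irreducible block via a bilinear-form argument. First, I would invoke Proposition~\ref{prop:pauli:bipartite} to add edges to $G$ without altering $\gfree$, arriving at the complete bipartite graph with the same vertex bipartition $V = V_1 \uplus V_2$; note this preserves the parities of $|V_1|$ and $|V_2|$. At this point Lemma~\ref{lem:bipartite:dimensions:combinatorics} supplies the exact dimension $\dim(\gfree)$, which matches $2\dim[\so(2^{n-1})] = 2^{2n-2} - 2^{n-1}$ in case (i) and $2\dim[\usp(2^{n-1})] = 2^{2n-2} + 2^{n-1}$ in case (ii).

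Next, I would combine Lemma~\ref{app:prop:structure}, which embeds $\gfree$ block-diagonally into $\su(2^{n-1}) \oplus \su(2^{n-1})$ via the $\pm 1$ eigenspaces $\HC_\pm$ of $X^{\otimes n}$, with the bilinear-form data from Lemma~\ref{prop:so:sp}. Specifically, the matrix
\begin{equation*}
S := \bigotimes_{v\in V_1} Z_v \, \bigotimes_{v'\in V_2} Y_{v'}
\end{equation*}
satisfies $S H_j + H_j^t S = 0$ for every generator $iH_j$ of $\gfree$ and observes $S\bar{S} = (-1)^{|V_2|} \unity_{2^n}$, so $S$ is symmetric when $|V_1|,|V_2|$ are both even and skew-symmetric when they are both odd. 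Because $n$ is even, $S$ commutes with $X^{\otimes n}$ and therefore restricts to invertible operators $S_\pm$ on $\HC_\pm$. The change of basis to the Hadamard (block-diagonalizing) basis is real orthogonal, so both the vanishing relation and the sign of $S\bar{S}$ are inherited by each $S_\pm$.

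Applying Proposition~\ref{prop_obata} block by block then yields $\gfree^\pm \subseteq \so(2^{n-1})$ in case (i) and $\gfree^\pm \subseteq \usp(2^{n-1})$ in case (ii), since we already know each $\gfree^\pm$ is irreducibly embedded into $\su(2^{n-1})$ (the commutant is two dimensional by Lemma~\ref{app:prop:center}, so there are exactly two irreducible blocks, each of dimension $2^{n-1}$). Combining this upper bound with the dimension formula of Lemma~\ref{lem:bipartite:dimensions:combinatorics} forces equality in each block, giving the claimed isomorphisms.

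The main technical obstacle will be verifying cleanly that $S$ commutes with $X^{\otimes n}$ and that the restrictions $S_\pm$ remain invertible and satisfy the hypotheses of Proposition~\ref{prop_obata} on each block; this reduces to a parity count of how many factors of $Y$ and $Z$ anticommute with $X^{\otimes n}$, which works precisely because $n$ is even. A subtler point is ensuring that the irreducibility of each $\gfree^\pm$ inside $\su(2^{n-1})$ is genuine (so that Proposition~\ref{prop_obata} applies), but this follows from Lemma~\ref{app:prop:center}(a) together with the block-diagonal structure established in the proof of Lemma~\ref{app:prop:structure}.
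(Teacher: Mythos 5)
Your proposal is correct and follows essentially the same route as the paper: reduce to the complete bipartite case via Proposition~\ref{prop:pauli:bipartite}, use the operator $S=\bigotimes_{v\in V_1} Z_v\bigotimes_{v'\in V_2} Y_{v'}$ (which commutes with $X^{\otimes n}$ precisely because $n$ is even) restricted to the two $X^{\otimes n}$-eigenblocks, apply Proposition~\ref{prop_obata} blockwise to get the $\so$ or $\usp$ containment according to the parity of $\abs{V_2}$, and close with the dimension count of Lemma~\ref{lem:bipartite:dimensions:combinatorics}. This matches the paper's basis-independent variant of its proof almost verbatim, so no further comment is needed.
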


\begin{proof}
Recall from Lemma~\ref{app:prop:center}(a) and the discussion in Section~\ref{app:free:preliminaries}
that the action of $\gfree$ splits into two irreducible and invariant subspaces $\HC_{+}$
and $\HC_{-}$ which are spanned by Hadamard basis states with respectively an even or odd number of minus
signs. Moreover, $\gfree=\gfree^{+}{\oplus}\gfree^{-}$ decomposes into two isomorphic, semisimple ideals with respect
to $\HC_{+}$
and $\HC_{-}$.
We will apply 
Proposition~\ref{prop_obata} to $\gfree^{+}$ by identifying 
a matrix $S^{+}$ such that (*)~$S^{+} H_j^{+} {+} (H_j^{+})^{t} S^{+} = 0$ for all generators $iH_j^{+}$
of $\gfree^{+}$ (and similarly for $\gfree^{-}$). We can assume that $V_1=\{1,\ldots,\abs{V_1}\}$ with
$\abs{V_1}>1$. Let $S:= Z_1\!\cdots\! Z_{\abs{V_1}} Y_{\abs{V_1}+1}\!\cdots\!Y_n$ 
where $S$ is symmetric if $\abs{V_2}$ is even. Switching to the basis of Eq.~\eqref{generators_sym},
$S$ is transformed to $I_1Z_2\!\cdots\! Z_{\abs{V_1}} Y_{\abs{V_1}+1}\!\cdots\!Y_n$. We obtain $S^{\pm}=
Z_2\!\cdots\! Z_{\abs{V_1}} Y_{\abs{V_1}+1}\!\cdots\!Y_n$ as a matrix on $n-1$ qubits acting on the respective invariant blocks,
where $S^{\pm}$ is symmetric if $\abs{V_2}$ is even.
Recalling that $n-1$ is odd,
we can directly verify (*) by referring to the explicit matrices $X_u$ for $u\geq 2$,
$Z_{v_1} Z_{v_2}$ for $v_1 \in V_1 \setminus \{1\}$ and $v_2 \in V_2$,
$Z_{v_2}$ for $v_2 \in V_2$, and $X_2 \!\cdot\cdot X_n$
from Eq.~\eqref{generators_sym} as acting on $n-1$ qubits
(projected to the respective invariant blocks).
Moreover, Proposition~\ref{prop_obata} implies the desired inclusions into the \DLAs
$\so(2^{n-1}){\oplus}\so(2^{n-1})$ or $\usp(2^{n-1}){\oplus}\usp(2^{n-1})$.
Finally, we apply Lemma~\ref{lem:bipartite:dimensions:combinatorics}(iii)-(iv) to complete the proof.

We also detail a basis-independent argument for the desired inclusions.
Note that $P_j^{t} = (-1)^{\#\mathrm{Y}(P_j)} P_j$ for any Pauli string $P_j$.
For the rest of the proof, let $P_j$ denote a Pauli string such that $iP_j \in \gfree$.
Recall from Lemma~\ref{lem:bipartite:dimensions:combinatorics}(i) that 
$\#\mathrm{Y}(P_j) + \#\mathrm{Z}(P_j)$  is even and
$\#\mathrm{X}(P_j) + \#\mathrm{Y}|_{V_1}(P_j) + \#\mathrm{Z}|_{V_1}(P_j)$ is odd.
We choose $S:= \prod_{v \in V_1} Z_v  \prod_{w \in V_2} Y_w$ and
we obtain $S^{t} = (-1)^{\abs{V_2}} S$ which implies
that $S$ is symmetric or skew-symmetric if $\abs{V_2}$ is even or odd, respectively. 
As $n$ is even, $S$ commutes with $X^{\otimes n}$ and  $\HC_{+}$
and $\HC_{-}$ are also invariant subspaces of $S$.
Moreover,
\begin{align*}
P_j S & = S P_j (-1)^{\#\mathrm{X}+\#\mathrm{Y}|_{V_1}+\#\mathrm{Z}|_{V_2}}\\
& = S P_j (-1)^{\#\mathrm{Y}|_{V_1}+\#\mathrm{Z}|_{V_1}+1 +\#\mathrm{Y}|_{V_1}+\#\mathrm{Z}|_{V_2}}\\
& = S P_j (-1)^{\#\mathrm{Z}+1} = S P_j (-1)^{\#\mathrm{Y}+1},
\end{align*}
which implies that
\begin{align*}
S P_j + P_j^{t} S & = S P_j + (-1)^{\#\mathrm{Y}} P_j S \\
&= S P_j + (-1)^{\#\mathrm{Y}} (-1)^{\#\mathrm{Y}+1} S P_j\\
&= S P_j - S P_j = 0.
\end{align*}
Let $S^{\pm}$ and $P_j^{\pm}$ denote the respective projections to
$\HC_{+}$
and $\HC_{-}$. Note that projection and matrix transposition commutes.
Clearly, also $S^{\pm} P_j^{\pm} + (P_j^{\pm})^{t} S^{\pm} = 0$ holds 
as terms such as $S^{+} P_j^{-}$ and $(P_j^{+})^t S^{-}$ in the expansion of $S P_j + P_j^{t} S=0$ are zero.
Note that $S^{\pm}$ is symmetric iff $S$ is symmetric.
We can apply Prop.~\ref{prop_obata}.
\end{proof}

The proof critically relies on the fact that $n$ is even, and a different free-mixer
\DLA appears if $n$ is odd:
\begin{proposition}[bipartite graph, $n$ odd]\label{thm:free-mixer-bipartite-odd-combinatorics}
Given a connected bipartite graph $G$ 
that is neither
a path graph nor a cycle graph, let $V=V_1 {\uplus} V_2$ 
denote its vertex bipartition and assume that
$\abs{V}=\abs{V_1}+\abs{V_2}=n\geq 4$ is odd.
The free-mixer \DLA for $G$ is given by 
$\gfree\iso\su(2^{n-1})$.
\end{proposition}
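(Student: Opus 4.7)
The plan is to adapt the strategy of Proposition~\ref{thm:free-mixer-bipartite-even-combinatorics}, with a crucial twist: for odd $n$, the natural intertwiner $S$ no longer preserves the eigenblocks of $X^{\otimes n}$, and instead links them, so that the resulting algebra is a single simple $\su(2^{n-1})$ rather than a direct sum. First, by Proposition~\ref{prop:pauli:bipartite} we may assume $G$ is complete bipartite with bipartition $V = V_1 \uplus V_2$, since adding edges does not change $\gfree$. Lemma~\ref{lem:bipartite:dimensions:combinatorics}(ii) then yields $\dim(\gfree) = 2^{2n-2} - 1 = \dim[\su(2^{n-1})]$, and Lemma~\ref{app:prop:structure} embeds $\gfree$ into $\su(2^{n-1}) \oplus \su(2^{n-1})$ via the block projections $\pi_\pm(g) := P_\pm g P_\pm$ onto the $\pm 1$ eigenspaces $\HC_\pm$ of $X^{\otimes n}$. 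It therefore suffices to show that $\pi_+: \gfree \to \su(2^{n-1})$ is injective, since the dimension count then forces $\pi_+(\gfree) = \su(2^{n-1})$ and hence $\gfree \cong \su(2^{n-1})$.

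For injectivity, I introduce $S := \prod_{v \in V_1} Z_v \prod_{w \in V_2} Y_w$. A direct check shows that $S$ anticommutes with every generator: each $X_v$ anticommutes with $S$ because $S$ contains either $Z_v$ or $Y_v$, and each edge term $Z_u Z_v$ (with $u \in V_1$, $v \in V_2$) anticommutes because $S$ commutes with $Z_u$ but anticommutes with $Z_v$. Since every generator $H_k$ satisfies $H_k^t = H_k$ (they all have $\#\mathrm{Y}=0$), the Obata-type relation $S H_k + H_k^t S = 0$ holds on generators; as in the proof of Proposition~\ref{prop_obata}, this relation is stable under commutators and extends to $SH + H^t S = 0$ for all $iH \in \gfree$. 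The decisive observation is that $S X^{\otimes n} = (-1)^n X^{\otimes n} S = -X^{\otimes n} S$ for odd $n$, so $S$ exchanges $\HC_+$ and $\HC_-$ and takes the block form $S = \bigl(\begin{smallmatrix} 0 & S_- \\ S_+ & 0 \end{smallmatrix}\bigr)$ with $S_\pm$ invertible. Expanding $SH + H^t S = 0$ block by block for $H = \pi_+(H) \oplus \pi_-(H)$ yields $\pi_-(H) = -S_+^{-t}\, \pi_+(H)^t\, S_+^t$, which immediately forces $\pi_+(H) = 0 \Rightarrow \pi_-(H) = 0 \Rightarrow H = 0$.

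The main obstacle is setting up $S$ correctly and verifying the three parity identities above (anticommutation of $S$ with every $X_v$, with every $Z_u Z_v$, and with $X^{\otimes n}$). Once these are in hand, the extension from generators to the full Lie algebra and the block-matrix computation are essentially routine, and the dimension count finishes the argument. This approach also clarifies the structural contrast with Proposition~\ref{thm:free-mixer-bipartite-even-combinatorics}: for even $n$, $S$ commutes with $X^{\otimes n}$ and produces two independent $\so$ or $\usp$ blocks, whereas for odd $n$ the anticommutation $SX^{\otimes n} = -X^{\otimes n} S$ ties the two blocks together, realizing $\gfree \cong \su(2^{n-1})$ via the standard representation on $\HC_+$ and its dual on $\HC_-$, in agreement with Proposition~\ref{prop:reps:not:path:cycle}.
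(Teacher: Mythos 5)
Your proposal is correct, and it reaches the conclusion by a route that differs in its key mechanism from the paper's own proof of this proposition. The paper proves the odd case by constructing an explicit basis: it labels Hadamard basis states by their parities on $V_1$ and $V_2$, flips signs on one of the four resulting subspaces, and uses $\ket{\psi^*}=\pm Z^{\otimes n}\ket{\psi}$ to verify entry-by-entry that every generator takes the block form $M\oplus(-M^t)$, after which the dimension count from Lemma~\ref{lem:bipartite:dimensions:combinatorics}(ii) closes the argument. You instead derive the same duality between the two blocks abstractly, via the intertwiner $S=\prod_{v\in V_1}Z_v\prod_{w\in V_2}Y_w$ and the Obata-type relation $SH+H^tS=0$: I checked your three parity identities (anticommutation of $S$ with each $X_v$, with each edge term $Z_uZ_v$, and with $X^{\otimes n}$ for odd $n$), the stability of the relation under commutators, and the block computation giving $\pi_-(H)=-S_+^{-t}\pi_+(H)^tS_+^t$; all are sound, and injectivity of $\pi_+$ together with Lemma~\ref{app:prop:structure} and the dimension count indeed forces $\gfree\iso\su(2^{n-1})$. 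What your version buys is uniformity: it is the natural continuation of the basis-independent argument the paper itself gives at the end of Proposition~\ref{thm:free-mixer-bipartite-even-combinatorics}, with the single switch that $S$ anticommutes rather than commutes with $X^{\otimes n}$ for odd $n$, so the same $S$ that splits the even case into two $\so$ or $\usp$ blocks now ties the blocks into one $\su(2^{n-1})$ acting as $\kappa\oplus\bar\kappa$ (consistent with Proposition~\ref{prop:reps:not:path:cycle}), and no sign bookkeeping on basis vectors is needed. What the paper's explicit-basis version buys is a concrete matrix realization of the isomorphism (effectively an explicit $S_+$), which is convenient for the representation-theoretic statements elsewhere in the appendix.
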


\begin{proof}
Without loss of generality, we let $\abs{V_1}$ and $\abs{V_2}$ be even and odd, respectively.
For Hadamard basis states $\ket{p_1\!\cdots p_n}$ with $p_j \in \{+,\! -\}$, let the parity $\pi_j=\pi_j(\ket{p_1\!\cdots p_n}) \in \{e,o\}$ for $j\in \{1, 2\}$
be equal to $e$ iff the number of $v \in V_j$
such that $p_v \in  \{-\}$ is even.
We define $V_{pq}$ for $p,q \in \{e,o\}$ to be the complex space spanned by Hadamard basis states
with $\pi_1=p$ and $\pi_2=q$. We have $\HC_{+} = V_{ee} {\oplus} V_{oo}$ and $\HC_{-} = V_{eo} {\oplus} V_{oe}$.
Let $\mathcal{B} = (\mathcal{B}_{ee}, \mathcal{B}_{oo}, \mathcal{B}_{eo}, \mathcal{B}_{oe})$ describe bases of the Hadamard basis states from
$V_{ee}$,  $V_{oo}$, and $V_{eo}$ and of the negatives of the Hadamard basis states from $V_{oe}$. Recall $X \ket{\pm} = \pm \ket{\pm}$ and $Z \ket{\pm} = \ket{\mp}$.
We define $\ket{\psi^*}:= \pm Z^{\otimes n} \ket{\psi}$ for Hadamard basis states $\ket{\psi}$ where the plus sign is chosen iff $\pi_j(\ket{\psi}) = e$. 
Our objective now is to show that any generator $iH_j$ of $\gfree$ has the block-diagonal form
\begin{equation}\label{eq:bipartite:block}
iH_j = i 
\begin{pmatrix}
        M_j & 0 \\
        0 & -M_j^{t}
\end{pmatrix}
\end{equation}
in the basis $\mathcal{B}$ for suitable $M_j$ depending on $H_j$. This would be imply that $\gfree$
acts on $\HC_{-}$ with the dual of the representation acting on $\HC_{+}$. It would follow that
$\gfree$ is isomorphic to a subalgebra of $\su(2^{n-1})$ and the dimension formula in Lemma~\ref{lem:bipartite:dimensions:combinatorics}(ii)
then shows $\gfree\iso\su(2^{n-1})$.

It remains to verify the form in Eq.~\eqref{eq:bipartite:block} for all generators $iX_v$ for $v\in V$ and
$iZ_vZ_w$ for edges $(v,w)$ with $v\in V_1$ and $w\in V_2$. Let $\ket{\psi}$ denote a Hadamard basis state in $\HC_{+}$.
Note that $X_v \ket{\psi} = f \ket{\psi}$ if and only if $X_v \ket{\psi^*} = - f \ket{\psi^*}$ which implies
that $\bra{\psi} X_v \ket{\psi} = - \bra{\psi^*} X_v \ket{\psi^*}$. For any Hadamard basis $\ket{\phi}$ in $\HC_{+}$ different from 
$\ket{\psi}$, we obtain $\bra{\phi} X_v \ket{\psi} = 0 = - \bra{\psi^*} X_v \ket{\phi^*}$. Hence the matrix $X_v$ in the subspace basis $(\mathcal{B}_{ee}, \mathcal{B}_{oo})$
is the negative transpose of its matrix in the subspace basis $(\mathcal{B}_{eo}, \mathcal{B}_{oe})$.

We continue with $Z_vZ_w$ and we introduce $\ket{\psi_{uv}}:= Z_uZ_v \ket{\psi}$ for any Hadamard basis state in $\HC_{+}$. It follows that $\ket{\psi_{uv}^*}= - Z_uZ_v \ket{\psi^*}$ and
we obtain $\bra{\psi_{uv}} Z_u Z_v \ket{\psi} = - \bra{\psi^*} X_v \ket{\psi_{uv}^*}$. For any Hadamard basis $\ket{\phi}$ in $\HC_{+}$ different from 
$\ket{\psi_{uv}}$, we observe $\bra{\phi} Z_u Z_v \ket{\psi} = 0 = - \bra{\psi^*} Z_u Z_v \ket{\phi^*}$ which verifies the form in Eq.~\eqref{eq:bipartite:block}.
\end{proof}

\section{Proof of Corollary~\ref{cor:variance} from Sec.~\ref{sec:free:implications}\label{proof:variance}}

We now prove Corollary~\ref{cor:variance} which is restated below after some preparations.
The QAOA cost function was specified in Eq.~\eqref{eq:cost} as 
$C(\thv)=\langle \psi(\thv)|H_p|\psi(\thv)\rangle$ 
for $|\psi(\thv)\rangle=U(\thv)|+\rangle^{\otimes n}$, where $|+\rangle^{\otimes n}$ is defined in
Eq.~\eqref{eq:fiduciary}. 
This relies on the unitaries 
$U(\thv)$ in the circuit from Eq.~\eqref{eq:circuit:general}
and $U(\thv)$
depends on 
the parameters
$\thv$ with real entries $\theta_{\vth}$.
For simplicity, the entries $\theta_{\vth}$
of $\thv$ are now indexed by
the numbers $\vth$.
Let $\partial_{\vth} C(\thv)$ denote the partial derivative of $C(\thv)$ with respect to the $\vth$-th
parameter $\theta_{\vth}$ in $\thv$. 

More concretely, the parameters
$\thv$ could be given by the real values $\theta_{\ell u},\theta_{\ell v w} \in [-\pi,\pi]$ 
in order to establish
the multi-angle QAOA unitaries
[as in Eqs.~\eqref{eq:circuit} or \eqref{eq:circuit:general}]
\begin{equation}\label{eq:multiangle:U}
U(\thv) = \prod_{\ell=1}^L \hspace{-0.5mm}
\Bigg[
\prod_{u\in V}
e^{-i \theta_{\ell u}  X_u } \hspace{-0.4mm} 
\Bigg] 
\hspace{-1.0mm} 
\Bigg[
\prod_{\{v,w\}\in E}
e^{-i \theta_{\ell v w} Z_v Z_w } \hspace{-0.4mm} 
\Bigg]
\end{equation}
for $L$ layers and a graph with vertices $V$ and edges $E$.
In addition, we could assume that each $\theta_{\ell u},\theta_{\ell v w}$ is sampled
independently and uniformly from $(-2\pi,2\pi)$.

Returning to a more general setting, 
the parameters $\thv$ are sampled
according to a given distribution $d\thv$ over a chosen parameter domain $\delta_{L}$. This
induces a distribution on the associated unitaries $U(\thv)$.
For a given distribution $\nu$ on a compact Lie group $\exp(\g)$ with \DLA $\g$, we recall
the second-order momentum operator 
\begin{align*}
M_{\nu}&:=\int_{U \in e^{\g}}d\nu(U)\;  U^{\otimes 2}{\otimes} \bar{U}^{\otimes 2}.
\intertext{In particular, we consider the momentum operators}
M_{e^{\g}}&:=\int_{U \in e^{\g}} d\mu_{e^{\g}}(U)\; U^{\otimes 2 }{\otimes} \bar{U}^{\otimes 2}\; \text{ and}\\
M_{L}&:=\int_{\thv\in \delta_L}d\thv\;  [U(\thv)]^{\otimes 2 }{\otimes} [\bar{U}(\thv)]^{\otimes 2 }
\end{align*}
for the Haar measure
$d\mu_{e^{\g}}$ 
on the (compact) Lie group $e^{\g}$
and a general distribution 
$d\thv$ on the parameters $\thv$
over a parameter domain $\delta_{L}$.
This leads to the following
\begin{definition}[Approximate unitary $2$-design]\label{def:approx:design}
A distribution $\nu$ on a compact Lie group $\exp(\g)$ with \DLA $\g$ is an $\varepsilon$-approximate unitary $2$-design
if
\begin{equation*}
\norm{M_\nu - M_{e^{\g}}}_{\infty} \leq \varepsilon.
\end{equation*}
\end{definition}
Here, $\norm{\cdot}_{\infty}$ denotes the Schatten $\infty$-norm (or operator norm) which is given by
the largest singular value of its argument. We are particularly interested under which assumptions
the condition $\norm{M_L - M_{e^{\g}}}_{\infty} \leq \varepsilon$ holds. 
The distribution of unitaries
in a multi-angle QAOA circuit has to be an approximate unitary $2$-design in the following 
result (which we restate from Sec.~\ref{sec:free:implications}):
\begin{repcorollary}{cor:variance}
For the free ansatz, consider
any archetypal graph from
Def.~\ref{def_other_graphs} with $n> 3$ vertices and $\abs{E}$ edges.
Recall the QAOA cost function 
$C(\thv)$
from Eq.~\eqref{eq:cost}
and its partial derivative
$\partial_{\vth} C(\thv)$ with respect to the $\vth$-th
parameter $\theta_{\vth}$ in $\thv$. Assume that the multi-angle QAOA circuit
has enough layers such that the distribution of unitaries is an $\varepsilon$-approximate unitary $2$-design.
Then, the expectation value
of the partial derivatives is $E_{\thv}[\partial_{\vth} C(\thv)]=0$ and their variance is given by (with $d=2^n$)
\begin{equation*}
    \Var_{\thv}[\partial_{\vth} C(\thv)] = {4d^2|E|}/[(d^2{-}4)(d{+}2)] \leq 4n^2/2^n.
\end{equation*}
\end{repcorollary}

We start the proof by first collecting relevant notation and results, while partially relying on Appendix~\ref{app:free:preliminaries}.
Recall that $X^{\otimes n}$ commutes with the free-mixer \DLA $\gfree$ and all its generators in $\GC_{\free}$
[see Lemma~\ref{app:prop:center}(a)] and that the vector space $\HC=(\mbb{C}^2)\tn$ of dimension $d=2^n$ splits
into the invariant subspaces $\HC=\HC_+\oplus\HC_-$ where
the $+1$ and $-1$ eigenspaces
$\HC_\pm = \{ \ket{\psi} \in \HC\,|\, X\tn \ket{\psi} = \pm \ket{\psi }\}$ of $X^{\otimes n}$
are spanned by all 
Hadamard basis states $\ket{b_1}\cdots\ket{b_n}$ with $b_j \in \{+, -\}$ and respectively an even or odd  number of minus
signs (i.e.\ $b_j = -$). Clearly, $\HC_{+}$ and $\HC_{-}$ have the same dimension $d_{+} = d_{-} = 2^{n-1}$
and the fiduciary state $\ket{+}^{\otimes n}$ [see Eq.~\eqref{eq:fiduciary}] is contained in $\HC_{+}$.

The corresponding $d\times d$ projection matrices $P_{\pm}= (I^{\otimes n} {\pm} X^{\otimes n})/2$ can be 
represented as $P_{\pm}=Q_\pm\ad Q_\pm$ using the rectangular reduction matrices $Q_\pm \in \C^{d_{\pm}\times d}$ where
the columns of $Q_\pm\ad \in \C^{d\times d_{\pm}}$ are given by the associated 
(orthonormalized) Hadamard basis states.
Note that $Q_\pm Q_\pm\ad = I^{\otimes (n-1)}$.
For a Pauli string (or any suitable matrix) $A$, we define the reduced operators
\begin{equation}\label{eq:red}
    A^{(\pm)}:=Q_\pm A Q_\pm\ad \in \C^{d_{\pm}\times d_{\pm}}.
\end{equation}

\begin{lemma}\label{lemma_stepping}
For a Pauli string $A$ on $n$ qubits
that is different from the identity and $X\tn$ and that
commutes with $X\tn$, the reduced operators $A^{(\pm)}$
are unitarily equivalent to a Pauli string on $n{-}1$ qubits.
In particular, $\Tr[A^{(\pm)}] = 0$ and
$\Tr[A^{(\pm)} A^{(\pm)}]=2^{n-1}$.
\end{lemma}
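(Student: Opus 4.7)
The plan is to exploit the explicit basis change $\Lambda = \tilde{h}\circ \pi_n \circ h$ introduced in Appendix~\ref{app:free:preliminaries}, which simultaneously block-diagonalizes $X^{\otimes n}$ into the two $\pm 1$-eigenspaces and maps every Pauli string commuting with $X^{\otimes n}$ to a signed Pauli string via the formula in Eq.~\eqref{basis_change_Pauli_string}. Since $A$ commutes with $X^{\otimes n}$ iff $\#\mathrm{Y}(A) + \#\mathrm{Z}(A)$ is even (see Lemma~\ref{app:prop:center}(a) and Lemma~\ref{app:prop:parity}), we are in exactly the setting where Eq.~\eqref{basis_change_Pauli_string} applies. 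The first step is to observe that the extra factors of $i$ produced by each local product $X A_u$ when $A_u \in \{Y,Z\}$ collect to an overall sign: combined with the explicit $i$ appearing for $A_1 = \mathrm{Y}$, the parity constraint forces the total factor to be $\pm 1$.

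Next, I would split the argument according to the cases in Eq.~\eqref{basis_change_Pauli_string}. If $A_1 \in \{\mathrm{I},\mathrm{Z}\}$, then $\Lambda(A) = \pm I_1 \otimes Q'$ for a Pauli string $Q'$ on $n{-}1$ qubits, while if $A_1 \in \{\mathrm{X},\mathrm{Y}\}$ then $\Lambda(A) = \pm Z_1 \otimes Q'$. Under $\Lambda$, the projectors $P_{\pm} = (I^{\otimes n} \pm X^{\otimes n})/2$ become the block projectors onto the upper-left and lower-right $2^{n-1}\times 2^{n-1}$ blocks. Hence, in this new basis, the reduced operators $A^{(\pm)}$ from Eq.~\eqref{eq:red} are just these blocks, and a direct computation yields $A^{(\pm)} = \pm Q'$ in the first case and $A^{(\pm)} = \pm Q'$ with opposite signs between the two blocks in the second. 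In either case, $A^{(\pm)}$ is unitarily equivalent to a signed Pauli string on $n{-}1$ qubits.

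It then remains to rule out $Q' = \pm I^{\otimes(n-1)}$, which would destroy the trace statements. The map $\Lambda$ sends $I^{\otimes n} \mapsto I^{\otimes n}$ and $X^{\otimes n} \mapsto Z_1$, so $\Lambda(A) \in \{\pm I^{\otimes n}, \pm Z_1 \otimes I^{\otimes(n-1)}\}$ forces $A \in \{I^{\otimes n}, X^{\otimes n}\}$, which is excluded by hypothesis. Therefore $Q'$ is a non-identity Pauli string on $n{-}1$ qubits, which immediately gives $\Tr[Q'] = 0$ and $\Tr[(Q')^2] = \Tr[I^{\otimes(n-1)}] = 2^{n-1}$, and these traces transfer to $A^{(\pm)}$ by unitary invariance.

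The most delicate step is the bookkeeping in the first paragraph: ensuring that the local $i$ factors produced by $XA_u$ multiply, together with the optional leading $i$ of the $A_1 = \mathrm{Y}$ case, to $\pm 1$ under the commutation constraint $\#\mathrm{Y} + \#\mathrm{Z}$ even. Once this phase audit is done cleanly, the rest of the argument reduces to matching blocks and invoking the standard fact that non-identity Pauli strings are traceless and square to the identity.
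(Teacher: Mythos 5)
Your proof is correct, but it takes a genuinely different route from the paper's. You push $A$ through the explicit basis change $\Lambda=\tilde{h}\circ\pi_n\circ h$ of Eq.~\eqref{basis_change_Pauli_string}, verify that the phases accumulated from the local products $XA_u$ combine with the even-parity constraint on $\#\mathrm{Y}+\#\mathrm{Z}$ to give an overall sign $\pm 1$, identify $A^{(\pm)}$ (up to unitary equivalence) with the diagonal blocks $\pm Q'$ of the resulting signed Pauli string, and exclude $Q'=I^{\otimes(n-1)}$ by noting that $\Lambda^{-1}$ would send such an operator back to $I^{\otimes n}$ or $X^{\otimes n}$. The paper instead argues abstractly, without ever exhibiting the Pauli string: it computes $A^{(\pm)}A^{(\pm)}=Q_{\pm}AP_{\pm}AQ_{\pm}^{\dagger}=I^{\otimes(n-1)}$ directly from $A^2=I^{\otimes n}$, $[A,X^{\otimes n}]=0$, and $Q_{\pm}X^{\otimes n}Q_{\pm}^{\dagger}=\pm I^{\otimes(n-1)}$, and obtains $\Tr[A^{(\pm)}]=\Tr[AP_{\pm}]=(\Tr[A]\pm\Tr[AX^{\otimes n}])/2=0$ because both $A$ and $AX^{\otimes n}$ are non-identity Pauli strings; a Hermitian involution with vanishing trace is then unitarily equivalent to a Pauli string on $n{-}1$ qubits. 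The paper's route is shorter and entirely sidesteps the phase audit you rightly flag as the delicate step; yours is more constructive, telling you exactly which $(n{-}1)$-qubit Pauli string appears in each block and with which relative sign, which is more information than the lemma needs but is consistent with the block-diagonalization machinery used elsewhere in Appendix~\ref{appendix:free-mixer}.
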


\begin{proof}
Clearly, $A$ and $A^{(\pm)}$ are hermitian. The proof proceeds in two steps verifying that (i)
the eigenvalues of $A^{(\pm)}$ are $\pm 1$ and that (ii) their multiplicity
is equal to $2^{n-2}$ (or, equivalently, $\Tr[A^{(\pm)}]=0$). We compute
\begin{align*}
    A^{(\pm)} A^{(\pm)} &= Q_{\pm} A Q_{\pm} \ad Q_{\pm} A Q_{\pm}\ad 
    = Q_{\pm} A P_{\pm} A Q_{\pm}\ad \\
    &= ( Q_{\pm} A A Q_{\pm}\ad \pm  Q_{\pm} A  A X\tn Q_{\pm}\ad)/2 \\
    &= ( Q_{\pm}Q_{\pm}\ad \pm  Q_{\pm} X\tn Q_{\pm}\ad)/2 \\
    &= ( I^{\otimes (n-1)} + I^{\otimes (n-1)})/2 = I^{\otimes (n-1)},
\end{align*}
where we have used that $A$ commutes with $X\tn$, that the square $A^2$ of the Pauli string $A$ is equal to $I^{\otimes n}$, 
that $Q_\pm Q_\pm\ad = I^{\otimes (n-1)}$, and that $Q_{\pm} X\tn Q_{\pm}\ad= \pm I^{\otimes (n-1)}$
as the columns of $Q_{\pm}\ad$ contain the eigenvectors of $X\tn$ to the eigenvalues $\pm 1$. Thus
$A^{(\pm)}$ has only $\pm 1$ as eigenvalues. 
In particular, $\Tr[A^{(\pm)} A^{(\pm)}]=2^{n-1}$.
We complete the proof by computing the trace (for $A \notin \{I\tn, X\tn\}$)
\begin{align*}
    \Tr[A^{(\pm)}] &= \Tr[ Q_{\pm} A Q_{\pm}\ad]
    =\Tr[A P_{\pm}]\nonumber \\
    &=(\Tr[A]+\Tr[A X\tn])/2 = 0.  \qedhere
\end{align*}
\end{proof}

The proof of Corollary~\ref{cor:variance} is based on a general result from \cite{larocca2021diagnosing,brown2010random}
which describes the variance of the cost function of a randomly initialized parametrized quantum circuit
that is deep enough and observes certain controllability conditions.
We recall a special case of this result:

\begin{fact}[special case of Thm.~2 in \cite{larocca2021diagnosing}]\label{thm_diagnosing}
Let the action of a \DLA $\g = \lie{H_1,\ldots,H_m}$ induce 
a Hilbert space  splitting $\HC=\HC_{1} \oplus \HC_{2}$ 
into invariant subspaces such that $\HC_{1}$ is irreducible and has
dimension $d_{1}=\dim(\HC_{1})$
and $P_{1} \g P_{1}=\su(d_{1})$ holds for the respective projector $P_{1}$.
We consider a variational quantum circuit [see Eq.~\eqref{eq:circuit:general}]
\begin{equation*}
U(\thv)=\prod_{\ell=1}^L \prod_{k=1}^{m} e^{-i \theta_{\ell k} H_{k}} = \prod_{\vth} e^{-i \theta_{\vth} H_{\vth}}
\end{equation*}
with $L$ layers and the notation $\vth = \vth(\ell,k) = (\ell{-}1)m + k$ and $H_{\vth(\ell,k)}= H_k$.
The corresponding cost function $\tilde{C}(\thv)=\Tr[U(\thv)\rho U(\thv)\ad O]$ depends on a 
hermitian measurement operator $O$ and a fiduciary mixed state $\rho$ with 
$\rho = P_{1} \rho P_{1}$.
Suppose that the number of layers $L$ in $U(\thv)$ is large enough  
so that the distribution of $U(\thv)$ is an $\varepsilon$-approximate $2$-design.
Then, the variance
of the partial derivative with respect to the parameter $\th_\mu$ of $\tilde{C}(\thv)$ is
\begin{equation}\label{eq_diagnosing}
\Var_{\thv}[\partial_{\vth} \tilde{C}(\thv)] =\frac{2d_1}{(d_1^2{-}1)^2}\, \Delta[H_{\vth}^{(1)}]\,\Delta[O^{(1)}]\,\Delta[\rho^{(1)}]   
\end{equation}
where the reduced operator $B^{(1)}$ of a matrix $B$ acting on $\HC$ is defined 
using the orthonormalized eigenvectors of $P_1$ to the eigenvalue one
as in and before Eq.~\eqref{eq:red}
and 
\begin{equation*}
\Delta(B)=\Tr[B^2]-\Tr[B]^2/\dim(\HC) = \dim(\HC) \Var[{\rm Eig}(B)],
\end{equation*}
where ${\rm Eig}(B)$ denotes the set of eigenvalues of $B$.
\end{fact}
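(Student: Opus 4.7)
The plan is to reduce the variance computation to Fact~\ref{thm_diagnosing} applied to the $+1$ eigenspace $\HC_+$ of $X\tn$. For any archetypal graph with $n>3$, Theorem~\ref{thm:free-mixer-DLA-decomposition} establishes $\gfree \iso \su(2^{n-1}) \oplus \su(2^{n-1})$, each summand acting irreducibly and transitively on one of the eigenspaces $\HC_\pm$ of $X\tn$ of dimension $d_\pm = d/2 = 2^{n-1}$. Since $X\tn \ket{+^n} = \ket{+^n}$, the density matrix $\rho = \ket{+^n}\bra{+^n}$ is supported in $\HC_+$, and $P_+ \gfree P_+$ realizes the full algebra $\su(d/2)$ on $\HC_+$. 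All hypotheses of Fact~\ref{thm_diagnosing} are thus satisfied with $d_1 = d/2$, $O = H_p$, and reduced operators $A^{(+)} := Q_+ A Q_+\ad$, so that (up to the $\varepsilon$-design correction)
\[
\Var_\thv[\partial_\vth C(\thv)] = \frac{2 d_1}{(d_1^2-1)^2}\, \Delta[H_\vth^{(+)}]\, \Delta[O^{(+)}]\, \Delta[\rho^{(+)}].
\]

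The substance of the proof is then to evaluate the three $\Delta$-quantities via Lemma~\ref{lemma_stepping}. Every free-mixer generator $H_\vth \in \{X_u\}_{u\in V} \cup \{Z_v Z_w\}_{\{v,w\}\in E}$ is a Pauli string that commutes with $X\tn$ and is distinct from $I\tn$ and $X\tn$, so Lemma~\ref{lemma_stepping} gives $\Tr[H_\vth^{(+)}] = 0$ and $\Tr[(H_\vth^{(+)})^2] = d/2$, hence $\Delta[H_\vth^{(+)}] = d/2$. For the observable, I would use the simple identity $A^{(+)} B^{(+)} = (AB)^{(+)}$ valid whenever $A, B$ preserve $\HC_+$ (a direct consequence of $P_+ = Q_+\ad Q_+$ and $A P_+ = P_+ A$). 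Expanding
\[
\Tr[(O^{(+)})^2] = \sum_{e_1, e_2 \in E} \Tr\bigl[(Z_{v_1} Z_{w_1} Z_{v_2} Z_{w_2})^{(+)}\bigr]
\]
and applying Lemma~\ref{lemma_stepping}, the diagonal terms $e_1 = e_2$ produce $I\tn$ and contribute $d/2$ each, while all off-diagonal products are nontrivial $X\tn$-commuting Pauli strings and contribute $0$. This yields $\Delta[O^{(+)}] = |E|\, d/2$. Finally $\rho^{(+)}$ is a rank-one projector on $\HC_+$, so $\Tr[\rho^{(+)}] = \Tr[(\rho^{(+)})^2] = 1$ and $\Delta[\rho^{(+)}] = 1 - 2/d$.

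Substituting, using $(d_1^2-1)^2 = (d-2)^2(d+2)^2/16$, and simplifying collapses the formula to $\Var_\thv[\partial_\vth C(\thv)] = 4 d^2 |E|/[(d^2-4)(d+2)]$, matching the claim. The upper bound $\leq 4 n^2/2^n$ follows from $|E| \leq \binom{n}{2} \leq n^2/2$ together with the elementary estimate $(d^2-4)(d+2) \geq d^3$ valid for $d \geq 4$ (i.e.\ $n \geq 2$). For the vanishing expectation $E_\thv[\partial_\vth C(\thv)] = 0$, I would split $U(\thv) = U_>\, e^{-i\theta_\vth H_\vth}\, U_<$ around the $\vth$-th gate, write $\partial_\vth C$ as a trace involving $U_>\ad H_p U_>$, and invoke the first-moment property implied by the $2$-design: averaging over an approximate $1$-design on $\exp(\gfree)$ replaces $U_>\ad H_p U_>$ by a linear combination of $P_\pm$ with coefficients $\Tr[H_p^{(\pm)}]/d_\pm$, both of which vanish by Lemma~\ref{lemma_stepping}.

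The main technical obstacle I anticipate is the clean combinatorial collapse of $\Tr[(O^{(+)})^2]$ to a diagonal sum over edges. This hinges on the multiplicativity $A^{(+)} B^{(+)} = (AB)^{(+)}$ of the reduction, combined with the strong trace-vanishing property of Lemma~\ref{lemma_stepping}; without either ingredient one would face a tangle of edge-overlap cross terms. Once those are in hand, everything else is algebraic bookkeeping, and the answer simply counts $|E|$.
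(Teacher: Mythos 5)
You were asked to prove Fact~\ref{thm_diagnosing} itself: the general variance formula for a parametrized circuit whose unitaries form an $\varepsilon$-approximate $2$-design over $e^{\g}$, with $\g$ acting as the full $\su(d_1)$ on an irreducible invariant block. Your proposal never addresses this statement; its very first step invokes Fact~\ref{thm_diagnosing} as a known result and then specializes it to the free-mixer ansatz on $\HC_{+}$. What you have actually written is a proof of Corollary~\ref{cor:variance}, and as such it tracks the paper's Appendix~\ref{proof:variance} argument quite faithfully (the reduction via $Q_{\pm}$, the trace facts of Lemma~\ref{lemma_stepping}, the three $\Delta$-evaluations of Lemma~\ref{lem:delta}, and the final simplification to $4d^2\abs{E}/[(d^2{-}4)(d{+}2)]$, including the correct handling of the cross terms in $\Tr[(O^{(+)})^2]$). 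As a derivation of the corollary it is essentially sound, but as a proof of the Fact it is circular: the formula to be established is assumed from the outset.

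For the Fact itself the paper offers no proof --- it is imported as a special case of Thm.~2 of \cite{larocca2021diagnosing} (see also \cite{brown2010random}). A genuine self-contained argument would have to carry out the moment computation that your proposal omits: split $U(\thv)$ at the $\vth$-th gate, write $\partial_{\vth}\tilde{C}(\thv)$ as a trace involving the commutator $[iH_{\vth},\cdot]$ conjugated by the two circuit halves, replace the circuit average by the Haar average over $\exp(\g)$ (up to the $\varepsilon$-design correction), and evaluate the resulting first and second moments on the block $\HC_{1}$ via the standard second-moment (Weingarten or symmetric-/antisymmetric-subspace) identities for the unitary group on $\C^{d_1}$; the hypotheses $\rho = P_{1}\rho P_{1}$ and $P_{1}\g P_{1}=\su(d_{1})$ are exactly what collapse the average to a single irreducible block and produce the prefactor $2d_1/(d_1^2{-}1)^2$ together with the three $\Delta$ factors. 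None of these ingredients appear in your proposal, so the statement in question remains unproved.
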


We set $O=H_p$,
$\rho = \ketbra{\psi}{\psi}$ for $\ket{\psi}=\ket{+}^{\otimes n}$,
and $d_{1}=d_{+}=d/2$  in Fact~\ref{thm_diagnosing} and recover 
the setting of Corollary~\ref{cor:variance} with
$\tilde{C}(\thv) = \langle \psi(\thv)|H_p|\psi(\thv)\rangle = C(\thv)$.
We can now directly prove Corollary~\ref{cor:variance} by applying Eq.~\eqref{eq_diagnosing} and the following
Lemma~\ref{lem:delta}
as all the generators $H_{\vth}$ 
in the multi-angle ansatz
are Pauli strings that are different from $I\tn$ and $X\tn$
and that commute with $X\tn$.
\begin{lemma}\label{lem:delta}
In the setting of Corollary~\ref{cor:variance}, we have $\abs{E}$ edges and $d_{+}=d/2=2^{n-1}$ for $n$ qubits.
Let $A$ denote any Pauli string that is different from the identity and $X\tn$
and that commutes with $X\tn$.
Moreover, $H_p$ is the  parent Hamiltonian from
Eq.~\eqref{eq:prob-Ham} and $\rho =\ketbra{\psi}{\psi}$ is the projector for $\ket{\psi}=\ket{+}^{\otimes n}$ from Eq.~\eqref{eq:fiduciary}.
We obtain
\begin{align*}
\Delta[A^{(+)}] &= d_{+} = d/2,\\
\Delta[H_p^{(+)}] &= \abs{E}\, d_{+} = \abs{E}\, d/2,\\
\Delta[\rho^{(+)}] &= (d_{+}{-}1)/d_{+} = (d{-}2)/d.
\end{align*}
\end{lemma}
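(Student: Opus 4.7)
\textbf{Proof proposal for Lemma~\ref{lem:delta}.}
The plan is to reduce everything to trace evaluations on $\HC_+$ via the identity $Q_+^{\dagger}Q_+=P_+$, use Lemma~\ref{lemma_stepping} for the single Pauli-string case, and then handle $H_p$ by expanding the square and exploiting that all summands commute with $X^{\otimes n}$. I will tacitly take the convention that $\Delta$ applied to a reduced operator on $\HC_{+}$ uses the dimension $d_+=d/2$ in its normalization (this is what is required for Corollary~\ref{cor:variance} to yield the stated answer, as one can check directly).

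First I would handle $\Delta[A^{(+)}]$ for a Pauli string $A$ with $A\notin\{I^{\otimes n},X^{\otimes n}\}$ and $[A,X^{\otimes n}]=0$. Lemma~\ref{lemma_stepping} gives $\Tr[A^{(+)}]=0$ and $\Tr[(A^{(+)})^2]=d_+$, so
\begin{equation*}
\Delta[A^{(+)}]=d_+-0^2/d_+=d_+=d/2.
\end{equation*}

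Next I would deal with $\Delta[H_p^{(+)}]$ by writing $H_p=\sum_{e\in E}Z_e$ with $Z_e:=Z_wZ_{\tilde w}$ for $e=\{w,\tilde w\}$. Linearity plus Lemma~\ref{lemma_stepping} already gives $\Tr[H_p^{(+)}]=0$. For the square, since every $Z_e$ commutes with $X^{\otimes n}$ (hence with $P_+$), and $Q_+^{\dagger}Q_+=P_+$, $P_+^2=P_+$,
\begin{equation*}
\Tr\bigl[(Z_{e_k})^{(+)}(Z_{e_l})^{(+)}\bigr]=\Tr[Z_{e_k}Z_{e_l}P_+]=\tfrac{1}{2}\bigl(\Tr[Z_{e_k}Z_{e_l}]+\Tr[Z_{e_k}Z_{e_l}X^{\otimes n}]\bigr).
\end{equation*}
The product $Z_{e_k}Z_{e_l}$ is a tensor product of $Z$'s and $I$'s, so it equals $I^{\otimes n}$ iff $e_k=e_l$ and it can never equal $\pm X^{\otimes n}$; hence the second trace vanishes and the first equals $d\,\delta_{e_k,e_l}$. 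Summing over $k,l$ yields $\Tr[(H_p^{(+)})^2]=|E|\,d/2=|E|\,d_+$, so $\Delta[H_p^{(+)}]=|E|\,d_+$.

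Finally, for $\Delta[\rho^{(+)}]$, the key observation is that $\ket{+}^{\otimes n}$ is a $+1$ eigenvector of $X^{\otimes n}$, so $\rho=\ket{+}^{\otimes n}\!\bra{+}^{\otimes n}$ is supported entirely on $\HC_+$; equivalently $P_+\rho P_+=\rho$. Hence $\rho^{(+)}=Q_+\rho Q_+^{\dagger}$ is a rank-one projector on $\HC_+$, giving $\Tr[\rho^{(+)}]=\Tr[(\rho^{(+)})^2]=1$ and therefore $\Delta[\rho^{(+)}]=1-1/d_+=(d_+-1)/d_+$, as claimed. The only real subtlety is the bookkeeping in the computation of $\Tr[(H_p^{(+)})^2]$ — verifying that the $X^{\otimes n}$ cross term never contributes — but this follows immediately from the tensor-product structure of the $Z_e$, so no step is genuinely hard.
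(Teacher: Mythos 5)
Your proposal is correct and follows essentially the same route as the paper's proof: Lemma~\ref{lemma_stepping} for the single Pauli string, reduction of traces on $\HC_{+}$ to traces against $P_{+}$ for $H_p$ with only the diagonal edge terms surviving, and the observation that $\rho^{(+)}$ is a rank-one projector. Your explicit remark that $Z_{e_k}Z_{e_l}$ lies in $\{Z,I\}^{\otimes n}$ and hence can never produce an $X^{\otimes n}$ cross term is a slightly more careful phrasing of the step the paper handles by reference to its earlier trace computation, but the argument is the same.
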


\begin{proof}
Applying Lemma~\ref{lemma_stepping}, we compute
\begin{equation*}
\Delta(A^{(+)}) =
\Tr[A^{(+)}A^{(+)}] -\Tr[A^{(+)}]^2/d_{+} = d_{+} = d/2.
\end{equation*}
Recall the notation $Q_{+}$ from Eq.~\eqref{eq:red} and we obtain 
\begin{align}
&\Tr[H_p^{(+)}] = \Tr[Q_{+} H_p Q_{+}^{\dagger}] = \Tr(H_p P_{+}) \nonumber \\
&= \tfrac{1}{2} \sum_{\{k,\ell\}\in E} \left[ \Tr( Z_k Z_{\ell} I^{\otimes n} ) +    \Tr( Z_k Z_{\ell} X^{\otimes n} )     \right] = 0
\label{eq:pauli:zero}
\end{align}
as the trace of a Pauli string (different from the identity) is always zero. 
Since $\Tr[H_p^{(+)}] = 0$, we similarly get
\begin{align*}
\Delta[H_p^{(+)}] & = \Tr[H_p^{(+)}H_p^{(+)}] = \Tr(H_p H_p P_{+}) \\
& =  \sum_{\{u,v\}\in E}\; \sum_{\{u',v'\}\in E} \Tr( Z_{u} Z_{v} Z_{u'} Z_{v'} P_{+} )
\intertext{where all terms not observing $u=u'$ and $c = v'$ do not contribute as in Eq.~\eqref{eq:pauli:zero}, and 
it follows that}
\Delta[H_p^{(+)}] & =  \sum_{\{u,v\}\in E} P_{+} = \abs{E}\, d_{+} = \abs{E}\, d/2.
\end{align*}
Note $\rho^2=\rho$ and $P_{+} \rho P_{+} = \rho$. We compute 
$\Tr[\rho^{(+)}] =  \Tr(\rho P_{+} ) =  \Tr(\rho P_{+} P_{+}) =  \Tr(P_{+} \rho P_{+} ) = \Tr(\rho) = 1$. Also,
$\Tr[\rho^{(+)} \rho^{(+)}] = \Tr(\rho P_{+} \rho P_{+} ) =\Tr(\rho^2 ) = \Tr(\rho )  =1$,
which implies the stated formula for $\Delta[\rho^{(+)}]$.
\end{proof}

\section{Proofs for Section~\ref{sec:standard-ansatz-partial-results}}
This appendix collects proofs for Section~\ref{sec:standard-ansatz-partial-results}.
In particular, Appendix~\ref{proof:lemma:std:path:2} provides the proof of 
Lemma~\ref{lemma:std:path:2}, Proposition~\ref{prop:std:path} is verified
in Appendix~\ref{proof:prop:std:path}, and Appendix~\ref{proof:prop:std:complete}
proves Proposition~\ref{prop:std:complete}.

\subsection{Proof of Lemma~\ref{lemma:std:path:2}\label{proof:lemma:std:path:2}}
In the following,
we assume $n>5$ as
one can directly verify that $\gnat\iso \uu(n)$ for all $n \in \{2,3,4,5\}$.
By analyzing suitable commutators, one concludes that the semisimple part $\mathfrak{s}_{\mathrm{nat}}=[\gnat,\gnat]$ 
of $\gnat$ has dimension $n^2{-}1$, while the center $\cent(\gnat)$
is one dimensional and it is
spanned by the single element
\begin{equation*}
z_{\mathrm{nat}} = 
\begin{cases}
       \phantom{ -i X_{\bar{n}+1} +}\hspace{2.5pt} \sum_{o=1}^{\bar{n}} iP_{oo}^{YY}{+} iP_{oo}^{ZZ} & \text{for $n$ even}, \\
        -i X_{\bar{n}+1} + \sum_{o=1}^{\bar{n}} iP_{oo}^{YY}{+}iP_{oo}^{ZZ} & \text{for $n$ odd}.
    \end{cases}
\end{equation*}
In particular, using the indices $o\in \{1,\ldots,\bar{n}\}$, $\tilde{o}\in \{1,\ldots,\bar{n}{-}1\}$, and
$p,q \in \{1,\ldots,n\}$
with $p < n+1- q$ and $p\neq q$,
a basis for $\mathfrak{s}_{\mathrm{nat}}$ is given by
\begin{subequations}
\label{eq:snat:path}
\begin{align}
&iX_{\bar{n}+1}{-}iP_{\bar{n}\bar{n}}^{YY} \;\text{ if $n$ is odd},\label{eq:snat:path:a}\\
&iP_{oo}^{YY}{-}iP_{oo}^{ZZ},\; iP_{\tilde{o}{+}1\, \tilde{o}{+}1}^{ZZ}{-}iP_{\tilde{o} \tilde{o}}^{YY},\label{eq:snat:path:b} \\
&iX_{o}{+}iX_{n+1-o},\;
iP_{oo}^{YZ}{+} iP_{oo}^{ZY}, \label{eq:snat:path:c}\\
&iP_{pq}^{YY}{+} iP_{qp}^{YY},\;
iP_{pq}^{ZZ}{+} iP_{qp}^{ZZ},\;
iP_{pq}^{YZ}{+} iP_{qp}^{ZY}. \label{eq:snat:path:d}
\end{align}
\end{subequations}

In order to prove $\gnat \iso \uu(n)$,
the induction hypothesis is that the elements in Eq.~\eqref{eq:gnat:path} with $o>1$, $p>1$, and $q>1$
generate the \DLA $\gk \iso \uu(n{-}2)$. Its center $\cent(\gk)$ is spanned by 
the element
\begin{equation*}
\tilde{z}_{\mathrm{nat}} = 
\begin{cases}
       \phantom{ -i X_{\bar{n}+1} +}\hspace{2.5pt} \sum_{o=2}^{\bar{n}} iP_{oo}^{YY}{+}iP_{oo}^{ZZ} & \text{for $n$ even}, \\
        -i X_{\bar{n}+1} + \sum_{o=2}^{\bar{n}} iP_{oo}^{YY}{+}iP_{oo}^{ZZ} & \text{for $n$ odd},
    \end{cases}
\end{equation*}
and we denote the semisimple part of $\gk$ by
$\tilde{\gk} \iso \su(d{-}2)$. Note that a basis
of $\tilde{\gk}$ is given by all the elements in Eq.~\eqref{eq:snat:path}
with $o > 1$, $\tilde{o}>1$, $p>1$, and $q>1$, and possibly the one from Eq.~\eqref{eq:snat:path:a}.

First, we prove that $\mathfrak{s}_{\mathrm{nat}}$ is simple. Recall that
$\mathfrak{s}_{\mathrm{nat}}$ is semisimple and that all its ideals are semisimple.
We compute the ideal $\mathfrak{i}$ in $\mathfrak{s}_{\mathrm{nat}}$
that contains the simple \DLA $\tilde{\gk}$ in order to show that $\mathfrak{i}$ is simple
and $\mathfrak{i}=\mathfrak{s}_{\mathrm{nat}}$.
To the contrary, we will assume that there exists a complementary ideal $\mathfrak{j}$
with $\mathfrak{i} \oplus \mathfrak{j} \subseteq \mathfrak{s}_{\mathrm{nat}}$.
But then $[\mathfrak{i}, \mathfrak{j}]=0$ holds. Thus $[g,\mathfrak{i}] =0$ holds
for all elements $g\in \mathfrak{s}_{\mathrm{nat}}$ outside of $\mathfrak{i}$.
We systematically verify that all elements in Eq.~\eqref{eq:snat:path}
with $o = 1$, $\tilde{o}=1$, $p =1$, or $q=1$ are also contained in $\mathfrak{i}$.
Each element in Eq.~\eqref{eq:snat:path:d} with $p=1$ or $q=1$
does not commute with the element
\begin{equation*}
\begin{aligned}
&iX_r{+}iX_{n+1-r} \\
&iP_{\bar{n}{+}1\, \bar{n}}^{ZZ}{+} iP_{\bar{n}\, \bar{n}{+}1}^{ZZ}
\end{aligned}
\qquad
\begin{aligned}
&\text{if $n$ is even or $r\neq \bar{n}{+}1$ and}\\
&\text{if $n$ is odd and $r= \bar{n}{+}1$}
\end{aligned}
\end{equation*}
from $\tilde{\gk}$ where $1\neq r \in \{p,q\}$.
The conditions in the previous equation are sufficient for the element to not commute
(but not necessary).
Thus all elements in Eq.~\eqref{eq:snat:path:d} need to be contained in $\mathfrak{i}$.
Moreover, the elements from Eqs.~\eqref{eq:snat:path:b}-\eqref{eq:snat:path:c} with $o=1$ and $\tilde{o}=1$ do not commute with $iP_{12}^{ZZ}{+} iP_{21}^{ZZ}$,
which implies that all elements from Eqs.~\eqref{eq:snat:path:b}-\eqref{eq:snat:path:c} are contained in $\mathfrak{i}$.
We have shown that $\mathfrak{i}=\mathfrak{s}_{\mathrm{nat}}$.
In each step, our proof technique also verifies that newly added elements 
do not commute with all elements already contained in $\mathfrak{i}$. 
While starting from the simple $\tilde{\gk}$, this implies that
$\mathfrak{i}$ never splits into two (or more) simple ideals.
Thus $\mathfrak{i}=\mathfrak{s}_{\mathrm{nat}}$ is simple.

An abelian subalgebra of $\mathfrak{s}_{\mathrm{nat}}$ is spanned by
the $n{-}1$ elements from Eqs.~\eqref{eq:snat:path:a}-\eqref{eq:snat:path:b}
and it is maximal abelian. Otherwise, the rank 
of $\mathfrak{s}_{\mathrm{nat}}$ would be at least $n$ 
while its dimension is equal to $n^2{-}1$. But this conflicts with
$\mathfrak{s}_{\mathrm{nat}}$ being simple.
Indeed, all compact simple \DLAs are
$\su(m{+}1)$, $\so(2m{+}1)$, $\usp(m)$, $\so(2m)$, $\mathfrak{g}_2$,
$\mathfrak{f}_4$, $\mathfrak{e}_6$, $\mathfrak{e}_7$, $\mathfrak{e}_8$
with respective ranks $m$, $m$, $m$, $m$, $2$, $4$, $6$, $7$, $8$
and dimensions $m^2{+}2m$, $2m^2{+}m$, $2m^2{+}m$, $2m^2{-}m$, $14$, $52$, $78$, $133$, $248$.
Clearly, a rank of $m\geq n> 5$ is not possible for $\mathfrak{s}_{\mathrm{nat}}$.
For $n>5$,
the rank of $m=n{-}1$ implies that $\mathfrak{s}_{\mathrm{nat}}\iso \su(n)$ and $\gnat \iso \uu(n)$.
This completes the proof by induction.

\subsection{Proof of Proposition~\ref{prop:std:path}\label{proof:prop:std:path}}

We again
can directly verify that the statement holds for $n\in\{2,3,4,5\}$ and
we assume now that $n>5$. We shortly recall the standard generators 
\begin{equation*}
g_p:=i\sum_{w=1}^{n-1} Z_{w} Z_{w+1} \;\text{ and }\;
g_m:=i\sum_{v=1}^{n} X_v
\end{equation*}
for the path graph and introduce the elements 
\begin{equation*}
\tilde{g}_p:=iZ_{1} Z_{2}{+}iZ_{n-1} Z_{n} \;\text{ and }\;
\tilde{g}_m:=iX_1{+}iX_n.
\end{equation*}
It is straightforward to verify that
\begin{align*}
\tilde{g}_m & = (-[g_p,[g_p,[g_p,[g_p,g_m]]]]-16 [g_p,[g_p,g_m]])/48,\\
\tilde{g}_p  & = - [\tilde{g}_m,[\tilde{g}_m,g_p]]/4.
\end{align*}
Consequently, $g_p-\tilde{g}_p$ and $g_m-\tilde{g}_m$ 
are contained in $\gstd$ and we can assume as induction hypothesis that
these two elements generate $\gk \iso \uu(n{-}2)$. Following Eq.~\eqref{eq:gnat:path},
clearly $iP_{22}^{YY} \in \gk$ and one obtains that
\begin{alignat*}{3}
g_1&:= iP_{11}^{ZZ} &&= [g_p,[g_p, iP_{22}^{YY}]]/8+ iP_{22}^{YY},\\
g_2&:= iP_{11}^{YZ} {+} iP_{11}^{ZY} &&= [\tilde{g}_m,iP_{11}^{ZZ}]/2,\\
g_3&:= iP_{11}^{YY} &&= [\tilde{g}_m, iP_{11}^{YZ} {+} iP_{11}^{ZY}]/4 + iP_{11}^{ZZ}.
\end{alignat*}
Consequently, $\tilde{g}_m$, $g_1$, $g_2$, and $g_3$ generate the \DLA $\g_2 \iso\uu(2)$
such that all elements of $\g_2$ commute with all elements of $\gk$.
We summarize
\begin{equation*}
\uu(2) \oplus \uu(n{-}2) \iso \g_2 \oplus \gk \subsetneq \gstd \subseteq \gnat \iso \uu(n),
\end{equation*}
but $\uu(2) {\oplus} \uu(n{-}2)$
is a maximal subalgebra of $\uu(n)$ \cite{borel1949,GG78}.
Thus
the induction step is complete.

\subsection{Proof of Proposition~\ref{prop:std:complete}\label{proof:prop:std:complete}}

For this proof,
let $n_I=n_I(S)$, $n_X =n_X(S)$, $n_Y= n_Y(S)$, and $n_Z=n_Z(S)$
denote respectively the number of $\mathrm{I}$, $\mathrm{X}$, $\mathrm{Y}$, and $\mathrm{Z}$
in a Pauli string $S$. Recall from Table~\ref{tab:free-basis-table} that
$n_Y + n_Z$ is even, $n_X \neq n$, and $n_I \neq n$ for
the Pauli-string basis of $\gfree$ for a complete graph.
In order to determine $\gnat$, we are counting the Pauli strings that
are also invariant under the action of the automorphism group 
$\text{Aut}(K_n)= \mathcal{S}_n$. We assume that the two
Pauli strings $S_1$ and $S_2$ contain the same number of 
$\mathrm{X}$, $\mathrm{Y}$, $\mathrm{Z}$,
and $\mathrm{I}$, i.e., $n_I(S_1)=n_I(S_2)$, $n_X(S_1)=n_X(S_2)$, $n_Y(S_1)=n_Y(S_2)$,
and $n_Z(S_1)=n_Z(S_2)$. Then there clearly exits an automorphism $\sigma\in\mathcal{S}_n$ of the complete graph
that maps $S_1$ to $S_2$, in the sense that $\zeta[\sigma] S_1 = S_2 \zeta[\sigma]$, where
$\zeta$ maps permutations to $n$-qubit matrices and has been defined in Sec.~\ref{SEC:STD}.
A weak $4$-composition of $n$ is an ordered quadruple of nonnegative integers $(n_I, n_X, n_Y, n_Z)$
with $n_I + n_X + n_Y + n_Z = n$. Thus a basis of $\gnat$ is given by all possible sums
of Pauli strings (with identity coefficients) such that 
the Pauli strings in each sum correspond to a fixed 
weak 4-composition $(n_I, n_X, n_Y, n_Z)$ of $n$ while observing the
additional conditions that $n_Y + n_Z$ is even and neither $n_I$ nor $n_X$ is equal to $n$.
Now all that remains is to count these allowed weak 4-compositions.
Ignoring all restrictions, we have $\binom{n+3}{3}$ 
weak 4-compositions
(as is easily verified by the stars-and-bars method \cite[pp.~17--18]{Stanley2012}).

If $n$ is odd, then $n_I + n_X$ and $n_Y + n_Z$ have opposite parity, so the map
$(n_I, n_X, n_Y, n_Z)\mapsto(n_Z, n_Y, n_X, n_I)$ establishes a bijection between weak 4-compositions
that satisfy the $\ZTWO$ symmetry and that do not. So, there are exactly
$\frac{1}{2}\binom{n+3}{3}$ weak 4-compositions that satisfy the parity condition.
Removing the weak 4-compositions $(n,0,0,0)$ and $(0,n,0,0)$, we obtain the final answer
of $\dim(\gnat) = \frac{1}{2}\binom{n+3}{3} - 2$ if $n$ is odd.

If $n$ is even, then the map $(n_I, n_X, n_Y, n_Z)\mapsto(n_I+1, n_X, n_Y, n_Z-1)$ establishes
a bijection from the set of weak 4-compositions with $n_Y + n_Z$ even and $n_Z\ge 1$ to the set of 
weak 4-compositions
with $n_Y + n_Z$ odd and $n_I\ge 1$. The only weak $4$-compositions that have not been accounted for are
the ones with $n_Y + n_Z$ even and $n_Z = 0$ and the ones with $n_Y + n_Z$ odd and
$n_I = 0$.

The former cases can be thought of as weak $3$-compositions of $n$ of the form
$(n_I, n_X, n_Y)$ with $n_Y$ even; all of these are allowed, as long as neither $n_I$ nor $n_X$
equals $n$. Letting $n_Y$ range over all even values from $0$ to $n$, we see that there are
$(n{+}1)+(n{-}1)+\cdots+3+1 = (n/2{+}1)^2$ possibilities.
The latter weak $4$-compositions can
be thought of as weak $3$-compositions of $n$ of the form $(n_X, n_Y, n_Z)$ with $n_X$ odd (since $n$ is even);
all of these are not allowed. Letting $n_X$ range over all odd values from $1$ to $n{-}1$, we see that
there are $n+(n{-}2)+\cdots+4+2 = n(n/2{+}1)/2$ possibilities.

The total of the
two quantities from the last paragraph is $\binom{n+2}{2}$, which is exactly the number of weak $4$-compositions
missing from the bijection
established for even $n$. 
Putting this all together, we conclude that the number of
allowed weak $4$-compositions is
\begin{align*}
    \dim(\gnat) &= [(\tfrac{n}{2}{+}1)^2 - 2] + \frac{1}{2}[\tbinom{n+3}{3} - \tbinom{n+2}{2}] \\
    &= \frac{1}{2}\tbinom{n+3}{3} + \frac{n}{4} - \frac{3}{2}
\end{align*}
if $n$ is even, which completes the proof.

\section{Proofs for Section~\ref{sec:character:computations}}
This appendix details the proofs for Section~\ref{sec:character:computations}.
In particular, Appendix~\ref{proof:lem:ZQ-inv-subspace} provides the proof of 
Lemma~\ref{lem:ZQ-inv-subspace}, Proposition~\ref{thm:n-odd-matching-subspaces} is shown
in Appendix~\ref{proof:thm:n-odd-matching-subspaces}, Appendix~\ref{proof:lem:character-formula}
verifies Lemma~\ref{lem:character-formula}, and finally the proof of Proposition~\ref{prop:trivial:multiplicity}
is given in Appendix~\ref{proof:prop:trivial:multiplicity}.

\subsection{Proof of Lemma~\ref{lem:ZQ-inv-subspace}\label{proof:lem:ZQ-inv-subspace}}
Notice that $X_v$ anticommutes with $Z^{\otimes n}$ for each qubit $v$, which means 
hat any mixer Hamiltonian created from terms with $X_v$ also anticommutes with $Z^{\otimes n}$.
However, $Z_wZ_{\tilde{w}}$ commutes with $Z^{\otimes n}$ for any two qubits $w$ and $\tilde{w}$ and
any Hamiltonian created by adding terms with $Z_wZ_{\tilde{w}}$
also commutes with $Z^{\otimes n}$. Consequently, any nested commutator of such Hamiltonians
commutes with $Z^{\otimes n}$ if it has an even number of mixer Hamiltonian terms or
anticommutes with $Z^{\otimes n}$ if it has an odd number of mixer Hamiltonian terms.
All of these nested commutators span $\g$. By definition, $g \ket{\psi}\in W$
for $\ket{\psi}\in W$ and a nested commutator $g \in \g$. Moreover, $Z^{\otimes n}\ket{\psi}$ is an arbitrary element
in $\left(Z^{\otimes n}\right)W$. Since either commutes or anticommutes with $Z^{\otimes n}$, it follows that
$g\left(Z^{\otimes n}\ket{\psi}\right) = \pm Z^{\otimes n}(g\ket{\psi})\in\left(Z^{\otimes n}\right)W$.
Hence $\left(Z^{\otimes n}\right)W$ is also an invariant subspace of $\g$.

\subsection{Proof of Proposition~\ref{thm:n-odd-matching-subspaces}\label{proof:thm:n-odd-matching-subspaces}}

As $Z\ket{+} = \ket{-}$ and $Z\ket{-} = \ket{+}$ and $n$ is odd,
$Z^{\otimes n}$ maps states in $\HC_{+}$ to states in $\HC_{-}$ and vice versa
via the preceding discussion. Alternatively, we notice that 
$Z^{\otimes n}$ anticommutes with $X^{\otimes n}$ if $n$ is odd, so it must map a 
$\pm 1$ eigenstate of $X^{\otimes n}$ to a $\mp 1$ eigenstate.

The symmetries of $\g$ induce a subspace decomposition of $\HC_{+}$ and $\HC_{-}$.
In particular, any invariant subspace $V_j$ within $\HC_{+}$ will lead to a corresponding invariant
subspace $\left(Z^{\otimes n}\right)V_j$ within $\HC_{-}$ due to Lemma \ref{lem:ZQ-inv-subspace}
and the preceding paragraph showing that $Z^{\otimes n}$ swaps $\HC_{+}$ and $\HC_{-}$
(as $n$ is odd). Conversely, any invariant subspace within $\HC_{-}$ will
lead to a corresponding invariant subspace within $\HC_{+}$ for the same reason. It follows
that $\HC_{+}$ and $\HC_{-}$ must have matching decompositions into irreducible, invariant subspaces,
with the correspondence is induced by multiplication with $Z^{\otimes n}$, exactly as desired.

\subsection{Proof of Lemma~\ref{lem:character-formula}\label{proof:lem:character-formula}}

An automorphism $\sg_2 \in \Aut \subseteq \mathcal{S}_n$
acts on a basis state $\ket{x}=\ket{x_1}\otimes\!\cdot\!\cdot\otimes\ket{x_n}$ via (see Section~\ref{SEC:STD})
\begin{equation*}
\sigma_2\cdot \ket{x}=\ket*{x_{\sg_2^{-1}(1)},\!..,x_{\sg_2^{-1}(n)}}
= \ket*{x_{\sg_2^{-1}(1)}}{\otimes}\!\cdot\!\cdot{\otimes}\ket*{x_{\sg_2^{-1}(n)}}.
\end{equation*}
This is described using the map
$\Upsilon(\sg_1,\sg_2)=\vartheta[\sg_1] \zeta[\sg_2]$
from the permutation group $\Gnatr=\mathcal{S}_2 \times \Aut$ to 
the group of natural symmetries
$\Gnat = \ZTWO\times \zeta[\Aut]$
where the form of
$\zeta[\sg_2]$ is detailed in Section~\ref{SEC:STD}
and $\vartheta[\sg_1]$ is given in Eq.~\eqref{eq:z2:mat}.
For the proof, we compute the character $\chi_{\text{nat}}(\sg_1,\sg_2)$ from Eq.~\eqref{eq:chi:hat}.
Based on the statement of this lemma,
we need to consider three cases: (i)~For $\sg_1 =\symone$, $\chi_{\text{nat}}(\sg_1,\sg_2) = 2^{c(\sg_2)}$.
(ii)~If $\sg_1 = (1,2)$ and every cycle in the decomposition in Eq.~\eqref{eq:cycle:decomp} for
$\sigma=\sg_2$ has even length, then $\chi_{\text{nat}}(\sg_1,\sg_2) = 2^{c(\sg_2)}$.
(iii)~If $\sg_1 = (1,2)$ and there exists a cycle of odd length in the decomposition of Eq.~\eqref{eq:cycle:decomp},
then $\chi_{\text{nat}}(\sg_1,\sg_2) =0$.

Recall that $I^{\tn}$ and $X^{\tn}$ from Eq.~\eqref{eq:z2:mat}
as well as $\zeta[\sg_2]$ act as permutation matrices on $\C^{d\times d}$.
Thus the same applies to $\Upsilon(\sg_1,\sg_2)$ and the number of fix points in the set of 
basis states determines the value of the trace as non-invariant basis states do not contribute.
Given a computational basis state $\ket{x}$, we observe that
\begin{subequations}
\begin{align}
    \Upsilon(\symone,\sg_2) \ket{x} &= \ket*{x_{\sg_2^{-1}(1)},\!..,x_{\sg_2^{-1}(n)}}, \label{eq:action:case:a}\\
    \Upsilon((1,2),\sg_2) \ket{x} &= \ket*{\neg x_{\sg_2^{-1}(1)},\!..,\neg x_{\sg_2^{-1}(n)}}, \label{eq:action:case:b}
\end{align}
\end{subequations}
where $\neg x_j$ again denotes the negation of $x_j$ by swapping the basis states $\ket{0}$ and $\ket{1}$.

\begin{figure*}[t]
\includegraphics{projection-matrices-a.pdf}
\caption{\textbf{Explicit matrices for the projectors $P_j$ from Fig.~\ref{fig:decomposition:house}.}
Zeros are replaced by dots.
\label{fig:matrices:house}}
\end{figure*}

The case~(i) corresponds to Eq.~\eqref{eq:action:case:a} and $\ket{x}$ is fixed if and only if 
$x_j = x_{\sg_2^{-1}(j)}$ for each $1\leq j \leq n$. In order words, $x$ needs to be constant on each cycle 
in the decomposition of Eq.~\eqref{eq:cycle:decomp}. Consequently, the result $\chi_{\text{nat}}(\sg_1,\sg_2) = 2^{c(\sg_2)}$
follows for (i) as one has two possible values $0$ and $1$
for each cycle of $\sg_2$.

If $\sg_1=(1,2)$, Eq.~\eqref{eq:action:case:b} applies and it is necessary and sufficient for a fixed point
to observe $x_j = \neg x_{\sigma_2^{-1}(j)}$ for each $1\leq j \leq n$. Hence the fixed points are exactly the
basis states that alternate between zero and one on each cycle in the cycle decomposition of $\sg_2$.
But this is impossible for any cycle of odd length and we obtain $\chi_{\text{nat}}(\sg_1,\sg_2) =0$ in the case (iii).
For the case (ii), we again have $\chi_{\text{nat}}(\sg_1,\sg_2) = 2^{c(\sg_2)}$ as there are also two choices
for each cycle.

\subsection{Proof of Proposition~\ref{prop:trivial:multiplicity}\label{proof:prop:trivial:multiplicity}}

Lemma~\ref{lem:triv:mult} implies case~(b)
as we only need to sum over the identity element $\symone \in \Aut$.
As all terms in the sum in Lemma~\ref{lem:triv:mult} are positive,
case~(c) is obtained by restricting the sum to the identity element $\symone \in \Aut$.
For (a), we state two formulas
\begin{subequations}
\label{eq:proof:prop:trivial:multiplicity}
\begin{alignat}{5}
&\sum_{\sg  \in \mathcal{S}_n}\;&& 2^{c(\sg)} &&= (n{+}1)! \;\text{ and}
\label{eq:proof:prop:trivial:multiplicity:a}
\\
&\sum_{\sg  \in \mathcal{S}_{2m}} && 2^{c(\sg)} \, \pp(\sg) &&= (2m)!
\label{eq:proof:prop:trivial:multiplicity:b}
\end{alignat}
\end{subequations}
where Eq.~\eqref{eq:proof:prop:trivial:multiplicity:a} is a special case 
of Proposition~1.3.7 on p.~27 in \cite{Stanley2012}.
We will prove Eq.~\eqref{eq:proof:prop:trivial:multiplicity:b} below.
The proof of (a) now continues separately for 
odd and even $n$. If $n=2m{+}1$ is odd, then $\pp(\sg)=0$ for all $\sg \in \Aut$.
Thus Lemma~\ref{lem:triv:mult} and Eq.~\eqref{eq:proof:prop:trivial:multiplicity:a} imply
for $n=2m{+}1$ that 
\begin{align*}
\mathbf{m}_{(\triv,\triv)} &= (2m{+}2)! / [2 (2m{+}1)!] = m +1 = \floor{\tfrac{n}{2}}+1.
\intertext{For $n=2m$ even, combining Eqs.~\eqref{eq:proof:prop:trivial:multiplicity:a}
and \eqref{eq:proof:prop:trivial:multiplicity:b}
proves}
\mathbf{m}_{(\triv,\triv)} &= (2m{+}1)! / [2 (2m)!] + (2m)! / [2 (2m)!] \\
&= (m{+}\tfrac{1}{2}) + \tfrac{1}{2} = m +1 = \floor{\tfrac{n}{2}}+1.
\end{align*}

Thus we are left with verifying Eq.~\eqref{eq:proof:prop:trivial:multiplicity:b}.
Let $\PP(n,k)$ denote the set of all partitions $p$ of $n$ with $k$ parts \cite{Stanley2012},
i.e., $p=[p_1,\ldots,p_k]$ where each part is given by an integer $p_j >0$ such that 
$\sum_{j=1}^{k} p_j = n$ and
$p_j \geq p_{j+1}$. Similarly, let $\PP_e(n,k) \subseteq \PP(n,k)$ denote the set of 
partitions that have only even parts $p_j$.
Also, we describe a partition $p$ 
as in Eq.~\eqref{eq:cycle:type} using its cycle type
$(1^{b_1}, \ldots, n^{b_n})$ where the multiplicity $b_a\geq 0$ of 
$a$ in $p$ is given by $b_a = \sum_{j=1}^k \delta_{a p_j}$ with $1\leq a \leq n$.
The number of permutations $\sg \in S_n$
with cycle type $(1^{b_1}, \ldots, n^{b_n})$  as determined by a partition $p$ of $n$
is equal to (see Prop.~1.3.2 on p.~23 in \cite{Stanley2012})
\begin{equation*}
n! / \zz_{(n,p)}\; \text{ where }
\zz_{(n,p)} := \prod_{a=1}^n a^{b_a}\, b_a!.
\end{equation*}
Let $2p$ denote the partition obtained by multiplying each part of $p$ by $2$.
We now compute
\begin{subequations}
\allowdisplaybreaks
\begin{align}
& \sum_{\sg  \in \mathcal{S}_{2m}} 2^{c(\sg)} \, \pp(\sg) 
= \sum_{k=1}^m \sum_{p \in \PP_e(2m,k)} \hspace{-2mm} 2^k\, (2m)! / \zz_{(2m,p)}
\label{eq:proof:prop:a}
\\
&= (2m)! \,  \sum_{k=1}^m \sum_{p \in \PP(m,k)} 2^k / \zz_{(2m,2p)}
\label{eq:proof:prop:b}
\\
&= (2m)! \, \sum_{k=1}^m \sum_{p \in \PP(m,k)} \frac{2^k}{\zz_{(m,p)}\prod_{a=1}^m 2^{b_a}} 
\nonumber
\\
&= (2m)! \, \sum_{k=1}^m \sum_{p \in \PP(m,k)} \frac{2^k}{\zz_{(m,p)}\, 2^{\sum_{a=1}^m b_a}}
\nonumber
\\
&= (2m)! \, \sum_{k=1}^m \sum_{p \in \PP(m,k)} 1/\zz_{(m,p)} = (2m)!,
\label{eq:proof:prop:e}
\end{align}
\end{subequations}
where we can limit the summation over $k$ in Eq.~\eqref{eq:proof:prop:a} to values of up to $m$ as all partitions
have even parts $p_j\geq 2$. Equation~\eqref{eq:proof:prop:b} follows as all even partitions in $\PP_e(2m,k)$
are obtained from all partitions in $\PP(m,k)$ by multiplying each part by two
and all occurring parts are less than equal to $m$. Finally, 
we conclude by comparing
Eq.~\eqref{eq:proof:prop:e}
with $\sum_{k=1}^m \sum_{p \in \PP(m,k)} m! /\zz_{(m,p)} = \abs{\mathcal{S}_m} = m!$.

\section{Properties of centers for general compact \DLAs and subalgebras  of \texorpdfstring{$\gfree$}{gfree} \label{app:center}}

In this appendix, we shortly state some useful properties related to the centers of
certain \DLAs $\g$ such that $\g$ is a general compact \DLA  (i.e.\ $\g \subseteq \uu(m)$ for a suitable $m$),
$\g$ is contained in $\gfree$, or $\g$ is equal to $\gstd$.
Recall that any element $g$ contained in a compact \DLA 
$\g  \subseteq \uu(m)$ can be uniquely decomposed
into $g = s + c$ where $s \in [\g,\g]$ is contained in the semisimple
part $[\g,\g]$ of $\g$ and $c \in \cent(\g)$ is contained in the center $\cent(\g)$
of $\g$. This is a consequence of $\g$ being contained in a unitary \DLA 
and thus being reductive \cite{BourbakiLie1989,Bourbaki2008b}.
The number of elements in a generating set for a compact \DLA
constraints the dimension of its center:

\begin{lemma}
\label{lem:dim-center-at-most-num-generators}
Consider a \DLA $\lie{g_1,\ldots,g_k} = \g \subseteq \uu(m)$ that is contained in a unitary \DLA $\uu(m)$
and that is generated by $k$ elements $g_1, \ldots, g_k$.
Then the dimension $\abs{\cent(\g)}$  of the center $\cent(\g)$ of $\g$ is less than equal to
the number $k$ of generators, i.e., $\abs{\cent(\g)}\leq k$.
\end{lemma}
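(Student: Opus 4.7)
The plan is to exploit the reductive decomposition of the compact \DLA $\g$ in order to extract a \emph{linear} (not Lie-algebraic) span argument. Since $\g \subseteq \uu(m)$, it is reductive \cite{BourbakiLie1989,Bourbaki2008b}, so one has the vector-space direct sum
\begin{equation*}
\g = \cent(\g) \oplus [\g,\g].
\end{equation*}
This yields a well-defined linear projection $\pi \colon \g \to \cent(\g)$ along $[\g,\g]$, which by construction satisfies $\pi([g,g']) = 0$ for every $g,g' \in \g$.

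Next, I would use the definition of $\g$ as the Lie closure of $\{g_1,\ldots,g_k\}$ to write an arbitrary element $g \in \g$ as a real-linear combination
\begin{equation*}
g = \sum_{j=1}^{k} \lambda_j\, g_j + \sum_{\alpha} \mu_{\alpha}\, c_{\alpha},
\end{equation*}
where each $c_{\alpha}$ is an iterated commutator of depth at least one in the $g_j$, and $\lambda_j,\mu_\alpha \in \R$. Applying $\pi$ and using that $\pi$ annihilates every $c_{\alpha}$ (since nested commutators lie in $[\g,\g]$), one obtains $\pi(g) = \sum_{j=1}^k \lambda_j \pi(g_j)$. In particular, when $g = z \in \cent(\g)$, one has $\pi(z) = z$, so
\begin{equation*}
z = \sum_{j=1}^{k} \lambda_j\, \pi(g_j).
\end{equation*}

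Hence $\{\pi(g_1),\ldots,\pi(g_k)\}$ is a spanning set for $\cent(\g)$ as a real vector space, which forces $\dim_{\R}\cent(\g) \leq k$, proving the lemma. The argument is essentially complete at this point; no serious obstacle arises, because the reductive splitting of compact \DLAs is standard and the only structural fact needed is that commutators lie in $[\g,\g]$. The one subtlety worth flagging is that the claim is truly linear in nature—one is not asserting that $\cent(\g)$ is generated by $\pi(g_1),\ldots,\pi(g_k)$ as a Lie algebra (which is automatic since $\cent(\g)$ is abelian) but rather that these projections already span it linearly, and this is what makes the bound tight: commutators cannot contribute any new central directions.
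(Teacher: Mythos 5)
Your proof is correct and follows essentially the same route as the paper's: both use the reductive splitting $\g = \cent(\g)\oplus[\g,\g]$ and observe that, since all nested commutators land in $[\g,\g]$, the central components of the $k$ generators already span $\cent(\g)$. Your version merely makes the projection $\pi$ explicit, which the paper leaves implicit.
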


\begin{proof}
The generators $g_j = s_j + c_j$ can be uniquely decomposed
into $s_j \in [\g,\g]$ and $c_j \in \cent(\g)$.
Clearly, $[g_j,g_k] \subseteq [\g,\g]$ and all $s_j$ are obtained from the generators via linear combinations
of higher-order commutators. Thus, we obtain $c_j = g_j -s_j$ and $\cent(\g)$ is spanned by the elements
$c_j$, which completes the proof.
\end{proof}

Interestingly, we can interpret the center of a compact \DLA
as a subset of the center of its commutant:

\begin{figure}[b]
\includegraphics{projection-matrices-b.pdf}
\caption{\textbf{Explicit matrices for the projectors $P_j$ and cross terms  $C_2^{k|\ell}$ from Fig.~\ref{fig:decomposition:four}.}
Dots denote zeros.
\label{fig:matrices:four}}
\end{figure}

\begin{lemma}\label{lemma:cent:cent}
Given a compact \DLA $\g \subseteq \uu(m)$
and its commutant $\CC= \com(\g)$, we have
(i) $\cent(\g) = \CC \cap \g \subseteq \CC$ and (ii) $\cent(\g) \subseteq \cent(\CC)$.
\end{lemma}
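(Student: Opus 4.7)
The plan is to unwind the definitions carefully, as both statements are essentially tautological once the set-theoretic definitions are laid out side by side. The only mild subtlety is that $\cent(\g)$ is defined intrinsically inside $\g$, whereas $\CC = \com(\g)$ lives in the ambient matrix algebra $\C^{m\times m}$, so I need to be careful about which containment is automatic and which requires an argument.

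For part (i), I would prove the two inclusions separately. If $z \in \cent(\g)$, then by definition $z \in \g$ and $[z,g]=0$ for every $g \in \g$, so $z$ satisfies exactly the defining condition of $\com(\g) = \CC$; hence $z \in \CC \cap \g$. Conversely, if $z \in \CC \cap \g$, then $z \in \g$ and $[z,g]=0$ for all $g \in \g$ (the latter because $z \in \CC$), so $z \in \cent(\g)$. Note that compactness of $\g$ plays no role here; part (i) holds for any Lie subalgebra of $\C^{m\times m}$.

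For part (ii), I would leverage part (i). Take $z \in \cent(\g)$; by (i), $z \in \CC$, so the only remaining task is to show that $z$ commutes with every element of $\CC$. But $z \in \g$ by definition of $\cent(\g)$, and any $S \in \CC = \com(\g)$ satisfies $[S,g']=0$ for every $g' \in \g$; specializing to $g' = z$ gives $[S,z]=0$, equivalently $[z,S]=0$. Thus $z \in \cent(\CC)$, and the chain $\cent(\g) \subseteq \CC$ combined with commutation against all of $\CC$ yields $\cent(\g) \subseteq \cent(\CC)$.

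There is no real obstacle here; the proof is a two-line unpacking of definitions. The only place one might slip is in forgetting that $\cent(\g)$ is a subset of $\g$ (not of $\CC$ a priori), which is exactly the point that part (i) clarifies and that makes part (ii) follow so cleanly. The compactness hypothesis $\g \subseteq \uu(m)$ is not used in the argument itself, but it fits naturally with the context of Appendix~\ref{app:center}, where the reductive decomposition $\g = [\g,\g] \oplus \cent(\g)$ is invoked for related statements.
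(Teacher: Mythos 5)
Your proposal is correct and follows essentially the same route as the paper: part (i) is a direct unwinding of the definitions, and part (ii) rests on the same key observation that any $S\in\CC$ commutes with all of $\g$, hence with every element of $\cent(\g)\subseteq\g$ (the paper merely phrases this as a one-line contradiction). No gaps.
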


\begin{proof}
The case (i) is obvious from the definitions.
Assume that there exists $g \in \cent(\g)$ and $M \in \CC$
such that $[g,M] \neq 0$. But this conflicts with the definition of the commutant
$\Cstd$ and the statement follows.
\end{proof}

This allows us to bound the dimension of the centers
for \DLAs that include the standard-ansatz and the free \DLAs $\gstd$ and $\gfree$:

\begin{lemma}
\label{lem:std:free:center}
Given a \DLA $\g \subseteq \gfree$, 
the dimension $\dim[\cent(\g)]$ of its center $\cent(\g)$ is bounded as
$\dim[\cent(\g)] \le \dim[\cent(\CC)]{-}2$
where $\CC = \CC(\g)$ is the commutant of $\g$.
\end{lemma}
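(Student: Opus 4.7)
The plan is to exhibit two $\R$-linearly independent skew-Hermitian elements of $\cent(\CC)$ that cannot possibly lie in $\cent(\g)$, namely $iI^{\otimes n}$ and $iX^{\otimes n}$, and use this to extract the ``$-2$'' offset on top of the bare containment $\cent(\g)\subseteq\cent(\CC)$ supplied by Lemma~\ref{lemma:cent:cent}(ii).

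First I would check that $\CC=\com(\g)$ is closed under Hermitian adjoint: for $S\in\CC$ and $g\in\g\subseteq\uu(d)$, the identity $g^\dagger=-g$ gives $[S^\dagger,g]=-[S,g^\dagger]^\dagger=[S,g]^\dagger=0$, so $S^\dagger\in\CC$. Consequently $\cent(\CC)$ is a $*$-subalgebra, and splitting each $M\in\cent(\CC)$ into its Hermitian and skew-Hermitian parts (both again in $\cent(\CC)$) shows that the real subspace $\cent(\CC)_{\mathrm{sk}}$ of skew-Hermitian elements satisfies $\dim_\R\cent(\CC)_{\mathrm{sk}}=\dim_\C\cent(\CC)$. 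Since $\cent(\g)\subseteq\uu(d)$ and $\cent(\g)\subseteq\cent(\CC)$ by Lemma~\ref{lemma:cent:cent}(ii), we automatically get $\cent(\g)\subseteq\cent(\CC)_{\mathrm{sk}}$, reducing the problem to comparing real dimensions inside this ambient real space.

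Next I would verify that both $iI^{\otimes n}$ and $iX^{\otimes n}$ in fact lie in $\cent(\CC)$. The identity is trivially central. For $iX^{\otimes n}$, Lemma~\ref{app:prop:center}(a) places it in $\com(\gfree)\subseteq\com(\g)=\CC$; the double commutant identity $\alg{\g}=\com(\CC)$ together with $\cent(\CC)=\CC\cap\com(\CC)=\CC\cap\alg{\g}$ then lands $iX^{\otimes n}$ in $\cent(\CC)$ as soon as $X^{\otimes n}\in\alg{\g}$---which holds in the QAOA-relevant setting because the product of the mixer generators $iX_u$ (all of which populate $\alg{\g}$) reproduces $X^{\otimes n}$ up to a phase. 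Because $iI^{\otimes n}$ and $iX^{\otimes n}$ are $\R$-linearly independent, $\spn_\R\{iI^{\otimes n},iX^{\otimes n}\}$ is a genuine two-dimensional subspace of $\cent(\CC)_{\mathrm{sk}}$.

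Finally, I would show this two-dimensional subspace meets $\cent(\g)$ only in zero: any element $\alpha\,iI^{\otimes n}+\beta\,iX^{\otimes n}$ of $\cent(\g)$ lies simultaneously in $\g\subseteq\gfree$ and in $\com(\gfree)$, hence in $\cent(\gfree)=\com(\gfree)\cap\gfree=\{0\}$ by Lemma~\ref{app:prop:center}(c), forcing $\alpha=\beta=0$. Combining the ingredients, $\cent(\g)\oplus\spn_\R\{iI^{\otimes n},iX^{\otimes n}\}$ is a real subspace of $\cent(\CC)_{\mathrm{sk}}$ of real dimension $\dim\cent(\g)+2$, which therefore cannot exceed $\dim_\R\cent(\CC)_{\mathrm{sk}}=\dim_\C\cent(\CC)$. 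The most delicate ingredient is the claim $iX^{\otimes n}\in\cent(\CC)$: it relies on $X^{\otimes n}$ sitting inside the associative algebra $\alg{\g}$, a condition that is automatic for the QAOA-relevant subalgebras treated throughout the paper ($\gstd$, $\gorb$, $\gnat$, $\gfree$ all contain enough of the $iX_u$ generators), but would need separate justification for generic subalgebras of $\gfree$.
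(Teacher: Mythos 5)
Your proposal follows the same skeleton as the paper's proof: exhibit $iI\tn$ and $iX\tn$ as two linearly independent elements of $\cent(\CC)$ whose span meets $\cent(\g)$ trivially [via $\cent(\gfree)=\com(\gfree)\cap\gfree=0$ from Lemma~\ref{app:prop:center}], and count dimensions; the real-versus-complex bookkeeping you add is a welcome precision that the paper omits. The one step where you genuinely diverge is the justification that $iX\tn\in\cent(\CC)$: the paper argues that the projectors $P_{\pm}=(I\tn\pm X\tn)/2$ are linear combinations of the isotypical projectors and hence lie in $\cent(\CC)$, whereas you invoke the double-commutant identity $\cent(\CC)=\CC\cap\alg{\g}$ together with membership of $X\tn$ in the generated associative algebra. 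Your closing caveat is exactly right, and sharper than you may realize: for a generic subalgebra $\g\subseteq\gfree$ the lemma is simply false --- take $n=2$, the single-edge graph, and $\g=\spnR\{iZ_1Z_2\}$; then $\cent(\g)=\g$ has dimension $1$ while $\cent(\CC)$ is spanned by the two eigenprojectors of $Z_1Z_2$ and has dimension $2$, violating the claimed bound $1\le 0$. So a hypothesis beyond $\g\subseteq\gfree$ is being used tacitly by both you and the paper, and your route has the virtue of isolating it explicitly as the condition $X\tn\in\alg{\g}$. One small repair is needed where you justify that condition: for $\g=\gstd$ the individual $X_u$ need not lie in $\alg{\gstd}$, so ``the product of the mixer generators'' does not literally apply; instead observe that $X\tn$ is a polynomial in $H_m=\sum_{v}X_v$ (in the Hadamard basis, $X\tn=f(H_m)$ with $f(n-2k)=(-1)^k$ on the spectrum of $H_m$), which places $X\tn\in\alg{\gstd}$ and closes the argument for all the QAOA-relevant algebras $\gstd\subseteq\gorb\subseteq\gnat\subseteq\gfree$.
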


\begin{proof}
Clearly.
$\g \subseteq \gfree$ and $\CC \supseteq \Cfree$.
Lemma~\ref{app:prop:center}(a) and (b) now imply
$iX\tn,iI\tn \notin \g$ and $iX\tn,iI\tn \in \CC$.
Clearly, $iX\tn,iI\tn \notin \cent(\g)$ follows.
This then verifies
$\dim[\cent(\g)] \le \dim(\CC){-}2$.

Lemma~\ref{lemma:cent:cent} implies $\cent(\g) \subseteq \cent(\CC)$.
The statement of the current lemma is immediate after we have shown
$iI\tn,iX\tn \in \cent(\CC)$.
We have $iI\tn \in \cent(\CC)$.
Moreover, 
$iX\tn \in \cent(\CC)$ follows, e.g., as the projectors $P_{\pm}=
(X\tn {\pm} I\n)/2$ are linear combinations of the isotypical projectors contained in $\cent(\CC)$
associated to the isotypical decomposition of $\CC$ (see Appendix~\ref{app:symmetry:theory}).
\end{proof}

We now bound the dimension of the center
$\cent(\gstd)$ for the standard-ansatz \DLA $\gstd$:
\begin{lemma}
\label{lem:std:center-dim-at-most-2}
Given the \DLA $\gstd$ of the standard ansatz,
the dimension $\dim[\cent(\gstd)]$ of its center $\cent(\gstd)$ is bounded as
$\dim[\cent(\gstd)] \le \min\{2,\dim[\cent(\Cstd)]{-}2\}$
where $\Cstd = \CC(\gstd)$ is the commutant of $\gstd$.
\end{lemma}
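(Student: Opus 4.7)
The plan is to establish the two upper bounds separately and then take their minimum. Both bounds follow rather directly from results already proved earlier in the appendix, so the argument is essentially a bookkeeping combination of those lemmas applied to $\gstd$.

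For the bound $\dim[\cent(\gstd)] \le 2$, I would invoke Lemma~\ref{lem:dim-center-at-most-num-generators}. Note that $\gstd \subseteq \uu(2^n)$ is a compact \DLA (being a subalgebra of the skew-hermitian matrices), and by the very definition of the standard ansatz, $\gstd = \lie{iH_p, iH_m}$ is generated by just the two elements $iH_p$ and $iH_m$ from Eq.~\eqref{eq:std}. Lemma~\ref{lem:dim-center-at-most-num-generators} immediately yields $\dim[\cent(\gstd)] \le 2$.

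For the bound $\dim[\cent(\gstd)] \le \dim[\cent(\Cstd)]-2$, I would invoke Lemma~\ref{lem:std:free:center}. The inclusion $\gstd \subseteq \gfree$ is part of the hierarchy established in Proposition~\ref{thm_dla_chain}, so $\gstd$ meets the hypothesis of Lemma~\ref{lem:std:free:center}. Applying that lemma with $\g = \gstd$ and $\CC = \Cstd$ gives the desired bound.

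Combining these two inequalities, we obtain $\dim[\cent(\gstd)] \le \min\{2, \dim[\cent(\Cstd)]-2\}$, which is exactly the claim. The main obstacle here is not a technical one but rather ensuring both auxiliary lemmas genuinely apply: the first requires a finite generating set of the right size (which is automatic from the definition of the standard ansatz), and the second requires $\gstd \subseteq \gfree$ (which is exactly the content of the first line of Proposition~\ref{thm_dla_chain}). Thus the proof reduces to two one-line invocations, with no additional estimates needed.
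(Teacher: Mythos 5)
Your proposal is correct and matches the paper's own proof essentially verbatim: the bound $\dim[\cent(\gstd)]\le 2$ follows from Lemma~\ref{lem:dim-center-at-most-num-generators} applied to the two generators $iH_p, iH_m$, and the bound $\dim[\cent(\gstd)]\le\dim[\cent(\Cstd)]-2$ follows from Lemma~\ref{lem:std:free:center} via the inclusion $\gstd\subseteq\gfree$. Nothing is missing.
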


\begin{proof}
Lemma~\ref{lem:dim-center-at-most-num-generators} directly verifies
$\dim[\cent(\gstd)] \le 2$.
Lemma~\ref{lem:std:free:center} completes the proof.
\end{proof}

\section{Explicit projection matrices from Figures \ref{fig:decomposition:house} and \ref{fig:decomposition:four}\label{app:projections}}

The explicit form of the projection matrices from 
Figures~\ref{fig:decomposition:house} and \ref{fig:decomposition:four}
are shown in Figures~\ref{fig:matrices:house} and \ref{fig:matrices:four}, respectively.


%

\end{document}